\documentclass{article}
\usepackage[utf8]{inputenc}
\usepackage{float} 
\usepackage{amsmath}
\usepackage{amssymb}
\usepackage{array}
\usepackage{bbm}
\usepackage{makeidx}
\usepackage{hyperref}
\usepackage{cleveref}
\usepackage{caption}
\usepackage{subcaption}
\usepackage{amsthm}
\usepackage{color}
\usepackage{mathtools}
\usepackage{graphicx} 
\usepackage[LGR,T1]{fontenc}
\usepackage{siunitx}
\usepackage{mathrsfs}
\usepackage{scalerel}
\usepackage{forest}
\usepackage{algorithm,algpseudocode}
\usepackage{lipsum}
\usepackage{booktabs}
\usepackage{siunitx}
\usepackage{xcolor}
\usepackage[title]{appendix}
\hypersetup{
    colorlinks,
    linkcolor={blue!80!black},
    citecolor={green!50!black},
}
\usepackage{accents}
\usepackage{apptools}
\usepackage{tikz}
\usetikzlibrary{arrows, positioning, automata}
\usepackage{comment}
\usepackage[shortlabels]{enumitem}

\usepackage[mathscr]{euscript}
\DeclareSymbolFont{rsfs}{U}{rsfs}{m}{n}
\DeclareSymbolFontAlphabet{\mathscrsfs}{rsfs}

\def\Ons{\textsf{\textsf{OC}}}
\def\<{\langle}
\def\>{\rangle}

\AtAppendix{\counterwithin{lemma}{section}}

\usepackage{mathtools}

\newtheoremstyle{myremark} 
    {\topsep}                    
    {\topsep}                    
    {\rm}                        
    {}                           
    {\bf}                        
    {.}                          
    {.5em}                       
    {}  

\newtheorem{claim}{Claim}[section]
\newtheorem{lemma}[claim]{Lemma}

\newtheorem{setting}{Setting}

\newtheorem{theorem}{Theorem}

\newtheorem{corollary}[claim]{Corollary}

\theoremstyle{myremark}
\newtheorem{remark}{Remark}[section]

\parindent=2em
\allowdisplaybreaks
\usepackage[top=1in,bottom=1in,left=1in,right=1in]{geometry}
\usepackage[utf8]{inputenc}
\def\mmse{\mathsf{mmse}}
\def\dd{\mathrm{d}}
\def\ed{\stackrel{{\mathrm d}}{=}}

\def\top{\intercal}

\def\bA{{\boldsymbol A}}
\def\bB{{\boldsymbol B}}
\def\bnu{\boldsymbol{\nu}}

\def\cS{{\mathcal S}}

\def\htheta{\hat{\Theta}}

\def\cT{\mathcal{T}}

\def\cF{\mathcal{F}}

\def\cS{\mathcal{S}}

\def\cF{\mathcal{F}}

\def\oW{\overline{W}}
\def\oX{\overline{X}}

\def\NN{\mathbb{N}}

\def\PP{\mathbb{P}}

\def\RR{\mathbb{R}}


\def\calL{\mathcal{L}}

\def\calS{\mathcal{S}}


\def\bA{\mathbf{A}}
\def\bB{\mathbf{B}}

\def\bD{\mathbf{D}}

\def\bG{\mathbf{G}}

\def\bS{\mathbf{S}}

\def\bV{\mathbf{V}}
\def\bW{\mathbf{W}}
\def\bX{\mathbf{X}}

\def\bZ{\mathbf{Z}}


\def\ba{\boldsymbol{a}}
\def\bb{\boldsymbol{b}}

\def\bg{\boldsymbol{g}}
\def\bh{\boldsymbol{h}}

\def\bk{\boldsymbol{k}}

\def\bq{\boldsymbol{q}}
\def\br{\boldsymbol{r}}
\def\bs{\boldsymbol{s}}

\def\bu{\boldsymbol{u}}
\def\bv{\boldsymbol{v}}
\def\bw{\boldsymbol{w}}
\def\bx{\boldsymbol{x}}
\def\by{\boldsymbol{y}}
\def\bz{\boldsymbol{z}}

\def\bA{\boldsymbol{A}}
\def\bB{\boldsymbol{B}}

\def\bD{\boldsymbol{D}}

\def\bG{\boldsymbol{G}}

\def\bS{\boldsymbol{S}}

\def\bV{\boldsymbol{V}}
\def\bW{\boldsymbol{W}}
\def\bX{\boldsymbol{X}}

\def\bZ{\boldsymbol{Z}}
\def\hbtheta{\hat{\boldsymbol \theta}}
\def\normal{{\mathsf{N}}}

\def\bmu{\boldsymbol{\mu}}
\def\btheta{\boldsymbol{\theta}}
\def\bvtheta{\boldsymbol{\vartheta}}

\def\bSigma{\boldsymbol{\Sigma}}

\def\bzero{\boldsymbol{0}}
\def\bfone{\boldsymbol{1}}
\def\balpha{\boldsymbol{\alpha}}

\def\bTheta{\boldsymbol{\Theta}}

\def\reals{{\mathbb R}}
\def\sT{{\sf T}}

\def\hTheta{\hat{\Theta}}
\def\id{{\boldsymbol I}}


\DeclareMathOperator*{\plim}{p-lim}

\DeclareMathOperator*{\pliminf}{p-liminf}


\renewcommand{\P}{\mathbb{P}}
\newcommand{\E}{\mathbb{E}}
\newcommand{\R}{\mathbb{R}}

\newcommand{\eps}{\varepsilon}
\newcommand{\Var}{\operatorname{Var}}

\newcommand{\argmin}{\operatorname{argmin}}

\newcommand{\Cov}{\operatorname{Cov}}
\newcommand{\sign}{\operatorname{sign}}

\newcommand{\diag}{\operatorname{diag}}

\newcommand{\RN}[1]{%
  \textup{\uppercase\expandafter{\romannumeral#1}}%
}

\newcommand\iidsim{\stackrel{\mathclap{iid}}{\sim}}

\newcommand{\RNum}[1]{\uppercase\expandafter{\romannumeral #1\relax}}

\makeatletter

\makeatother

\makeatletter
\newcommand*{\rom}[1]{\expandafter\@slowromancap\romannumeral #1@}
\makeatother
\title{Statistically Optimal First Order Algorithms:\\ 
A Proof via Orthogonalization}
\author{
	Andrea Montanari\footnotemark[2]
	\thanks{Department of Electrical Engineering, Stanford University} 
	\and 
	Yuchen Wu\thanks{Department of Statistics, Stanford University}
}
\date{}
\pagenumbering{arabic}
\begin{document}
\maketitle

\begin{abstract}
	We consider a class of statistical estimation problems in which we are given 
	a random data matrix $\bX\in\R^{n\times d}$ (and possibly some labels $\by\in\R^n$)
	and would like to estimate a coefficient vector $\btheta\in\R^d$ (or possibly
	a constant number of such vectors). Special cases include low-rank matrix estimation
	and regularized estimation in generalized linear models (e.g., sparse regression).
	First order methods proceed by iteratively multiplying current estimates by $\bX$ or 
	its transpose. Examples include gradient descent or its accelerated variants.
	
	Celentano, Montanari, Wu  \cite{celentano2020estimation} proved that for any constant 
	number of iterations (matrix vector multiplications), the optimal first order algorithm is 
	a specific approximate message passing algorithm (known as `Bayes AMP'). The error of this estimator
	can be characterized in the high-dimensional asymptotics $n,d\to\infty$, 
	$n/d\to\delta$, and provides a lower bound to the estimation error of any first order algorithm.
	Here we present  a simpler proof of the same result, and generalize it to broader classes 
	of data distributions and of first order algorithms, including algorithms with non-separable
	nonlinearities. Most importantly, the new proof technique does not require to construct an
	equivalent tree-structured estimation problem, and is therefore susceptible of a 
	broader range of applications.
\end{abstract}

\section{Introduction}

In this note we study high-dimensional estimation in a class of problems in which 
the data consists of a high dimensional matrix $\bX\in\reals^{n\times n}$ (symmetric)
or $\bX\in\reals^{n\times d}$ (asymmetric), and, possibly, a vector of labels $\by\in\reals^n$. 
More precisely, we consider two cases: $(i)$~Low-rank matrix estimation, whereby 
$\bX = \frac{1}{n} \btheta \btheta^{\top} + \bW$ with $\bW$ a noise matrix, and we would like to estimate
$\btheta \in\reals^{n}$; $(ii)$~Generalized linear models, whereby 
$y_i = h(\btheta^{\sT}\bx_i;w_i)$ with $\bx_i$ the $i$-th row of $\bX$ and $w_i$
a noise variable, and we would like to estimate $\btheta\in\reals^{d}$. 

The recent paper \cite{celentano2020estimation} introduced a class of `generalized first order methods'
(GFOM) to perform estimation efficiently. Informally, GFOMs proceed iteratively.
At time $t$, the state of the algorithm is given by order $t$ vectors of dimension $n$
or $d$ (which we can think of as estimates of $\btheta$). A new vector is computed
by applying a nonlinear function to these vectors (independent of the data)
and then multiplying the result by $\bX$ or $\bX^{\sT}$. This class of algorithm is broad enough to
include classical first order methods from optimization theory \cite{nesterov2003introductory}, 
such as gradient descent, accelerated gradient descent,
and mirror descent with respect to a broad class of objective functions (both convex and nonconvex).

Given this setting, a natural question is: 
\begin{quote}
\begin{center}
\emph{What is the optimal estimation algorithm among all GFOMs?}
\end{center}
\end{quote}
This question was answered in \cite{celentano2020estimation} under the assumption that the noise matrix $\bW$
 (in the case of low-rank  matrix estimation) or the covariates matrix $\bX$ 
 (for regression in generalized linear models) has i.i.d. normal entries, and under some regularity 
assumptions on the algorithm iterations.
 Namely, \cite{celentano2020estimation} proves that in the proportional
asymptotics $n,d\to\infty$, $n/d\to\delta \in (0, \infty)$, optimal estimation error
is achieved, for any fixed number of iterations $t$, by the Bayes approximate message 
passing (AMP) algorithm. Also this algorithm choice is unique up to reparametrizations.

The proof of \cite{celentano2020estimation} was based on three steps:
\begin{enumerate}
\item[$(I)$] \emph{Reduction.} Any GFOM can be simulated by a certain AMP algorithm, with the same number of
matrix-vector multiplications, plus (eventually) a post-processing step that is independent of
data $\bX$.
\item[$(II)$] \emph{Tree model.} The estimation error achieved by an AMP algorithm after $t$ iterations  is 
asymptotically equivalent to the one achieved by a corresponding message passing algorithm for 
a certain estimation problem on a tree graphical model $T$
after $t$-iterations (this algorithm is $t$-local on the tree).
\item[$(III)$] \emph{Optimality on trees.} Belief propagation is the optimal $t$-local
algorithm for the estimation problem on $T$. As a consequence, Bayes AMP is the optimal first
order method in the original problem (since it achieves the same accuracy as belief propagation
in the tree model).
\end{enumerate}

The main objective of this note is to present a simpler proof of the optimality of 
Bayes AMP that does not take the detour of constructing the equivalent tree model.
Namely, steps $(II)$ and $(III)$ are replaced by the following. 
\begin{enumerate}
\item[$(II')$] \emph{Reduction to orthogonal AMP.} Any AMP algorithm can be simulated 
by a certain orthogonal AMP algorithm, which, after $t$ iterations, generates
$t$ vectors in $\reals^d$ or $\reals^n$ whose projections orthogonal to $\btheta$ are orthonormal.
The algorithm output at iteration $t$ is a function of these $t$ vectors, which is independent of
data $\bX$. 
\item[$(III')$] \emph{Optimality of Bayes AMP.} The asymptotic estimation error
 of the orthogonal AMP estimator is characterized via state evolution \cite{bayati2011dynamics}.
 By minimizing this error among orthogonal AMP algorithms, we obtain the error of Bayes AMP. 
\end{enumerate}

This proof strategy avoids several technicalities that arise because of the tree
equivalence steps and the analysis of belief propagation. 
Also, it is easier to generalize to different settings,
and indeed we establish the following generalizations of the result of \cite{celentano2020estimation}:
\begin{itemize}
\item We treat the case of noise matrices $\bW$ (for low-rank matrix estimation)
or $\bX$ (for regression) with independent entries, satisfying a bound on the fourth moment.
In contrast, the results of \cite{celentano2020estimation} were limited to Gaussian matrices.
\item In the Gaussian case, we cover the case in which the first order method applies,
 at each iteration, a general Lipschitz continuous nonlinearity to previous iterates.
 The only limitation is that this nonlinearity should be independent 
 from the data matrix $\bX$. In contrast, the results of \cite{celentano2020estimation} 
 were limited to separable nonlinearities (i.e. nonlinearities that act 
 row-wise to the previous iterates, see below).
\end{itemize}

In order to motivate our work, we will begin in Section 
\ref{sec:Motivation} by presenting a numerical experiment. 
We will carry out this experiment in the context of  phase
retrieval, since a large number of first order methods have been developed for
this problem.

We will next pass to explaining our new optimality results.
In order to present the new proof technique in the most transparent fashion,
we will devote most of the main text to the simplest possible example,
namely estimating a rank-one symmetric matrix from a noisy observation.
We will describe the setting and state our results in this context in 
Section \ref{sec:SymmetricMatr}. We then prove this result in
Section \ref{sec:proof} for the case of separable nonlinearities. Finally
section \ref{sec:MainRegression} presents our results for the case of regression. 
The appendices presents technical proofs for non-separable nonlinearities and
for the regression setting. 
These follow the same strategy as the proof in the main text with some modifications.

\section{An experiment: benchmarking algorithms for phase retrieval}
\label{sec:Motivation}

As a motivating example, we consider noiseless phase retrieval, 
in which we take measurements $y_i$ of an unknown signal $\btheta\in\reals^d$
according to:
\begin{align*}
	y_i = \< \bx_i, \btheta \>^2, \qquad i \in \{1, \cdots, n\}.
\end{align*}
We let $\bX \in \RR^{n \times d}$ with the $i$-th row being $\bx_i$ and $\by \in \RR^n$ with the 
$i$-th coordinate being $y_i$. We will consider the simple example of random measurements
$\bx_i \iidsim \normal(\mathbf{0}, \id_d / n)$ and assume the normalization 
$\|\btheta\|^2/d=1+o_d(1)$. Given $(\by, \bX)$, our goal is to recover $\btheta$. 
Since the signal $\btheta$ is real, `sign retrieval' would be a more appropriate name here.
We expect that an experiment with complex signal would yield similar results.

Needless to say, first order methods (with spectral initialization or not)
were studied in a substantial body of work, see among others
\cite{schniter2014compressive,candes2015phase,chen2017solving,cai2016optimal,wang2017solving,duchi2019solving,mondelli2018fundamental,waldspurger2018phase,ma2019optimization,maillard2020phase,fannjiang2020numerics,mondelli2021approximate}.

Apart from illustrating the content of our results, this section also demonstrates
a practical use of these results to benchmarking algorithms. 

\subsection{Spectral initialization}

As is common in the literature, we consider first order methods with a spectral initialization.
Since our main objective is to compare various first order methods, we will use a common 
spectral initialization developed in \cite{mondelli2018fundamental}, which is defined as follows.

We define $\bD_n\in\reals^{d\times d}$ as follows:
\begin{align*}
	\bD_n := \sum_{i=1}^n\cT(y_i)\bx_i\bx_i^{\sT}, 
\end{align*}
where $\cT: \RR \rightarrow \RR$ is a preprocessing function 
given in \cite[Eq. (137)]{mondelli2018fundamental}:
\begin{align}
\cT(y) = \frac{y-1}{y+\sqrt{1+\eps}-1}\, .
\end{align}
Here, $\eps>0$ can be taken arbitrarily, but in simulations we fix $\eps = 10^{-3}$.
We then use the initialization ${\btheta}^0 := \sqrt{d } \bv_1(\bD_n)$, where
$\bv_1(\bD_n)$ denotes the leading eigenvector of $\bD_n$. 
Without loss of generality, we assume $\langle {\btheta}^0, {\btheta} \rangle \geq 0$
(the overall sign of $\btheta$ cannot be estimated).
As shown in \cite{mondelli2018fundamental}, this initialization is optimal in the following sense.
Consider $n,d\to\infty$, with $n/d\to\delta$. 
For $\delta>1+\eps$, $\btheta^0$ achieves a positive correlation
with $\btheta$, with probability converging to one as $n,d\to\infty$. 
For $\delta<1$, no estimator can achieve a positive correlation.

In fact, for any $\delta>1$, the 
correlation between $\btheta^0$ and $\btheta$ converges in probability to a deterministic value that 
is given as  follows.  For $\lambda \in (1, \infty)$, we define the functions
\begin{align*}
	\phi(\lambda) := \lambda \E\left[ \frac{\cT(G^2)G^2}{\lambda - \cT(G^2)} \right], \qquad \psi(\lambda) := \frac{\lambda}{\delta} + \lambda \E\left[ \frac{\cT(G^2)}{\lambda - \cT(G^2)} \right]\, ,
\end{align*}
where expectation is with respect to  $G \sim \normal(0,1)$.
We let $\bar\lambda = \argmin_{\lambda \geq 1} \psi(\lambda)$ and,  for $\lambda \in (1, \infty)$, 
define $\zeta(\lambda) := \psi(\max(\lambda, \bar\lambda))$. Denote by 
$\lambda^{\ast}$ the unique solution of the equation $\zeta(\lambda) = \phi(\lambda)$ on $(1, \infty)$. 
Finally, let $a \geq 0$ be given by
\begin{align*}
	a^2 = \frac{\frac{1}{\delta} - \E\left[ \frac{\cT(G^2)^2}{(\lambda^{\ast} - \cT(G^2))^2} \right]}{\frac{1}{\delta} + \E\left[ \frac{\cT(G^2)^2(G^2 - 1)}{(\lambda^{\ast} - \cT(G^2))^2} \right]}.
\end{align*} 
Then, \cite[Lemma 2]{mondelli2018fundamental} proves that $|\<\btheta,\btheta^0\>|/d$ converges to
$a$ as $n,d\to\infty$.  Further, the approximate joint distribution
of these vectors is given by ${\btheta}^0 \approx {a} \btheta + \sqrt{{1 - a^2}} \bg$, 
in the sense that, for any Lipschitz function $\psi: \RR \rightarrow \RR$, 
\begin{align}
\plim_{n,d\to\infty}\frac{1}{d}\sum_{i=1}^{d}\psi\big(\theta^0_{i}-s\, a\theta_i\big)=
\E\big\{\psi(\sqrt{{1 - a^2}} G)\big\}\, .
\end{align}
(This follows from the convergence of the correlation  $|\<\btheta,\btheta^0\>|/d$, together 
with rotational invariance.). Here, $\plim$ denotes convergence in probability, $\bg \sim \normal(0, \id_d)$ and is independent of $\btheta$. Finally, \cite{mondelli2021approximate} shows that initializing
AMP at $\btheta^0$ is (asymptotically) equivalent to running a first order method 
from a warm start initialization independent of $\btheta^0$, and hence the analysis of the next 
sections apply to the present case.

\subsection{First order methods}\label{sec:algs}

We will consider three specific GFOMs for phase retrieval. 
GFOMs will only be introduced formally in Section \ref{sec:SymmetricMatr}
(for low-rank matrix estimation) and Section \ref{sec:MainRegression} (for regression, 
including phase retrieval as a special case). For this section, it is sufficient to say 
that GFOMs operate at each iteration by performing multiplication by $\bX$
or $\bX^{\sT}$ plus, eventually, applying a suitable 
nonlinear operation that is independent of $\bX$.

In the next subsection we will implement the algorithms listed below 
and compare their estimation error with the minimum error among all GFOMs.

\subsubsection*{Bayes AMP}

Bayes AMP is a special type of AMP algorithm and fits the general framework
of \cite{bayati2011dynamics}. The theory presented in Section \ref{sec:MainRegression} suggests 
that it is indeed optimal among all GFOMs.
 A detailed description and analysis of the Bayes AMP 
for phase retrieval is carried out in \cite{mondelli2021approximate}. 
Since 
the precise definition is somewhat technical and not needed for the rest of the 
paper, we omit it here and refer to  \cite{mondelli2021approximate}.

\begin{remark}
It is worth clarifying that ---despite the name--- Bayes AMP does not rely on Bayesian assumptions.

More precisely, the definition Bayes AMP requires specifying a nominal distribution $\mu^{\rm{AMP}}_{\Theta}$
for the entries of the true signal $\btheta$. Here, we are assuming $\btheta$ arbitrary 
(either deterministic or random) and such that $\|\btheta\|_2^2/d=1+o_d(1)$. By rotational invariance of the distribution of the
covariates $\bx_i$, we can achieve at any such $\btheta$ the same error as if $\btheta$ 
was uniformly distributed over the sphere of radius $\|\btheta\|_2$. For large $d$,
this is achieved by setting $\mu^{\rm{AMP}}_{\Theta}$ the standard normal distribution,
which is what we do here.
\end{remark}

\subsubsection*{Gradient descent}

If we attempt to  minimize the $\ell_2$ loss on the training dataset, we can derive the 
corresponding gradient descent algorithm:
\begin{align*}
	\btheta^{t + 1} = \btheta^t + \frac{4\eta\delta^2}{n}  \bX^{\sT}(\by - |\bX\btheta^t|^2) \odot(\bX \btheta^t) ,
\end{align*}
where $\eta > 0$ is the step size, $|\bX \btheta^t|^2 \in \RR^n$ is the vector whose
 $i$-th coordinate is $\langle \bx_i, \btheta^t \rangle^2$, and $\odot$ 
 denotes entrywise multiplication. 

\subsubsection*{Prox-linear algorithm}

The prox-linear algorithm was proposed in \cite{duchi2019solving}. 
The original algorithm sets $L := 2\|\bX\|_{\rm{op}}^2$  and proceeds by solving a 
sequence of sub-problems: 
\begin{align}\label{eq:prox-linear}
	\btheta^{t + 1} = \argmin_{\bvtheta\in\RR^d} \left\{\frac{L}{2}\|\bvtheta - \btheta^t\|^2_2 + \sum_{i = 1}^n \left|\langle \bx_i, \btheta^t \rangle^2 + 2\langle \bx_i, \btheta^t \rangle\langle \bx_i, \bvtheta - \btheta^t\rangle - y_i \right|
	\right\}.
\end{align}
Notice that this  is \emph{not} a GFOM, since each iteration requires solving an optimization problem,
and does not reduce to a pair of matrix-vector multiplications by  $\bX^{\sT}$ and $\bX$.

In order to obtain a first order algorithm we replace the full optimization
of  the subproblem by a single gradient step, with stepsize $\xi$:
\begin{align}\label{eq:1step-prox-linear}
	\btheta^{t + 1} = \btheta^t + {2\xi} \bX^{\top} (\bs^t\odot \bX \btheta^t),
	\;\;\;\;
	s_i^t:= \sign(y_i-\<\bx_i,\btheta^t\>^2)\, .
\end{align}
We will carry out simulations both with the prox-linear algorithm and the 1-step prox-linear algorithm.
It is however important to keep in mind that the comparison between prox-linear algorithm and 
GFOMs is unfair to GFOMs because each prox-linear step potentially requires a large number of 
matrix-vector multiplications.

\subsubsection*{Truncated amplitude flow (TAF)}
Truncated amplitude flow (TAF) was proposed in \cite{wang2017solving},
which claimed superior statistical performances with respect to state of the art. 
Following \cite{wang2017solving}, we fix parameters $\alpha = 0.6$, $\gamma = 0.7$. 
For $t \in \NN$, we define the set
\begin{align*}
	\mathcal{I}_t := \big\{i \in [n]: |\langle \bx_i, \btheta^t \rangle| \geq (1 + \gamma)^{-1} \sqrt{y_i}
	\big\}.
\end{align*}
At the $(t+1)$-th iteration, we perform the following update:
\begin{align*}
	\btheta^{t + 1} = \btheta^t - 
	\alpha \sum_{i \in \mathcal{I}_t}\left(\langle \bx_i, \btheta^t \rangle - 
	\sqrt{y_i} \sign(\langle\bx_i, \btheta^t \rangle) \rangle\right) \bx_i.
\end{align*}

\subsection{Simulation results}

\begin{figure}
     \centering
     \begin{subfigure}[b]{0.5\textwidth}
         \centering
         \includegraphics[width=\textwidth]{./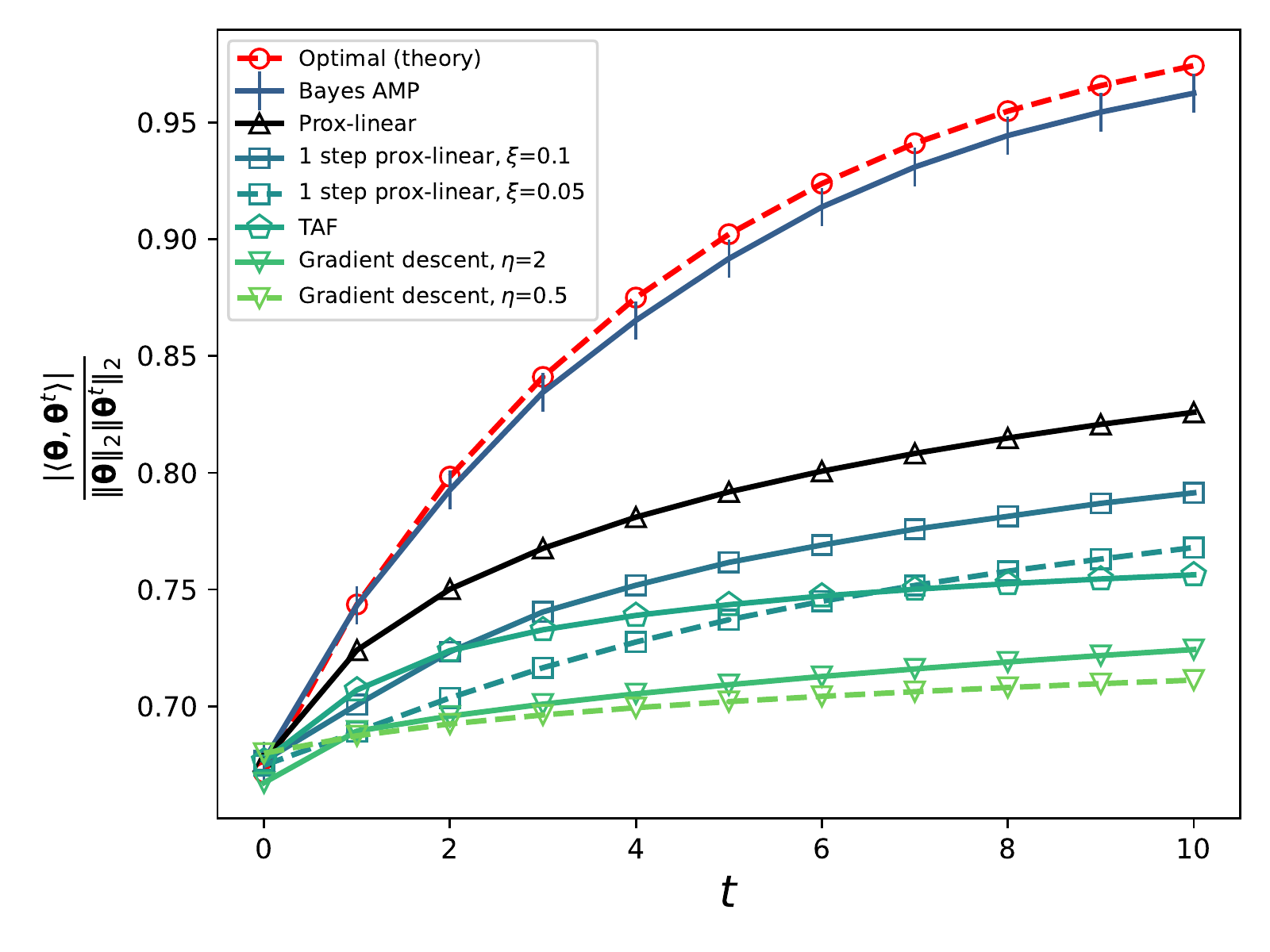}
         \caption{$n=600$.}
     \end{subfigure}
     \hspace{-1em}
     \begin{subfigure}[b]{0.5\textwidth}
         \centering
         \includegraphics[width=\textwidth]{./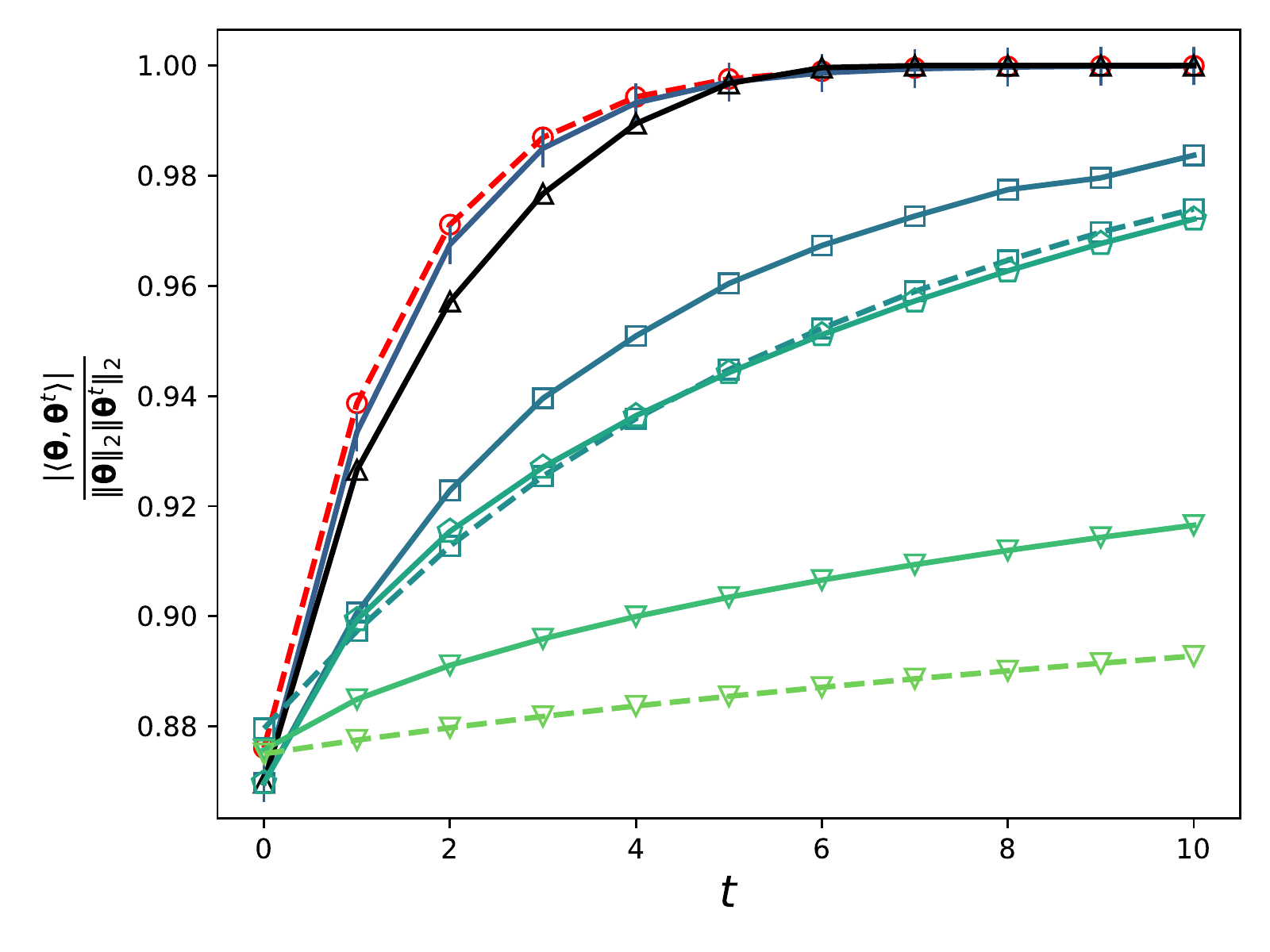}
         \caption{$n=1000$.}
     \end{subfigure}
        \caption{Correlation  $|\<\btheta^t,\btheta\>|/\|\btheta^t\|_2\|\btheta\|_2$
        for various algorithms, as a function of the number of iterations, for $d=400$. All
        algorithms are GFOMs with the exception of prox-linear.
         Red dashed lines represent the optimal correlation of Theorem \ref{thm:GLM}.}
        \label{fig:cor}
\end{figure}

\begin{figure}
     \centering
     \begin{subfigure}[b]{0.5\textwidth}
         \centering
         \includegraphics[width=\textwidth]{./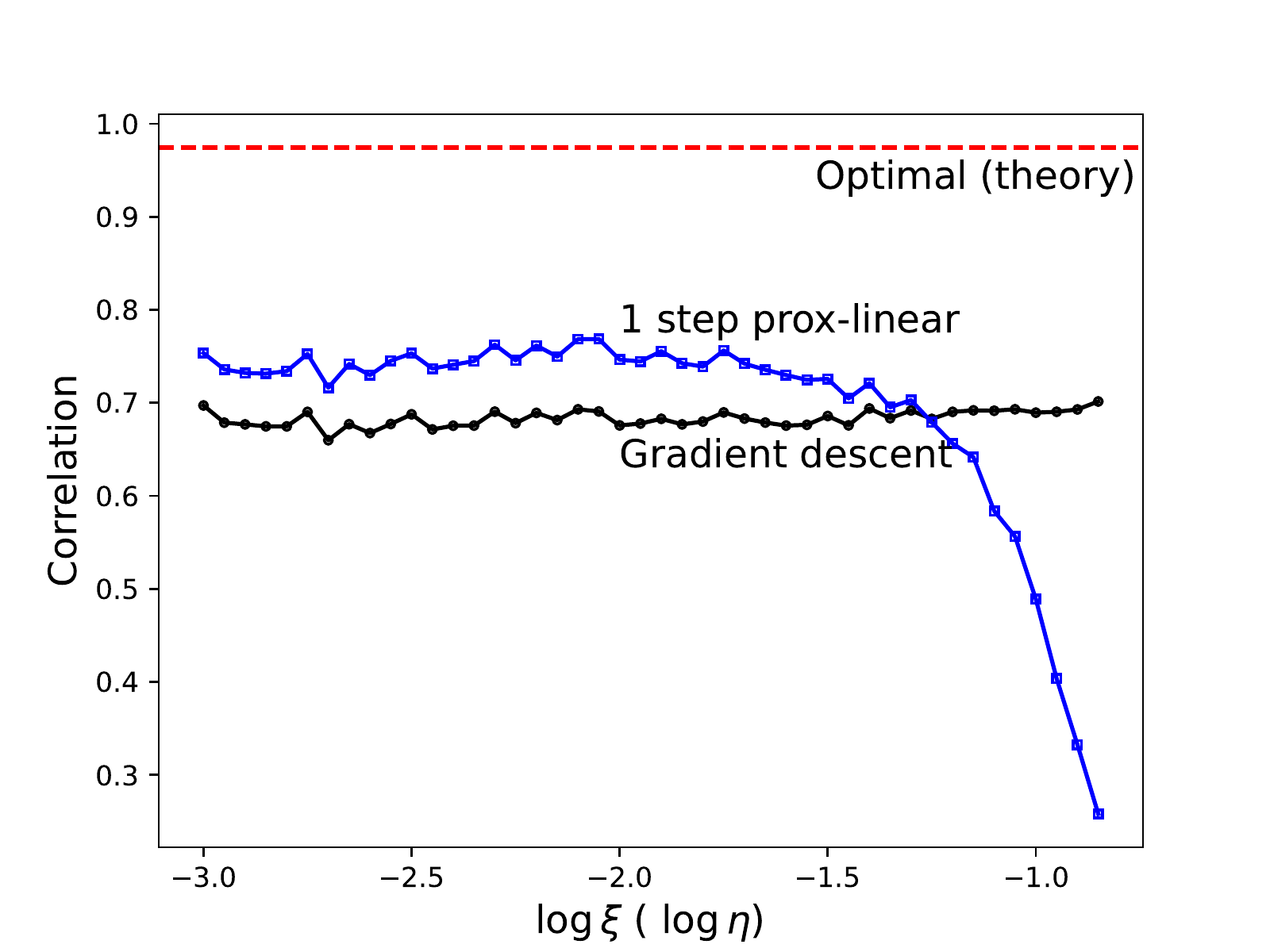}
         \caption{$n=600$.}
     \end{subfigure}
     \hspace{-2.5em}
     \begin{subfigure}[b]{0.5\textwidth}
         \centering
         \includegraphics[width=\textwidth]{./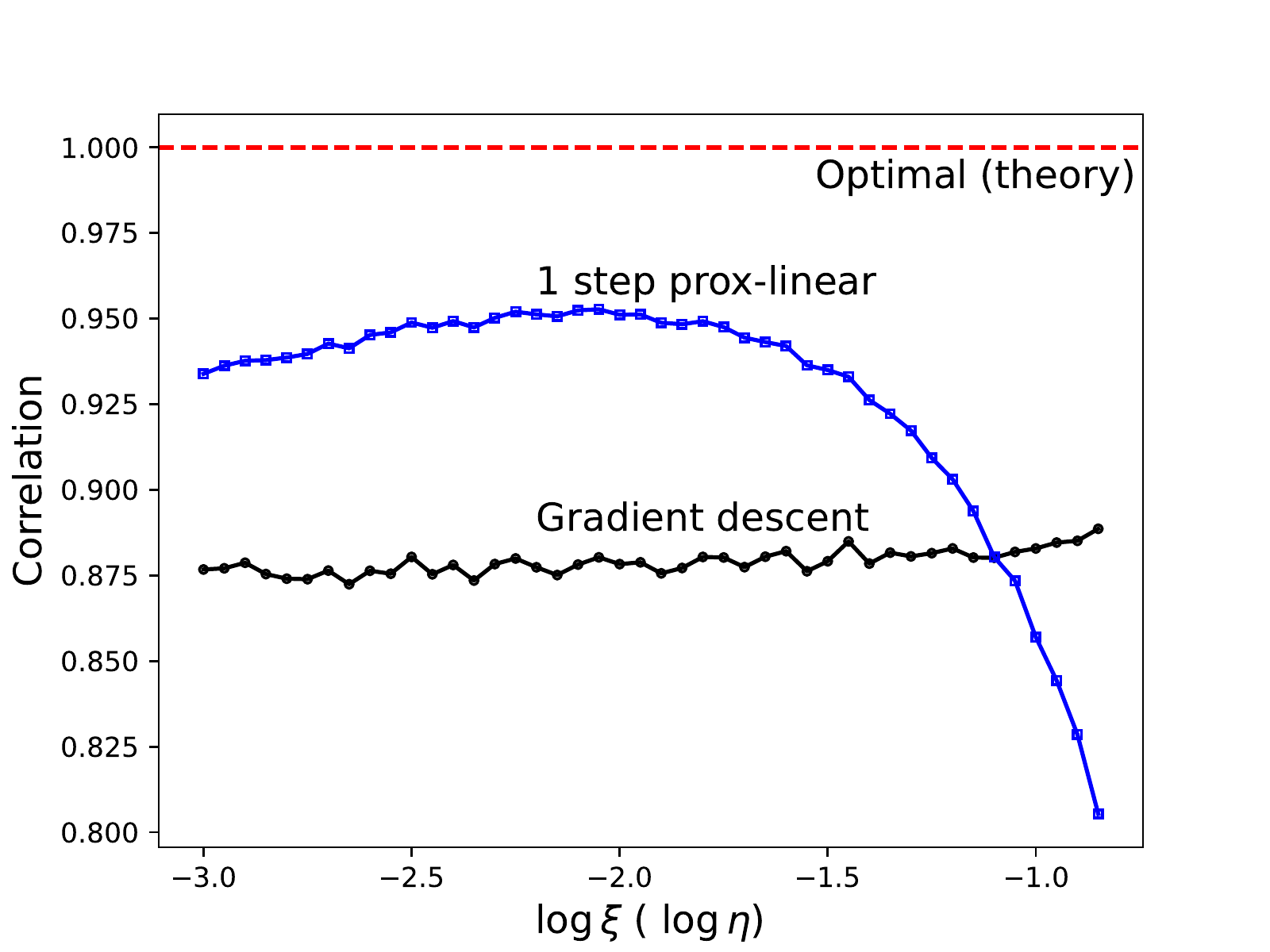}
         \caption{$n=1000$.}
     \end{subfigure}
        \caption{Performance of gradient descent and the one step prox-linear algorithm 
        with $t=10$ iterations as a function of the step sizes. The $x$ axis is the logarithm 
        of the step size $\eta$ (for gradient descent) or $\xi$ (for one step prox-linear algorithm). 
        The $y$ axis is the correlation $|\<\btheta^t,\btheta\>|/\|\btheta^t\|_2\|\btheta\|_2$. 
        Red dashed lines represent the optimal correlation of Theorem \ref{thm:GLM}. Results 
        are averaged over 50 independent trials.}
        \label{fig:step}
\end{figure}

In our first set of simulations, we take  $d = 400$, $n\in\{600,1000\}$, 
and run reconstruction experiments using each of the algorithms described above, 
averaging results over 50 independent trials. 
We compute the correlation between the estimates produced by these algorithms  and the true
signal $\btheta$, and plot the results in Figure \ref{fig:cor}, as a function of the number of iterations
$t\in\{0,1, \cdots, 10\}$. We also plot the theoretical prediction (cf. Theorem \ref{thm:GLM})
for the maximum achievable correlation by any GFOM.

\begin{table}
\centering
	\begin{tabular}{SSSSSS} \toprule
     & {Bayes AMP} & {Gradient descent} & {Prox-linear} & {1 step prox-linear} & {TAF} \\ \midrule
    {Wall clock time}  & {$1.83 \times 10^{-2}$} & {$6.63 \times 10^{-3}$} & {$ \mathbf {5.87 \times 10^1}$} & {$6.23 \times 10^{-3}$} & {$7.43 \times 10^{-3}$}  \\ \bottomrule
\end{tabular}
\caption{Averaged wall clock time for different algorithms. }
\label{table:time}
\end{table}

\begin{figure}
\centering
\begin{subfigure}{0.3\columnwidth}
\includegraphics[width=\textwidth]{./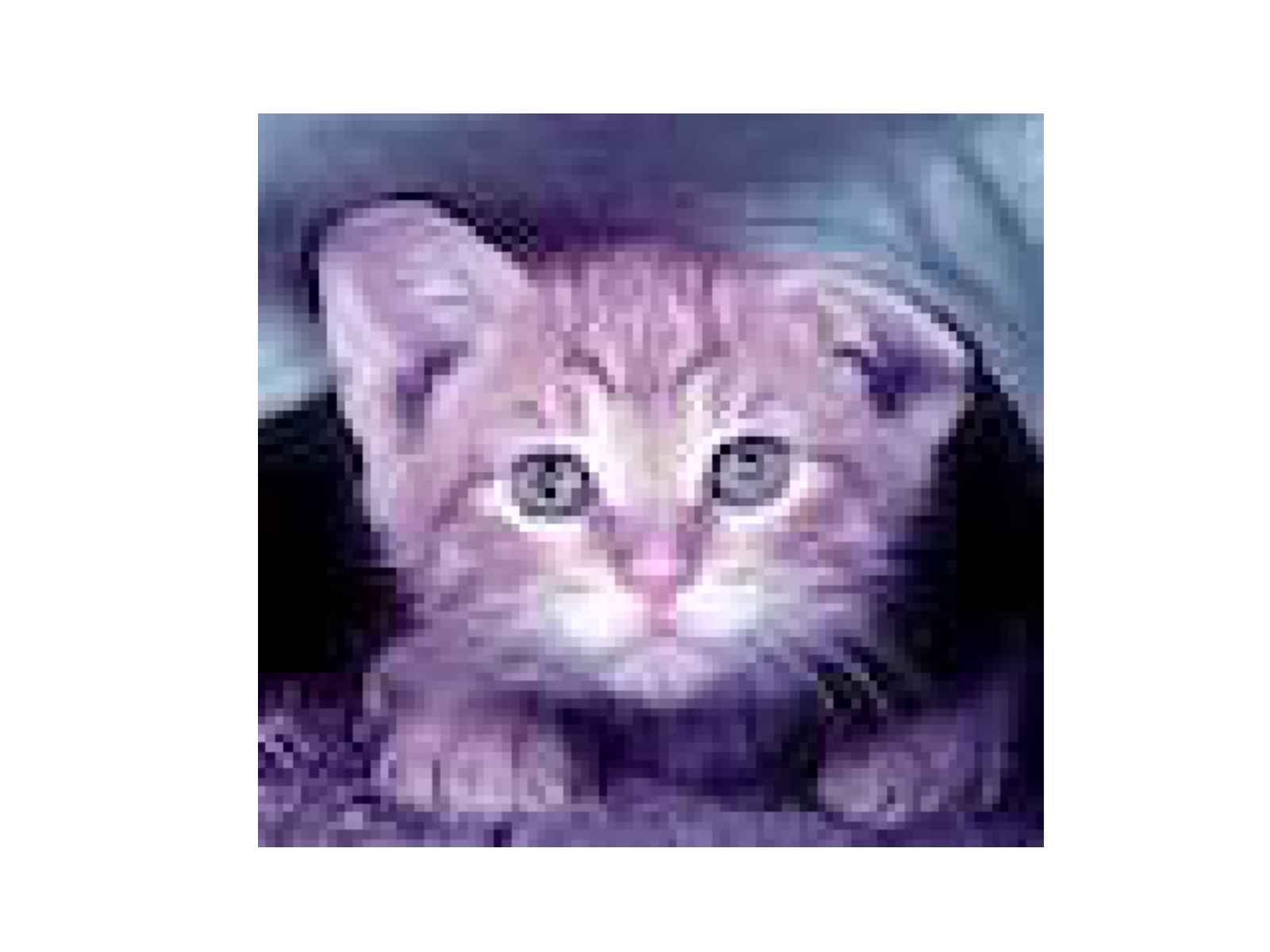}
\caption*{Original image.}
\end{subfigure} \\
\begin{subfigure}{0.3\columnwidth}
\includegraphics[width=\textwidth]{./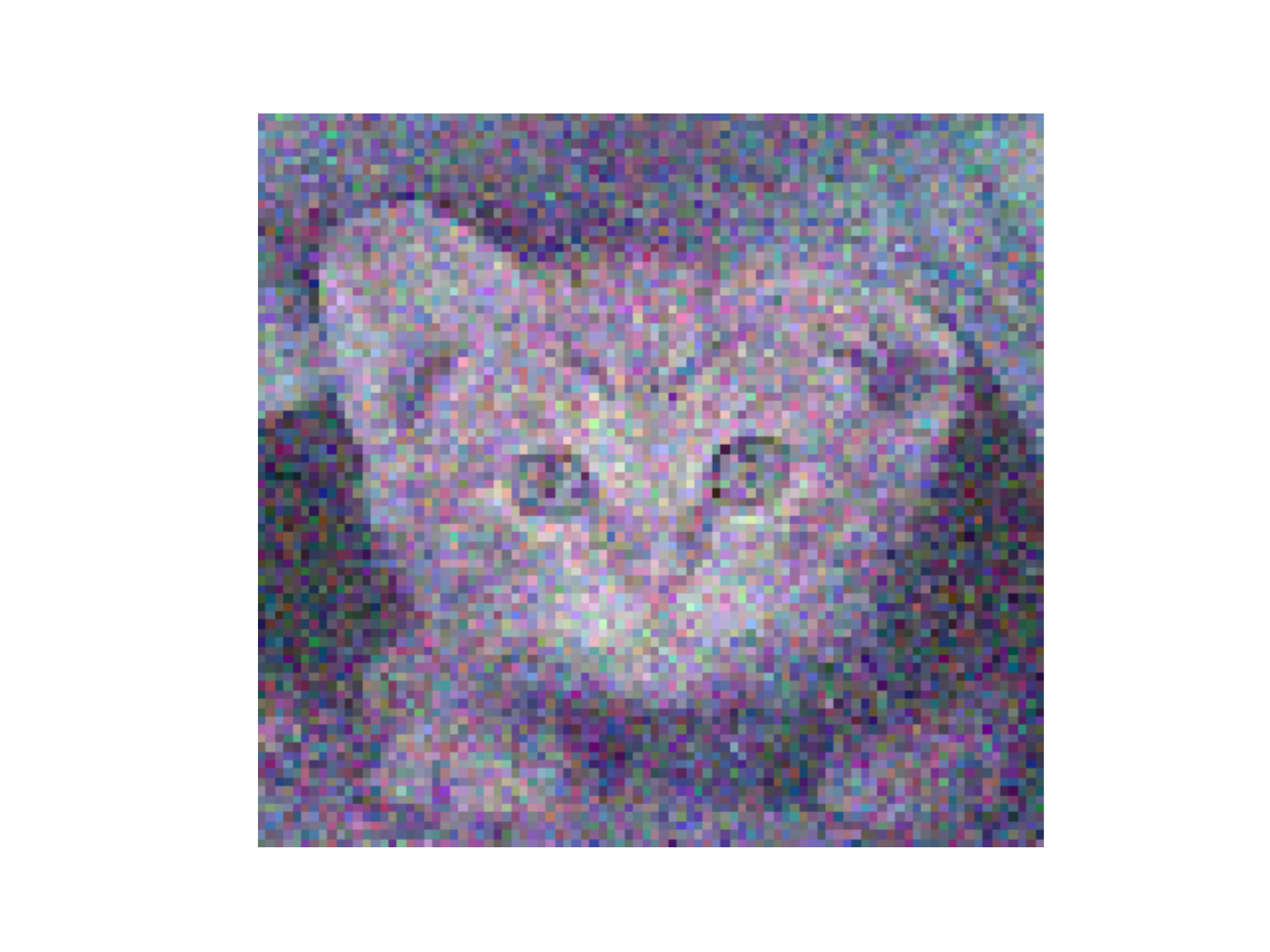}
\caption*{Bayes AMP, $t = 2$.}
\end{subfigure}
\hspace{-3em}
\begin{subfigure}{0.3\columnwidth}
\centering
\includegraphics[width=\textwidth]{./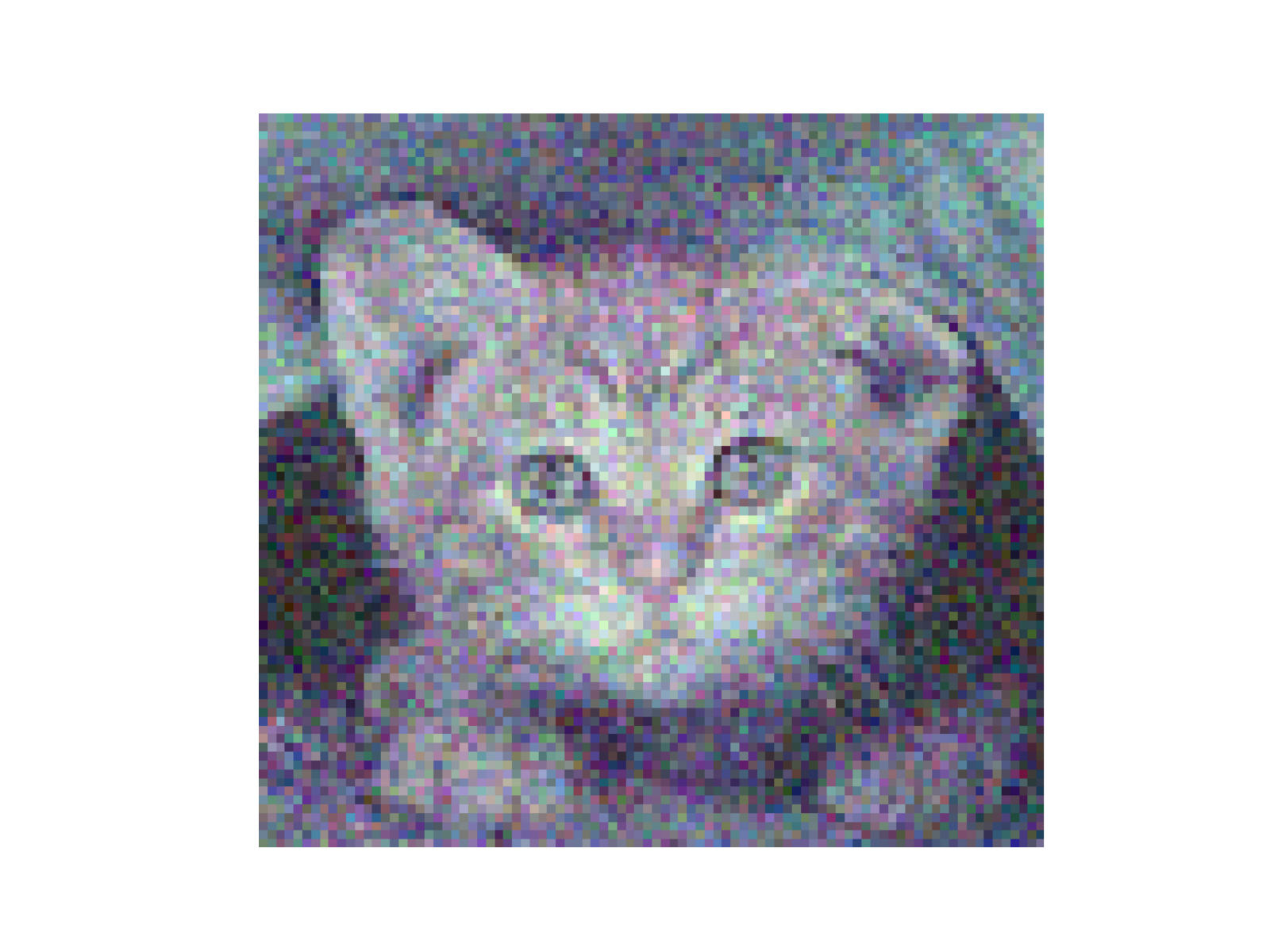}
\caption*{Bayes AMP, $t = 4$.}
\end{subfigure}
\hspace{-3em}
\begin{subfigure}{0.3\columnwidth}
\centering
\includegraphics[width=\textwidth]{./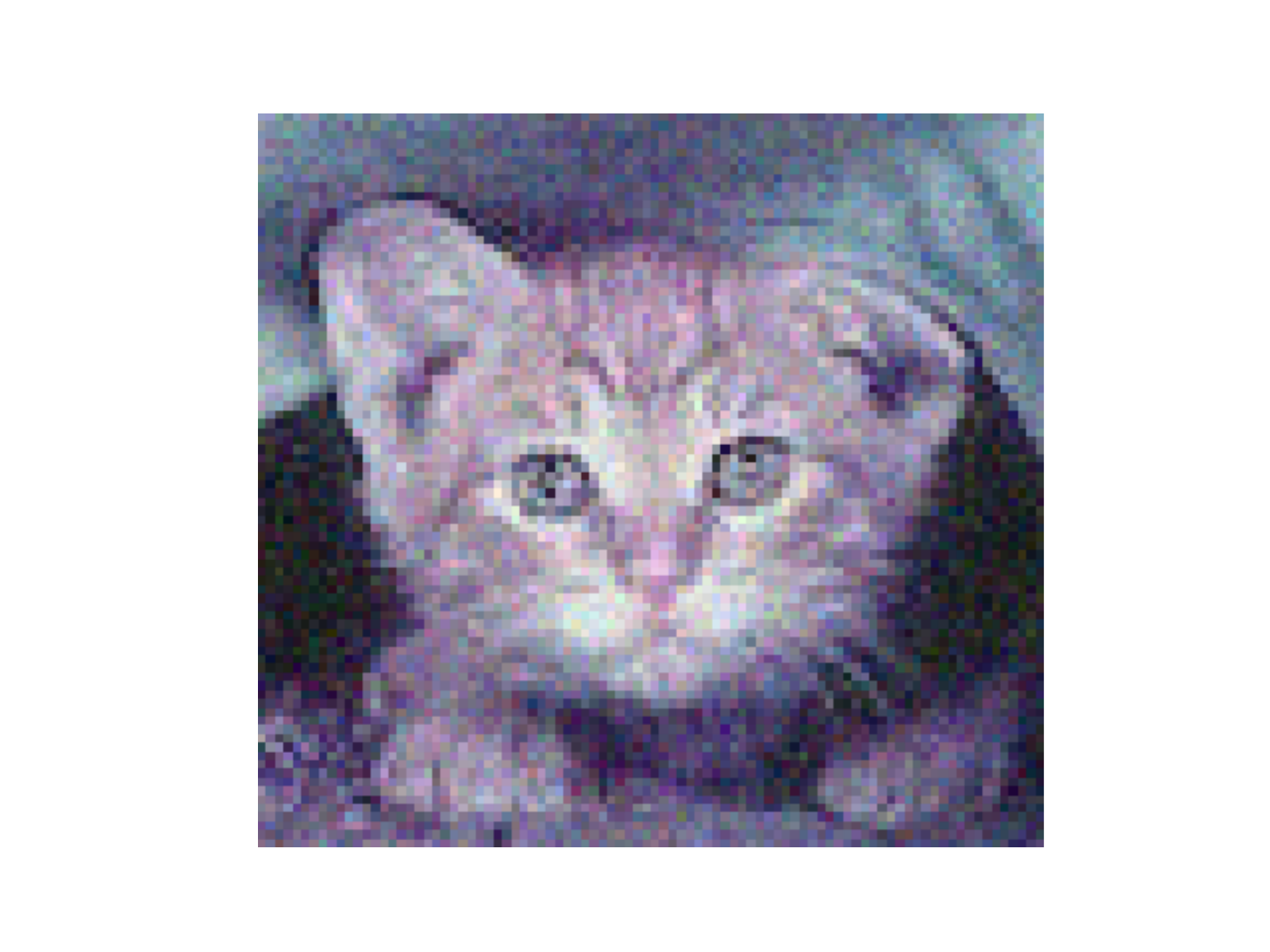}
\caption*{Bayes AMP, $t = 8$.}
\end{subfigure} \\
\begin{subfigure}{0.3\columnwidth}
\includegraphics[width=\textwidth]{./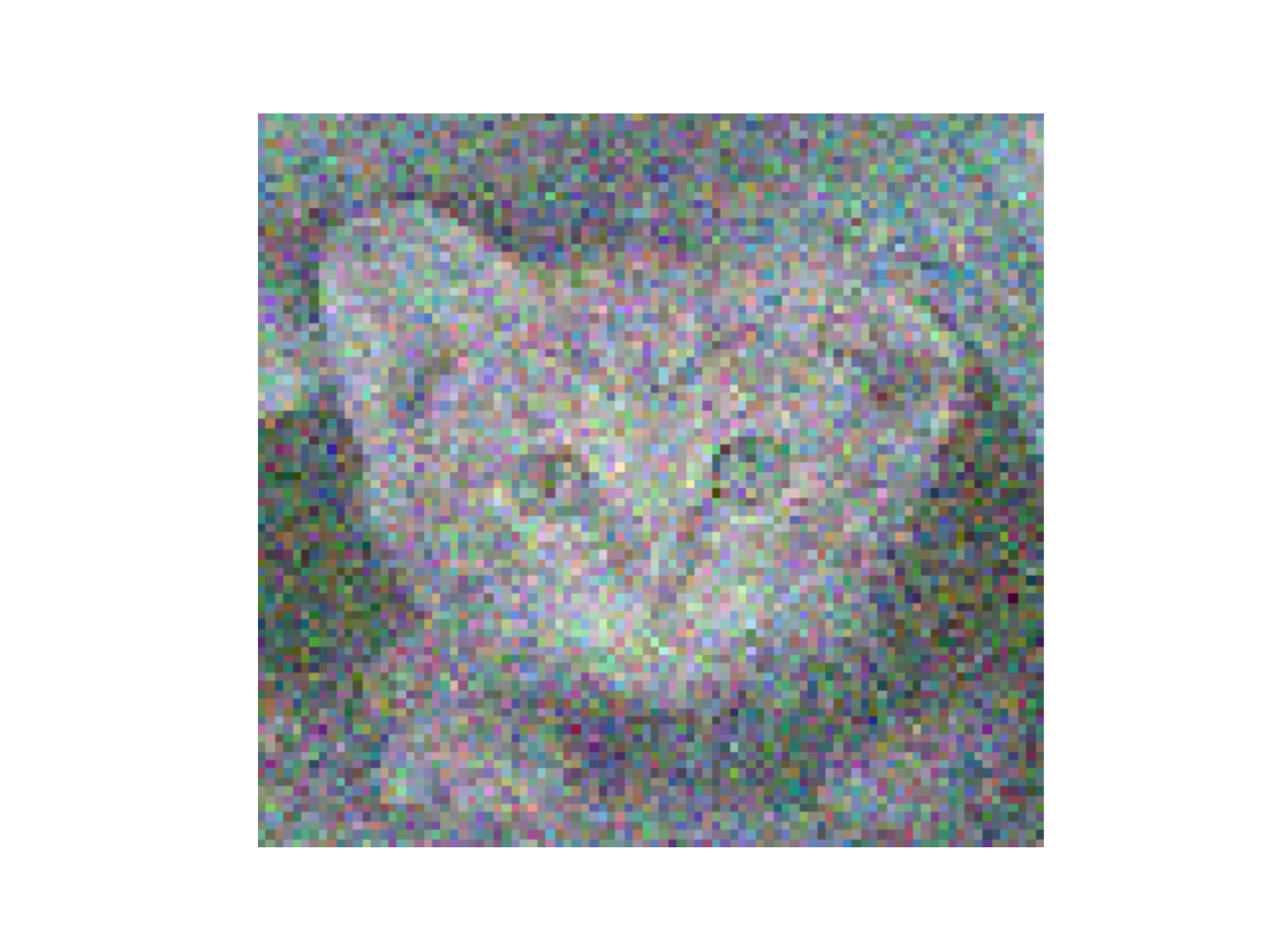}
\caption*{1 step prox-linear, $t = 2$.}
\end{subfigure}
\hspace{-3em}
\begin{subfigure}{0.3\columnwidth}
\centering
\includegraphics[width=\textwidth]{./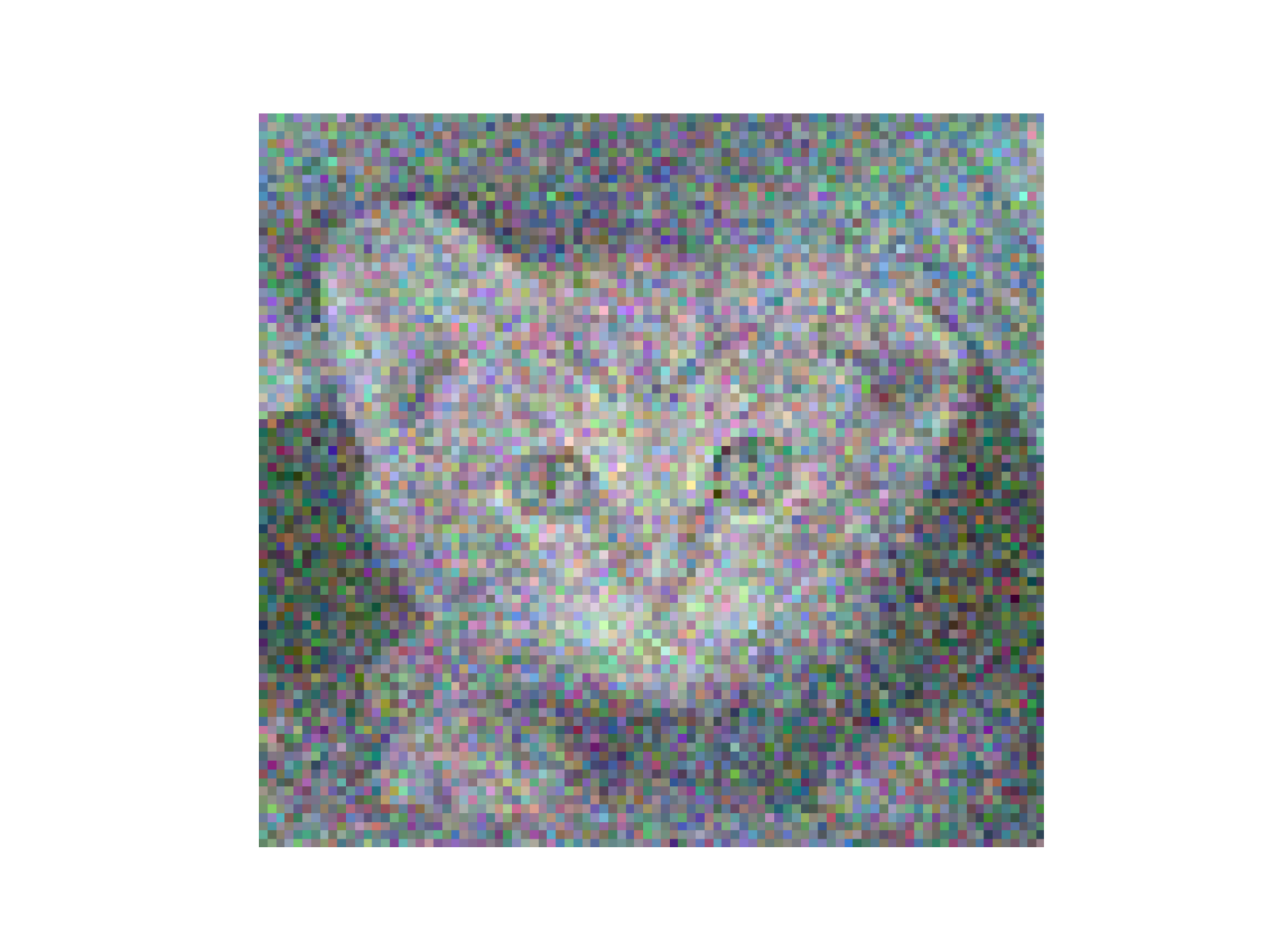}
\caption*{1 step prox-linear, $t = 4$.}
\end{subfigure}
\hspace{-3em}
\begin{subfigure}{0.3\columnwidth}
\centering
\includegraphics[width=\textwidth]{./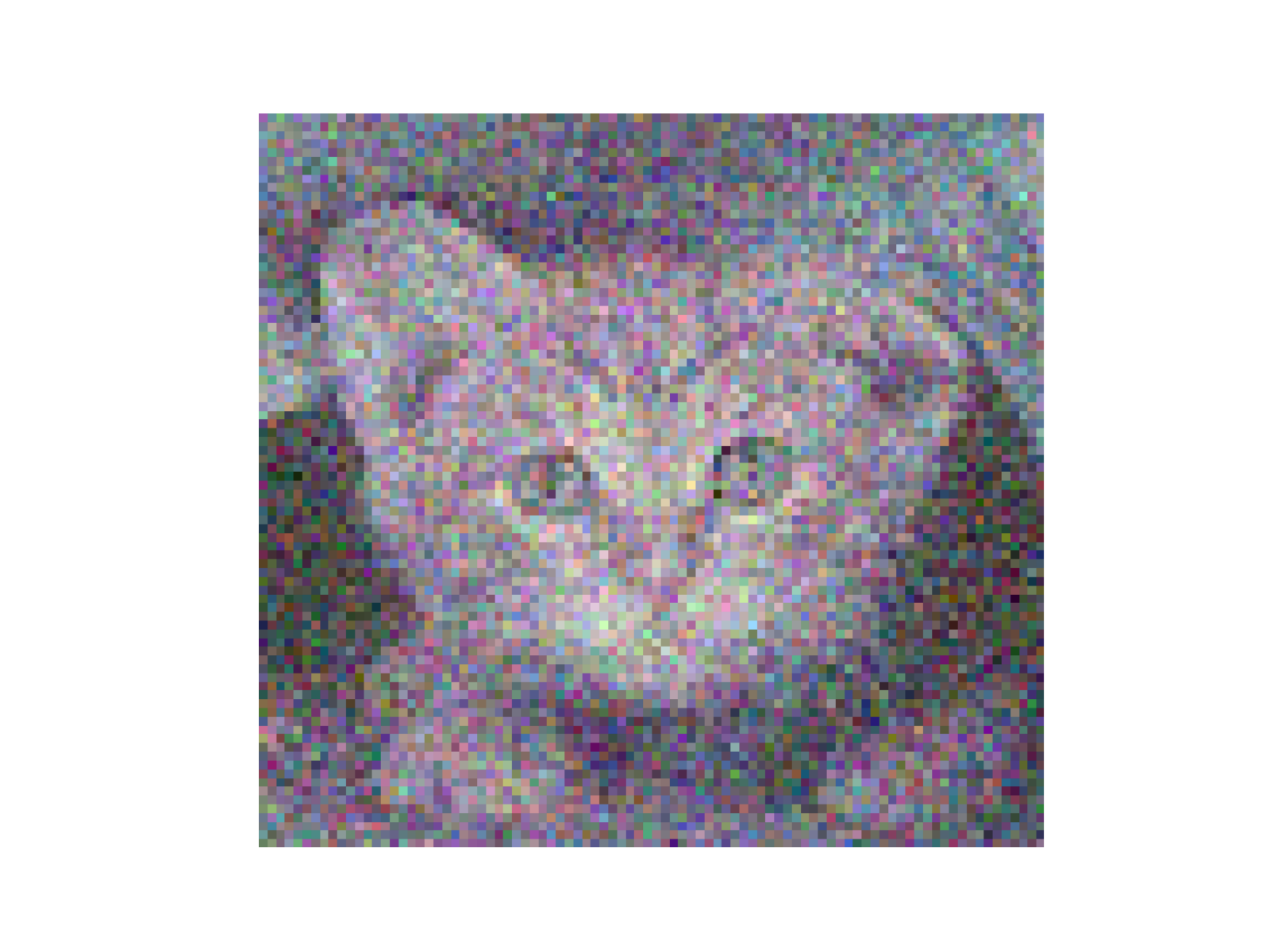}
\caption*{1 step prox-linear, $t = 8$.}
\end{subfigure} \\
\begin{subfigure}{0.3\columnwidth}
\includegraphics[width=\textwidth]{./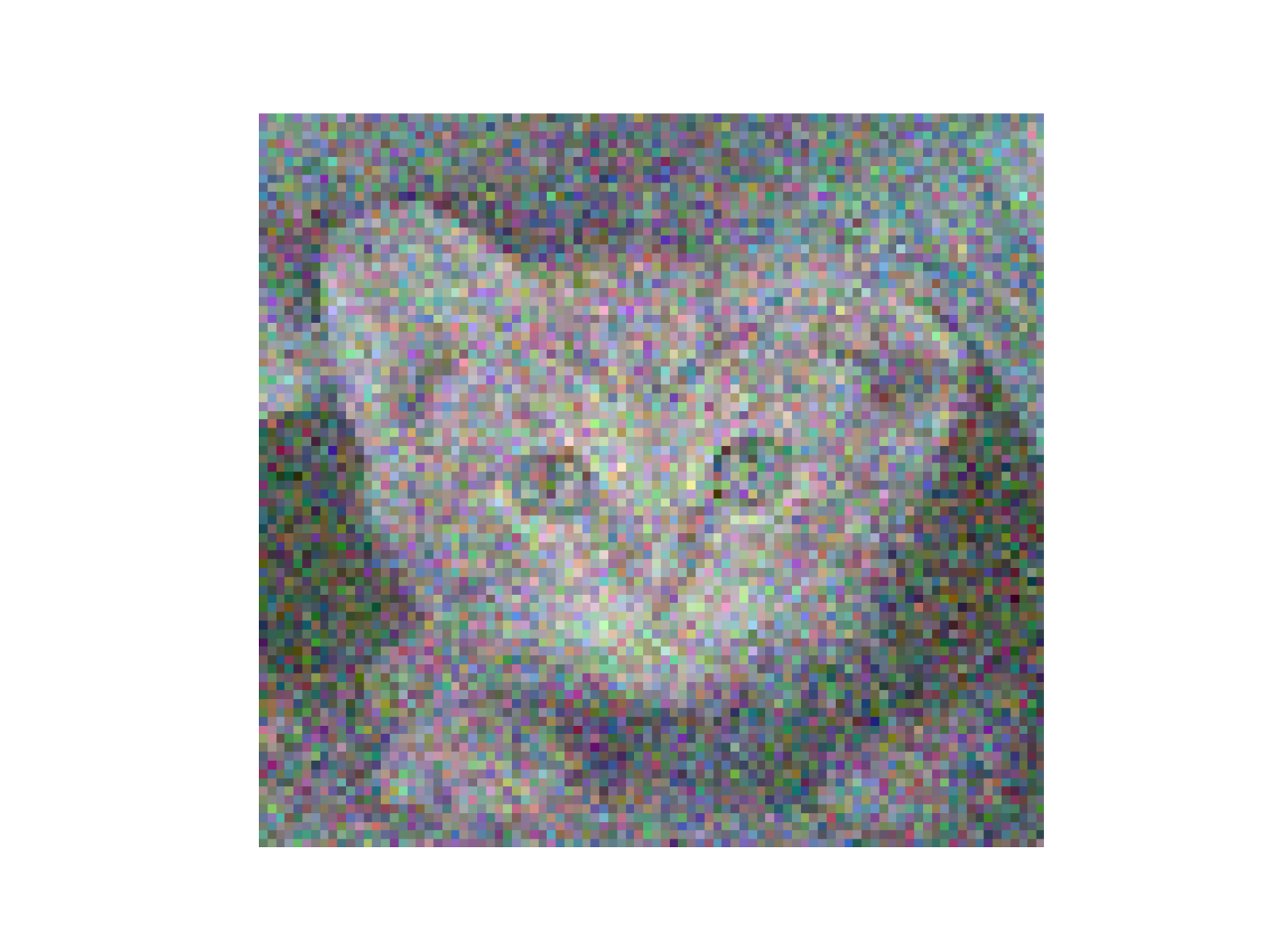}
\caption*{TAF, $t = 2$.}
\end{subfigure}
\hspace{-3em}
\begin{subfigure}{0.3\columnwidth}
\centering
\includegraphics[width=\textwidth]{./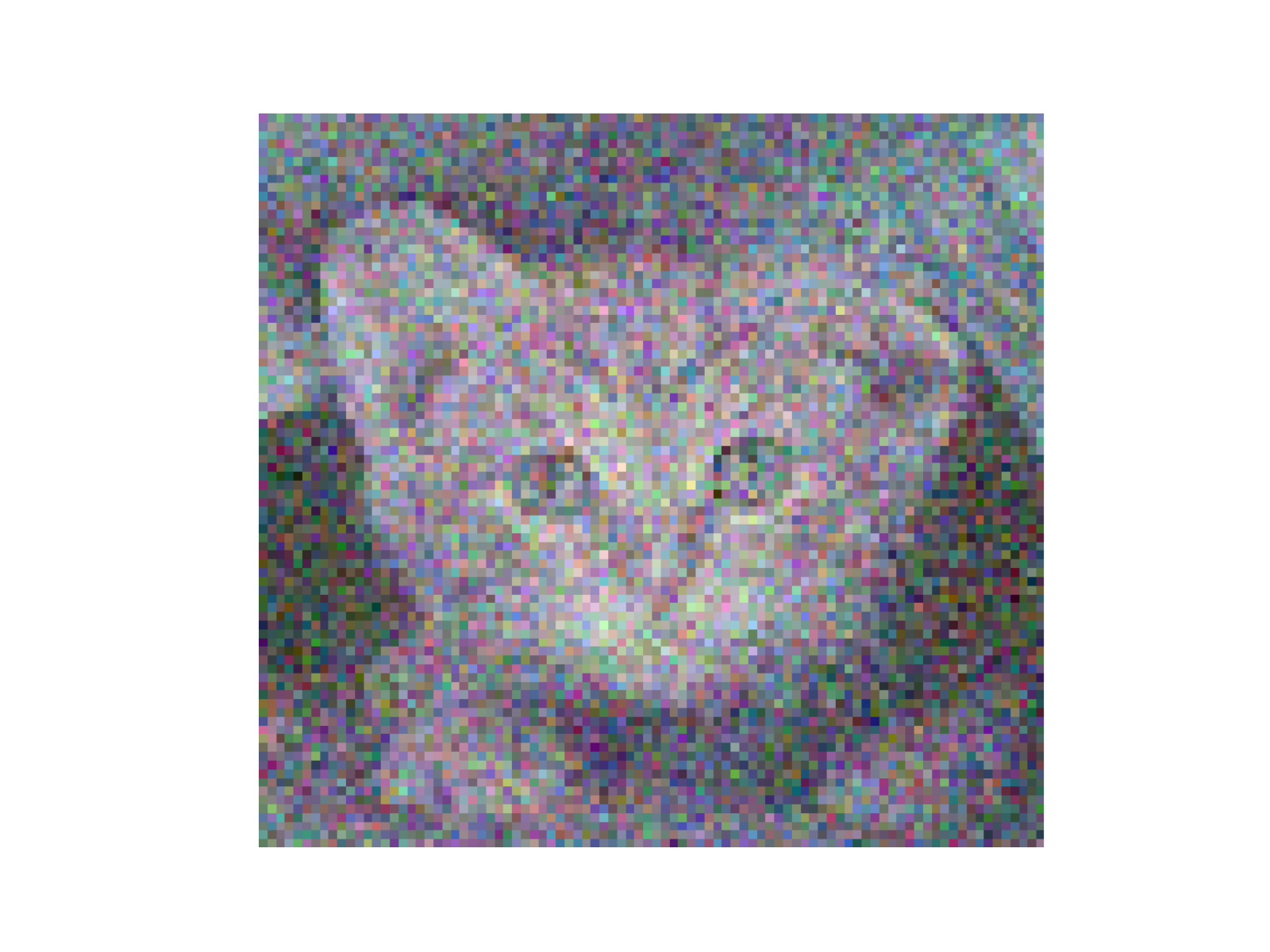}
\caption*{TAF, $t = 4$.}
\end{subfigure}
\hspace{-3em}
\begin{subfigure}{0.3\columnwidth}
\centering
\includegraphics[width=\textwidth]{./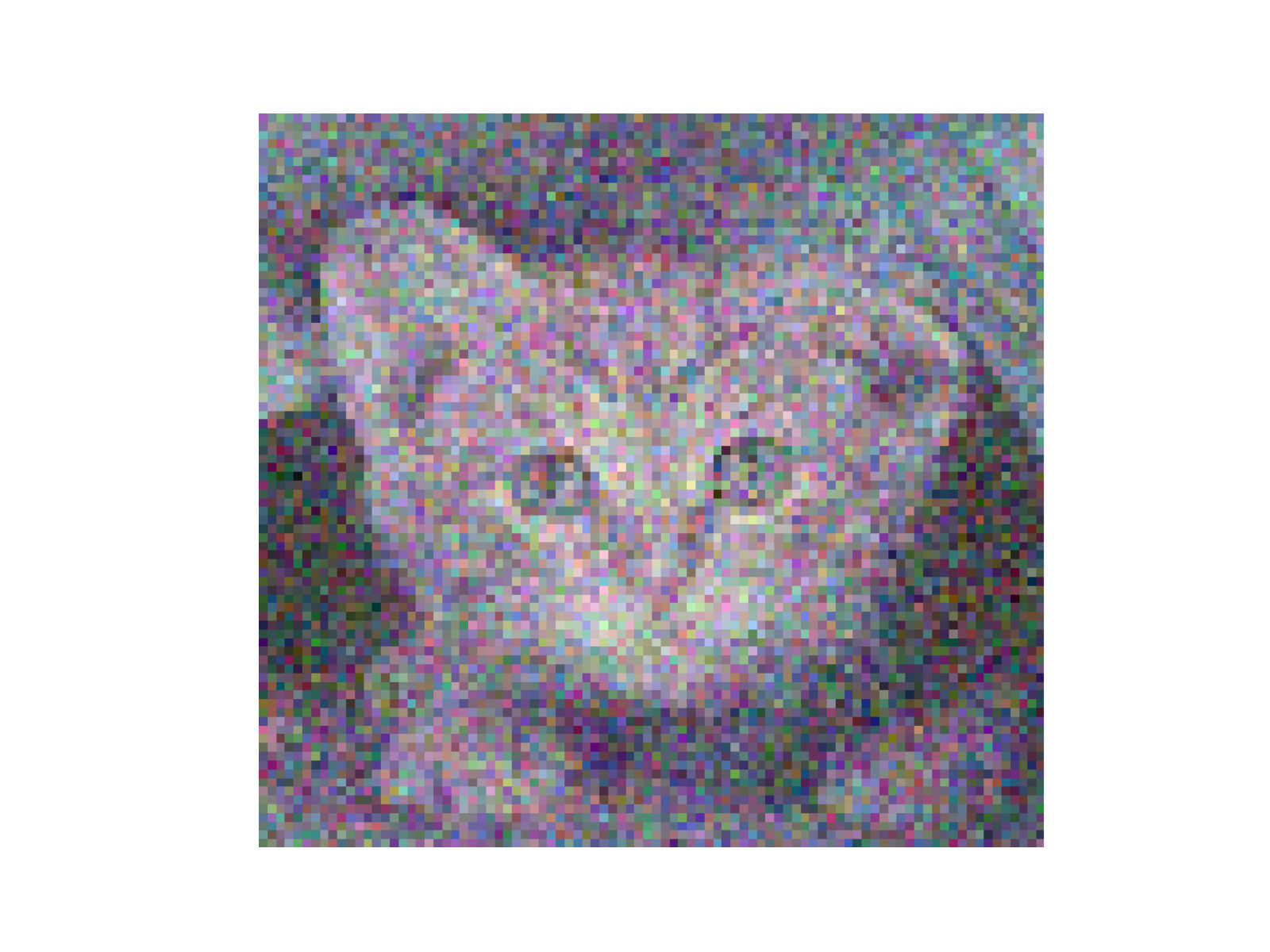}
\caption*{TAF, $t = 8$.}
\end{subfigure}\\
\begin{subfigure}{0.3\columnwidth}
\includegraphics[width=\textwidth]{./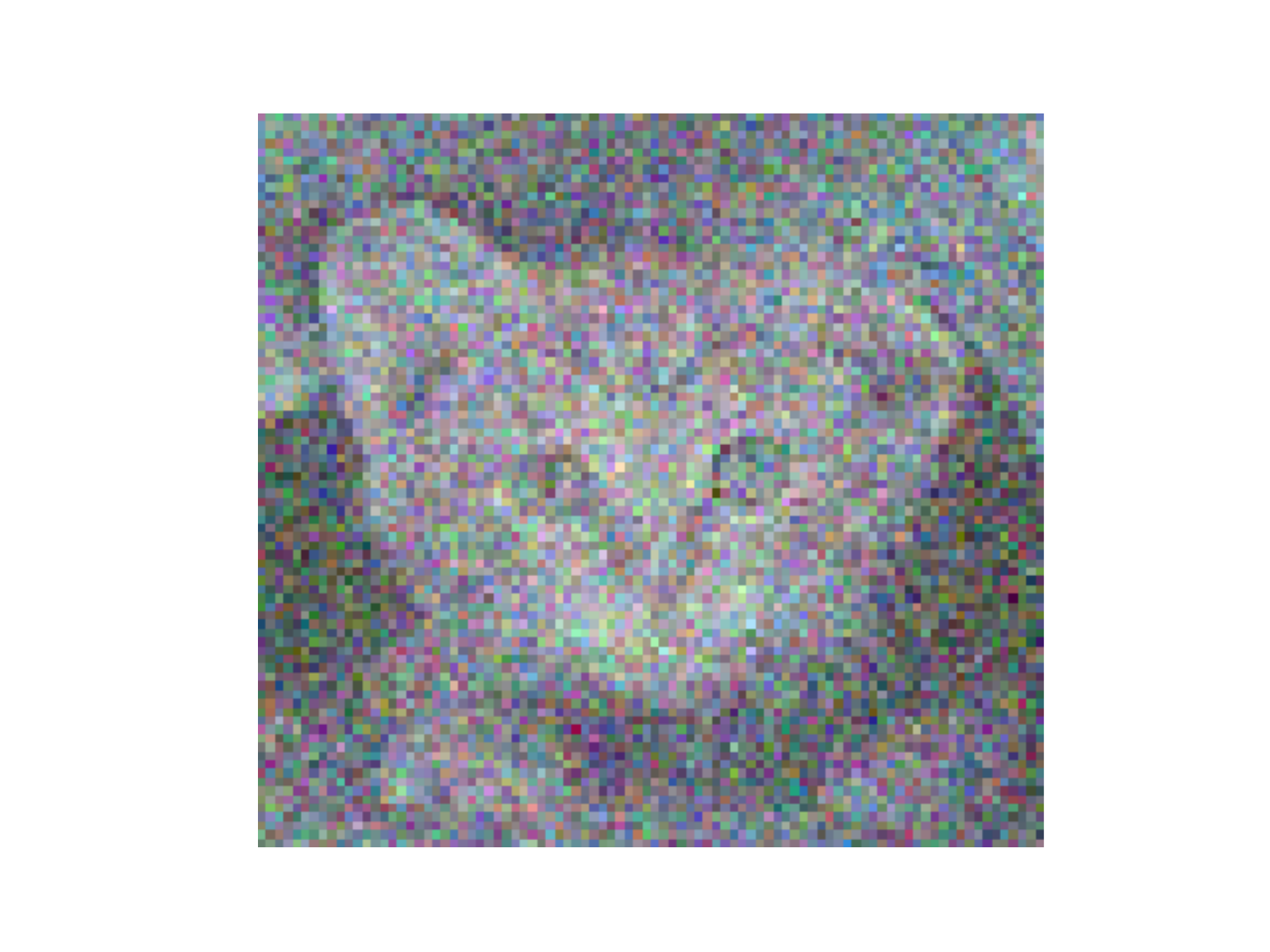}
\caption*{Gradient descent, $t = 2$.}
\end{subfigure}
\hspace{-3em}
\begin{subfigure}{0.3\columnwidth}
\centering
\includegraphics[width=\textwidth]{./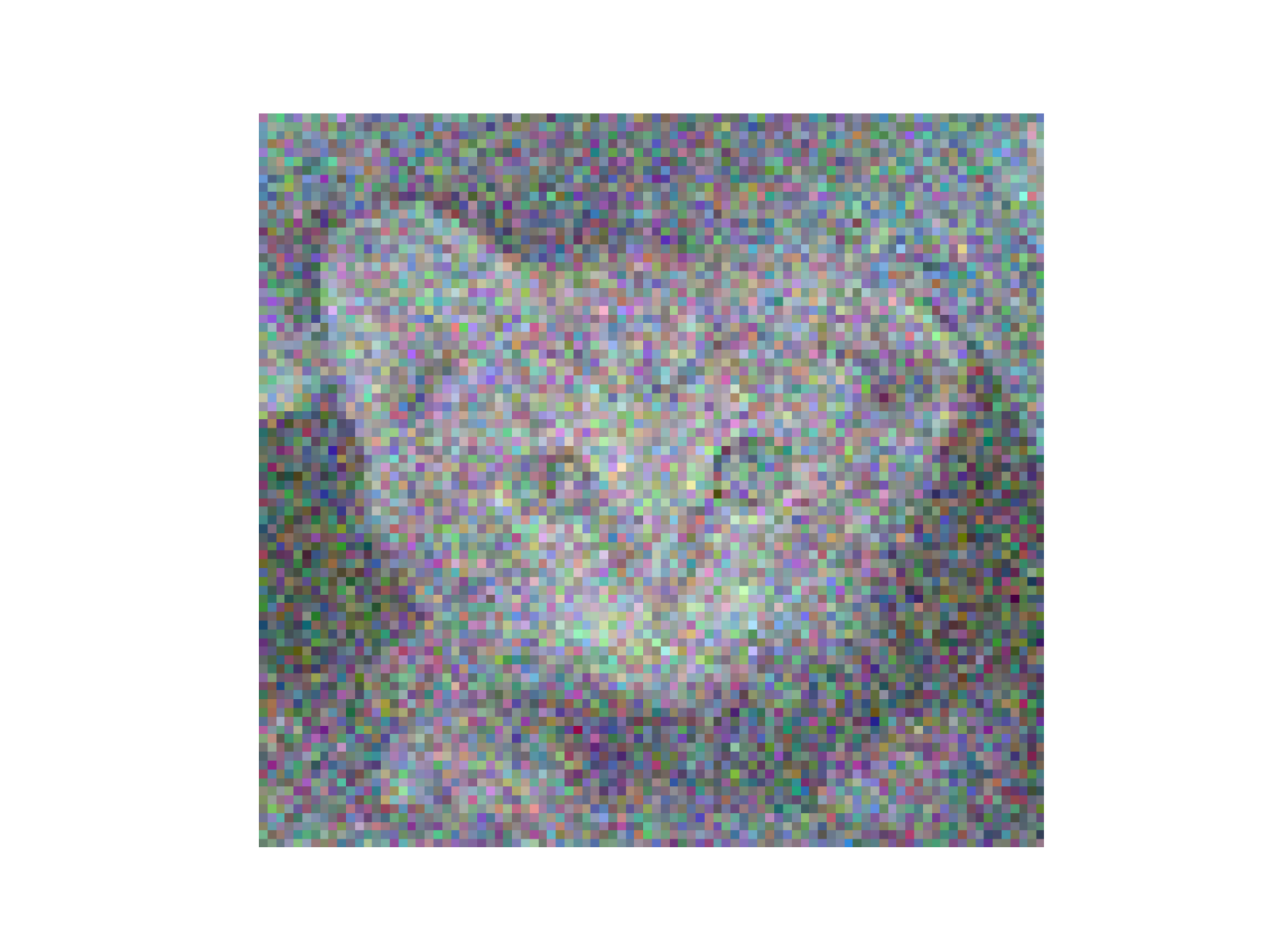}
\caption*{Gradient descent, $t = 4$.}
\end{subfigure}
\hspace{-3em}
\begin{subfigure}{0.3\columnwidth}
\centering
\includegraphics[width=\textwidth]{./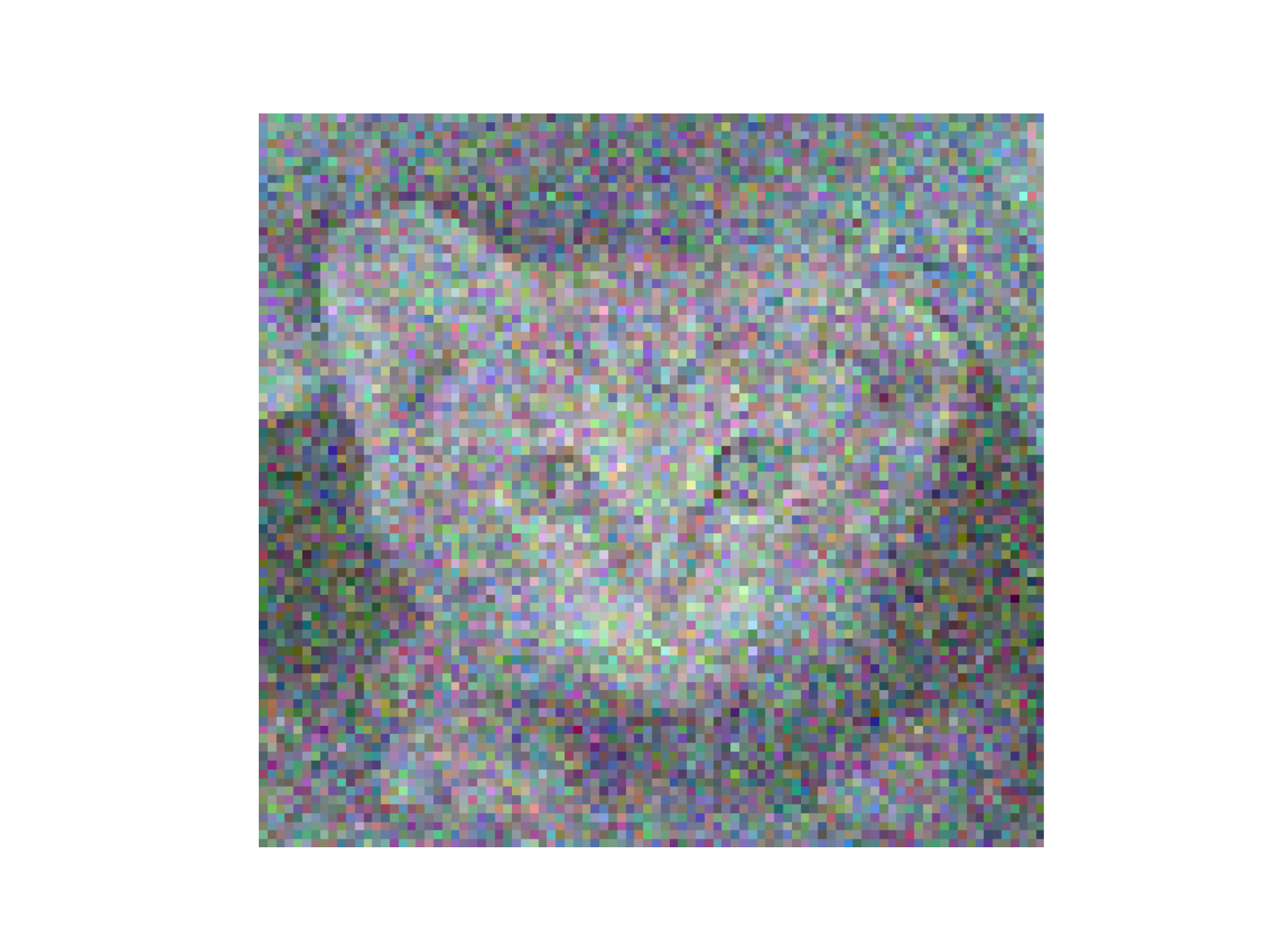}
\caption*{Gradient descent, $t = 8$.}
\end{subfigure}
\caption{Performance comparison between various GFOMs in noiseless phase retrieval (all algorithms use the same spectral initialization). }
\label{fig:cat}
\end{figure}

A few remarks are in order:
\begin{itemize}
\item While the theory developed below applies to $n,d\to\infty$, $n/d \to\delta$,
it appears to be fairly accurate already at moderate values of $n,d$. This is not
surprising given past results on AMP theory.
\item All GFOMs are substantially sub-optimal with the exception of Bayes AMP
that appears to achieve the upper bound correlation, as predicted by the theory. 
\item The prox-linear algorithm (black lines) appears to be nearly optimal for the largest sample size,
at $n/d = 2.5$. 

However, as emphasized above, prox-linear algorithm is not a GFOM.
In each round of iteration, we use \sffamily{cvxpy} \rmfamily 
in Python with the default solver to solve the optimization problem \eqref{eq:prox-linear}. 
In Table \ref{table:time}, we report the averaged wall clock time in seconds for the algorithms 
listed in \cref{sec:algs} with 10 iterations.
All experiments were conducted on a personal computer with 
8GB memory and 2 cores.
\end{itemize}

The step sizes for gradient descent and one-step prox-linear were chosen in 
Figure \ref{fig:cor} via trial and error as to optimize the performance of each algorithm. 
In Figure  \ref{fig:step} we plot accuracy as a function of step size parameter for each algorithm,
in the same setting as Figure \ref{fig:cor}. Our findings appear to be robust to the 
choice of this parameter.

In order to further illustrate the difference in performance and the optimality of Bayes AMP, 
we test the algorithms on a real image in Figure \ref{fig:cat}. 
The measurement matrix $\bX$ is random as above. The image contains $d=7560$ pixels
and we used $n=12000$ (hence $\delta=n/d\approx 1.6$),
and we treated each of the $3$ color channels separately.
The step sizes were chosen for 
gradient descent and one step prox-linear algorithm as to maximize reconstruction accuracy.

\section{Symmetric rank-one matrix estimation}
\label{sec:SymmetricMatr}

We observe a symmetric matrix $\bX \in \RR^{n \times n}$  given by
\begin{align}\label{model:spike}
 \bX = \frac{1}{n}\btheta\btheta^{\sT} +\bW\, ,
 \end{align}
 where  $\bW = \bW^{\sT}$ is a matrix with independent entries above the diagonal, 
 $(W_{ij})_{1\le i\le j\le n}$ such that $\E\{W_{ij}\} = 0$, 
 $\E\{W_{ij}^2\}=1/n$ for $1 \le i<j\le n$, and $\E\{W_{ii}^2\}=C/n$ for $1 \le i\le n$.
 In addition, we observe a vector $\bu\in\reals^n$ that could provide side 
 information about $\btheta$. The case in which this side information is not available is covered
 by setting $\bu=\mathbf 0$.
 Given $\mu_{\Theta, U}$, which is a fixed probability distribution over $\RR^2$ with finite second 
 moment, we assume $\{(\theta_i, u_i)\}_{i \leq n} \iidsim \mu_{\Theta, U}$. 
 Our objective is to estimate $\btheta$ from observations
 $(\bX, \bu)$. 

\subsection{General first order methods (GFOM)}

A GFOM is an iterative algorithm. At the $t$-th iteration performs the following update:
\begin{align}\label{eq:GFOM}
\begin{split}
	&\bu^{t + 1}	= \bX F_t(\bu^{\leq t}; \bu) + G_t(\bu^{\leq t}; \bu)\, ,\\
	&F_t(\bu^{\leq t}; \bu) := F_t(\bu^1, \cdots, \bu^t; \bu)\, ,\;\;\;
	G_t(\bu^{\leq t}; \bu) := G_t(\bu^1, \cdots, \bu^t; \bu) \, .
\end{split}
\end{align}
where $F_t, G_t: \RR^{n(t + 1)} \rightarrow \RR^n$ are functions indexed by 
$t \in \NN$. 
 After $s$ iterations, the algorithm estimates
 $\btheta$  by $\hbtheta^s = F_{\ast}^{(s)}(\bu^{\leq s}; \bu)$, where
  $F_{\ast}^{(s)}: \RR^{n(s + 1)} \rightarrow \RR^n$ is a continuous function. 
  Notice that a GFOM is uniquely determined by the choice of nonlinearities 
  $\{F_t, G_t, F_{\ast}^{(t)}\}_{t \in \NN}$. 
  
  We will consider two specific settings for the functions   
  $\{F_t, G_t, F_{\ast}^{(t)}\}_{t \in \NN}$, and the noise $\bW$.
  The choice of these settings is dictated by the cases in which an asymptotic characterization
  of the AMP algorithms, known as `state evolution'  
  \cite{bayati2011dynamics,javanmard2013state} has been established rigorously.
  Namely, for Setting \ref{setting:Gaussian} we will leverage the results of \cite{berthier2020state},
  while for Setting \ref{setting:Wigner} we will use the results of \cite{bayati2015universality,chen2021universality}.
\begin{setting}\label{setting:Gaussian}
\begin{itemize}
\item The matrix $\bW$ has entries $(W_{ij})_{i<j}\sim_{iid}\normal(0,1/n)$,
and $\E W^2_{ii}\le C/n$ for a constant $C$.
\item The probability measure $\mu_{\Theta,U}$ is sub-Gaussian.
\item The functions
$F_t, G_t, F_{\ast}^{(t)}:\reals^{n(t+1)}\to\reals^n$ are uniformly 
Lipschitz\footnote{We say that sequence of functions $\{f_n: \RR^{a_n} \rightarrow \RR^{b_n}\}_{n \geq 1}$ 
is \emph{uniformly Lipschitz} if there exists $n$-independent constant $L > 0$, such that for all
$n$ and all $\bx , \by \in \RR^{a_n}$, $\|f_n(\bx) - f_n(\by)\|_2 / \sqrt{b_n} 
\leq L\|\bx - \by\|_2 / \sqrt{a_n}$ and $\|f_n(\bzero)\|_2 / \sqrt{b_n} \leq L$.}.
Further, for any fixed $\bmu\in \reals^{\NN}$, $\bSigma\in\reals^{\NN\times \NN}$ positive 
semi-definite and  $(b_{ij})_{i,j \in \NN_{>0}}$,
letting $(\bg_t)_{t\in\NN_{>0}}$ be a sequence of centered Gaussian vectors with 
$\E[\bg_s(\bg_t)^{\sT}] = \Sigma_{s,t}\id_{n}$, the following limits exist and is finite
 for all $s\le t$:
\begin{align*}
& \plim_{n\to\infty} \frac{1}{n} \<F_s(\by^1,\dots ,\by^s;\bu),
F_t(\by^1,\dots ,\by^t;\bu)\>\, , 
\end{align*}
where $\plim$ denotes limit in probability and $\{\by^t\}_{t \geq 1}$ is defined recursively as follows:
\begin{align}\label{eq:yt}
\begin{split}
	& \by^1 = \mu_1 \btheta + \bg_1 + G_0(\bu), \\
	& \by^{t + 1} = \mu_{t + 1} \btheta + \bg_{t + 1} + G_t(\by^1, \cdots, \by^t; \bu) + \sum\limits_{s = 1}^t b_{ts} F_{s - 1}(\by^1, \cdots, \by^{s - 1}; \bu).
\end{split}
\end{align}
Since $F_s$ is uniformly Lipschitz and the input random vectors are all sub-Gaussian, one can verify that $\{\|F_s(\by^1,\dots ,\by^s;\bu)\|_2^2 / n: n \in \NN^+\}$ is uniformly integrable. As a consequence, $\E\langle F_s, F_t \rangle / n$ converges to the same limit. The analogous limits for $\<F_s,G_t\>/n$,  $\<G_s,G_t\>/n$,
$\<F^\ast_s,G_t\>/n$,  $\<F^{\ast}_s,F_t\>/n$  $\<F^{\ast}_s,F^{\ast}_t\>/n$, $\langle F_t, \btheta \rangle / n$, $\langle G_t, \btheta \rangle / n$, $\langle F_t^{\ast}, \btheta \rangle / n$
 are also assumed to exist. Similarly, the limits of their expectations also exist. 
 \end{itemize}
\end{setting}

\begin{setting}\label{setting:Wigner}
\begin{itemize}
\item The matrix $\bW$ has independent entries on and above the diagonal with 
$W_{ij}=\oW_{ij}/\sqrt{n}$ where $(\oW_{ij})_{i<j\le n}$ is a collection of 
i.i.d. random variables with distribution independent of $n$, such that
 $\E \oW_{ij} =0$, $\E \oW_{ij}^2=1$,
 and $\E \oW_{ij}^4<\infty$. Further, there exists an absolute constant $C > 0$, such that $\E\{W^4_{ii}\}\le C/n^2$ for all $i\le n$.
\item The probability measure $\mu_{\Theta,U}$ is sub-Gaussian.
\item Fixed ($n$-independent)  functions 
$F_t, G_t, F_{\ast}^{(t)}:\reals^{t+1}\to\reals$ are  given.
We overload this notation by letting $F_t(\bu^1,\dots,\bu^t;\bu)\in\reals^n$
be the vector with the $i$-th component 
$F_t(\bu^1,\dots,\bu^t;\bu)_i=F_t(u_i^1,\dots,u_i^t;u_i)$. Either of the following is
assumed:
\begin{enumerate}
\item[$(a)$] The functions $F_t, G_t, F^{\ast}_{t}$ are Lipschitz continuous.
\item[$(b)$] The functions $F_t, G_t, F^{\ast}_{t}$ are polynomials,  and in addition
the entries of $\bW$ are sub-Gaussian $\E\{\exp(\lambda W_{ij})\}\le \exp(C\lambda^2/n)$
for some $n$-independent constant $C$. 
\end{enumerate}
\end{itemize}
\end{setting}

\subsection{Main result for rank-one matrix estimation}
\label{sec:Optimality}

In this section we state our optimality result for the case of rank-one matrix estimation.
We refer to the appendices for similar statements in the case of generalized linear models.

Let $(\Theta, U) \sim \mu_{\Theta, U}$, $G \sim \normal(0,1)$, independent of each other. 
Define the minimum mean square error function 
$\mmse_{\Theta, U}:\reals_{\ge 0}\to \reals_{\ge 0}$ via
\begin{align*}
	\mmse_{\Theta, U}(\gamma) &:= \inf_{\htheta:\reals^2\to\reals}
	\E\big\{\big[\Theta-\htheta(\gamma \Theta+G ,U)\big]^2\big\}\\
	&= \E[\Theta^2] - \E[\E[\Theta\mid \gamma\Theta + G, U]^2]\, .
\end{align*}  
Define the sequence $(\gamma_t)_{t\in\NN}$ via the following \emph{state evolution} recursion:
		\begin{align}\label{eq:beta}
			\gamma_{t + 1}^2 = \E[\Theta^2]-\mmse_{\Theta, U}(\gamma_t) \, ,\;\;\;\;\;
			\gamma_{0}= 0\,.
		\end{align}
The following theorem establishes that no GFOM can achieve mean square
error below $\mmse_{\Theta}(\gamma_t)$ after $t$ iterations.
\begin{theorem}\label{thm:main}
		For $t \in \NN_{\ge 0}$, let $\hat{\btheta}^t \in \RR^n$ be the output of any 
		GFOM after $t$ iterations, under either of
		Setting \ref{setting:Gaussian} or Setting \ref{setting:Wigner}.
		Then the following holds
		\begin{align}\label{eq:lower-bound}
			\plim_{n\to \infty}\frac{1}{n}\|\hat{\btheta}^t - \btheta\|_2^2 \geq 
			\mmse_{\Theta, U}(\gamma_t) \, .
		\end{align}
		Further there exists a GFOM which satisfies the above bound with equality.
\end{theorem}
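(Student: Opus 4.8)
The plan is to follow the two-step strategy outlined in the introduction as steps $(I)$, $(II')$, $(III')$. First I would carry out the \emph{reduction to AMP}: given any GFOM $\{F_t, G_t, F_*^{(t)}\}$, I would argue that its iterates $\bu^{\le t}$ can be tracked by introducing auxiliary ``memory'' variables so that the update $\bu^{t+1} = \bX F_t(\bu^{\le t};\bu) + G_t(\bu^{\le t};\bu)$ is brought into standard AMP form, i.e. the data $\bX$ enters only through products $\bX \mathbf{f}_t$ with nonlinearities depending on the past, after subtracting the Onsager correction term. Under Setting \ref{setting:Gaussian} the rigorous state evolution of \cite{berthier2020state} applies (non-separable Lipschitz case), and under Setting \ref{setting:Wigner} the universality results of \cite{bayati2015universality,chen2021universality} reduce the general-noise case to the Gaussian one. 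The conclusion of this step is that there is a fixed ($\bX$-independent) distribution such that for any GFOM, the joint empirical distribution of $(\btheta, \bu, \bu^1,\dots,\bu^t)$ converges to that of $(\Theta, U, Y^1,\dots,Y^t)$ where the $Y^s$ are the Gaussian-channel observations described in \eqref{eq:yt}; hence $\hbtheta^t = F_*^{(t)}(\bu^{\le t};\bu)$ has asymptotic squared error $\E\{[\Theta - F_*^{(t)}(Y^1,\dots,Y^t;U)]^2\}$ plus a vanishing term.

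Second I would carry out the \emph{orthogonalization / information inequality} step, which is the heart of the new argument. The key point is that, after passing to the orthogonal-AMP parametrization, the vector of iterates $\bu^{\le t}$ generated before the final read-out is, asymptotically, a deterministic-variance Gaussian perturbation of $\btheta$: the sigma-algebra generated by $(\bu^{\le t}, \bu)$ gives exactly as much information about each coordinate $\theta_i$ as a single scalar Gaussian observation $\gamma_t \Theta + G$ together with $U$. Concretely, I would prove by induction on $t$ that the ``effective signal-to-noise ratio'' accumulated after $t$ AMP iterations is at most $\gamma_t^2$ as defined by the recursion \eqref{eq:beta}: one step of AMP with nonlinearity $F_t$ produces a new observation direction whose correlation with $\btheta$ is controlled, via Gaussian integration by parts / the I-MMSE or Cauchy--Schwarz-type bound, by $\E[\Theta^2] - \mmse_{\Theta,U}(\gamma_t)$, and this is maximized precisely when $F_t$ is the Bayes posterior mean $\E[\Theta \mid \gamma_t\Theta+G, U]$. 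Chaining this across iterations and then applying the definition of $\mmse_{\Theta,U}$ to the final estimator $F_*^{(t)}$ yields
\[
\plim_{n\to\infty}\frac1n\|\hbtheta^t - \btheta\|_2^2 \;=\; \E\{[\Theta - F_*^{(t)}(Y^{\le t};U)]^2\} \;\ge\; \mmse_{\Theta,U}(\gamma_t).
\]
The achievability half is then immediate: take the GFOM whose nonlinearities are the Bayes AMP ones (posterior means at the state-evolution parameters), so that all the inequalities above hold with equality; its state evolution is exactly \eqref{eq:beta} and its final read-out $F_*^{(t)}$ achieves the MMSE.

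I expect the main obstacle to be making the orthogonalization step fully rigorous while staying in the proportional asymptotics: one must show that, along \emph{any} sequence of GFOM nonlinearities, the information contained in $\bu^{\le t}$ collapses (asymptotically) to that of the one-dimensional Gaussian channel with SNR $\gamma_t$, with no ``extra'' information hidden in cross-correlations between iterates or in the side information $\bu$. This is where the orthogonal-AMP change of variables does the work — it ensures the components of $\bu^{\le t}$ orthogonal to $\btheta$ are asymptotically orthonormal i.i.d. Gaussian, so that conditioning on them is equivalent to conditioning on their projection onto $\btheta$ — and one must check that the Onsager terms and the (possibly non-separable) nonlinearities do not spoil this, and that the monotone SNR recursion \eqref{eq:beta} is indeed an upper envelope. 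A secondary technical point is uniform integrability, needed to pass from convergence of empirical distributions to convergence of the squared-error functional; this is handled by the uniform-Lipschitz / sub-Gaussian assumptions in the two Settings, as already noted in the statement of Setting \ref{setting:Gaussian}.
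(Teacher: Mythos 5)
Your proposal follows essentially the same three-step path as the paper's proof: reduce the GFOM to an AMP by a change of variables (Lemma \ref{lemma:change-of-variable}), orthogonalize further to an OAMP whose iterates become $\bv^t = \alpha_t\btheta + \bz^t$ with $(\bz^t)$ asymptotically orthonormal and orthogonal to $\btheta$ (Lemma \ref{lemma:OAMP}, so that a single scalar Gaussian channel $\|\balpha_{\le t}\|_2\Theta + G$ is a sufficient statistic), and then prove the accumulated SNR bound $\|\balpha_{\le t}\|_2 \le \gamma_t$ by induction using Cauchy--Schwarz and Pythagoras in $L^2$ together with Jensen (Lemma \ref{lemma:Final}). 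Two small points of calibration: the paper proves only the lower bound \eqref{eq:lower-bound} and cites \cite{celentano2020estimation} for the achievability half, whereas you sketch achievability directly; and the inductive SNR bound is carried out with elementary $L^2$-projection arguments rather than I-MMSE / Gaussian integration by parts, though those would serve as well.
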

In this statement $\plim_{n\to \infty}$ denotes limit in probability.
 
In the next section we will prove \cref{eq:lower-bound}.
We refer to \cite{celentano2020estimation} for a proof of the 
fact this lower bound is achieved. 
The proof given there implies that the algorithm achieving the lower bound is essentially 
unique and coincides with Bayes AMP.
\begin{remark}
The sequence $(\gamma_t)_{t\ge 0}$ is easily seen to be non-degreasing in $t$,
whence the sequence of lower bounds $\mmse_{\Theta, U}(\gamma_t)$ is non-increasing
and converging to $\mmse_{\Theta, U}(\gamma_\infty)$. The latter quantity therefore
provides the optimal error achieved by first order methods with $O(1)$ matrix-vector multiplications.

In some cases, $\mmse_{\Theta, U}(\gamma_\infty)$ is conjectured to be the optimal error 
achieved by polynomial-time algorithms 
\cite{lelarge2019fundamental,montanari2021estimation}. More precisely, this is expected to be the case if
the noise $\bW$ is Gaussian and  $\E[\E[\Theta\mid  U]^2]>0$
(which is the case for instance if $\E[\Theta]\neq 0$). If these conditions are violated, 
better estimation can be achieved by the following approaches:
\begin{itemize}
\item If $\bW$ has i.i.d. but non-Gaussian entries, applying a nonlinear function entrywise
to $\bX$, and then using a spectral or first order method can improve estimation, 
see  \cite{montanari2018adapting} and references therein.
\item If $\E[\E[\Theta\mid  U]^2]=0$, then using a spectral initialization
improves estimation, see e.g. \cite{montanari2021estimation}.
\end{itemize}
Refined versions of the conjecture mentioned above can be formulated in these cases.
\end{remark}

\section{Proof of Theorem \ref{thm:main}}\label{sec:proof}

In this section we prove Theorem \ref{thm:main} under Setting \ref{setting:Wigner}.
Additionally, we will assume $\bW$ to have sub-Gaussian entries, namely
$\E\{\exp(\lambda W_{ij})\}\le \exp(C\lambda^2/n)$ for all $i,j\le n$ and
some $n$-independent constant $C$.
The proof under Setting \ref{setting:Gaussian} is given 
in Appendix \ref{sec:MatrixGaussian}, and the generalization to 
 Setting \ref{setting:Wigner} without sub-Gaussian assumption is carried out in 
 Appendix \ref{app:SubGaussian}.

Throughout the proof $(\Theta,U)\sim\mu_{\Theta,U}$ are random variables independent of
other random variables unless explicitly stated.

\subsection{Approximate message passing algorithms}

As mentioned above, an important role in the proof is played by approximate message
passing (AMP) algorithms. These are  GFOMs that enjoy special properties:
here we limit ourselves to giving a definition for the problem of symmetric rank-one 
matrix estimation, in the context of Setting \ref{setting:Wigner}.

 An AMP algorithm is defined by a sequence 
of continuous functions $\{f_t: \RR^{t + 1} \rightarrow \RR\}_{t \geq 0}$ 
(also termed the nonlinearities of the AMP algorithm), and produces a sequence of 
vectors $\{\ba^t\}_{t \geq 1} \subseteq \RR^n$ via the following iteration
\begin{align}
	\ba^{t + 1} & = \bX f_t(\ba^{\leq t}; \bu) - \sum_{s = 1}^t b_{t,s} f_{s - 1}(\ba^{\leq s - 1}; \bu)\, .\label{eq:AMP}
\end{align}
Here $\ba^{\leq t}=(\ba^1,\dots,\ba^t)$ and, as before, nonlinearities are applied entrywise.
The term subtracted on the right-hand side is known as Onsager correction term,
and we will introduce the notation
\begin{align}
 \Ons_{\mathrm{AMP}}^t(\ba^{\leq t - 1}; \bu):= \sum_{s = 1}^t b_{t,s} f_{s - 1}(\ba^{\leq s - 1}; \bu)
 \end{align}
 The coefficients $(b_{t,s})_{1\le s\le t}$ are deterministic. Before defining them, we
 introduce the following state evolution recursion to construct the sequences 
 $\bmu = (\mu_t)_{t\ge 1}$, $\bSigma = (\Sigma_{s,t})_{s,t\ge 1}$,
 where $\bSigma=\bSigma^{\sT}$:
\begin{align}
\begin{split}
\mu_{t + 1} & = \E\big\{\Theta\, f_t(\bmu_{\le t} \Theta + \bG_{\le t}; U) \big\}\,, \\
 \Sigma_{s+1,t+1} & = \E\big\{
 f_{ s}(\bmu_{\le s} \Theta + \bG_{\le s}; U)f_{t}(\bmu_{\le t} \Theta + \bG_{\le t}; U)\big\}\, ,\label{eq:SE}\\
	\bG _{\le t} &:=(G_1, \cdots, G_t) \sim \normal(\bzero, \bSigma_{\le t})\, .
\end{split}
\end{align}
In the above equations $\bSigma_{\le t}:= (\Sigma_{ij})_{i,j\le t}$ and 
$\bmu_{\leq t} := (\mu_i)_{i \leq t}$, and it is understood that 
$\bmu_{\le s} \Theta + \bG_{\le s} := (\mu_1 \Theta + G_1, \cdots, \mu_t \Theta + G_t)$. Note that $f_{0}$
only depends on $U$ and therefore the above recursion does not need any specific initialization.
In terms of the above, we define:
\begin{align}
 b_{t,s} = \E\big\{\partial_s f_t(\bmu_{\le t} \Theta + \bG_{\le t}; U)\big\}\,,
 \label{eq:Bdef}
\end{align}
where $\partial_s f_t$ denotes $s$-th entry of the weak derivative of $f$.

After $t$ iterations as in Eq.~\eqref{eq:AMP}, AMP estimates $\btheta$ by applying 
a function $F_t^{\ast}:\RR^{t+1}\to\RR$ entrywise:
\begin{align}
\hbtheta(\bX,\bu) := F_t^{\ast}(\ba^1,\dots,\ba^t;\bu) \, .
\end{align}
For $k,m \in \NN_{>0}$, we say a function $\phi: \RR^m \rightarrow \RR$ is \emph{pseudo-Lipschitz of order $k$} if there exists a constant $L > 0$, such that for all $\bx, \by \in \RR^m$, 
\begin{align*}
	|\phi(\bx) - \phi(\by)| \leq L(1 + \|\bx\|_2^{k - 1} + \|\by\|_2^{k - 1})\|\bx - \by\|_2.
\end{align*}
Notice that if $f_1,f_2: \RR^m \rightarrow \RR$ are pseudo-Lipschitz of order 
$k_1$ and $k_2$ respectively, then their product $f_1f_2$ is pseudo-Lipschitz of order $k_1 + k_2$.

 The following theorem characterizes the asymptotics of the AMP  iteration 
 \eqref{eq:AMP} for Wigner matrices. It was established in 
 \cite{bayati2011dynamics,javanmard2013state} for Gaussian matrices, 
 in \cite{bayati2015universality} for Wigner matrices with sub-Gaussian entries and 
 polynomials nonlinearities and in \cite{chen2021universality} for Wigner matrices
 with sub-Gaussian entries and Lipschitz nonlinearities. 
 (Some small adaptations are required in the last two cases to get the next statement in its 
 full generality. These are carried out in the appendix.)
\begin{theorem}\label{thm:AMP}
Assume the matrix $\bW$, and nonlinearities $f_t$ satisfy the same assumptions as $\bW$
and $F_t$ in Setting \ref{setting:Wigner}.
Then, for any $t \in \NN_{>0}$, and any $\psi: \RR^{t + 2} \rightarrow \RR$ be a pseudo-Lipschitz
	 function of order 2, the AMP algorithm \eqref{eq:AMP} satisfies
	\begin{align}
		\plim_{n\to\infty}\frac{1}{n}\sum_{i = 1}^n \psi(\ba_i^{\le t}, \theta_i, u_i) =
	\E\big\{\psi(\bmu_{\le t} \Theta + \bG_{\le t}, \Theta, U)\big\} \, ,
	\;\;\;\;\;\bG _{\le t}\sim \normal(\bzero, \bSigma_{\le t})
	\, .
	\end{align}
	(Here $\plim$ denotes limit in probability.)
\end{theorem}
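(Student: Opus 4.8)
The plan is to reduce the stated result to the known state evolution theorems for AMP on Wigner matrices, namely \cite{bayati2011dynamics,javanmard2013state} in the Gaussian case and \cite{bayati2015universality,chen2021universality} in the universality case, and then perform a few small adaptations to obtain the statement in the generality we need. The two issues that force ``small adaptations'' are: (i) the AMP iteration \eqref{eq:AMP} carries a deterministic side-information vector $\bu$ as an extra argument of the nonlinearities, whereas the cited theorems are usually phrased for nonlinearities depending only on the iterates; and (ii) the test function $\psi$ is allowed to be pseudo-Lipschitz of order $2$ rather than merely bounded Lipschitz, and we want joint convergence of $(\ba^{\le t},\btheta,\bu)$ rather than of the iterates alone.

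First I would handle the side information by a standard ``lifting'' trick: augment the state of the AMP by treating $\bu$ (and $\btheta$, which also appears through the rank-one spike) as frozen coordinates. Concretely, since $\{(\theta_i,u_i)\}_{i\le n}\iidsim\mu_{\Theta,U}$ is a sub-Gaussian empirical distribution with a well-defined limit, one can view the pair $(\btheta,\bu)$ as a fixed ``$0$-th iterate'' and regard each $f_t(\,\cdot\,;\bu)$ as a nonlinearity $\tilde f_t:\RR^{t+2}\to\RR$ of $(\ba^1,\dots,\ba^t,u)$ that happens not to depend on the first-iterate-like slot in a data-dependent way. The spike contribution is handled exactly as in the standard reduction of rank-one matrix AMP to the symmetric Wigner AMP of \cite{bayati2011dynamics}: write $\bX=\frac1n\btheta\btheta^\sT+\bW$, expand $\bX f_t(\ba^{\le t};\bu)=\bW f_t(\ba^{\le t};\bu)+\frac1n\btheta\langle\btheta,f_t(\ba^{\le t};\bu)\rangle$, and absorb the rank-one term into the deterministic ``$\mu_{t+1}\btheta$'' drift, using Theorem~\ref{thm:AMP} inductively (a scalar quantity $\frac1n\langle\btheta,f_t\rangle$ concentrates to $\mu_{t+1}$ by the order-$2$ pseudo-Lipschitz convergence applied to the product nonlinearity $\theta\, f_t$, which is pseudo-Lipschitz of order $2$ by the product rule noted just before the theorem). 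This is the place where the inductive structure of the state evolution recursion \eqref{eq:SE}--\eqref{eq:Bdef} gets used, and it is bookkeeping rather than a genuine obstacle.

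Second I would upgrade the class of test functions from bounded Lipschitz to pseudo-Lipschitz of order $2$. In the Gaussian case this is already contained in \cite{bayati2011dynamics}. In the universality case it follows by a truncation argument: the cited universality results \cite{bayati2015universality,chen2021universality} give convergence of the empirical distribution of $(\ba_i^{\le t},\theta_i,u_i)$ in Wasserstein-$2$ sense (or can be boosted to it), because one has a uniform bound $\sup_n \frac1n\sum_i\big(\|\ba_i^{\le t}\|_2^2+\theta_i^2+u_i^2\big)<\infty$ in probability --- this uniform second-moment control comes from the Lipschitz (or polynomial, under sub-Gaussianity) growth of the $f_t$ together with $\|\bW\|_{\mathrm{op}}=O(1)$ and the sub-Gaussian tails of $\mu_{\Theta,U}$. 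Convergence of empirical measures in $W_2$ is equivalent to convergence of integrals of all order-$2$ pseudo-Lipschitz test functions, which is exactly the claimed display. For case $(b)$ (polynomial nonlinearities, sub-Gaussian $\bW$), the moment method of \cite{bayati2015universality} directly yields convergence of all polynomial test functionals, and a density/approximation step extends this to order-$2$ pseudo-Lipschitz $\psi$.

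\textbf{Main obstacle.} The genuinely delicate point is establishing the \emph{a priori} moment bounds needed to push from the weak-convergence statements in the literature to the pseudo-Lipschitz-$2$ statement here, uniformly in $n$, in the case of Lipschitz (non-polynomial) nonlinearities over non-Gaussian Wigner matrices with only a fourth-moment assumption on $\oW_{ij}$ (Setting~\ref{setting:Wigner} without sub-Gaussian entries). There one cannot invoke the moment method, and the operator-norm and concentration inputs are weaker; controlling $\frac1n\|\ba^t\|_2^2$ uniformly, and ruling out that a vanishing fraction of coordinates carries non-vanishing mass, requires the careful truncation-and-coupling argument that the text defers to the appendix. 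Everything else --- the side-information lifting, the rank-one-to-Wigner reduction, and the induction on $t$ --- is routine given Theorem~\ref{thm:AMP} in its base form and the state evolution definitions \eqref{eq:SE}--\eqref{eq:Bdef}.
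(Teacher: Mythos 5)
Your proposal is correct and follows essentially the same route as the paper: Theorem~\ref{thm:AMP} is obtained by citing the state evolution results of \cite{bayati2011dynamics,javanmard2013state,bayati2015universality,chen2021universality} and then performing the standard adaptations --- reducing full-history dependence and side information to the matrix-iterate form of AMP as in \cite{javanmard2013state,montanari2019optimization}, reducing the rank-one-plus-noise model to the pure Wigner model as in \cite{deshpande2014information,deshpande2017asymptotic,montanari2021estimation}, and handling the fourth-moment-only noise via the truncation-and-coupling argument of Appendix~\ref{app:SubGaussian}. The only cosmetic difference is that you spell out the mechanism (expansion of $\bX f_t$, concentration of $\tfrac1n\langle\btheta,f_t\rangle$ to $\mu_{t+1}$, Wasserstein-$2$ control) while the paper defers these to the cited references in a remark.
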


\begin{remark}
	Theorem \ref{thm:AMP} under Setting \ref{setting:Wigner}.$(b)$ is 
	a modified version of  \cite[Theorem 4]{bayati2015universality}, but follows from the latter 
	through a standard argument. 
	More precisely:
	\begin{itemize}
	\item In  \cite[Theorem 4]{bayati2015universality}, 
	the nonlinearity $f_t$ depends only on $\ba^t$, while 
	here we allow it to depend on all previous iterates and the initialization $(\ba^{\leq t}, \bu)$.
	However \cite[Theorem 4]{bayati2015universality} covers the case in which iterates
	$\bx^{t}$ are matrices $\bx^t\in\reals^{n\times q}$. We can easily 
	reduce the treatment of nonlinearities that depend on all previous times to 
	this one~\cite{javanmard2013state,montanari2019optimization}. Fix a time horizon $t$ and
	choose $q> t$ (independent of $n$): by suitably choosing the nonlinearities in
	the algorithm that defines $\bx^t$, we can ensure that 
	$(\bx_s^{t})_{1 \leq s \leq t}$ coincides with 
	$(\ba^s)_{1 \leq s \leq t}$.
	\item  In  \cite[Theorem 4]{bayati2015universality},  the matrix $\bX$ has independent 
	centered entries (up to symmetries). The case of rank-one plus noise matrix $\bX$ 
	can be reduced to this one as in \cite{deshpande2014information,deshpande2017asymptotic,montanari2021estimation}. 
\end{itemize}
\end{remark}

\subsection{Any generalized first order method can be reduced to an AMP algorithm}

Following \cite{celentano2020estimation}, we first show that any GFOM  of the form
\eqref{eq:GFOM} can be reduced to an AMP algorithm by a change of variables.
\begin{lemma}\label{lemma:change-of-variable}
Assume the matrix $\bW$, the measure $\mu_{\Theta,U}$, and the 
nonlinearities $(F_s,G_s,F^{\ast}_s)_{s\ge 0}$ satisfy 
the assumptions of  Setting \ref{setting:Wigner}.
Then, there exist non-random functions
	 $\{\varphi_s: \RR^{s + 1} \rightarrow \RR^s\}_{s \geq 1}$ and 
	 $\{f_s: \RR^{s + 1} \rightarrow \RR\}_{s \geq 0}$, satisfying the same assumptions 
	 (and independent of $(\btheta,\bu, \bW)$)
	  such that  the following holds.
Letting $\{\ba^s\}_{s \geq 1}$ be the sequence of vectors produced by the AMP iteration \eqref{eq:AMP} 
	with non-linearities $\{f_s\}_{s \geq 0}$, we have, 
	  for any $t \in \NN_{>0}$, 
	\begin{align*}
		\bu^{\leq t} = \varphi_t(\ba^{\leq t}; \bu).
	\end{align*}
\end{lemma}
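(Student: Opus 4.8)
The argument is an induction on $t$ that at stage $t$ produces the AMP nonlinearity $f_t$ together with the change-of-variables map $\varphi_{t+1}$. The only structural difference between the GFOM update \eqref{eq:GFOM} and the AMP update \eqref{eq:AMP} is the \emph{data-free} term added after multiplication by $\bX$: the GFOM adds $G_t(\bu^{\leq t};\bu)$, while AMP subtracts the Onsager term $\Ons_{\mathrm{AMP}}^t(\ba^{\leq t-1};\bu)$. Since both are deterministic functions of the current iterates and of $\bu$, one recovers the GFOM iterate from the AMP iterate by ``adding back'' the correct term — exactly a data-free change of variables that is affine in the newest coordinate. For the base case $t=1$, the GFOM gives $\bu^1=\bX F_0(\bu)+G_0(\bu)$ and the AMP with $f_0:=F_0$ gives $\ba^1=\bX f_0(\bu)$ (the Onsager sum is empty at the first step), so $\varphi_1(\ba^1;\bu):=\ba^1+G_0(\bu)$ works and manifestly inherits continuity, the Lipschitz-or-polynomial property, and independence of $(\btheta,\bu,\bW)$ from $F_0,G_0$.

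\textbf{Inductive step.} Suppose $f_0,\dots,f_{t-1}$ and $\varphi_1,\dots,\varphi_t$ have been constructed, all data-free and in the regularity class of Setting \ref{setting:Wigner}, and that the AMP iterates $\{\ba^s\}_{s\le t}$ they define satisfy $\bu^{\leq t}=\varphi_t(\ba^{\leq t};\bu)$. Define (all functions understood coordinatewise as in Setting \ref{setting:Wigner})
\[
f_t(\bx^{\leq t};\bu):=F_t\big(\varphi_t(\bx^{\leq t};\bu);\bu\big),
\]
which is data-free and is Lipschitz (resp. a polynomial) as a composition of such maps. Because $f_t$ is built only from $f_0,\dots,f_{t-1}$ and the given $F_t$, the state-evolution quantities $\bmu_{\leq t},\bSigma_{\leq t}$ of \eqref{eq:SE} and hence the Onsager coefficients $b_{t,s}$ of \eqref{eq:Bdef} are well-defined finite scalars (all expectations converge since $(\Theta,U)$ is sub-Gaussian and each $f_s$ has at most polynomial growth), so the AMP iterate $\ba^{t+1}$ is well-defined. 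Substituting $\bu^{\leq t}=\varphi_t(\ba^{\leq t};\bu)$ into \eqref{eq:GFOM} and using \eqref{eq:AMP},
\[
\bu^{t+1}=\bX f_t(\ba^{\leq t};\bu)+G_t\big(\varphi_t(\ba^{\leq t};\bu);\bu\big)=\ba^{t+1}+\Ons_{\mathrm{AMP}}^t(\ba^{\leq t-1};\bu)+G_t\big(\varphi_t(\ba^{\leq t};\bu);\bu\big).
\]
The right-hand side is a deterministic function of $(\ba^{\leq t+1};\bu)$; take it as the last component of $\varphi_{t+1}$, i.e. set $\varphi_{t+1}(\ba^{\leq t+1};\bu):=\big(\varphi_t(\ba^{\leq t};\bu),\ \ba^{t+1}+\Ons_{\mathrm{AMP}}^t(\ba^{\leq t-1};\bu)+G_t(\varphi_t(\ba^{\leq t};\bu);\bu)\big)$. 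Then $\bu^{\leq t+1}=\varphi_{t+1}(\ba^{\leq t+1};\bu)$, and this map is again data-free and Lipschitz (resp. polynomial), being a coordinate projection plus a linear combination of the $f_{s-1}$'s plus the composition $G_t\circ\varphi_t$. This closes the induction.

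\textbf{Expected difficulties.} There is no conceptual obstacle; the care needed is in the bookkeeping. The two points to verify are: (i) the constructed $f_t,\varphi_{t+1}$ stay within the Setting \ref{setting:Wigner} regularity class under composition and under the Onsager linear combination; and (ii) the state-evolution recursion \eqref{eq:SE}--\eqref{eq:Bdef} defining $b_{t,s}$ refers only to the already-constructed $f_0,\dots,f_t$, so there is no circularity, and the relevant expectations are finite. In the polynomial sub-case one should also recall that the sub-Gaussian assumption on $\bW$ in force throughout this section is what makes Theorem \ref{thm:AMP} applicable. Finally, the GFOM estimator is handled by the same substitution, $\hbtheta^t=F^{\ast}_t(\varphi_t(\ba^{\leq t};\bu);\bu)$, which is an AMP estimator with nonlinearity $F^{\ast}_t\circ\varphi_t$ of the required type.
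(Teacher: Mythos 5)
Your proof is correct and follows essentially the same approach as the paper's: the same base case ($f_0=F_0$, $\varphi_1(\ba^1;\bu)=\ba^1+G_0(\bu)$), the same inductive definition $f_t:=F_t\circ(\varphi_t,\mathrm{id})$, and the same construction of $\varphi_{t+1}$ by solving the AMP recursion for the GFOM iterate and appending that affine expression as the new last coordinate. The extra remarks you add (non-circularity of the state-evolution/Onsager definitions, handling of the final estimator via $F^*_t\circ\varphi_t$) are sound and consistent with what the paper implicitly uses.
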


\begin{proof}
The proof is by induction over $t$. For the base case $t = 1$, we may simply take
 $f_0(u) = F_0(u)$ and $\varphi_1(\ba^1;\bu):=\ba^1 + G_0(\bu)$. 
	
Suppose the claim holds for the first $t$ iterations. We prove that it holds
 for iteration $t + 1$. By the induction hypothesis, 
	\begin{align*}
		\bu^{t + 1} = \bX F_t(\varphi_t(\ba^{\leq t}; \bu); \bu) + G_t(\varphi_t(\ba^{\leq t}; \bu); \bu).
	\end{align*} 
	Let $f_t(x^{\leq t}; u) = F_t(\varphi_t(x^{\leq t}; u); u)$. Since the composition 
	of Lipschitz functions is still Lipschitz, we may conclude that $f_t$ is a 
	Lipschitz function under Setting \ref{setting:Wigner}.$(a)$. 
	Analogously, it is a polynomial under Setting \ref{setting:Wigner}.$(b)$. 
	Based on the choice of $\{f_s\}_{0 \leq s \leq t}$, we compute the coefficients 
	for the Onsager correction term $\{b_{t,j}\}_{1 \leq j \leq t}$,  
	as per Eq.~\eqref{eq:Bdef}. We then define $\ba^{t+1}$ via Eq.~\eqref{eq:AMP},
	which yields
	\begin{align*}
		\ba^{t + 1} = \bu^{t + 1} - G_t(\varphi_t(\ba^{\leq t}; \bu); \bu) - \sum_{j = 1}^t b_{t,j} f_{j - 1}(\ba^{\leq j - 1}; \bu)\, .
	\end{align*}
	We can therefore define $\varphi_{t + 1}$ via
		\begin{align*}
		\varphi_{t + 1}(\ba^{\leq t + 1}; \bu) = (\varphi_t(\ba^{\leq t}; \bu); \ba^{t + 1} + G_t(\varphi_t(\ba^{\leq t}; \bu) + \sum_{j = 1}^t b_{t,j} f_{j - 1}(\ba^{\leq j - 1}; \bu)).
	\end{align*}
	(Here note that $\varphi_{t + 1}(\ba^{\leq t + 1}; \bu)  \in\reals^{n\times (t+1)}$,
	and $(\bA;\bB)$ denotes concatenation by columns.)
	
	 As above, we see immediately that $\varphi_{t+1}$ is Lipschitz under
	 Setting \ref{setting:Wigner}.$(a)$, and a polynomial under Setting \ref{setting:Wigner}.$(b)$.  
	 This completes the proof by induction.
\end{proof}

As an immediate consequence of the last lemma, AMP algorithms achieve the same error as GFOMs,
for the same number of iterations, under any loss.
(In this statement $\pliminf_{n\to \infty}$ denotes $\liminf$ in probability.
Namely, given a sequence of random variables $Z_n$, and $z\in\reals$, we
write $\pliminf_{n\to\infty}Z_n\ge z$ if, for any $\eps>0$, $\lim_{n\to\infty}
\P(Z_n\le z-\eps) = 0$.)
\begin{corollary}\label{coro:FirstCoro}
	Let $\mathcal{A}_{\mathrm{GFOM}}^t$ be the class of GFOM estimators 
	with $t$ iterations, and $\mathcal{A}_{\mathrm{AMP}}^t$ be the class of AMP algorithms 
	with $t$ iterations (under the assumptions of either Setting \ref{setting:Wigner}.$(a)$, or
	 Setting \ref{setting:Wigner}.$(b)$). 
	 (In particular $\hbtheta(\,\cdot\,)\in \mathcal{A}_{\mathrm{GFOM}}^t$ is defined by a set of
	 $n$-independent functions $\{F_t, G_t, F_{\ast}^{(t)}\}_{t \in \NN}$, and similarly for
	 $\hbtheta(\,\cdot\,)\in \mathcal{A}_{\mathrm{AMP}}^t$.)
	 
	 Then for any loss function $\calL: \RR^{n}\times\RR^{n}  \rightarrow \RR_{\ge 0}$:
	\begin{align}
		\inf_{\hbtheta(\,\cdot\, ) \in \mathcal{A}_{\mathrm{GFOM}}^t} 
		\pliminf_{n\to\infty}
		\calL(\hbtheta(\bX,\bu), \btheta)= 
		\inf_{\hbtheta(\,\cdot\,) \in \mathcal{A}_{\mathrm{AMP}}^t} 
			\pliminf_{n\to\infty}
		\calL(\hbtheta(\bX,\bu), \btheta)\, .\label{eq:FirstCoro}
	\end{align}
\end{corollary}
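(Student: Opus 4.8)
The plan is to prove the two inequalities in \eqref{eq:FirstCoro} separately; the substantive work is already done by \cref{lemma:change-of-variable}, so what is left is essentially bookkeeping, plus the observation that the simulation it provides is \emph{exact} (pointwise in the realization), not merely asymptotic.

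\textbf{Step 1: $\inf_{\mathcal{A}_{\mathrm{GFOM}}^t}\le \inf_{\mathcal{A}_{\mathrm{AMP}}^t}$.} I would show that every AMP algorithm, together with its readout $F_t^{\ast}$, is itself a GFOM. One rewrites the AMP recursion \eqref{eq:AMP} in the form \eqref{eq:GFOM}, with $\ba^{\le t}$ playing the role of $\bu^{\le t}$: take the GFOM nonlinearity $F_t$ to be the entrywise application of $f_t$; take $G_t$ to be $-\Ons_{\mathrm{AMP}}^t$, an entrywise affine combination $-\sum_{s=1}^t b_{t,s}f_{s-1}$ of the previous iterates with deterministic coefficients, hence a legitimate GFOM nonlinearity depending only on $\ba^{\le t-1}$; and take the GFOM readout equal to the entrywise $F_t^{\ast}$. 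Under Setting~\ref{setting:Wigner}.$(a)$ [resp.\ $(b)$] these functions are Lipschitz [resp.\ polynomial], $n$-independent, and independent of $(\btheta,\bu,\bW)$, so they define a bona fide element of $\mathcal{A}_{\mathrm{GFOM}}^t$. Thus $\mathcal{A}_{\mathrm{AMP}}^t\subseteq\mathcal{A}_{\mathrm{GFOM}}^t$, and the infimum over the larger class is no larger.

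\textbf{Step 2: $\inf_{\mathcal{A}_{\mathrm{GFOM}}^t}\ge \inf_{\mathcal{A}_{\mathrm{AMP}}^t}$.} Fix an arbitrary GFOM, with iterates $\bu^{\le t}$ and readout $\hbtheta^t = F_\ast^{(t)}(\bu^{\le t};\bu)$. Apply \cref{lemma:change-of-variable} to obtain AMP nonlinearities $\{f_s\}$ and change-of-variables maps $\{\varphi_s\}$ — all satisfying the assumptions of Setting~\ref{setting:Wigner} — such that $\bu^{\le t}=\varphi_t(\ba^{\le t};\bu)$, where $\ba^{\le t}$ are the iterates of \eqref{eq:AMP} with nonlinearities $\{f_s\}$. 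Define the AMP readout by composition, $\tilde F_\ast^{(t)}(\bx^{\le t};u):=F_\ast^{(t)}(\varphi_t(\bx^{\le t};u);u)$, which is Lipschitz [resp.\ polynomial] as a composition of Lipschitz [resp.\ polynomial] maps, hence a valid element of $\mathcal{A}_{\mathrm{AMP}}^t$. Its output equals $\hbtheta^t$ identically — for every realization of $(\bX,\bu,\btheta)$, not just asymptotically — so $\calL(\hbtheta^t,\btheta)$ is literally the same random variable under the two descriptions; consequently its $\pliminf$ is unchanged, and taking the infimum over all GFOMs gives the claimed inequality. Combining Steps 1 and 2 yields \eqref{eq:FirstCoro}.

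\textbf{Where care is needed.} There is no genuine analytic obstacle here, since \cref{lemma:change-of-variable} already constructs a valid AMP iteration (state evolution, Onsager coefficients, regularity of the $f_s$ and $\varphi_s$). The points that must be checked carefully are: $(i)$ that the AMP template truly subsumes the Onsager correction term and the entrywise readout, with the correct regularity class and with data-independent coefficients, so that $\mathcal{A}_{\mathrm{AMP}}^t\subseteq\mathcal{A}_{\mathrm{GFOM}}^t$; $(ii)$ that the composition $F_\ast^{(t)}\circ\varphi_t$ remains in the prescribed function class under both $(a)$ and $(b)$; and $(iii)$ that iteration counts match — a GFOM run for $t$ iterations is simulated by exactly $t$ AMP iterations, consistent with the $t=1$ base case of \cref{lemma:change-of-variable}. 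Once these are in place, both the $\pliminf$ comparison and the final equality are immediate, precisely because the simulation is exact rather than distributional.
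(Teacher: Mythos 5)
Your proposal is correct and follows essentially the same approach as the paper: one inclusion comes from $\mathcal{A}_{\mathrm{AMP}}^t\subseteq\mathcal{A}_{\mathrm{GFOM}}^t$, and the other from Lemma~\ref{lemma:change-of-variable} via composing the GFOM readout with the change-of-variable map to get an exact (pointwise) AMP simulator. The paper's proof is stated more tersely, but the substance and the use of Lemma~\ref{lemma:change-of-variable} are identical; your additional care about regularity of $F_\ast^{(t)}\circ\varphi_t$ and iteration counts is a useful, if implicit, elaboration.
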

\begin{proof}
The left-hand side of Eq.~\eqref{eq:FirstCoro} is smaller or equal than the right-hand side
because  $\mathcal{A}_{\mathrm{AMP}}^t\subseteq  \mathcal{A}_{\mathrm{GFOM}}^t$.
To show that they are equal, let $\hbtheta(\,\cdot\, ) \in \mathcal{A}_{\mathrm{GFOM}}^t$
be any GFOM that achieves the infimum on the left with tolerance $\eps$. 
By Lemma \ref{lemma:change-of-variable} we can construct 
$\hbtheta'(\,\cdot\, ) \in \mathcal{A}_{\mathrm{AMP}}^t$ achieving the same loss.
\end{proof}
 
 \begin{remark}
 Note that throughout this section we are assuming $\{F_t, G_t, F_{\ast}^{(t)}\}_{t \in \NN}$
 to be $n$-independent. However, standard compactness arguments allows to
 extend the present treatment to $n$-dependent nonlinearities as long as the 
 constants implicit in the definitions of Setting \ref{setting:Wigner}
 (Lipschitz constant, maximum polynomial degree, and so on) are uniformly bounded.
 
 Appendix \ref{sec:MatrixGaussian} will treat the case of nonlinearities that are non-separable and 
 hence necessarily $n$-dependent.
 \end{remark}

\subsection{Any AMP algorithm can be reduced to an orthogonal AMP algorithm}

In the previous section we reduced GFOMs to AMP algorithms.
We next show that we can in fact limit ourselves to the analysis of a special subset of 
AMP algorithms, whose iterates are approximately orthogonal, after we subtract 
their components along $\btheta$.  We refer to this special subset as orthogonal AMP (OAMP) algorithms. 
\begin{lemma}\label{lemma:OAMP}
Let $\{\ba^t\}_{t \geq 1}$ be a sequence generated by the AMP iteration \eqref{eq:AMP},
under either of Setting \ref{setting:Wigner}.$(a)$ or  Setting \ref{setting:Wigner}.$(b)$. 
Then there exist functions $\{\phi_t: \RR^{t + 1} \rightarrow \RR^t\}_{t\ge 1}$, 
$\{g_t: \RR^{t + 1} \rightarrow \RR\}_{t \geq 0}$, satisfying the same assumptions
(and independent of $(\btheta, \bu,\bW)$)
such that the following holds.
Let  $\{\bv^t\}_{t \geq 1}$ be the sequence generated by an AMP algorithm with 
non-linearities $\{g_t\}_{t \geq 0}$ (and same matrix $\bX$ as for $\{\ba^t\}_{t \geq 1}$),
namely
\begin{align}
\bv^{t + 1} & = \bX g_t(\bv^{\leq t}; \bu) - \sum_{s = 1}^t b'_{t,s} g_{s - 1}(\bv^{\leq s - 1}; \bu) 
\, ,\label{eq:OAMP}
\end{align}
with deterministic coefficients $(b'_{t,s})$ determined by the analogous of Eq.~\eqref{eq:Bdef},
with $f_t$ replaced by $g_t$. Then we have:
\begin{enumerate}
\item[$(i)$] For all  $t\ge 1$,
	\begin{align*}
		\ba^{\leq t} = \phi_t(\bv^{\leq t}; \bu).
	\end{align*}
\item[$(ii)$] For any pseudo-Lipschitz function $\psi: \RR^{t + 2} \rightarrow \RR$ of order 2, 
	\begin{align}
		\plim_{n\to\infty}\frac{1}{n}\sum_{i = 1}^n \psi(\bv_i^{\le t}, \theta_i, u_i) = 
		\E\big\{\psi(V_1,\dots,V_t, \Theta, U) \big\},\label{eq:SE-OAMP}
	\end{align}
	where  $V_{i}:=x_{i-1}(\alpha_i\Theta+Z_i)$, with 
	$(x_{0},\dots,x_{t-1})\in\{0,1\}^t$,
	 $(\alpha_{1},\dots,\alpha_t)\in\reals^t$, 
	and  $\{Z_i\}_{i \in \NN_{\ge 1}} \iidsim \normal(0,1)$
	 standard random variables independent of $(\Theta,U)$.
	 \end{enumerate}
\end{lemma}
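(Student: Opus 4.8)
The plan is to prove Lemma~\ref{lemma:OAMP} by induction on $t$, building the orthogonalized nonlinearities $g_t$ one at a time via a Gram--Schmidt procedure that acts on the \emph{state evolution} random variables rather than on the vectors themselves. At each step we will have, inductively, an invertible change of variables $\ba^{\le t}=\phi_t(\bv^{\le t};\bu)$ together with the state evolution description \eqref{eq:SE-OAMP} of $\bv^{\le t}$, in which each coordinate process $V_i=x_{i-1}(\alpha_i\Theta+Z_i)$ has $Z_i$ independent standard Gaussian and independent of $(\Theta,U,Z_j)_{j\ne i}$. The key observation is that under Theorem~\ref{thm:AMP} the joint law of $(\ba^{\le t},\Theta,U)$ is asymptotically Gaussian conditional on $(\Theta,U)$ with covariance $\bSigma_{\le t}$, so ``orthogonalizing the iterates after subtracting their $\btheta$-component'' is literally a linear-algebraic operation on the Gaussian part $\bG_{\le t}$.

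Concretely, here is how I would carry out the induction step. Given $\ba^{\le t}=\phi_t(\bv^{\le t};\bu)$ with $\bv^{\le t}$ satisfying \eqref{eq:SE-OAMP}, I feed $\phi_t$ into the original AMP nonlinearity $f_t$ to form $\tilde f_t(\bv^{\le t};\bu):=f_t(\phi_t(\bv^{\le t};\bu);\bu)$; this is the nonlinearity whose AMP iterate (with its own Onsager term) reproduces $\ba^{t+1}$ up to the change of variables, exactly as in the proof of Lemma~\ref{lemma:change-of-variable}. Now I compute, using the state-evolution law, the mean of $\tilde f_t(V_1,\dots,V_t;U)$ against $\Theta$ and the covariance of $\tilde f_t$ with the previous Gaussian coordinates $Z_1,\dots,Z_{t-1}$; subtracting the corresponding linear combination of $\Theta$ and of $g_{s-1}(\bv^{\le s-1};\bu)$ for $s\le t$, and then normalizing by the (square root of the) residual variance, yields $g_t$. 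By construction the new Gaussian coordinate $Z_t$ that appears in $\bv^{t+1}$ is orthogonal to $Z_1,\dots,Z_{t-1}$ and has unit variance (so $x_t=1$), unless the residual variance vanishes, in which case $\bv^{t+1}$ is a deterministic function of $(\Theta, \bv^{\le t})$ and we set $x_t=0$. Since these linear combinations are subtractions of earlier OAMP iterates plus a multiple of $\Theta$, and $\Theta$ is not directly observable, I have to absorb the $\Theta$-subtraction into the Onsager/linear bookkeeping exactly as the standard AMP reduction does: the point is that $\bX\btheta=n\bX\btheta/\|\btheta\|^2\cdot(\cdots)$ is itself expressible through the AMP iterates, so subtracting a multiple of $\btheta$ from $g_t$ before multiplying by $\bX$ only changes later Onsager coefficients and is still a valid AMP step. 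Then $\phi_{t+1}$ is obtained, as in Lemma~\ref{lemma:change-of-variable}, by inverting the (now affine, with a nonzero leading coefficient when $x_t=1$, and trivial when $x_t=0$) relation between $\bv^{t+1}$ and $\ba^{t+1}$.

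For part $(ii)$ I would simply invoke Theorem~\ref{thm:AMP} applied to the OAMP iteration \eqref{eq:OAMP}: its state evolution covariance is, by the Gram--Schmidt construction, diagonal with entries in $\{0,1\}$, and its mean vector picks out the coefficients $\alpha_i$; this is exactly the claimed form $V_i=x_{i-1}(\alpha_i\Theta+Z_i)$. The only technical care needed is to check that $g_t$ inherits the regularity hypotheses of Setting~\ref{setting:Wigner}: under $(a)$ it is a composition and finite linear combination of Lipschitz functions divided by a positive constant, hence Lipschitz; under $(b)$ it is a polynomial, provided the residual variance is strictly positive (and if it is zero the coordinate collapses and no division is needed).

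The main obstacle I anticipate is handling the degenerate case cleanly: when the residual variance of $\tilde f_t$ after projecting out $\Theta$ and $Z_1,\dots,Z_{t-1}$ is zero, the would-be new direction $\bv^{t+1}$ carries no fresh Gaussian randomness, and one must argue that it is then (asymptotically) a deterministic pseudo-Lipschitz function of $(\Theta,\bv^{\le t},\bu)$ so that it can be folded into the definition of $\phi_{t+1}$ without introducing a new coordinate --- equivalently, that the OAMP iteration can be ``truncated'' at that coordinate while still reconstructing all of $\ba^{\le t+1}$. Keeping track of which coordinates are ``alive'' ($x_i=1$) versus ``collapsed'' ($x_i=0$), and making sure the Onsager coefficients $b'_{t,s}$ and the inverse maps $\phi_t$ remain well-defined and regular across this dichotomy, is the bookkeeping-heavy part; the rest is the by-now-standard AMP change-of-variables plus Gaussian linear algebra.
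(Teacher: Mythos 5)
Your plan gets the change-of-variables bookkeeping right (composing $f_t$ with $\phi_t$, using Theorem~\ref{thm:AMP} for part $(ii)$, and treating the collapsed coordinates via a flag $x_t\in\{0,1\}$), but the orthogonalization step at its core is not the right one, and the proposed fix for its unimplementability does not work.

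The quantity that has to become diagonal is the state-evolution covariance $\Sigma_{s+1,t+1}=\E\big[g_s(\cdots)\,g_t(\cdots)\big]$, i.e.\ the uncentered $L^2$ Gram matrix of the \emph{nonlinearities} $Y_t:=f_t(\bmu_{\le t}\Theta+\bG_{\le t};U)$. The paper therefore Gram--Schmidt-orthonormalizes the $Y_t$'s themselves, taking $g_t=\sum_{s\le t}c_{ts}\,f_s\circ\phi_s$ so that $R_t=c_{tt}\Pi^\perp_{\mathcal{S}_{t-1}}(Y_t)$ where $\mathcal{S}_{t-1}=\mathrm{span}(Y_0,\dots,Y_{t-1})$; the new iterate then automatically has the form $V_{t+1}=\alpha_{t+1}\Theta+Z_{t+1}$ with $Z_{t+1}$ standard Gaussian and independent of everything else, because by state evolution $\bG$ is already independent of $\Theta$ and has covariance $\bSigma$. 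No $\Theta$-subtraction appears anywhere. You instead propose to project $\tilde f_t$ onto the orthogonal complement of $\mathrm{span}(\Theta,Z_1,\dots,Z_{t-1})$. That is a different subspace from $\mathrm{span}(Y_0,\dots,Y_{t-1})$ (the $Y_s$'s are nonlinear in $\Theta,Z_1,\dots,Z_s,U$, not linear), so your $g_t$ would \emph{not} in general satisfy $\E[g_sg_t]=0$ for $s<t$, and the OAMP state-evolution covariance would not be the identity. Even setting that aside, a nonlinearity of the form $\tilde f_t-a\Theta-\cdots$ cannot be an AMP nonlinearity because $\Theta$ is not observable, and the workaround you offer --- ``$\bX\btheta$ is expressible through the AMP iterates, so the $\Theta$-subtraction can be absorbed into Onsager coefficients'' --- is not correct: $\bX\btheta=\tfrac{1}{n}\|\btheta\|^2\btheta+\bW\btheta$ involves $\btheta$ and $\bW$ directly and is not recoverable from the iterates, and in any case the Onsager correction subtracts deterministic multiples of earlier $g_{s-1}(\bv^{\le s-1};\bu)$ from $\bX g_t(\bv^{\le t};\bu)$, which is not the same operation as subtracting a multiple of $\btheta$ from $g_t$ itself before applying $\bX$.

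The remedy is simply to orthogonalize the $Y_t$'s in the uncentered $L^2(\P)$ inner product (this is what makes $\bSigma$ diagonal), take $g_t$ to be the corresponding linear combination of $f_s\circ\phi_s$, and read off $\alpha_{t+1}=\E[\Theta R_t]$ and $x_t=\E[R_t^2]\in\{0,1\}$ from state evolution; $\Theta$ never enters the definition of $g_t$.
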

\begin{proof}
Throughout this proof, given a probability space $(\Omega,\cF,\P)$, we denote
by $L^2(\P) = L^2(\Omega,\cF,\P)$ the space of random variables with finite second moment. 
Given a closed linear subspace  $\cS\subseteq L^2(\P)$ and a random variable $T \in L^2(\PP)$, we denote by $\Pi_{\cS}(T)$
the projection of $T$ onto $\cS$ (i.e. the unique minimizer of $\|S-T\|_{L^2}^2=
\E\{(S-T)^2\}$ over $S\in\cS$). We denote by $\Pi^\perp_{\cS}=I-\Pi_{\cS}$ the projector 
onto its orthogonal complement.

Given $(\mu_t)_{t\ge 1}$, and $(\Sigma_{s,t})_{s,t\ge 1}$ defined via state evolution,
see Eq.~\eqref{eq:SE}, let $\bG$ be a centered Gaussian process with covariance $\bSigma$,
and define the random variables and subspaces
\begin{align*}
Y_t := f_t(\bmu_{\le t} \Theta + \bG_{\le t}; U) , \qquad \mathcal{S}_t := \mathrm{span}(Y_k: 0 \leq k \leq t)\, .
\end{align*}
Note that by state evolution $\<Y_t,Y_s\>_{L^2} = \Sigma_{t+1,s+1}$.

By linear algebra, there exist deterministic constants $\{c_{ts}\}_{0 \leq s \leq t}$, 
$x_{t} \in \{0,1\}$, such that 
$c_{tt} \neq 0$, and 
\begin{align*}
 R_t := c_{tt}\Pi^{\perp}_{\cS_{t - 1}}(Y_{t}) = \sum_{s = 0}^t c_{ts} Y_{s}, \qquad 
 \E[R_tR_s] = \mathbbm{1}_{s = t}x_{t},
\end{align*}
Indeed if $Y_t$ does not belong to $\cS_{t-1}$ we can simply take $x_t=1$
and $c_{tt} = \|\Pi^{\perp}_{\cS_{t - 1}}(Y_{t})\|_{L^2}^{-1}$. Otherwise we take $R_t=0$,
$c_{tt}=1$, $x_t=0$.

 We prove the lemma by induction. For the base case $t = 1$, we set $g_0(u) = c_{00}f_0(u)$ 
 whence the claim $(i)$ follows trivially. For claim $(ii)$ there are two cases.
 Either $\E\{f_0(U)^2\}=0$, whence $x_0=0$ and therefore $(ii)$ holds with $V_1=0$ almost surely,
 or $\E\{f_0(U)^2\}>0$ whence $x_0 = 1$, $c_{00}=\E\{f_0(U)^2\}^{-1/2}$,  
 and therefore the claim follows by state evolution, where
	\begin{align}\label{eq:alpha1}
		\alpha_1 = \frac{\E[\Theta f_0(U)]}{\E[f_0(U)^2]^{1/2}}.
	\end{align}
	Suppose the lemma holds for the first $t$ iterations. 
	We prove it also holds for the $(t + 1)$-th iteration. Define
	\begin{align}
		g_t(\bv^{\leq t}; u) = \sum_{s = 0}^t c_{ts} f_s(\phi_s(\bv^{\leq s}; u); u).
		\label{eq:Def-gt}
	\end{align}
	Then by the assumptions and the induction hypothesis,
	 $g_t$ is Lipschitz under Setting \ref{setting:Wigner}.$(a)$, and is a polynomial under
	 Setting \ref{setting:Wigner}.$(b)$. Given the nonlinearities 
	 $\{g_t\}_{s\le t}$, we can compute the 
	 coefficients $(b'_{s,j})_{1\le j\le s\le t}$. We denote 
	 the Onsager term for this new iteration by 
	 $ \Ons_{\mathrm{OAMP}}^t(\bv^{\leq t - 1}; \bu):=\sum_{j = 1}^t b_{t,j}' 
	 g_{j - 1}(\bv^{\leq j - 1}; \bu)$. With this notation, Eq.~\eqref{eq:OAMP} can be rewritten as:
	\begin{align*}
		\bv^{t + 1} =&  \sum_{s = 0}^tc_{ts} \bX f_s(\phi_s(\bv^{\leq s}; \bu); \bu) - \Ons_{\mathrm{OAMP}}^t(\bv^{\leq t - 1}; \bu) \, .
		\end{align*}
		Using the AMP iteration that defines $\{\ba^s\}_{s \geq 1}$, we get:
		\begin{align*}
		\bv^{t + 1} = \sum\limits_{s = 0}^tc_{ts}(\ba^{s + 1} + \Ons_{\mathrm{AMP}}^{s}( \ba^{\leq s - 1}; \bu)) - \Ons_{\mathrm{OAMP}}^t(\bv^{\leq t - 1}; \bu).
	\end{align*}
	Solving for $\ba^{t+1}$ and expressing $\ba^{\le t+1}=\phi_t(\bv^{\le t+1}; \bu)$
	(recall that $c_{tt}$ is always non-vanishing) we obtain the desired 
	mapping $\phi_{t+1}$ thus proving claim $(i)$.
	
	In order to prove claim $(ii)$, we distinguish two cases.
	In the first case $x_t = 0$ and $R_t \overset{a.s.}{=} 0$. Using the state evolution  
	for the orthogonal AMP iteration \eqref{eq:OAMP} and the definition 
	\eqref{eq:Def-gt}  we obtain that claim $(ii)$ folds with $V_{t + 1} \overset{a.s.}{=}  0$. 
	 
	 In the second case $ x_t = 1$, then again by state evolution 
	 we obtain that the claim holds with
	 $V_{t + 1} \ed \alpha_{t + 1} \Theta+ Z_{t + 1}$, where
	\begin{align}\label{eq:alphat+1}
		\alpha_{t + 1} = \frac{\E[\Theta\,\Pi^\perp_{\mathcal{S}_{t - 1}}(Y_t)]}
		{\E[\Pi^\perp_{\mathcal{S}_{t - 1}}(Y_t)^2]^{1/2}},
	\end{align}
  this  completes the proof.
	\end{proof}

	Considering the case in which $x_t\neq 0$ for all $t$ (i.e.,
	each new non-linearity is `non-degenerate'), Eq.~\eqref{eq:SE-OAMP} implies
	\begin{align}
	\bv^t = \alpha_t\btheta+\bz^t\, ,\;\;\;\;\;\; \frac{1}{n}\<\bz^t,\bz^s\> 
	= \mathbbm{1}_{s = t} +o_n(1)\,,\;\;\;\;\;\;
	\frac{1}{n}\<\bz^t,\btheta\> 
	= o_n(1)\, .
	\end{align}
    In other words, the iterates are approximately orthonormal along the subspace orthogonal to $\btheta$.
    This justifies the name `orthogonal AMP' (OAMP).
    
    \begin{remark}\label{rmk:xt}
    In the following we can and will  restrict ourselves
	to the case in which, in the notation  of Eq.~\eqref{eq:SE-OAMP}, $x_t=1$ for all $t$. 
	Indeed if $x_t = 0$ for some $t$, we can set to zero the corresponding AMP iterate $\bv_t=0$
	 (i.e. set $g_{t-1}=0$),
	and the resulting algorithm will asymptotically  have the same state evolution. 
	By removing this iteration altogether, we obtain an algorithm with same accuracy and one less iteration.
    \end{remark}
	%
	%
	\subsection{Optimal orthogonal AMP}
	By Lemma \ref{lemma:change-of-variable} and \ref{lemma:OAMP} 
	in order to derive a lower bound of estimation error achieved by GFOMs with $t$ iterations,
	it is sufficient to restrict ourselves to the class of orthogonal AMP
	algorithms (it is understood that the latter can be followed by entrywise post processing).
	
	We therefore have the following consequence of the previous results
	(see also Remark \ref{rmk:xt}).
	\begin{corollary}\label{coro:Final}
	Let $\hbtheta:(\bX,\bu)\mapsto \hbtheta(\bX,\bu)$ be a $t$-iterations GFOM estimator
	under the assumptions of either Setting \ref{setting:Wigner}.$(a)$, or
	 Setting \ref{setting:Wigner}.$(b)$. Then for any loss function $\ell: \RR\times\RR  \rightarrow \RR_{\ge 0}$,
	 pseudo-Lipschitz of order 2, we have
	\begin{align} 
		\plim_{n\to\infty}
	\frac{1}{n}\sum_{i=1}^n\ell(\hat\theta_i(\bX,\bu), \theta_i) 
	\ge 
		\inf_{(\{g_\ell\},\varphi) \in \mathcal{A}_{\mathrm{OAMP}}^t} 
			\E\big\{\ell(\varphi(\balpha_{\le t}\Theta+\bZ_{\le t},U), \Theta)\big\}\, .
			\label{eq:ReductionOAMP}
	\end{align}
	Here  the infimum is over all sequences of Lipschitz
	(Setting \ref{setting:Wigner}.$(a)$) or polynomial 
	 (Setting \ref{setting:Wigner}.$(b)$) nonlinearities for an orthogonal AMP algorithm, and over
	 all functions $\varphi:\reals^{t+1}\to\reals$ with the same properties.
\end{corollary}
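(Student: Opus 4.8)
The plan is to chain the two reductions already established and then read the asymptotic loss directly off the orthogonal AMP state evolution \eqref{eq:SE-OAMP}; no new probabilistic input is needed beyond Lemmas \ref{lemma:change-of-variable} and \ref{lemma:OAMP} and Theorem \ref{thm:AMP}.

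First I would assemble the chain of exact identities. Let $\hbtheta$ be a $t$-iteration GFOM, so that its output is $\hat\theta_i^t = F_{\ast}^{(t)}(u_i^{\leq t}; u_i)$ entrywise (Setting \ref{setting:Wigner}). Lemma \ref{lemma:change-of-variable} yields AMP nonlinearities $\{f_s\}$ and functions $\{\varphi_s\}$ with $u_i^{\leq t} = \varphi_t(a_i^{\leq t}; u_i)$, and Lemma \ref{lemma:OAMP} then yields OAMP nonlinearities $\{g_s\}$ and functions $\{\phi_s\}$ with $a_i^{\leq t} = \phi_t(v_i^{\leq t}; u_i)$, where $\{\bv^s\}_{s\ge1}$ are the OAMP iterates. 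Composing these gives, for every $i$,
\begin{align*}
\hat\theta_i^t = \varphi(v_i^{\leq t}; u_i), \qquad \varphi(v^{\leq t}; u) := F_{\ast}^{(t)}\!\big(\varphi_t(\phi_t(v^{\leq t}; u); u); u\big).
\end{align*}
Since $F_{\ast}^{(t)}$, $\varphi_t$, $\phi_t$ are all Lipschitz under Setting \ref{setting:Wigner}.$(a)$ (resp.\ polynomials under Setting \ref{setting:Wigner}.$(b)$), and both classes are closed under composition, $\varphi$ has the properties required for $(\{g_s\}, \varphi)$ to be an admissible element of $\mathcal{A}_{\mathrm{OAMP}}^t$. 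By Remark \ref{rmk:xt} I may assume this OAMP is non-degenerate, i.e.\ $x_s = 1$ for all $s$, so the state-evolution limit in \eqref{eq:SE-OAMP} is $V_i \ed \alpha_i \Theta + Z_i$.

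Next I would transport the loss through state evolution. Put $\psi(v^{\leq t}, \theta, u) := \ell(\varphi(v^{\leq t}; u), \theta)$, so that $\frac1n\sum_{i=1}^n \ell(\hat\theta_i^t, \theta_i) = \frac1n\sum_{i=1}^n \psi(v_i^{\leq t}, \theta_i, u_i)$ holds identically (not merely asymptotically). Under Setting \ref{setting:Wigner}.$(a)$, $\varphi$ is Lipschitz and $\ell$ is pseudo-Lipschitz of order $2$, hence $\psi$ is pseudo-Lipschitz of order $2$ and Lemma \ref{lemma:OAMP}$(ii)$ applies verbatim, giving
\begin{align*}
\plim_{n\to\infty}\frac1n\sum_{i=1}^n \ell(\hat\theta_i^t, \theta_i) = \E\big\{\ell\big(\varphi(\balpha_{\leq t}\Theta + \bZ_{\leq t}, U), \Theta\big)\big\}.
\end{align*}
In particular the limit in probability exists, and it is the state-evolution loss of one admissible OAMP pair, hence at least the infimum over $\mathcal{A}_{\mathrm{OAMP}}^t$, which is precisely \eqref{eq:ReductionOAMP}.

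The hard part will be Setting \ref{setting:Wigner}.$(b)$: there $\varphi$ only has polynomial growth, so $\psi = \ell\circ(\varphi, \mathrm{id})$ is pseudo-Lipschitz of some order $k > 2$, whereas Lemma \ref{lemma:OAMP}$(ii)$ (and Theorem \ref{thm:AMP}) are stated only for order-$2$ test functions. To close this, I would fall back on the stronger form of the polynomial-nonlinearity universality result (as in \cite{bayati2015universality}), which controls $\frac1n\sum_i p(\bv_i^{\leq t}, \theta_i, u_i)$ for every \emph{polynomial} test function $p$, and then approximate $\psi$ from above and below by polynomials, passing to the limit via the boundedness of all moments $\frac1n\sum_{i=1}^n \|\bv_i^{\leq t}\|_2^m$ --- itself a consequence of the sub-Gaussianity of $\bW$ and $\mu_{\Theta,U}$ together with the polynomial form of the $\{g_s\}$. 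This test-function regularization is the only place where anything beyond the bookkeeping of the previous two paragraphs is required; the rest is the exact algebra of composing the two reductions.
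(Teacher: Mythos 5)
Your argument is correct and follows the paper's own implicit route: Corollary \ref{coro:Final} is stated without a separate proof because it is presented as an immediate consequence of composing Lemma \ref{lemma:change-of-variable} with Lemma \ref{lemma:OAMP} (and Remark \ref{rmk:xt}), which is precisely what you do. The identification $\varphi := F_{\ast}^{(t)}\circ(\varphi_t,\mathrm{id})\circ(\phi_t,\mathrm{id})$, the closure of the Lipschitz/polynomial classes under composition, and the observation that equality with one admissible OAMP's state-evolution loss implies the claimed inequality against the infimum are all the bookkeeping the paper intends.

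Your flag about Setting~\ref{setting:Wigner}.$(b)$ is a legitimate and useful observation. As stated, Theorem~\ref{thm:AMP} and Lemma~\ref{lemma:OAMP}$(ii)$ are restricted to test functions that are pseudo-Lipschitz of order $2$, and when $\varphi$ is a genuinely nonlinear polynomial the composed function $\psi=\ell(\varphi(\cdot),\cdot)$ has higher order. So the direct ``apply Lemma~\ref{lemma:OAMP}$(ii)$ verbatim'' step is only licensed under Setting~\ref{setting:Wigner}.$(a)$; under Setting~\ref{setting:Wigner}.$(b)$ one must either invoke the stronger moment-level convergence implicit in the universality result \cite{bayati2015universality} (which controls all polynomial test functions, hence by a standard density/uniform-integrability argument all pseudo-Lipschitz functions of finite order), or --- equivalently --- re-read Theorem~\ref{thm:AMP} as holding for pseudo-Lipschitz test functions of arbitrary fixed order under the polynomial/sub-Gaussian hypotheses. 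The paper treats this as part of the ``standard argument'' adapting \cite[Theorem 4]{bayati2015universality}, but it deserves to be made explicit, and your proposed polynomial approximation plus moment bounds on $\frac1n\sum_i\|\bv_i^{\leq t}\|_2^m$ is exactly the right closure.
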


	 Recall that a sufficient statistics for $\bTheta$ 
	 given   $\bV_{\le t} := \balpha_{\le t}\Theta+\bZ_{\le t}$ is 
	 $T_0:=\<\balpha_{\le t},\bV_{\le t}\>/\|\balpha_{\le t}\|_2$, and $T_0$ can be rewritten as: 
	 \begin{align}
	 T_0  = \|\balpha_{\le t}\|_2\Theta+ G\,, \;\;\;\;\;\; G\sim\normal(0,1)\,,\;\;\; G\perp \Theta\, .
	 \label{eq:T0Distr}
	 \end{align}
	 Since in addition $U$ is conditionally independent of $\bV_{\le t}$ given 
	 $\Theta$, the function $\varphi$ in Eq.~\eqref{eq:ReductionOAMP} can be taken to
	 be a function of $(U,T_0)$, and precisely the function that minimizes
	 the risk of estimating $\Theta$ with respect to the loss $\ell$.
	 The minimization on the right-hand side of Eq.~\eqref{eq:ReductionOAMP}
	 reduces to the maximization of $\|\balpha_{\le t}\|_2$,
	 which is solved by the next lemma.
	\begin{lemma}\label{lemma:Final}
	Recall the definition of  $(\gamma_s)_{s\ge 0}$ in Eq.~\eqref{eq:beta}. 
	Then, for all $t \in \NN_{>0}$, and all choices of nonlinearities 
	$g_0,\dots,g_t$, we have $\|\balpha_{\le t}\|_2\leq \gamma_t$.
	\end{lemma}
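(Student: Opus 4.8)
The plan is to prove the bound by induction on $t$, comparing $a_t := \|\balpha_{\le t}\|_2$ with the state evolution sequence $(\gamma_t)_{t\ge 0}$ of Eq.~\eqref{eq:beta}; the base case is the trivial identity $a_0 = \gamma_0 = 0$. Two structural facts, both already available, serve as input. Evaluate the OAMP nonlinearities on their state evolution random variables, setting $R_s := g_s(\bV_{\le s}; \bu)$ with $\bV_{\le s} = \balpha_{\le s}\Theta + \bZ_{\le s}$, where $\{Z_i\}_{i\ge 1}\iidsim\normal(0,1)$ is independent of $(\Theta, U)$. First, the OAMP state evolution (Eq.~\eqref{eq:SE} with $f$ replaced by $g$, cf.\ Eq.~\eqref{eq:alphat+1}) gives $\alpha_{s+1} = \E[\Theta R_s]$ for every $s\ge 0$. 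Second, by the construction of Lemma~\ref{lemma:OAMP}, restricted to the non-degenerate regime $x_s = 1$ for all $s$ (Remark~\ref{rmk:xt}), the system $\{R_s\}_{s\ge 0}$ is orthonormal in $L^2$, i.e.\ $\E[R_s R_{s'}] = \mathbbm{1}_{s=s'}$.

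The first real step would be to rewrite $a_t^2$ as a squared projection of $\Theta$. Since $\balpha_{\le t} = (\alpha_1,\dots,\alpha_t)$ and $\alpha_{s+1} = \langle\Theta, R_s\rangle_{L^2}$ for $0\le s\le t-1$, orthonormality of $R_0,\dots,R_{t-1}$ gives
\begin{align*}
a_t^2 \;=\; \sum_{s=0}^{t-1}\langle\Theta, R_s\rangle_{L^2}^2 \;=\; \big\|\Pi_{\mathrm{span}(R_0,\dots,R_{t-1})}(\Theta)\big\|_{L^2}^2 .
\end{align*}
Because each $R_s$ is a measurable function of $(\bV_{\le s}, U)$, hence of $(\bV_{\le t-1}, U)$, the span above is contained in $L^2(\sigma(\bV_{\le t-1}, U))$, so $a_t^2 \le \big\|\Pi_{L^2(\sigma(\bV_{\le t-1},U))}(\Theta)\big\|_{L^2}^2 = \E\big[\E[\Theta\mid\bV_{\le t-1}, U]^2\big]$.

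The second step would reduce this conditional second moment to a scalar Gaussian channel, exactly as in the discussion of $T_0$ preceding the lemma. Conditionally on $\Theta$, $\bV_{\le t-1}$ is Gaussian with mean $\balpha_{\le t-1}\Theta$ and identity covariance, and $U$ is conditionally independent of $\bV_{\le t-1}$ given $\Theta$; by the Fisher--Neyman factorization, $(T, U)$ is sufficient for $\Theta$, where $T \ed a_{t-1}\Theta + G$ with $G\sim\normal(0,1)$ independent of $\Theta$ (for $t=1$, $\bV_{\le 0}$ is empty and $T=G$). Hence $\E[\Theta\mid\bV_{\le t-1},U] = \E[\Theta\mid T, U]$, and by the definition of $\mmse_{\Theta,U}$,
\begin{align*}
a_t^2 \;\le\; \E\big[\E[\Theta\mid T, U]^2\big] \;=\; \E[\Theta^2] - \mmse_{\Theta,U}(a_{t-1}) .
\end{align*}
Finally, since $\gamma\mapsto\mmse_{\Theta,U}(\gamma)$ is non-increasing on $\reals_{\ge 0}$ (a larger $\gamma$ makes $\gamma\Theta+G$ a less noisy observation of $\Theta$, up to adding an independent Gaussian, as noted after Theorem~\ref{thm:main}), the inductive hypothesis $a_{t-1}\le\gamma_{t-1}$ yields $a_t^2 \le \E[\Theta^2] - \mmse_{\Theta,U}(\gamma_{t-1}) = \gamma_t^2$, i.e.\ $a_t\le\gamma_t$, closing the induction.

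I do not expect a genuine obstacle here; the one place that needs care is the bookkeeping in the first step. It is precisely the orthonormality of the OAMP nonlinearities that turns the a~priori unbounded-looking sum $\sum_s \alpha_{s+1}^2$ into the squared norm of a projection of $\Theta$ onto a subspace of functions of $(\bV_{\le t-1}, U)$, and one has to track that the scalar sufficient statistic governing iterate $t$ has signal strength $\|\balpha_{\le t-1}\|_2$ rather than $\|\balpha_{\le t}\|_2$. Once that is set up, Gaussian sufficiency, the definition of $\mmse_{\Theta,U}$, and monotonicity of the MMSE complete the argument routinely.
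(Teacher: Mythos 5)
Your proof is correct and essentially the same as the paper's: both bound $\|\balpha_{\le t}\|_2^2 \le \E\big[\E[\Theta\mid \bV_{\le t-1},U]^2\big]$ by an orthogonal projection argument (you package the whole sum at once as the squared projection of $\Theta$ onto $\mathrm{span}(R_0,\dots,R_{t-1})$, while the paper equivalently bounds each $\alpha_{t+1}^2 \le \E[\hTheta_t^2]-\sum_{s\le t}\alpha_s^2$ by projecting $\hTheta_t$ onto $\cS_{t-1}^\perp$ and rearranging), then reduce to the scalar Gaussian channel via Fisher--Neyman sufficiency and close the induction with monotonicity of the MMSE (Jensen). The repackaging is cosmetic; the underlying inequalities coincide.
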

	\begin{proof}
	The proof is by induction over $t$. For the base case $t = 1$, using equation \eqref{eq:alpha1},
	 we have
	\begin{align*}
		\alpha_1^2 &\leq \sup_{f_0} \frac{\E[\Theta f_0(U)]^2}{\E[f_0(U)^2]}
		 = 
		\sup_{f_0} \frac{\E\{\E[\Theta|U] f_0(U)\big\}^2}{\E[f_0(U)^2]}
		 \le 
		\E\big\{\E[\Theta \mid U]^2\big\}\, .
	\end{align*}
	The last step holds by Cauchy-Schwarz inequality.
	
	We next assume that the claim holds for iteration $t$, and will prove it also holds for 
	iteration $t + 1$. Let $\hTheta_t:= \E[\Theta \mid U, V_1, \cdots, V_t]$.
	Using equation \eqref{eq:alphat+1}, we have
	\begin{align*}
		\alpha_{t + 1}^2 &=\frac{\E\big\{\hTheta_t\, \Pi^{\perp}_{\mathcal{S}_{t - 1}}(Y_t)\big\}^2}{\E[\Pi^{\perp}_{\mathcal{S}_{t - 1}}(Y_t)^2]} \\
		&\stackrel{(a)}{\le} \E\{\Pi^{\perp}_{\mathcal{S}_{t - 1}}(\hTheta_t)^2\}\\
		& \stackrel{(b)}{=} \E\{\hTheta_t^2\}-\E\{\Pi_{\mathcal{S}_{t - 1}}(\hTheta_t)^2\}\, ,
		\end{align*}
		where $(a)$ follows by Cauchy-Schwarz and $(b)$ by Pythagora's theorem.
	By construction $\{\Pi^{\perp}_{\mathcal{S}_{s - 1}}(Y_s)/\E[\Pi^{\perp}_{\mathcal{S}_{s - 1}}(Y_s)^2]^{1/2}:0\le s\le t-1\}$
		is an orthonormal basis for $\calS_{t-1}$, whence
		\begin{align*}
		\alpha_{t + 1}^2 &\le  \E[\hTheta_t^2] - \sum_{s = 0}^{t - 1}\frac{\E[\Theta \Pi^{\perp}_{\mathcal{S}_{s - 1}}(Y_s)]^2}{\E[\Pi^{\perp}_{\mathcal{S}_{s- 1}}(Y_s)^2]} \\
		& = \E[\hTheta_t^2] - \sum_{s = 1}^t\alpha_s^2\, ,
	\end{align*}
	Therefore $\|\balpha_{\le t+1}\|_2^2\le  \E[\hTheta_t^2]$.
	Further
	\begin{align*}
		 \E[\hTheta_t^2] &= \E[\E[\Theta \mid U, V_1, \cdots, V_t]^2] \\
		& \stackrel{(a)}{=} \E[\E[\Theta \mid U, \|\balpha_{\le t}\|_{2} \Theta + G]]\\
		& \stackrel{(b)}{\le } \E[\E[\Theta \mid U, \gamma_t \Theta + G]^2]\\
		& \stackrel{(c)}{=}  \gamma_{t + 1}^2,
	\end{align*}
	where $(a)$ follows because, as pointed above,
	  $T_0=\<\balpha_{\le t},\bV_{\le t}\>/\|\balpha_{\le t}\|_2$ is a 
	 sufficient statistics for $\bTheta$ 
	 given $\bV_{\le t}=\balpha_{\le t}\Theta+\bZ_{\le t}$,  and is distributed
	 as in Eq.~\eqref{eq:T0Distr}. Further, $(b)$ follows by Jensen's
	 inequality since, by the induction hypothesis, $\|\balpha_{\le t}\|_2 \leq \gamma_t$,
	 and $(c)$ by the definition of $\gamma_{t+1}$. This completes the induction. 
\end{proof}

The proof of Theorem \ref{thm:main} follows immediately from Corollary~\ref{coro:Final}
and Lemma~\ref{lemma:Final}.

\section{High-dimensional regression}
 \label{sec:MainRegression}

In this section, we generalize our results to regression in generalized linear models.
We observe a vector of responses $\by \in \RR^n$ and a matrix of covariates 
$\bX \in \RR^{n \times d}$ which are related according to 
\begin{align*}
	\by = h(\bX \btheta, \bw), 
\end{align*}
Here $\bw \in \RR^n$ is a noise vector, $\btheta \in \RR^d$  is a vector of parameters, and
 $h: \RR^2 \rightarrow \RR$ is a continuous function which we apply to vectors entrywise.
 Namely, denoting by $\bx_i\in\reals^d$ the $i$-th row of $\bX$, 
the above equation is equivalent to $y_i=h(\<\bx_i, \btheta\>, w_i)$ for $i\le n$.

We assume that $\bX \in \RR^{n \times d}$ has i.i.d. entries with $\E[X_{ij}] = 0$ and 
$\E[X_{ij}^2] = 1 / n$ for all $1 \leq i \leq n$ and $1 \leq j \leq d$. 
In addition, we observe side information 
$\bu \in \RR^n$ and $\bv \in \RR^d$. Given $\mu_{W,U}$ and $\mu_{\Theta, V}$  
two fixed probability distributions over $\RR^2$, we assume 
$\{(w_i, u_i)\}_{i \leq n} \iidsim \mu_{W, U}$ and $\{(\theta_i, v_i)\}_{i \leq d} \iidsim \mu_{\Theta, V}$. 
We consider the asymptotic setting where we have fixed asymptotic aspect ratio: 
$n / d \rightarrow \delta \in (0, \infty)$. The goal is to estimate $\btheta$ given 
$(\bX, \by, \bu, \bv)$.

\subsection{General first order methods}
In this section we introduce our notations for GFOMs for generalized linear models. At the $t$-th iteration, GFOM performs the following updates:
\begin{align}\label{eq:GFOM1}
	\begin{split}
		\bv^{t} := &  \bX^{\top} F_{t - 1}^{(1)}(\bu^{\leq t - 1}; \by, \bu) + F_{t - 1}^{(2)}( \bv^{\leq t - 1}; \bv), \\
	\bu^t := & \bX G_t^{(1)}(\bv^{\leq t}; \bv) + G_t^{(2)}(\bu^{ \leq t - 1}; \by, \bu),	
	\end{split}
\end{align}
where we use the shorthands $F_{s}^{(\ell)}(\bu^{\leq s}; \by, \bu) := F_{s}^{(\ell)}(\bu^1, \cdots, \bu^{s}; \by, \bu)$
and  $G_s^{(\ell)}(\bv^{\leq s}; \bv) := G_s^{(\ell)}(\bv^1, \cdots, \bv^{s}; \bv)$, 
where $F_t^{(1)}, G_{t + 1}^{(2)}: \RR^{n(t + 2)} \rightarrow \RR^n$, 
$F_t^{(2)}, G_t^{(1)}: \RR^{d(t + 1)} \rightarrow \RR^d$ are continuous functions with the 
$F$'s indexed by $t \in \NN$ and $G$'s indexed by $t \in \NN_{>0}$.  
After $s$ iterations, the algorithm estimates $\btheta$ by 
$\hat{\btheta}^{s} = G^{(s)}_{\ast}( \bv^{ \leq s}; \bv)$, where 
$G^{(s)}_{\ast}: \RR^{d(s + 1)} \rightarrow \RR^d$ is a continuous function. 
In this setting, a GFOM is uniquely determined by the set of nonlinearities
 $\{F_{t - 1}^{(1)}, F_{t - 1}^{(2)}, G_t^{(1)}, G_t^{(2)}, G^{(t)}_{\ast}\}_{t \in \NN_{>0}}$.

As in the case of low-rank matrix estimation, we consider two settings for the random matrix $\bX$,
and the nonlinearities 
 $\{F_{t - 1}^{(1)}, F_{t - 1}^{(2)}, G_t^{(1)}, G_t^{(2)}, G^{(t)}_{\ast}\}_{t \in \NN_{>0}}$. 
\begin{setting}\label{setting:3}
	\begin{itemize}
		\item The matrix $\bX$ has entries $X_{ij} \iidsim \normal(0, 1 / n)$.
		\item The probability measures $\mu_{\Theta, V}$, $\mu_{W, U}$ are sub-Gaussian. 
		\item The functions $F_{t}^{(1)}, F_{t}^{(2)}, G_t^{(1)}, G_t^{(2)}, G^{(t)}_{\ast}$ are uniformly Lipschitz. 
		Further, for any $\bmu \in \RR^{\NN}$, $\bSigma,\bar \bSigma \in \RR^{\NN \times \NN}$ 
		positive semi-definite and $(b_{ij})_{1 \leq i,j \leq t}, (\bar b_{ij})_{1 \leq  i,j \leq t}$ 
	   $n$-independent constants, we let $(\bg_t)_{t \in \NN_{>0}}$ and $(\bar\bg_t)_{t \in \NN}$ be 
	   centered Gaussian processes with $\E[\bg_s \bg_t^{\sT} ] = \Sigma_{st} \id_d$ 
	   and $\E[{\bar\bg_s} {\bar\bg_t}^{\sT}] = \bar\Sigma_{st}\id_n$, we assume the following limits exist for all $s \leq t$,
		\begin{align*}
			& \plim_{n,d \rightarrow \infty}\frac{1}{d} \langle F_t^{(2)} (\by^1, \cdots, \by^t; \bv), F_s^{(2)}(\by^1, \cdots, \by^s; \bv) \rangle, \\
			& \plim_{n,d \rightarrow \infty} \frac{1}{n}\langle F_t^{(1)}(\bar\by^1, \cdots, \bar\by^t; h(\bar\bg_0, \bw), \bu),  F_s^{(1)}(\bar\by^1, \cdots, \bar\by^s; h(\bar\bg_0, \bw), \bu) \rangle,
		\end{align*}
		where $\{\by^t\}_{t \geq 1}$, $\{\bar\by_t\}_{t \geq 1}$ are defined recursively as follows:
		\begin{align*}
			& \by^1 = \mu_1 \btheta + \bg_1 + F_0^{(2)} (\bv), \\
			& \by^{t + 1} = \mu_{t + 1} \btheta + \bg_{t + 1} + F_{t}^{(2)}(\by^{\leq t}; \bv) + \sum_{s = 1}^t b_{ts} G_{s}^{(1)}(\by^{\leq s}; \bv), \\
			& \bar{\by}^1 = \bar{\bg}_1 + G_1^{(2)}(h(\bar\bg_0, \bw), \bu) + \bar{b}_{11}F_0^{(1)} (h(\bar\bg_0, \bw), \bu), \\
			& \bar{\by}^{t + 1} = \bar\bg_{t + 1} + G_{t + 1}^{(2)}(\bar\by^1, \cdots, \bar\by^t; h(\bar\bg_0, \bw), \bu) + \sum_{s  =1}^{t + 1} \bar{b}_{t+1,s}F_{s - 1}^{(1)}(\bar\by^1,\cdots, \bar\by^{s - 1}; h(\bar\bg_0, \bw), \bu ).
		\end{align*}
		The analogous limits for $\langle G_t^{(1)}, G_s^{(1)} \rangle / d$, $\langle G_t^{(1)}, F_s^{(2)}\rangle / d$, $\langle G_{\ast}^{(t)}, G_s^{(1)} \rangle / d$, $\langle G_{\ast}^{(t)}, F_s^{(2)} \rangle / d$, $\langle G_{\ast}^{(t)}, G_{\ast}^{(s)}\rangle / d$, $\langle \btheta, G_t^{(1)} \rangle / d$, $\langle \btheta, F_t^{(2)} \rangle / d$,  $\langle \btheta, G_{\ast}^{(t)} \rangle / d$,  $\langle G_t^{(2)}, G_s^{(2)} \rangle / n$, $\langle  G_t^{(2)}, F_s^{(1)} \rangle / n$, $\langle F_t^{(1)}, \bar{\bg}_s\rangle / n$ and $\langle G_t^{(1)}, \bg_s \rangle / d$ are also assumed to exist. 
	\end{itemize}
\end{setting}

\begin{setting}\label{setting:4}
	\begin{itemize}
		\item  The matrix $\bX$ has independent entries with 
$X_{ij}=\oX_{ij}/\sqrt{n}$ where $(\oX_{ij})_{i\le n, j \le d}$ is a collection of 
i.i.d. random variables with distribution independent of $(n,d)$, such that
 $\E \oX_{ij} =0$, $\E \oX_{ij}^2=1$,
 and $\E \oX_{ij}^4<\infty$.
 	\item The probability measures $\mu_{\Theta, V}$, $\mu_{W, V}$ are sub-Gaussian.
		\item We have $n$-independent functions $F_{t - 1}^{(1)}, F_{t}^{(2)}, G_t^{(1)}, G_{t}^{(2)}, G^{(t)}_{\ast}: \RR^{t + 1} \rightarrow \RR$. We overload these notations by letting $F_t^{(1)}(\bu^1, \cdots, \bu^t; \by, \bu) \in \RR^n$ be the vector with the $i$-th component $F_t(\bu^1, \cdots, \bu^t; \by, \bu)_i = F_t(u_i^1, \cdots, u_i^t; y_i, u_i)$. Similar notations apply for $F_t^{(2)}, G_t^{(1)}, G_t^{(2)}$ and $G_{\ast}^{(t)}$. We assume either of the following conditions:
		\begin{enumerate}
			\item[(a)] The functions $F_{t - 1}^{(1)}, F_{t - 1}^{(2)}, G_t^{(1)}, G_t^{(2)}, G^{(t)}_{\ast}$ are Lipschitz continuous. 
			\item[(b)] The functions $F_{t - 1}^{(1)}, F_{t - 1}^{(2)}, G_t^{(1)}, G_t^{(2)}, G^{(t)}_{\ast}$ are polynomial, and in addition the entries of $\bX$ are sub-Gaussian $\E[\exp(\lambda X_{ij})] \leq \exp(C\lambda^2 / n)$ for some $n$-independent constant $C$.
		\end{enumerate}
	\end{itemize}
\end{setting}

\subsection{Main result for generalized linear models}

Unless explicitly stated, in the rest parts of the proof we let $(\Theta, V) \sim \mu_{\Theta, V}$, $(W, U) \sim \mu_{W, U}$ and $Z, Z_0, Z_1 \iidsim \normal(0,1)$ independent of each other. We define the minimum mean squared error function $\mmse_{\Theta, V}$: $\RR_{\geq 0} \rightarrow \RR_{\geq 0}$ via
\begin{align*}
	\mmse_{\Theta, V}(\alpha) := & \inf_{\hat{\Theta}: \RR^2 \rightarrow \RR^2}\E\big\{ [\Theta - \hat{\Theta}(\alpha\Theta + Z, V)]^2 \big\} \\
	=& \E[\Theta^2] - \E\big\{ \E[\Theta \mid \alpha \Theta + Z, V]^2 \big\}.
\end{align*} 
We let 
$\beta_0 := 0$, $\sigma_1 := \delta^{-1/2} \E[\Theta^2]^{1/2}$
 and $\tilde{\sigma}_1 := 0$. Then for $s \in \NN^+$, we define the following quantities recursively:
\begin{align}\label{eq:beta1}
\begin{split}
	& \beta_s^2 = \frac{1}{\sigma_s^2}\E[\E[Z_0 \mid h(\sigma_s Z_0 + \tilde{\sigma}_s Z_1, W ), U, Z_1]^2], \qquad \beta_s \geq 0,\\
	& \sigma_{s + 1}^2 = \frac{1}{\delta}\mmse_{\Theta,V}(\beta_{s}), \qquad \tilde{\sigma}_{s + 1}^2 = \frac{1}{\delta}(\E[\Theta^2] - \mmse_{\Theta,V}(\beta_{s})).
\end{split}
\end{align}
The following theorem establishes that no GFOM can achieve mean squared error below $\mmse_{\Theta, V}(\beta_{t})$ after $t$ iterations. 

\begin{theorem}\label{thm:GLM}
	For $t \in \NN_{>0}$, let $\hat\btheta^t \in \RR^d$ be the output of any GFOM after $t$ iterations, then under either Setting \ref{setting:3} or \ref{setting:4}, the following holds:
	\begin{align}
	\label{eq:LowerBoundGLM}
		\plim_{n,d \rightarrow \infty} \frac{1}{d} \|\hat\btheta^t - \btheta\|_2^2 \geq \mmse_{\Theta, V}(\beta_{t}). 
	\end{align}
	Further, there exists a GFOM which satisfies the above bound with equality. 
\end{theorem}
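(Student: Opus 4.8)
The plan is to reproduce, in the bipartite setting of generalized linear models, the three-step reduction used to prove Theorem \ref{thm:main}: (i) reduce an arbitrary GFOM to a rectangular AMP algorithm; (ii) reduce rectangular AMP to orthogonal AMP; (iii) minimize the state-evolution error over orthogonal AMP algorithms. The new feature throughout is that iterates alternate between $\RR^n$ and $\RR^d$, multiplications alternate between $\bX$ and $\bX^\top$, and there are now two families of nonlinearities and two Onsager coefficient matrices.

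First I would establish the analogue of Lemma \ref{lemma:change-of-variable}: any GFOM of the form \eqref{eq:GFOM1} is simulated, via a non-random (and $(\btheta,\bw,\bu,\bv,\bX)$-independent) change of variables, by a rectangular AMP algorithm with the same number of matrix--vector products, preserving the Lipschitz (Setting \ref{setting:4}.$(a)$) or polynomial (Setting \ref{setting:4}.$(b)$) structure; the only differences from the matrix case are that the induction is run jointly over the $\bu$- and $\bv$-iterates and that $\by=h(\bX\btheta,\bw)$ is treated as fixed side information on the $n$-side. This reduces the lower bound to one for rectangular AMP, whose asymptotics are governed by the rectangular state evolution established in \cite{bayati2011dynamics,berthier2020state} (resp. \cite{bayati2015universality,chen2021universality} under Setting \ref{setting:4}). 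Next I would prove the analogue of Lemma \ref{lemma:OAMP}: a rectangular AMP can be replaced by an orthogonal one in which, on the $d$-side, $\bv^t=\beta_t\btheta+\bz^t$ with the $\bz^t$ asymptotically orthonormal and orthogonal to $\btheta$, while on the $n$-side the iterates become asymptotically orthonormal once one subtracts their component along the Gaussian proxy of $\bX\btheta$. The construction is again Gram--Schmidt in $L^2$ applied separately to the two state-evolution Gaussian processes; the extra care needed here is to track the conditional-independence structure linking the sides ($\btheta$ enters the $n$-side only through $\bX\btheta$, observed only via $h(\cdot,\bw)$), so that the relevant $n$-side ``signal'' after orthogonalization is exactly $\E[Z_0\mid h(\sigma Z_0+\tilde\sigma Z_1,W),U,Z_1]$ with $Z_1$ the already-extracted direction, matching the first line of \eqref{eq:beta1}.

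Finally, the optimal-OAMP step becomes an alternating induction. If after $s$ $d$-side steps the accumulated squared overlap with $\btheta$ is at most $\beta_s^2$, then the sufficient-statistic reduction on the $d$-side (as in Lemma \ref{lemma:Final}) together with Jensen's inequality bounds the next $n$-side variances by $\sigma_{s+1}^2=\delta^{-1}\mmse_{\Theta,V}(\beta_s)$ and $\tilde\sigma_{s+1}^2=\delta^{-1}(\E[\Theta^2]-\mmse_{\Theta,V}(\beta_s))$; conversely, Cauchy--Schwarz and Pythagoras on the $n$-side (again as in Lemma \ref{lemma:Final}, now with $Z_0$ playing the role of $\Theta$) bound the next $d$-side overlap by $\beta_{s+1}$ of \eqref{eq:beta1}. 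Closing this double recursion and applying the sufficient-statistic argument once more to $\bv^{\le t}$ yields \eqref{eq:LowerBoundGLM}, and the matching achievability by Bayes AMP follows as in \cite{celentano2020estimation}.

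I expect the main obstacle to be Step (ii): unlike the matrix case, where the signal direction is a fixed deterministic vector, here the $n$-side ``signal direction'' $\bX\btheta$ is random and correlated with every $n$-side iterate, so one must verify that the conditional law of the $n$-side state-evolution process given $(\bX\btheta,\bw,\bu)$ is Gaussian with the correct covariance in order to perform Gram--Schmidt conditionally and make the MMSE functional of \eqref{eq:beta1} emerge as the sharp bound. A secondary, more routine point is extending the universality theorems of \cite{bayati2015universality,chen2021universality} to time-non-separable nonlinearities and to the planted-$\btheta$ model (the rectangular analogue of the rank-one perturbation), which I would defer to the appendix.
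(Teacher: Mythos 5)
Your proposal is correct and follows essentially the same route as the paper: reduce GFOM to rectangular AMP (the paper's Lemma \ref{lemma:change-of-variable1}), reduce AMP to orthogonal AMP (Lemma \ref{lemma:ortho}), and then run the alternating state-evolution induction tracking $\|\balpha_{\le s}\|_2\le\beta_s$ on the $d$-side and the residual noise level $\omega_{s-1}\ge\sigma_s$ on the $n$-side (Lemma \ref{lemma:FinalLemmaRegression}), with achievability deferred to \cite{celentano2020estimation}. You also correctly diagnosed the genuinely delicate point. One small implementation note, since you flagged it as your expected obstacle: the paper does \emph{not} perform a conditional Gram--Schmidt. It orthonormalizes the $n$-side variables $Y_t=f_t(\bar\bG_{\le t};h(\bar G_0,W),U)$ unconditionally in $L^2$, exactly as in the symmetric case; what makes the MMSE functional of \eqref{eq:beta1} emerge is a separate computation of conditional expectations (the paper's Lemma \ref{lemma:conditional-distribution}), which shows via a density factorization that $\E[\bar G_0^{\perp,t}\mid h(\bar G_0,W),U,\bar\bG_{\le t}]\ed\E[\omega_tZ_0\mid h(\omega_tZ_0+\zeta_tZ_1,W),U,Z_1]$ and that this object rescales consistently across $s\le t$, together with a monotonicity lemma (Lemma \ref{lemma:monotone}) that lets the induction hypothesis $\omega_t\ge\sigma_{t+1}$ be pushed through Jensen. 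Your phrasing (``conditional Gram--Schmidt'') points at the right phenomenon but the cleaner mechanism is unconditional orthogonalization plus these two auxiliary lemmas; the distinction matters mainly because the unconditional approach keeps the Gram--Schmidt coefficients deterministic and $n$-independent, which is what the Onsager-coefficient bookkeeping in Lemma \ref{lemma:ortho} relies on.
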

The proof of the lower bound \eqref{eq:LowerBoundGLM} is presented in Appendix \ref{sec:ProofGLM_1} under 
Setting \ref{setting:4} and in Appendix \ref{sec:ProofGLM_2} under 
Setting \ref{setting:3}.
We refer to \cite{celentano2020estimation} for a proof that there exists a GFOM
achieving the bound with equality.

\subsection*{Acknowledgements}

This work was supported by the NSF grant CCF-2006489 and the ONR grant N00014-18-1-2729. 

\bibliographystyle{alpha}

\newpage

\begin{appendices}

\section{Proof of Theorem \ref{thm:main} under Setting \ref{setting:Gaussian}}
\label{sec:MatrixGaussian}
In this section we prove Theorem \ref{thm:main} in the context of Setting \ref{setting:Gaussian}. 
Therefore,  $F_t, G_t, F_{\ast}^{(t)}$ are non-separable,
namely they do not necessarily act on vectors entrywise.

Before we proceed, we first generalize the definition of pseudo-Lipschitz 
functions given in the main text. For any $m,l,k \in \NN_{>0}$, a function 
$\phi: \RR^l \rightarrow \RR^m$ is called a pseudo-Lipschitz function of order $k$ if 
there exists a  constant $L > 0$, such that for any $\bx, \by \in \RR^l$, 
\begin{align}
	\frac{1}{\sqrt{m}}\|\phi(\bx) - \phi(\by)\|_2 \leq & L\left( 1 + \left( \frac{\|\bx\|_2}{\sqrt{l}} \right)^{k-1} +\left( \frac{\|\by\|_2}{\sqrt{l}} \right)^{k-1} \right) \frac{\|\bx - \by\|_2}{\sqrt{l}},\label{eq:PL1} \\
	\frac{1}{\sqrt{m}}\|\phi(\bx)\|_2 \leq & L\left(1 + \left( \frac{\|\bx\|_2}{\sqrt{l}}\right)^k \right).\label{eq:PL2} 
\end{align}
In what follows, we will often consider sequences of functions
$\phi_n: \RR^{l_n} \rightarrow \RR^{m_n}$ indexed by $n$ (even if we often do not write explicitly 
that we are considering a sequence). We say that such a sequence $\{\phi_n\}_{n \geq 1}$
 is \emph{uniformly pseudo-Lipschitz} of order $k$ if Eqs.~\eqref{eq:PL1},
 \eqref{eq:PL2} hold with $L$ a constant that is independent of $n$.   
 
\subsection{Approximate message passing algorithms}

As before, the first step is to define the AMP algorithm for this setting. An AMP algorithm is defined by Lipschitz non-linearities $\{f_t: \RR^{n(t + 1)} \rightarrow \RR^n\}_{t \geq 0}$, and produces vectors $\{ \ba^t\}_{t \geq 1} \subseteq \RR^n$ via the following iteration:
\begin{align}\label{eq:nonseparable-AMP}
	\ba^{t + 1} = \bX f_t(\ba^{\leq t}; \bu) - \sum_{s = 1}^t b_{t,s} f_{s - 1}(\ba^{\leq s - 1}; \bu).
\end{align}
For each $t \in \NN$, $f_t$ stands for a sequence of functions which are uniformly Lipschitz continuous. As before, we introduce the notation $\Ons_{\mathrm{AMP}}(\ba^{\leq t - 1}; \bu) := \sum_{s = 1}^t b_{t,s} f_{s - 1}(\ba^{\leq s - 1}; \bu)$. Under Setting \ref{setting:Gaussian}, the state evolution recursion to construct $\bmu = (\mu_t)_{t \geq 1}$ and $\bSigma = (\Sigma_{s,t})_{s,t \geq 1}$ is defined as follows:
\begin{align}\label{eq:nonseparable-SE}
\begin{split}
	\mu_{t + 1} = & \lim_{n \rightarrow \infty}\frac{1}{n}\E[\btheta^{\top} f_t(\bmu_{\leq t} \btheta + \bg_{\leq t}; \bu)]\, ,\\
	\Sigma_{s + 1, t + 1} = & \lim_{n \rightarrow \infty}\frac{1}{n}\E[f_s(\bmu_{\leq s} \btheta + \bg_{\leq s}; \bu)^{\top} f_t(\bmu_{\leq t} \btheta + \bg_{\leq t}; \bu)]\, , \\
	\bg_{\leq t} := & (\bg_1, \cdots, \bg_t) \sim \normal(\mathbf{0}, \bSigma_{\leq t} \otimes \id_n)\, , 	
\end{split} 
\end{align}
where we adopted the notation $\bmu_{\leq t} \btheta + \bg_{\leq t} := 
 (\mu_1 \btheta + \bg_1, \cdots, \mu_t \btheta + \bg_t)$ and  we assume the above
 limits exist. Given $\bmu$ and $\bSigma$, we define
\begin{align}\label{eq:nonseparable-bts}
	b_{t,s} = \frac{1}{n}\sum_{i = 1}^n\E[\partial_{i,s} f_{t,i}(\bmu_{\leq t } \btheta + \bg_{\leq t}; \bu)],
\end{align}
where $f_{t,i}$ is the $i$-th coordinate of $f_t$, and $\partial_{i,s}$ denotes the weak derivative with respect to the $s$-th variable of the $i$-th row of the input matrix. To give an example, for variables $\bx_1, \cdots, \bx_t \in \RR^n$ and a function $f(\bx_1, \cdots, \bx_t)$ mapping from $\RR^{nt}$ to $\RR$, we have $\partial_{i,s} f(\bx_1, \cdots, \bx_t) = \partial_{(\bx_s)_i} f(\bx_1, \cdots, \bx_t)$.  Notice that here $b_{t,s}$ depends on $n$. Since $f_t$ is uniformly Lipschitz in terms of $n$, for all $t,s \in \NN_{>0}$, $b_{t,s}$ is uniformly bounded as a sequence in $n$. 

After $t$ iterations as in Eq.~\eqref{eq:nonseparable-AMP}, the AMP algorithm estimates $\btheta$ by applying a uniformly Lipschitz function $f_t^{\ast}: \RR^{n(t + 1)} \rightarrow \RR^n$ to $(\ba^{\leq t}, \bu)$: 
\begin{align*}
	\hat{\btheta}(\bX, \bu) = f_t^{\ast}(\ba^{\leq t}; \bu). 
\end{align*}
The following theorem characterizes the asymptotic performance of the AMP algorithm \eqref{eq:nonseparable-AMP}.
\begin{theorem}\label{thm:nonseparatble-SE}
	Assume that $\{(\theta_i, u_i)\}_{i \leq n} \iidsim \mu_{\Theta, U}$, and $\bW$ satisfies the same assumption as $\bW$ under Setting \ref{setting:Gaussian}. For all $t \in \NN$, assume $f_t$ is uniformly Lipschitz. Furthermore, we assume the limits
	\begin{align*}
		& \lim_{n \rightarrow \infty}\frac{1}{n}\E[\btheta^{\top} f_t(\bmu_{\leq t} \btheta + \bg_{\leq t}; \bu)] , \\
		& \lim_{n \rightarrow \infty} \frac{1}{n}\E[f_s(\bmu_{\leq s} \btheta + \bg_{\leq s}; \bu)^{\top} f_t(\bmu_{\leq t} \btheta + \bg_{\leq t}; \bu)], \\
		& \lim_{n \rightarrow \infty}\frac{1}{n}\E[\btheta^{\top} f_t^{\ast}(\bmu_{\leq t} \btheta + \bg_{\leq t}; \bu)], \\
		& \lim_{n \rightarrow \infty}\frac{1}{n}\E[f_s^{\ast}(\bmu_{\leq s} \btheta + \bg_{\leq s}; \bu)^{\top} f_t^{\ast}(\bmu_{\leq t} \btheta + \bg_{\leq t}; \bu)]
	\end{align*}
	 exist for all $n$-independent $(\bmu, \bSigma)$ and $t,s \in \NN$. Then, for any $t \in \NN_{>0}$ and $\{\psi_n: \RR^{n(t + 1)} \rightarrow \RR\}_{n \geq 1}$  uniformly pseudo-Lipschitz of order $2$, 
	\begin{align*}
		\plim_{n\to\infty}\Big|\psi_n(\ba^{\leq t}; \bu) - \E[\psi_n(\bmu_{\leq t} \btheta + \bg_{\leq t}; \bu)]\Big|=0\, . 
	\end{align*}
\end{theorem}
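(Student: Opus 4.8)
\emph{Proof plan.} Throughout I condition on $(\btheta,\bu)$; by sub-Gaussianity of $\mu_{\Theta,U}$ these may be treated, almost surely, as a fixed sequence whose joint empirical distribution converges to $\mu_{\Theta,U}$ with uniformly bounded moments of every order, so that all limits below are over the randomness of $\bW$ alone. The argument has two stages: (i) reduce the spiked iteration \eqref{eq:nonseparable-AMP} to an AMP iteration driven by the centered matrix $\bW$ with a state-evolution drift built in, and (ii) invoke the state evolution theorem for symmetric AMP with non-separable uniformly Lipschitz nonlinearities of \cite{berthier2020state}, after dealing with a few cosmetic discrepancies in its hypotheses.

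\emph{Stage (i): removing the spike.} Since $\bX = \bW + n^{-1}\btheta\btheta^{\top}$, the update reads $\ba^{t+1} = \bW f_t(\ba^{\le t};\bu) + n^{-1}\btheta\,\<\btheta, f_t(\ba^{\le t};\bu)\> - \Ons_{\mathrm{AMP}}(\ba^{\le t-1};\bu)$. I introduce an auxiliary iteration $\{\btz^t\}$ driven by $\bW$ only, with nonlinearities $\tf_t(\btz^1,\dots,\btz^t;\bu) := f_t(\btz^1+\mu_1\btheta,\dots,\btz^t+\mu_t\btheta;\bu)$ carrying the drift prescribed by \eqref{eq:nonseparable-SE}, and with the Onsager coefficients recomputed from $\{\tf_s\}$. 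A simultaneous induction on $t$ shows $n^{-1/2}\|\ba^{\le t} - (\btz^{\le t} + \bmu_{\le t}\btheta)\|_2 \to 0$ in probability: the inductive hypothesis together with the state evolution for the $\bW$-iteration (Stage (ii), applied to the linear test function $\bw\mapsto n^{-1}\<\btheta, f_t(\bw+\bmu_{\le t}\btheta;\bu)\>$) gives $n^{-1}\<\btheta, f_t(\ba^{\le t};\bu)\> \to \mu_{t+1}$, so the rank-one term contributes precisely the drift $\mu_{t+1}\btheta$ up to an $o_n(1)$ error in $n^{-1/2}\|\cdot\|_2$; uniform Lipschitz continuity of the $f_s$ then propagates this error from step $t$ to step $t+1$ and forces the two Onsager terms to agree asymptotically. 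Given this coupling, the claim for $\psi_n(\ba^{\le t};\bu)$ reduces to the same statement for $(\bw^1,\dots,\bw^t)\mapsto\psi_n(\bw^1+\mu_1\btheta,\dots,\bw^t+\mu_t\btheta;\bu)$ applied to the non-spiked iterates $\{\btz^t\}$; this composed function is again uniformly pseudo-Lipschitz of order $2$, since adding a fixed bounded-moment vector preserves that class.

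\emph{Stage (ii): state evolution for the centered iteration, and conclusion.} The iteration $\{\btz^t\}$ is an ordinary symmetric AMP iteration with uniformly Lipschitz non-separable nonlinearities, and \cite{berthier2020state} applies to it. Three points must be verified: the nonlinearities depend on the fixed vectors $\bu,\btheta$, which is allowed since these enter as part of a converging data sequence with bounded moments; $\bW$ is not exactly GOE, but its diagonal has operator norm $o_n(1)$, so a Lipschitz-perturbation argument shows this alters the iterates by $o_n(1)$ in $n^{-1/2}\|\cdot\|_2$ and leaves state evolution unchanged; and, if \cite{berthier2020state} is stated for a narrower class of test functions, one upgrades to uniformly pseudo-Lipschitz of order $2$ by truncation plus the uniform integrability of $\{n^{-1}(\|\ba^{\le t}\|_2^2 + \|\bu\|_2^2 + \|\btheta\|_2^2)\}_{n\ge1}$, which follows from uniform Lipschitzness of the $f_s$ and sub-Gaussianity of the inputs (the same observation already used for Setting \ref{setting:Gaussian} in the main text). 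Finally, to obtain the conclusion for test functions also involving the estimator $f_t^{\ast}(\ba^{\le t};\bu)$ — as needed when applying the theorem to the reconstruction error — one appends a single dummy step to the chain whose nonlinearity is $f_t^{\ast}$, the extra state-evolution parameters being well defined precisely because the limits involving $f_t^{\ast}$ are assumed to exist. The main obstacle is Stage (i): making the simultaneous induction fully rigorous with non-separable maps, i.e.\ controlling the propagation of $o_n(1)$ errors through the uniformly Lipschitz nonlinearities and the Onsager corrections, and checking that the convergence hypotheses needed to invoke \cite{berthier2020state} for the auxiliary chain are genuinely implied by the hypotheses of the theorem — the reduction of a spiked symmetric matrix to a centered one being classical in the separable case \cite{deshpande2014information,montanari2021estimation} but requiring care here that every empirical-distribution convergence invoked is of the pseudo-Lipschitz (Wasserstein-$2$) type that Stage (ii) supplies.
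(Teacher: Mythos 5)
Your overall strategy — reduce to the non-separable Gaussian state evolution theorem of \cite{berthier2020state} — is the same as the paper's, which disposes of the statement in a short remark rather than a full proof. But your Stage~(ii) contains a genuine gap that the paper's remark specifically flags and resolves, and that you pass over.

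You assert that the centered auxiliary chain $\{\btz^t\}$ is ``an ordinary symmetric AMP iteration'' to which \cite{berthier2020state} directly applies. It is not: your nonlinearities $\tf_t(\btz^1,\dots,\btz^t;\bu)=f_t(\btz^1+\mu_1\btheta,\dots,\btz^t+\mu_t\btheta;\bu)$ still depend on the \emph{entire history} of iterates, whereas \cite[Theorem~1]{berthier2020state} as used here covers only nonlinearities of the form $f_t(\ba^t;\bu)$, depending on the \emph{current} iterate (plus side information). This is exactly the generalization the paper's remark is about, and it is handled there by the standard lift to matrix-valued iterates: one fixes a horizon $t$, chooses $q\ge t$ independent of $n$, runs the AMP recursion with iterates in $\RR^{n\times q}$ and a $q\times q$ Onsager matrix, and picks the matrix nonlinearity $(\bx_1^s,\dots,\bx_q^s,\bu)\mapsto\bigl(f_0(\bu),\dots,f_s(\bx_1^s,\dots,\bx_s^s;\bu),\bzero,\dots,\bzero\bigr)$ so that the columns $(\bx^t_s)_{s\le t}$ reproduce $(\ba^s)_{s\le t}$. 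Without this lift (or an equivalent device), the invocation of \cite{berthier2020state} in your Stage~(ii) simply does not apply, and the induction in Stage~(i) has no base to stand on, since it relies on state evolution for the auxiliary chain at each step.

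On the other side of the ledger, your Stage~(i) — the explicit coupling of the spiked iteration to a $\bW$-driven chain with the drift $\bmu_{\le t}\btheta$ baked into the nonlinearities, with error propagation via uniform Lipschitzness and asymptotic agreement of the two Onsager corrections — is a careful treatment of a reduction the paper only cites (for the separable case) and does not spell out in the non-separable setting, so this part is a useful supplement. Your side remarks (diagonal contribution, upgrading the test-function class, a dummy final step for $f_t^\ast$) are also reasonable. But as written, the proof is incomplete until you insert the matrix-lift step so that \cite{berthier2020state} genuinely covers the recursion you hand it.
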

\begin{remark}
	Theorem \ref{thm:nonseparatble-SE} is a generalized version of 
	\cite[Theorem 1]{berthier2020state}. In \cite{berthier2020state}
	the non-linearity $f_t$ only depends on $(\ba^t, \bu)$, while here we allow it to depend 
	on all previous iterates $(\ba^{\leq t}, \bu)$. 
	
	This generalization can be conducted through the following steps: 
	$(1)$~Replace the vectors $f_t(\ba^t; \bu), \ba^t \in \RR^n$ by matrices 
	$f_t(\ba^t; \bu), \ba^t \in \RR^{n \times q}$, and replace the coefficients for the 
	Onsager correction term $b_{t,t}$ by $q \times q$ matrices (see, e.g., \cite{javanmard2013state}).
	 Such generalization follows exactly by the same proof as in \cite{berthier2020state}.
	 $(2)$~Fix a time horizon $t$, and choose an $n$-independent $q$ such that $q \geq t$. 
	 With initialization $\bx_1^0 = \cdots = \bx_q^0 = \textbf{0}$, we set the 
	 non-linearity corresponding to the $(s + 1)$-th iteration as
	\begin{align*}
		(\bx_1^{s}, \cdots, \bx_q^{s}, \bu) \mapsto (f_0(\bu), \cdots, f_s(\bx_1^s, \cdots, \bx_s^s; \bu), \mathbf{0}, \cdots, \mathbf{0}) \in \RR^{n \times q}.
	\end{align*}
	In this way, the vectors $(\bx_s^{t})_{1 \leq s \leq t}$ coincides with 
	$(\ba^s)_{1 \leq s \leq t}$.
\end{remark}

\subsection{Any GFOM can be reduced to an AMP algorithm}
In this section we show that, under Setting \ref{setting:Gaussian}, any GFOM can be reduced 
to an AMP algorithm via a change of variables. 
\begin{lemma}\label{lemma:GFOM-AMP1}
	Under the assumptions of Setting \ref{setting:Gaussian}, for all $t \in \NN_{>0}$, 
	there exist uniformly Lipschitz functions $\varphi_t: \RR^{n(t + 1)} \rightarrow \RR^{nt}$ 
	and $f_{t - 1}: \RR^{nt} \rightarrow \RR^n$ that are independent of $(\btheta, \bu, \bW)$, 
	such that the following holds. Let $\{\ba^t\}_{t \geq 1}$ be the sequence of 
	vectors produced by the AMP iteration \eqref{eq:nonseparable-AMP} with non-linearities 
	$\{f_s\}_{s \geq 0}$, then for any $t \in \NN_{>0}$, we have
	\begin{align*}
		\bu^{\leq t} = \varphi_t(\ba^{\leq t}; \bu), \qquad f_{t - 1}(\ba^{\leq t - 1}; \bu) = F_{t - 1}(\varphi_t(\ba^{\leq t - 1}; \bu); \bu). 
	\end{align*}
	Furthermore, $\{\varphi_t\}_{t \geq 1}$ satisfies the following conditions. Let $(\bmu, \bSigma)$ be the state evolution of the AMP algorithm defined in \cref{eq:nonseparable-SE}. For any $t \in \NN_{>0}$, there exist uniformly bounded numbers $(b_{ij})_{1 \leq i,j \leq t}$ (which depend on $n$), such that for $\by_{\leq t}$ defined in Eq.~\eqref{eq:yt}, we have $\by_{\leq t} = \varphi_t(\bmu_{\leq t} \btheta + \bg_{\leq t}; \bu)$. 
\end{lemma}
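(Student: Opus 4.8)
The plan is to mirror the proof of Lemma~\ref{lemma:change-of-variable}, arguing by induction on $t$ while carrying along the extra bookkeeping needed to keep every constructed nonlinearity \emph{uniformly} Lipschitz in $n$ and to establish the additional state-evolution identity. For the base case $t=1$ I would set $f_0:=F_0$ and $\varphi_1(\ba^1;\bu):=\ba^1+G_0(\bu)$; since the AMP iteration \eqref{eq:nonseparable-AMP} has no Onsager term at the first step, $\ba^1=\bX f_0(\bu)$, so $\bu^1=\bX F_0(\bu)+G_0(\bu)=\ba^1+G_0(\bu)=\varphi_1(\ba^1;\bu)$. Choosing the free parameter $\mu_1$ in \eqref{eq:yt} to be the AMP state-evolution value $\mu_1=\lim_n\tfrac1n\E[\btheta^{\top}f_0(\bu)]$, the identity $\by^1=\mu_1\btheta+\bg_1+G_0(\bu)=\varphi_1(\mu_1\btheta+\bg_1;\bu)$ is immediate.

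For the inductive step, suppose $\{\varphi_s,f_{s-1}\}_{s\le t}$ have been constructed, are uniformly Lipschitz and independent of $(\btheta,\bu,\bW)$, and satisfy the algorithmic identities $\bu^{\le s}=\varphi_s(\ba^{\le s};\bu)$ and $f_{s-1}(\ba^{\le s-1};\bu)=F_{s-1}(\varphi_{s-1}(\ba^{\le s-1};\bu);\bu)$ together with the state-evolution identity $\by^{\le s}=\varphi_s(\bmu_{\le s}\btheta+\bg_{\le s};\bu)$ for $s\le t$. I would then define $f_t(\ba^{\le t};\bu):=F_t(\varphi_t(\ba^{\le t};\bu);\bu)$, which is uniformly Lipschitz because a composition of uniformly Lipschitz maps is uniformly Lipschitz; compute the Onsager coefficients $(b_{t,s})_{1\le s\le t}$ via \eqref{eq:nonseparable-bts}, which only requires the state-evolution pair $(\bmu_{\le t},\bSigma_{\le t})$ already available by the induction; and \emph{define} $\ba^{t+1}$ by \eqref{eq:nonseparable-AMP}. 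Using the induction hypothesis, $\bX f_t(\ba^{\le t};\bu)=\bX F_t(\bu^{\le t};\bu)=\bu^{t+1}-G_t(\bu^{\le t};\bu)$ and $f_{s-1}(\ba^{\le s-1};\bu)=F_{s-1}(\bu^{\le s-1};\bu)$, so solving for $\bu^{t+1}$ yields
\begin{align*}
\bu^{t+1}=\ba^{t+1}+G_t\big(\varphi_t(\ba^{\le t};\bu);\bu\big)+\sum_{s=1}^{t} b_{t,s}\,F_{s-1}\big(\varphi_{s-1}(\ba^{\le s-1};\bu);\bu\big),
\end{align*}
and I would take $\varphi_{t+1}(\ba^{\le t+1};\bu)$ to be the column concatenation of $\varphi_t(\ba^{\le t};\bu)$ with the right-hand side above. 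Uniform Lipschitzness of $\varphi_{t+1}$ then follows from that of $\varphi_t,\varphi_{s-1},G_t,F_{s-1}$ and the uniform-in-$n$ boundedness of the $b_{t,s}$, the latter being a consequence of the uniform Lipschitz bound on $f_t$.

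The remaining, and only delicate, step is the state-evolution identity. I would take the free coefficients $(b_{ij})$ and the free mean $\bmu$ in \eqref{eq:yt} to be exactly the Onsager coefficients and the state-evolution mean of the AMP algorithm just constructed, and couple the Gaussians $\bg_{\le t+1}$ so that the same realization appears in \eqref{eq:yt} and in the argument of $\varphi_{t+1}$. Evaluating $\varphi_{t+1}$ at $\bmu_{\le t+1}\btheta+\bg_{\le t+1}$ and substituting $\by^{\le s}=\varphi_s(\bmu_{\le s}\btheta+\bg_{\le s};\bu)$ for $s\le t$, the appended block reduces to $\mu_{t+1}\btheta+\bg_{t+1}+G_t(\by^{\le t};\bu)+\sum_{s=1}^{t}b_{t,s}F_{s-1}(\by^{\le s-1};\bu)$, which is exactly $\by^{t+1}$ as defined in \eqref{eq:yt}. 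The same substitution also shows that the limits defining $\mu_{t+1}$ and $\Sigma_{\cdot,t+1}$ in \eqref{eq:nonseparable-SE} exist, since $f_t(\bmu_{\le t}\btheta+\bg_{\le t};\bu)=F_t(\by^{\le t};\bu)$ and the matching inner-product limits for the $F$'s and $G$'s are assumed in Setting~\ref{setting:Gaussian}; this keeps the induction closed. The main obstacle is precisely this coupled bookkeeping: one must propagate, in lockstep and at every step, the algorithmic identity, the state-evolution identity, the uniform-in-$n$ Lipschitz constants, and the existence of the state-evolution limits, and it is the consistent choice of $(\bmu,(b_{ij}))$ in \eqref{eq:yt} that ties the AMP recursion to the GFOM recursion. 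None of these ingredients is hard in isolation; the content is in making them all available simultaneously.
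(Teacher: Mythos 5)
Your proof is correct and follows the same inductive construction as the paper's: define $f_t$ by composing $F_t$ with $\varphi_t$, use the Onsager coefficients from \eqref{eq:nonseparable-bts}, solve the AMP recursion for $\bu^{t+1}$, and verify the state-evolution identity by direct substitution. One small caveat: your final remark that the inner-product limits in \eqref{eq:nonseparable-SE} exist because "the matching inner-product limits for the $F$'s and $G$'s are assumed in Setting~\ref{setting:Gaussian}" is not quite right, since that setting assumes the limits for \emph{fixed} $n$-independent $(b_{ij})$, whereas the Onsager coefficients $b_{t,s}$ depend on $n$; the paper handles this separately in Lemma~\ref{lemma:Subseq} by passing to subsequences, and this issue is not part of what Lemma~\ref{lemma:GFOM-AMP1} claims.
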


\begin{proof}
	We prove the lemma by induction over $t$. For the base case $t = 1$, we may simply take $f_0(\bu) = F_0(\bu)$ and $\varphi_1(\ba^1; \bu) := \ba^1 + G_0(\bu)$. Then $\by^1 = \varphi_1(\mu_1 \btheta + \bg_1; \bu)$ by definition.  
	
	Suppose the claim holds for the first $t$ iterations, then we prove it holds for the $(t + 1)$-th iteration. By the induction hypothesis, 
	\begin{align*}
		\bu^{t + 1} = \bX F_t(\varphi_t(\ba^{\leq t}; \bu); \bu) + G_t(\varphi_t(\ba^{\leq t}; \bu); \bu). 
	\end{align*}
	Let $f_t(\bx^{\leq t}; \bu) = F_t(\varphi_t(\bx^{\leq t}; \bu); \bu)$. The composite of uniformly Lipschitz functions is still uniformly Lipschitz, thus, we conclude that $f_t$ is uniformly Lipschitz. Based on the choice of $\{f_s\}_{0 \leq s \leq t}$, we compute the coefficients for the Onsager correction term $\{b_{t,s}\}_{1 \leq s \leq t}$ according to Eq.~\eqref{eq:nonseparable-bts}. Then we define $\ba^{t + 1}$ via Eq.~\eqref{eq:nonseparable-AMP}, which gives
	\begin{align*}
		\ba^{t + 1} = \bu^{t + 1} - G_t((\varphi_t(\ba^{\leq t}; \bu); \bu) - \sum_{s = 1}^t b_{t,s} f_{s - 1}(\ba^{s - 1}; \bu). 
	\end{align*}
	Therefore, we define $\varphi_{t + 1}$ as
	\begin{align*}
		\varphi_{t + 1}(\ba^{\leq t + 1}; \bu) = (\varphi_t(\ba^{\leq t}; \bu); \ba^{t + 1} + G_t(\varphi_t(\ba^{\leq t}; \bu); \bu) + \sum_{s = 1}^t b_{t,s} f_{s - 1}(\ba^{\leq s - 1}; \bu)).
	\end{align*}
	By induction hypothesis and the fact that $b_{t,s}$ is uniformly bounded with respect to $n$ for all fixed $t,s \in \NN_{>0}$, we have that $\varphi_{t + 1}$ is uniformly Lipschitz. Furthermore, 
	\begin{align*}
		&\varphi_{t + 1}(\bmu_{\leq t + 1} \btheta + \bg_{\leq t + 1}; \bu) \\
		 =& (\varphi_t(\bmu_{\leq t} \btheta + \bg_{\leq t}; \bu), \mu_{t + 1} \btheta + \bg_{t + 1} + G_t(\varphi_t(\bmu_{\leq t} \btheta + \bg_{\leq t}; \bu); \bu) + \sum_{s = 1}^t b_{t,s} f_{s - 1}(\bmu_{\leq s - 1} \btheta + \bg_{\leq s - 1}; \bu)) \\
		 =& (\by^{\leq t}, \by^{t + 1}),
	\end{align*}
	thus completes the proof of the lemma by induction. 
\end{proof}

The next lemma enables us to check the conditions of Theorem \ref{thm:nonseparatble-SE}.
\begin{lemma}\label{lemma:Subseq}
	Under the assumptions of Setting \ref{setting:Gaussian}, let $\{f_{t - 1}, \varphi_t\}_{t \in \NN^+}$ be the functions 
	defined in Lemma \ref{lemma:GFOM-AMP1}. For any $\bmu = (\mu_i)_{i\ge 1}$, 
	$\bSigma = (\Sigma_{ij})_{i,j\ge 1}\succeq \bzero$, let $(\bg_t)_{t>0}$
be a centered Gaussian process 	with covariance $\E\{\bg_s\bg_t^{\sT}\} = \Sigma_{s,t}\id_n$.
Then, for any $t \in \NN$ and any infinite subsequence $\cS\subseteq \NN_{>0}$ there exists a further
subsequence $\cS'\subseteq \cS$ along which the following limits exist for all $0 \leq s \leq r \leq t$:
	\begin{align}\label{eq:ExistLim1}
	\begin{split}
		& \lim_{n \rightarrow \infty; n\in S'}\frac{1}{n}\E[f_r(\bmu_{\leq r} \btheta + \bg_{\leq r}; \bu)^{\top}f_s(\bmu_{\leq s} \btheta + \bg_{\leq s}; \bu)], \\
		&  \lim_{n \rightarrow \infty; n\in S'} \frac{1}{n}\E[\btheta^{\top}f_s(\bmu_{\leq s} \btheta + \bg_{\leq s}; \bu)], \\
& \lim_{n \rightarrow \infty; n\in S'}\frac{1}{n}\E[F_{\ast}^{(r)}(\varphi_r(\bmu_{\leq r} \btheta + \bg_{\leq r};\bu); \bu)^{\top}F_{\ast}^{(s)}(\varphi_s(\bmu_{\leq s} \btheta + \bg_{\leq s}; \bu); \bu)], \\
& \lim_{n \rightarrow \infty; n\in S'} \frac{1}{n}\E[\btheta^{\top}F_{\ast}^{(s)}(\varphi_s(\bmu_{\leq s} \btheta + \bg_{\leq s};\bu); \bu)].
	\end{split}
	\end{align}
\end{lemma}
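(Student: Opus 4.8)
The plan is to exploit the fact that, for a fixed $t$, the quantities in \eqref{eq:ExistLim1} form only \emph{finitely many} real sequences indexed by $n$ (on the order of $t^2$ of them). It therefore suffices to prove that each such sequence is bounded uniformly in $n$: once that is established, one may stack all of them into a single bounded sequence in $\RR^{m}$, $m=m(t)$, and apply the Bolzano--Weierstrass theorem to extract a subsequence $\cS'\subseteq\cS$ along which every coordinate converges; the coordinatewise limits are exactly the limits \eqref{eq:ExistLim1} that are claimed to exist.

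The substance of the argument is thus the uniform boundedness of these second-moment quantities, which I would obtain as follows. First I would record the structural facts: by Lemma~\ref{lemma:GFOM-AMP1} the maps $f_s$ and $\varphi_r$ are uniformly Lipschitz --- the $n$-dependent Onsager coefficients $(b_{ij})$ that enter $\varphi_r$ are uniformly bounded in $n$, so they do not affect this --- and by Setting~\ref{setting:Gaussian} each $F_\ast^{(r)}$ is uniformly Lipschitz; since a composition of uniformly Lipschitz sequences of functions is again uniformly Lipschitz, the map $(\ba^{\le r},\bu)\mapsto F_\ast^{(r)}(\varphi_r(\ba^{\le r};\bu);\bu)$ is uniformly Lipschitz too. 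Next I would use the elementary linear-growth consequence that a uniformly Lipschitz sequence $\phi_n\colon\RR^{a_n}\to\RR^{b_n}$ satisfies $\|\phi_n(\bx)\|_2/\sqrt{b_n}\le L(1+\|\bx\|_2/\sqrt{a_n})$, hence $\E[\|\phi_n(\bx)\|_2^2]/b_n\le 2L^2(1+\E[\|\bx\|_2^2]/a_n)$ for any random input $\bx$.

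Then I would put the pieces together. Since $\mu_{\Theta,U}$ is sub-Gaussian, $\E[\|\btheta\|_2^2]/n=\E[\Theta^2]$ and $\E[\|\bu\|_2^2]/n=\E[U^2]$ are finite and independent of $n$, while $\E[\|\bg_s\|_2^2]/n=\Sigma_{s,s}$; with $\bg_{\le r}$ independent of $\btheta$ this gives $\sup_n\E[\|\bmu_{\le r}\btheta+\bg_{\le r}\|_2^2]/n<\infty$ for each fixed $r\le t$. Feeding $\bx=\bmu_{\le r}\btheta+\bg_{\le r}$ (augmented with $\bu$) into the growth bound yields $\sup_n\tfrac1n\E\|f_r(\bmu_{\le r}\btheta+\bg_{\le r};\bu)\|_2^2<\infty$ and the analogue with $F_\ast^{(r)}(\varphi_r(\cdot\,;\bu);\bu)$ in place of $f_r$. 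Finally, Cauchy--Schwarz in the form $|\tfrac1n\E[\ba^\top\bb]|\le(\tfrac1n\E\|\ba\|_2^2)^{1/2}(\tfrac1n\E\|\bb\|_2^2)^{1/2}$ bounds each inner product in \eqref{eq:ExistLim1} uniformly in $n$ (using $\E\|\btheta\|_2^2/n<\infty$ once more for the terms paired with $\btheta$), which is all that is needed.

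I do not expect a genuine obstacle in this lemma. The only point requiring care is the bookkeeping of uniform-Lipschitz constants through the compositions $F_t\circ\varphi_t$ and $F_\ast^{(t)}\circ\varphi_t$ and the check that the $n$-dependent coefficients $(b_{ij})$ stay uniformly bounded, so that the growth and moment estimates --- and hence the uniform bounds underlying the Bolzano--Weierstrass extraction --- are genuinely uniform in $n$.
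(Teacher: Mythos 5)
Your proof is correct, but it takes a genuinely different route from the paper's. You prove uniform boundedness of the finitely many real sequences in \eqref{eq:ExistLim1} (via the uniform-Lipschitz linear-growth bound, second-moment control of $\bmu_{\le r}\btheta+\bg_{\le r}$ and $\bu$, and Cauchy--Schwarz) and then invoke Bolzano--Weierstrass directly. The paper instead diagonalizes on the Onsager coefficients $(b_{ij})_{i,j\le t}$ --- the only $n$-dependent ingredients in $\varphi_t$ --- extracting a subsequence along which they converge to $n$-independent limits $b_{ij}^\ast$; it then replaces $(b_{ij})$ by $(b_{ij}^\ast)$ in the recursion for $\by^{\le t}$, shows via the uniform-Lipschitz property that this changes the inner products by $o_P(1)$, and finally invokes the third bullet of Setting~\ref{setting:Gaussian}, which \emph{assumes} the relevant limits exist for fixed $n$-independent $(\bmu,\bSigma,b_{ij}^\ast)$. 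Your argument is more elementary and in fact does not use that assumed-convergence hypothesis at all: mere uniform boundedness suffices for the subsequential existence the lemma claims. What the paper's route buys is an identification of the subsequential limits with the canonical limits guaranteed by the Setting (evaluated at $(b_{ij}^\ast)$), whereas your Bolzano--Weierstrass extraction only produces \emph{some} limits without naming them; since the lemma statement asks only for existence along a subsequence, both routes prove it.
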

\begin{proof}
We can assume that the subsequence $\cS$ does coincide with the whole sequence, i.e. 
$\cS= \NN_{>0}$, as the general case follows by a simple change of notations.

Fix $t \in \NN$. Since $(b_{i,j})_{1 \leq i,j \leq t}$ are uniformly bounded, 
	there exists a subsequence $\{n_k\}_{k >0}$ of $\NN_{>0}$, such that for all 
	$1 \leq s,r \leq t$, $b_{s,r}$ converges to limit $b_{s,r}^{\ast}$. Suppose we replace 
	$(b_{i,j})_{1 \leq i,j \leq t}$ with $(b_{i,j}^{\ast})_{1 \leq i,j \leq t}$ in Eq.~\eqref{eq:yt}, 
	and we denote the resulting vectors by $(\by_t^{\ast})_{t \geq 1}$. It follows by 
	induction and using the uniform Lipschitz property that for all $0 \leq s,r \leq t$, along $\{n_k\}_{k > 0}$,
	\begin{align*}
		& \frac{1}{n}F_r(\by_{\leq r}^{\ast}; \bu)^{\top} F_s(\by_{\leq s}^{\ast}; \bu) - \frac{1}{n}F_r(\by_{\leq r}; \bu)^{\top} F_s(\by_{\leq s}; \bu) \overset{P}{\rightarrow} 0, \\
		& \frac{1}{n}F^{(r)}_{\ast}(\by_{\leq r}^{\ast}; \bu)^{\top} F^{(s)}_{\ast}(\by_{\leq s}^{\ast}; \bu) - \frac{1}{n}F^{(r)}_{\ast}(\by_{\leq r}; \bu)^{\top} F^{(s)}_{\ast}(\by_{\leq s}; \bu) \overset{P}{\rightarrow} 0, \\
		& \frac{1}{n}\btheta^{\top} F_s(\by_{\leq s}^{\ast}; \bu) - \frac{1}{n}\btheta^{\top} F_s(\by_{\leq s}; \bu) \overset{P}{\rightarrow} 0. \\
		& \frac{1}{n}\btheta^{\top} F^{(s)}_{\ast}(\by_{\leq s}^{\ast}; \bu) - \frac{1}{n}\btheta^{\top} F^{(s)}_{\ast}(\by_{\leq s}; \bu) \overset{P}{\rightarrow} 0.
	\end{align*}
	By the third assumption of Setting \ref{setting:Gaussian}, the limits of 
	$F_r(\by_{\leq r}^{\ast}; \bu)^{\top} F_s(\by_{\leq s}^{\ast}; \bu) / n$, 
	$F^{(r)}_{\ast}(\by_{\leq r}^{\ast}; \bu)^{\top} F^{(s)}_{\ast}(\by_{\leq s}^{\ast}; \bu) / n$, 
	$\btheta^{\top} F^{(s)}_{\ast}(\by_{\leq s}^{\ast}; \bu)/ n$  and $\btheta^{\top} F_s(\by_{\leq s}^{\ast}; \bu) / n$ 
	exist in probability as $n,d \rightarrow \infty$. Combining these results
	 and the results of Lemma \ref{lemma:GFOM-AMP1}, we conclude that the  limits 
	 of Eqs.~\eqref{eq:ExistLim1} exist along $\{n_k\}_{k \in \NN_{>0}}$:
\end{proof}

The following corollary is an immediate consequence of Lemma \ref{lemma:GFOM-AMP1}. 
\begin{corollary}\label{coro:Reduction-AMP-NonSep}
	Under the assumptions of Setting \ref{setting:Gaussian}, let $\mathcal{A}_{\mathrm{GFOM}}^t(L)$
	 be the class of GFOM estimators with $t$ iterations and uniform Lipschitz constant $L$, and 
	 $\mathcal{A}_{\mathrm{AMP}}^t(L')$ be the class of AMP algorithms with $t$ iterations
	 and uniform Lipschitz constant $L'$. Then for  any $L<\infty$ there exist $L'<\infty$
	 (independent of $n$), such that the following holds.
	 For any $z\in\RR$ and any loss function $\calL: \RR^n \times \RR^n \rightarrow \RR_{\geq 0}$: 
	\begin{align*}
		\inf_{\hat\btheta(\cdot) \in \mathcal{A}_{\mathrm{GFOM}}^t(L)} 
		\P\Big(\calL(\hat{\btheta}(\bX, \bu), \btheta)\le z\Big)
		\le  \inf_{\hat\btheta(\cdot) \in \mathcal{A}_{\mathrm{AMP}}^t(L')} 
		\P\Big(\calL(\hat{\btheta}(\bX, \bu), \btheta)\le z\Big)\, .
	\end{align*}
\end{corollary}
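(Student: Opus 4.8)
The plan is to use Lemma~\ref{lemma:GFOM-AMP1} to realize \emph{every} GFOM estimator exactly as an AMP estimator, at the price only of an $n$-independent inflation of the Lipschitz constant, so that the two algorithms produce the identical loss on every realization of the data.

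First I would fix $t\in\NN_{>0}$ and an arbitrary $\hbtheta(\,\cdot\,)\in\mathcal A_{\mathrm{GFOM}}^t(L)$, specified by uniformly Lipschitz nonlinearities $\{F_s,G_s\}_{0\le s< t}$ and output map $F_\ast^{(t)}$ with common constant $L$. Lemma~\ref{lemma:GFOM-AMP1} supplies uniformly Lipschitz $\{\varphi_s\}_{1\le s\le t}$ and $\{f_s\}_{0\le s< t}$, independent of $(\btheta,\bu,\bW)$, such that the AMP iteration~\eqref{eq:nonseparable-AMP} run with the $\{f_s\}$ obeys $\bu^{\le t}=\varphi_t(\ba^{\le t};\bu)$. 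Since that lemma only addresses the iteration and not the output step, the next step is to set the AMP output nonlinearity to $f_t^\ast(\ba^{\le t};\bu):=F_\ast^{(t)}(\varphi_t(\ba^{\le t};\bu);\bu)$; as a composition of uniformly Lipschitz maps it is again uniformly Lipschitz, and
\begin{align*}
f_t^\ast(\ba^{\le t};\bu)=F_\ast^{(t)}(\varphi_t(\ba^{\le t};\bu);\bu)=F_\ast^{(t)}(\bu^{\le t};\bu)=\hbtheta(\bX,\bu)
\end{align*}
pointwise in $(\bX,\bu)$. Hence this AMP algorithm outputs the very same estimator as the original GFOM, so for any loss $\calL$ the random variable $\calL(\hbtheta(\bX,\bu),\btheta)$ is literally unchanged.

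Next I would verify that the AMP algorithm just constructed lies in $\mathcal A_{\mathrm{AMP}}^t(L')$ for some $L'$ depending only on $(L,t)$ and not on $n$. The only data entering the Lipschitz bounds for $\varphi_s,f_s,f_t^\ast$ beyond $L$ and $t$ are the Onsager coefficients $\{b_{s,j}\}$ of the reduced iteration, which are uniformly bounded in $n$ (as recorded below Eq.~\eqref{eq:nonseparable-bts}, using uniform Lipschitzness of the $f_s$); combined with the elementary fact that compositions and finite linear combinations of uniformly Lipschitz maps are uniformly Lipschitz with an explicitly controlled constant, this produces a single $L'=L'(L,t)$ valid for all $n$. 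Consequently the family of probabilities $\{\P(\calL(\hbtheta(\bX,\bu),\btheta)\le z):\hbtheta\in\mathcal A_{\mathrm{GFOM}}^t(L)\}$ sits inside $\{\P(\calL(\hbtheta'(\bX,\bu),\btheta)\le z):\hbtheta'\in\mathcal A_{\mathrm{AMP}}^t(L')\}$, whence the two infima compare as in the statement; more to the point for later use, any upper bound on $\P(\calL\le z)$ valid for all AMP algorithms in $\mathcal A_{\mathrm{AMP}}^t(L')$ is inherited by all GFOMs in $\mathcal A_{\mathrm{GFOM}}^t(L)$, which is how the corollary feeds into the lower bound.

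I do not expect a genuine obstacle here: the statement is, as claimed, immediate from Lemma~\ref{lemma:GFOM-AMP1}. The only two points needing attention are bookkeeping. First, folding the output map $F_\ast^{(t)}$ into the change of variables is not spelled out in Lemma~\ref{lemma:GFOM-AMP1} but follows verbatim by the same composition argument used there. Second, making the dependence $L'=L'(L,t)$ uniform in $n$ requires nothing beyond the already-established uniform boundedness of the Onsager coefficients together with stability of the uniform-Lipschitz property under composition and finite linear combination.
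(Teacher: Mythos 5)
Your argument---invoking Lemma~\ref{lemma:GFOM-AMP1}, folding the output map $F_\ast^{(t)}$ through $\varphi_t$, and controlling the uniform Lipschitz constant via the uniformly bounded Onsager coefficients---is exactly the paper's intended route, and the pointwise realization of every GFOM estimator as an AMP estimator with an $n$-independent constant $L'(L,t)$ is correct.

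However, your closing step contains a sign error. From the containment
\[
\bigl\{\P(\calL(\hbtheta(\bX,\bu),\btheta)\le z): \hbtheta\in\mathcal{A}_{\mathrm{GFOM}}^t(L)\bigr\} \subseteq \bigl\{\P(\calL(\hbtheta'(\bX,\bu),\btheta)\le z): \hbtheta'\in\mathcal{A}_{\mathrm{AMP}}^t(L')\bigr\}
\]
one concludes $\inf_{\mathcal{A}_{\mathrm{AMP}}^t(L')}\P(\calL\le z)\le \inf_{\mathcal{A}_{\mathrm{GFOM}}^t(L)}\P(\calL\le z)$, since the infimum over a superset is no larger than that over a subset---this is the \emph{reverse} of the inequality printed in the corollary. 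The printed direction would follow trivially from $\mathcal{A}_{\mathrm{AMP}}^t(L')\subseteq\mathcal{A}_{\mathrm{GFOM}}^t(L)$ for $L'$ small, without using Lemma~\ref{lemma:GFOM-AMP1} at all; the direction your containment yields is the nontrivial one supplied by that lemma, and it is what Eq.~\eqref{eq:Reduction-AMP-NonSep} and the lower bound actually require (your parenthetical remark about inherited upper bounds on $\P(\calL\le z)$ is the $\sup$-version of the same fact). Thus the inequality in the corollary appears to be misprinted, and you should flag the direction mismatch rather than assert, as you do, that your containment gives the inequality ``as in the statement''---a claim that is inconsistent with the containment you just proved.
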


Notice that in this corollary $\hat\btheta(\, \cdot\, )\in \mathcal{A}_{\mathrm{GFOM}}^t(L)$ is (implicitly) a sequence of 
estimators indexed by $n$, which is uniformly Lipschitz with constant $L$.
The corollary also implies an asymptotic statement. Namely,
write $\mathcal{A}_{\mathrm{GFOM}}^t:=\cup_{L\ge 1}\mathcal{A}_{\mathrm{GFOM}}^t(L)$ for 
the class of (sequences of) GFOM estimators with $t$ 
iterations and any uniform Lipschitz constant $L$, and 
similarly for $\mathcal{A}_{\mathrm{AMP}}^t$.
Then we have 
\begin{align}
		\inf_{\hat\btheta(\cdot) \in \mathcal{A}_{\mathrm{GFOM}}^t} 
		\pliminf_{n\to\infty}\mathcal{L}(\hat{\btheta}(\bX, \bu), \btheta)
		=  \inf_{\hat\btheta(\cdot) \in \mathcal{A}_{\mathrm{AMP}}^t} 
		\pliminf_{n\to\infty}\mathcal{L}(\hat{\btheta}(\bX, \bu), \btheta)\, .
		\label{eq:Reduction-AMP-NonSep}
	\end{align}
Here equality holds because 
$\mathcal{A}_{\mathrm{AMP}}^t\subseteq \mathcal{A}_{\mathrm{GFOM}}^t$.

\subsection{Any AMP algorithm can be reduced to an orthogonal AMP algorithm}

By Corollary \ref{coro:Reduction-AMP-NonSep}, and in particular Eq.~\eqref{eq:Reduction-AMP-NonSep},
we can limit ourselves to lower-bounding the error of AMP algorithms. 
By Lemma \ref{lemma:Subseq} we can assume ---possibly taking subsequences---
that such algorithm satisfies the conditions of Theorem \ref{thm:nonseparatble-SE}.
To simplify notations, we will assume hereafter that these conditions
are satisfied along $n\in \NN$. There is no loss of generality in this.

Here we show that it is in fact sufficient to lower bound the error for OAMP
algorithms.
\begin{lemma}\label{lemma:AMP-OMAP1}
	Let $\{\ba^t\}_{t \geq 1}$ be a sequence generated by the AMP iteration
	 \eqref{eq:nonseparable-AMP} under the conditions of Theorem \ref{thm:nonseparatble-SE}. 
	 Then for all $t \in \NN^+$, there exist uniformly Lipschitz functions 
	 $\phi_t: \RR^{n(t + 1)} \rightarrow \RR^{nt}$, $g_{t - 1}: \RR^{nt} \rightarrow \RR^n$ 
	 such that the following holds. Let $\{\bv^t\}_{t \geq 1}$ be the sequence of
	  vectors produced by AMP iteration with non-linearities $\{g_t\}_{t \geq 0}$
	   (and the same matrix $\bX$ as for $\{\ba^t\}_{t \geq 1}$). Namely, 
	\begin{align}\label{eq:OAMP1}
		\bv^{t + 1} = \bX g_t(\bv^{\leq t}; \bu) - \sum_{s = 1}^t b_{t,s}' g_{s - 1}(\bv^{\leq s - 1}; \bu)
	\end{align}
	with deterministic coefficients $(b_{t,s}')$ determinied by the analogous of 
	Eq.~\eqref{eq:nonseparable-bts}, with $f_t$ replaced by $g_t$. Then we have
	\begin{enumerate}
		\item[(i)] For all $t \in \NN_{>0}$, $\ba^{\leq t} = \phi_t(\bv^{\leq t}; \bu)$. Further, there exists $n$-independent constants $\{c_{ts}\}_{0 \leq s \leq t}$, such that we can write $\bv^t = \sum_{s = 0}^{t - 1} c_{t - 1, s} \ba^{s + 1}$.  
		\item[(ii)] For all $t \in \NN_{>0}$, there exist $(x_0, \cdots, x_{t - 1}) \in \{0,1\}^t$
		 and  $(\alpha_1, \cdots, \alpha_t) \in \RR^t$, such that for any
		  $\{\psi_n: \RR^{n(t + 2)} \rightarrow \RR\}_{n \geq 1}$ 
		  uniformly pseudo-Lipschitz of order 2, 
		\begin{align*}
			\psi_n(\bv^{\leq t}, \btheta, \bu) = \E[\psi_n(\bnu^{\leq t}, \btheta, \bu)] + o_P(1),
		\end{align*}
		where $\bnu^i = x_{i - 1}(\alpha_i \btheta + \bz_i)$ and  $\{\bz_i\}_{i \geq 1} \iidsim \normal(\mathbf{0}, \id_n)$ independent of $(\btheta, \bu)$. 
	\end{enumerate}
\end{lemma}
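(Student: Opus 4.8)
The plan is to follow the blueprint of the proof of Lemma~\ref{lemma:OAMP}, with one simplification that becomes available in the non-separable setting: since the inner products of the AMP nonlinearities converge to \emph{deterministic} limits, the whole orthogonalization can be performed at the level of the state-evolution covariance, and consequently the change of variables relating $\{\ba^t\}$ to $\{\bv^t\}$ can be taken to be a fixed \emph{linear} map in the iteration index, with $n$-independent coefficients. I would run an induction on $t$ proving simultaneously three statements: (a) there exist $n$-independent constants $\{c_{sl}\}_{0\le l\le s\le t-1}$ with $c_{ss}\neq 0$, and bits $x_s\in\{0,1\}$, realizing a Gram--Schmidt orthonormalization of $\{f_s(\mathrm{SE\ input})\}$ with respect to the limiting Gram matrix; (b) with $g_s:=\sum_{l=0}^{s}c_{sl}\,f_l\circ\phi_l$ and $\phi_s$ the inverse of the invertible, block lower-triangular linear map $(\ba^1,\dots,\ba^s)\mapsto(\bv^1,\dots,\bv^s)$, the iterates generated by \eqref{eq:OAMP1} with nonlinearities $\{g_s\}$ and Onsager coefficients $\{b'_{s,\cdot}\}$ defined by the analogue of \eqref{eq:nonseparable-bts} satisfy $\bv^{s}=\sum_{l=0}^{s-1}c_{s-1,l}\,\ba^{l+1}$; and (c) the pseudo-Lipschitz state-evolution statement of part~(ii).

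\textbf{Gram--Schmidt and definition of $g_t,\phi_{t+1}$.} By the assumptions inherited from Theorem~\ref{thm:nonseparatble-SE} (which we assume, after passing to a subsequence via Lemma~\ref{lemma:Subseq}, hold along $n\in\NN$), the limits $\Sigma_{s+1,r+1}=\lim_n\frac1n\mathbb{E}[f_s(\bmu_{\le s}\btheta+\bg_{\le s};\bu)^{\top}f_r(\bmu_{\le r}\btheta+\bg_{\le r};\bu)]$ and $\mu_{s+1}=\lim_n\frac1n\mathbb{E}[\btheta^{\top}f_s(\cdots)]$ exist, as does $\lim_n\|\btheta\|_2^2/n$. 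Viewing the index-$(s{+}1)$ family with Gram matrix $(\Sigma_{s+1,r+1})$, ordinary Gram--Schmidt produces, for each new index $t$, constants $(c_{t0},\dots,c_{tt})$ with $c_{tt}\neq 0$ and $x_t\in\{0,1\}$ such that $\sum_{s,r}c_{t's}c_{t''r}\Sigma_{s+1,r+1}=\mathbbm{1}_{t'=t''}x_{t'}$, where $x_t=1$ exactly when $f_t$ is not in the $\Sigma$-span of $f_0,\dots,f_{t-1}$. One then sets $g_t(\bv^{\le t};\bu):=\sum_{l=0}^{t}c_{tl}\,f_l(\phi_l(\bv^{\le l};\bu);\bu)$, which standard bookkeeping shows is uniformly Lipschitz (a finite linear combination of compositions of the uniformly Lipschitz $f_l$ with the $n$-independent linear maps $\phi_l$), computes $\{b'_{t,s}\}$ from the analogue of \eqref{eq:nonseparable-bts}, defines $\bv^{t+1}$ by \eqref{eq:OAMP1}, and takes $\phi_{t+1}$ to be the inverse of the linear map $(\ba^{\le t+1})\mapsto(\bv^{\le t+1})$ determined by $\bv^{s+1}=\sum_{l\le s}c_{sl}\ba^{l+1}$; this $\phi_{t+1}$ is linear with $n$-independent coefficients, hence uniformly Lipschitz, and yields $\ba^{\le t+1}=\phi_{t+1}(\bv^{\le t+1};\bu)$.

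\textbf{The Onsager-matching identity — the main obstacle.} It remains to verify $\bv^{t+1}=\sum_{s=0}^{t}c_{ts}\ba^{s+1}$. Using that $\bv^{\le t}$ is the linear image of $\ba^{\le t}$ (induction), we get $g_t(\bv^{\le t};\bu)=\sum_l c_{tl}f_l(\ba^{\le l};\bu)$, so the AMP iteration \eqref{eq:nonseparable-AMP} for $\{\ba^t\}$ gives
\begin{align*}
\bv^{t+1}&=\bX g_t(\bv^{\le t};\bu)-\Ons_{\mathrm{OAMP}}^t(\bv^{\le t-1};\bu)\\
&=\sum_{l=0}^{t}c_{tl}\,\ba^{l+1}+\Big[\sum_{l=0}^{t}c_{tl}\,\Ons_{\mathrm{AMP}}^l(\ba^{\le l-1};\bu)-\Ons_{\mathrm{OAMP}}^t(\bv^{\le t-1};\bu)\Big].
\end{align*}
Both terms in the bracket are deterministic-coefficient linear combinations of $\{f_m(\ba^{\le m};\bu)\}_{0\le m\le t-1}$, so the bracket vanishes iff $\sum_{l=m+1}^{t}c_{tl}\,b_{l,m+1}=\sum_{j=m+1}^{t}b'_{t,j}\,c_{j-1,m}$ for every $m$. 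This scalar identity is where the non-separable weak-derivative bookkeeping must be done carefully: expanding $b'_{t,j}=\frac1n\sum_i\mathbb{E}[\partial^{(v)}_{i,j}g_{t,i}]$ via the chain rule through the linear map $\phi$, and using that the same linear map sends the AMP state-evolution input $\bmu_{\le t}\btheta+\bg_{\le t}$ to the OAMP state-evolution input (which is statement~(c) at the current level, supplied below), the identity reduces to the corresponding statement for $\{\ba^t\}$ and its coefficients $b_{l,\cdot}$. I expect this to be the only genuinely delicate computation; it is the standard covariance of AMP under linear reparametrizations of the iterates (cf.\ \cite{bayati2011dynamics,javanmard2013state}), and the scalar version is implicit in the proof of Lemma~\ref{lemma:OAMP}.

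\textbf{State evolution of $\{\bv^t\}$ (part (ii)).} Given the linear relation, write $\bv^{\le t}=L_t\,\ba^{\le t}$ for a fixed bounded linear operator $L_t$ with $n$-independent norm. For $\psi_n$ uniformly pseudo-Lipschitz of order $2$, the map $\tilde\psi_n(\cdot):=\psi_n(L_t\,\cdot\,,\btheta,\bu)$ is again uniformly pseudo-Lipschitz of order $2$, so Theorem~\ref{thm:nonseparatble-SE} for $\{\ba^t\}$ gives $\psi_n(\bv^{\le t},\btheta,\bu)=\mathbb{E}[\psi_n(L_t(\bmu_{\le t}\btheta+\bg_{\le t}),\btheta,\bu)]+o_P(1)$. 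The $(s{+}1)$-th block of $L_t(\bmu_{\le t}\btheta+\bg_{\le t})$ equals $\alpha_{s+1}\btheta+\bz^{s+1}$ with $\alpha_{s+1}=\sum_{l\le s}c_{sl}\mu_{l+1}$ and $\bz^{s+1}=\sum_{l\le s}c_{sl}\bg_{l+1}$; the $\bg$'s being jointly Gaussian and independent of $(\btheta,\bu)$, so are the $\bz$'s, and the Gram--Schmidt choice of $c$ gives $\mathbb{E}[\bz^{s'+1}(\bz^{s''+1})^{\top}]=\mathbbm{1}_{s'=s''}x_{s'}\id_n$. Hence $L_t(\bmu_{\le t}\btheta+\bg_{\le t})\ed\bnu^{\le t}$ with $\bnu^i=x_{i-1}(\alpha_i\btheta+\bz_i)$, which is exactly the claim, closing the induction. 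The degenerate case $x_t=0$ for some $t$ is treated as in the proof of Lemma~\ref{lemma:OAMP} and Remark~\ref{rmk:xt}, by zeroing the corresponding nonlinearity and dropping that iterate.
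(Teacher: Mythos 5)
Your proof is correct and follows essentially the same route as the paper: Gram--Schmidt on the limiting state-evolution Gram matrix $(\Sigma_{s+1,r+1})$, a block lower-triangular linear reparametrization $\bv^{\le t}=L_t\,\ba^{\le t}$ with $n$-independent coefficients $c_{sl}$, and verification of the Onsager-coefficient identity $\sum_{l}c_{tl}b_{l,m}=\sum_{j}b'_{t,j}c_{j-1,m-1}$ by differentiating $g_t\circ L_t=\sum_l c_{tl}f_l$ and taking the expected averaged gradient against the state-evolution law (the paper makes this explicit; you flag it as the delicate step and sketch the same chain-rule argument). Your derivation of part~(ii) by composing a pseudo-Lipschitz test function with $L_t$ and invoking state evolution for $\{\ba^t\}$ directly is a mild streamlining of the paper's re-verification of Theorem~\ref{thm:nonseparatble-SE} for the OAMP iteration, but the two phrasings are equivalent.
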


\begin{proof}
	Recall that, as in the proof of Lemma \ref{lemma:OAMP}, $\Pi_{\mathcal{S}}$ denotes the 
	orthogonal projection onto the closed linear subspace $\mathcal{S} \subseteq L^2(\PP)$,
	and $\Pi_{\mathcal{S}}^{\perp} := I - \Pi_{\mathcal{S}}$.
		
	We denote by $(\mu_t)_{t \geq 1}$, $(\Sigma_{s,t})_{s,t \geq 1}$ the state evolution
	sequence corresponding to $\{\ba^t\}_{t \geq 1}$, defined via Eq.~\eqref{eq:nonseparable-SE}. Let $(\bg_t)_{t \geq 1}$ be a centered Gaussian process in $\RR^n$ such that $\Cov(\bg_s, \bg_t) = \Sigma_{s,t} \id_n$. We define the following random vectors and subspaces:
	\begin{align*}
		\bh_t = f_t(\bmu_{\leq t} \btheta + \bg_{\leq t}; \bu), \qquad \mathcal{S}_t = \mbox{span}(\bh_k: 0 \leq k \leq t). 
	\end{align*}
	By assumption, for all $s,t \in \NN$, 
	\begin{align}\label{eq:24}
		\frac{1}{n}\E\langle \bh_s, \bh_t \rangle \rightarrow \Sigma_{s+1,t+1}, \qquad \frac{1}{n} \E\langle \btheta, \bh_t \rangle \rightarrow \mu_{t + 1}. 
	\end{align}
	By linear algebra, there exist deterministic $n$-independent constants $\{c_{ts}\}_{t,s \in \NN}$, $\{x_t\}_{t \in \NN} \in \{0,1\}^{\NN}$, such that $c_{tt} \neq 0$ and
	\begin{align*}
		\sum_{i = 0}^t\sum_{j = 0}^s c_{ti} c_{sj} \Sigma_{i+1,j+1} = \mathbbm{1}_{s = t} x_t.
	\end{align*}
	If we let $\br_t = \sum_{s = 0}^t c_{ts} \bh_s$, then by the convergence of second moments given in Eq.~\eqref{eq:24}, for all $s,t \in \NN$
\begin{align*}
	\lim_{n \rightarrow \infty}\frac{1}{n} \E\langle \br_t, \br_s \rangle = \mathbbm{1}_{s = t} x_t. 
\end{align*}
Then we prove the lemma by induction. For the base case $t = 1$, we let $g_0(\bu) = c_{00} f_0(\bu)$, thus $\bv^1  = c_{00} \ba^1$ and claim $(i)$ follows trivially. As for claim $(ii)$, first notice that the limits exist for both $\E\langle g_0(\bu), g_0(\bu) \rangle  / n$ and $\E\langle g_0(\bu), \btheta \rangle  / n$ by the assumption on the original AMP iteration. Then we consider two cases. In the first case $x_0 = 0$, thus $\Sigma_{11} = 0$, 
$\mu_1^2 \leq c_{00}^{-2}\E[\|\btheta\|_2^2 / n]\E[\|g_0(\bu)\|_2^2 / n] \rightarrow 0$, and $(ii)$ holds with $\bnu^1 = \bzero$ by Theorem \ref{thm:nonseparatble-SE}. In the second case $x_0 = 1$, whence $c_{00} = \Sigma_{11}^{-1/2}$, and claim $(ii)$ again follows from state evolution. Furthermore, 
\begin{align}\label{eq:25}
	\alpha_1 = \lim_{n \rightarrow \infty} \frac{\E [\langle f_0(\bu), \btheta \rangle]}{\sqrt{n}\E[\langle f_0(\bu), f_0(\bu) \rangle]^{1/2}}. 
\end{align}
Suppose the lemma holds for the first $t$ iterations. We prove it also holds for the $(t + 1)$-th iteration. We let
\begin{align*}
	g_t(\bv^{\leq t}; \bu) = \sum_{s = 0}^t c_{ts} f_s(\phi_s(\bv^{\leq s}; \bu); \bu).
\end{align*}
By induction hypothesis and assumptions, $g_t$ is uniformly Lipschitz. Given $\{g_s\}_{0 \leq s \leq t}$, we can derive the coefficients $(b_{s,j}')_{1 \leq j \leq s \leq t}$ via Eq.~\eqref{eq:nonseparable-bts}, and we denote the Onsager correction term of this new AMP iteration by $\Ons_{\mathrm{OAMP}}^t(\bv^{\leq t - 1}; \bu) = \sum_{s = 1}^t b_{t,s}' g_{s - 1}(\bv^{\leq s - 1}; \bu)$. Then Eq.~\eqref{eq:OAMP1} can be rewritten as 
\begin{align*}
	\bv^{t + 1} = \sum_{s = 0}^t c_{ts} \bX f_s(\phi_s(\bv^{\leq s}; \bu); \bu) - \Ons_{\mathrm{OAMP}}^t(\bv^{\leq t - 1}; \bu).
\end{align*}
Plugging in the AMP iteration that defines $\{\ba^t\}_{t \geq 1}$, we have
\begin{align}\label{eq:27}
	\bv^{t + 1} = \sum_{s = 0}^t c_{ts}(\ba^{s + 1} + \Ons_{\mathrm{AMP}}^s(\ba^{\leq s - 1}; \bu)) - \Ons_{\mathrm{OAMP}}^t(\bv^{\leq t - 1}; \bu).
\end{align}
Recall that $c_{tt}$ is non-vanishing, thus, we can solve for $\ba^{t + 1}$ and express $\ba^{\leq t + 1}$ as a function of $(\bv^{\leq t + 1}; \bu)$. We denote this function by $\phi_{t + 1}$. By induction hypothesis, $\phi_{t + 1}$ is uniformly Lipschitz. Plugging the definition of $\Ons_{\mathrm{AMP}}^s$ and $\Ons_{\mathrm{OAMP}}^t$ into Eq.~\eqref{eq:27} gives 
\begin{align}\label{eq:28}
	\bv^{t + 1} = \sum_{s = 0}^t c_{ts} \ba^{s + 1} + \sum_{i = 1}^t \big( \sum_{s = i}^t c_{ts} b_{si} - \sum_{s = i}^t b_{ts}' c_{s - 1, i - 1} \big) f_{i - 1}(\ba^{\leq i - 1}; \bu).
\end{align} 
By induction hypothesis, $g_t(c_{00} \bx^1, \cdots, \sum_{s = 0}^{t - 1} c_{t-1,s} \bx^{s + 1}; \bu) = \sum_{s = 0}^t c_{ts} f_s(\bx^{\leq s}; \bu)$. Taking the gradient on both sides with respect to $\bx^i$, then compute the expected average of the coordinates of the gradient with respect to the distribution $\bx^{\leq t} \overset{d}{=} \bmu^{\leq t} \btheta + \bg^{\leq t}$  gives $\sum_{s = i}^t c_{ts} b_{si} - \sum_{s = i}^t b_{ts}' c_{s - 1, i - 1} = 0$. Plugging this into Eq.~\eqref{eq:28} finishes the proof of claim $(i)$. 

One can verify that the non-linearities $\{g_s\}_{0 \leq s \leq t}$ defined in this way satisfy the conditions of Theorem \ref{thm:nonseparatble-SE}, thus the asymptotics of OAMP can be characterized by state evolution. As for the proof of claim $(ii)$, again we consider two cases. If $x_t = 0$, then $\E\langle \br_t, \br_t \rangle / n \rightarrow 0$, and $\E\langle \br_t, \btheta \rangle / n \rightarrow 0$. Using the state evolution for OAMP \eqref{eq:OAMP1}, we obtain that $(ii)$ holds with $\bnu^{t + 1} = \bzero$. If $x_t = 1$, then again by state evolution for OAMP, claim $(ii)$ holds with $\bnu^{t + 1} = \alpha_{t + 1} \btheta + \bz_{t + 1}$ where
\begin{align}
	\alpha_{t + 1} = \lim_{n \rightarrow \infty} \frac{\E\langle \btheta, \Pi_{\mathcal{S}_{t - 1}}^{\perp}(\bh_t)  \rangle}{\sqrt{n}\E[\langle \Pi_{\mathcal{S}_{t - 1}}^{\perp}(\bh_t), \Pi_{\mathcal{S}_{t - 1}}^{\perp}(\bh_t)  \rangle]^{1/2 }},
\end{align}
thus completes the proof by induction. 
\end{proof}

\subsection{Optimal orthogonal AMP}

Following the same reasoning of Remark \ref{rmk:xt}, in the following we will restrict 
to the cases in which $x_t = 1$ for all $t \in \NN$. 

Combining Lemma \ref{lemma:GFOM-AMP1} and \ref{lemma:AMP-OMAP1},
we conclude that it is sufficient to lower bound the error
 of OAMP algorithms. The following corollary is a direct consequence of the proceeding results,
 and extends Eq.~\eqref{eq:Reduction-AMP-NonSep}.
\begin{corollary}
	Under the assumptions of Setting \ref{setting:Gaussian}, recall that
	 $\mathcal{A}_{\mathrm{GFOM}}^t$ denotes the
	  class of uniformly Lipschitz GFOM estimators with $t$ iterations, 
	 and denote by $\mathcal{A}_{\mathrm{OAMP}}^t$ the class of OAMP estimators
	  with $t$ iterations (i.e., AMP estimators whose state evolution yields $\Sigma_{s,t}= \bfone_{s=t}$).
	 
 Then we have
	\begin{align}\label{eq:29}
		\inf_{\hat\btheta(\,\cdot\,) \in \mathcal{A}_{\mathrm{GFOM}}^t} 
		\pliminf_{n\to\infty}\frac{1}{n}\big\|\hat{\btheta}(\bX, \bu)- \btheta\big\|^2_2
		=  \inf_{\hat\btheta(\,\cdot\,) \in \mathcal{A}_{\mathrm{OAMP}}^t} 
		\pliminf_{n\to\infty}\big\|\hat{\btheta}(\bX, \bu)- \btheta\big\|^2_2\, .
	\end{align}
\end{corollary}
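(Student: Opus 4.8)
The plan is to deduce \eqref{eq:29} directly from the two reduction lemmas just established, exactly as Corollary~\ref{coro:Final} was obtained in the main text (there from Lemmas~\ref{lemma:change-of-variable} and~\ref{lemma:OAMP}, here from Lemmas~\ref{lemma:GFOM-AMP1} and~\ref{lemma:AMP-OMAP1}). The inequality ``$\le$'' is immediate: an OAMP algorithm together with a uniformly Lipschitz readout is in particular an AMP algorithm, hence a GFOM, so $\mathcal{A}_{\mathrm{OAMP}}^t\subseteq\mathcal{A}_{\mathrm{GFOM}}^t$ and the infimum over the larger class is no larger. For ``$\ge$'' it suffices to show that each uniformly Lipschitz estimator $\hat\btheta(\,\cdot\,)\in\mathcal{A}_{\mathrm{GFOM}}^t$ agrees, up to passing to a subsequence, with some estimator in $\mathcal{A}_{\mathrm{OAMP}}^t$.

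So fix such a GFOM. First I would invoke Lemma~\ref{lemma:GFOM-AMP1}, which produces uniformly Lipschitz nonlinearities $\{f_s\}_{s\ge0}$ and readout maps $\{\varphi_s\}$, independent of $(\btheta,\bu,\bW)$, such that $\bu^{\le t}=\varphi_t(\ba^{\le t};\bu)$ for the associated AMP iterates $\{\ba^s\}$. Hence the GFOM output $F_{\ast}^{(t)}(\bu^{\le t};\bu)=F_{\ast}^{(t)}(\varphi_t(\ba^{\le t};\bu);\bu)=:f_t^{\ast}(\ba^{\le t};\bu)$ is, for every realization of $(\bX,\bu)$, a uniformly Lipschitz function of $(\ba^{\le t},\bu)$ (a composition of uniformly Lipschitz maps), so the same estimator belongs to $\mathcal{A}_{\mathrm{AMP}}^t$. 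By Lemma~\ref{lemma:Subseq} together with the simplifying convention adopted above, we may assume the state-evolution limits of this AMP algorithm exist, so that Theorem~\ref{thm:nonseparatble-SE}, and therefore Lemma~\ref{lemma:AMP-OMAP1}, applies. The latter supplies uniformly Lipschitz $\{g_s\}_{s\ge0}$ and $\{\phi_s\}$ with $\ba^{\le t}=\phi_t(\bv^{\le t};\bu)$ for the OAMP iterates $\{\bv^s\}$; restricting as in Remark~\ref{rmk:xt} to the non-degenerate case $x_s=1$ for all $s$, the state evolution of $\{\bv^s\}$ has covariance $\Sigma_{s,t}=\mathbbm{1}_{s=t}$, so this is indeed an OAMP algorithm. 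Composing once more, $\hat\btheta(\bX,\bu)=f_t^{\ast}(\phi_t(\bv^{\le t};\bu);\bu)=:g_t^{\ast}(\bv^{\le t};\bu)$ with $g_t^{\ast}$ uniformly Lipschitz, i.e.\ an estimator $\hat\btheta'(\,\cdot\,)\in\mathcal{A}_{\mathrm{OAMP}}^t$ coinciding with $\hat\btheta(\,\cdot\,)$ along the retained subsequence; in particular $\pliminf_n\tfrac1n\|\hat\btheta'(\bX,\bu)-\btheta\|_2^2\le\pliminf_n\tfrac1n\|\hat\btheta(\bX,\bu)-\btheta\|_2^2$, and taking the infimum over GFOMs yields ``$\ge$''.

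The one delicate point — and the main obstacle — is the bookkeeping with subsequences: Lemma~\ref{lemma:AMP-OMAP1} is stated under the conclusions of Theorem~\ref{thm:nonseparatble-SE}, which Lemma~\ref{lemma:Subseq} only guarantees along a subsequence that may depend on the GFOM at hand, so the matching of $\hat\btheta$ with an OAMP estimator is a priori only subsequential. I would discharge this with the standard $\liminf$-in-probability argument: if $\pliminf_n\tfrac1n\|\hat\btheta(\bX,\bu)-\btheta\|_2^2$ were strictly below $\inf_{\mathcal{A}_{\mathrm{OAMP}}^t}\pliminf_n\tfrac1n\|\cdot-\btheta\|_2^2=:c$, there would be $\eps,\delta>0$ with $\P(\tfrac1n\|\hat\btheta-\btheta\|_2^2\le c-\eps)\ge\delta$ along an infinite set of $n$; extracting a further subsequence on which the state-evolution limits exist and running the reduction above produces an OAMP estimator equal to $\hat\btheta$ there, whose error converges in probability (by Theorem~\ref{thm:nonseparatble-SE} applied to the order-$2$ pseudo-Lipschitz squared loss) to a deterministic limit $\ge c$, contradicting the lower bound on the probability. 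The remaining ingredients — closure of uniformly Lipschitz maps under composition, boundedness of the Onsager coefficients $(b'_{t,s})$, and the resulting regularity of $g_t^{\ast}$ — are routine and already contained in the proofs of Lemmas~\ref{lemma:GFOM-AMP1} and~\ref{lemma:AMP-OMAP1}.
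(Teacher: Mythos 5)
Your proof is correct and follows the same reduction chain the paper relies on (GFOM $\to$ AMP via Lemma~\ref{lemma:GFOM-AMP1}, then AMP $\to$ OAMP via Lemmas~\ref{lemma:Subseq} and~\ref{lemma:AMP-OMAP1}, with Remark~\ref{rmk:xt} disposing of degenerate iterates), and the trivial inclusion $\mathcal{A}_{\mathrm{OAMP}}^t\subseteq\mathcal{A}_{\mathrm{AMP}}^t\subseteq\mathcal{A}_{\mathrm{GFOM}}^t$ for the reverse inequality. The only substantive difference is that you spell out the subsequential $\pliminf$ argument explicitly, which the paper dispatches with a one-line "no loss of generality" remark; your version is a welcome clarification of that step.
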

Notice that a sufficient statistics for $\btheta$ given $\balpha_{\leq t} \btheta + \bz_{\leq t}$ is $T_0 := \|\balpha_{\leq t}\|_s \btheta + \bz$ with $\bz \overset{d}{=} \normal(\vec{0}, \id_n)$ independent of $\btheta$. Therefore, in order to derive the minimum of the right hand side of Eq.~\eqref{eq:29}, it is sufficient to compute the maximum value of $\|\balpha_{\leq t}\|_2$, which is provided by the following lemma. The proof of Theorem \ref{thm:main} under Setting \ref{setting:Gaussian} directly follows. 
\begin{lemma}
	Recall that $(\gamma_s)_{s \geq 0}$ is defined in Eq.~\eqref{eq:beta}. Then, for all $t \in \NN$ and all choice of non-linearities $g_0, \cdots, g_t$, we have $\|\balpha_{\leq t}\|_2 \leq \gamma_t$. 
\end{lemma}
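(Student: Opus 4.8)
The plan is to transcribe the proof of Lemma~\ref{lemma:Final} into the vector-valued setting, replacing the space $L^2(\PP)$ of scalar random variables by the space $\mathcal H$ of (sequences of) $\RR^n$-valued square-integrable random vectors with the normalized inner product $\langle\!\langle\bx,\by\rangle\!\rangle := \lim_n \tfrac1n\E\langle\bx,\by\rangle$ (all limits below exist along the subsequence already fixed so that Theorem~\ref{thm:nonseparatble-SE} applies). In $\mathcal H$ I place $\btheta$, the state-evolution nonlinearity vectors $\bh_k := f_k(\bmu_{\le k}\btheta + \bg_{\le k};\bu)$ from the proof of Lemma~\ref{lemma:AMP-OMAP1}, the OAMP profiles $\bnu^i = \alpha_i\btheta + \bz_i$ (we restrict, as set up above, to $x_t=1$ for all $t$; degenerate iterations contribute $\alpha=0$ and may be dropped since $(\gamma_s)$ is non-decreasing), and the subspaces $\mathcal S_t := \mathrm{span}(\bh_k:0\le k\le t)$ with $\Pi_{\mathcal S_t},\Pi^\perp_{\mathcal S_t}$ the $\mathcal H$-projections (with the convention $\mathcal S_{-1}:=\{0\}$). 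In this language Lemma~\ref{lemma:AMP-OMAP1} gives $\alpha_{t+1} = \langle\!\langle\btheta,\Pi^\perp_{\mathcal S_{t-1}}(\bh_t)\rangle\!\rangle / \|\Pi^\perp_{\mathcal S_{t-1}}(\bh_t)\|_{\mathcal H}$. Finally let $\mathcal T_t\subseteq\mathcal H$ be the closed subspace of vectors of the form $(\psi_n(\bu,\bnu^{\le t}))_n$ with $\{\psi_n\}$ uniformly pseudo-Lipschitz of order $2$, and set $\hat{\boldsymbol\Theta}_t := \big(\E[\theta_i\mid u_i,\nu^1_i,\dots,\nu^t_i]\big)_{i\le n}$. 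Since $(\theta_i,u_i,\nu^1_i,\dots,\nu^t_i)$ are i.i.d.\ across $i$, this coordinatewise Bayes estimator satisfies $\btheta - \hat{\boldsymbol\Theta}_t \perp \mathcal T_t$ in $\mathcal H$, lies in $\mathcal T_t$ (Lipschitz functions are $L^2$-dense, so it is an $\mathcal H$-limit of separable Lipschitz functions of $(\bu,\bnu^{\le t})$), and has $\|\hat{\boldsymbol\Theta}_t\|_{\mathcal H}^2 = \E\big[\E[\Theta\mid U,(\alpha_s\Theta+Z_s)_{s\le t}]^2\big]$.

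The one new ingredient relative to the separable proof is the claim $\bh_k\in\mathcal T_t$ for every $k\le t$. I would obtain it from Lemma~\ref{lemma:AMP-OMAP1}$(i)$, which writes $\ba^{\le k} = \phi_k(\bv^{\le k};\bu)$ with $\phi_k$ uniformly Lipschitz: applying Theorem~\ref{thm:nonseparatble-SE} to the OAMP iteration~\eqref{eq:OAMP1} with the (uniformly pseudo-Lipschitz) test maps built from $f_k\circ(\phi_k,\mathrm{id})$, and comparing with the state evolution of the original AMP iteration, one finds that $\bh_k$ has, against $\btheta$ and against each $\bh_{k'}$ with $k'\le t$, the same $\mathcal H$-inner products as $\tilde f_k(\bnu^{\le k};\bu) := f_k(\phi_k(\bnu^{\le k};\bu);\bu)$. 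The latter is a uniformly pseudo-Lipschitz function of $(\bu,\bnu^{\le k})$, hence lies in $\mathcal T_k\subseteq\mathcal T_t$; thus we may substitute $\bh_k\leftarrow\tilde f_k(\bnu^{\le k};\bu)$ throughout, and in particular $\mathcal S_t\subseteq\mathcal T_t$ and $\Pi^\perp_{\mathcal S_{s-1}}(\bh_s)\in\mathcal T_t$ for all $s\le t$.

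Granting this, the induction on $t$ copies Lemma~\ref{lemma:Final} verbatim. Since $\Pi^\perp_{\mathcal S_{t-1}}(\bh_t)\in\mathcal T_t$ and $\btheta-\hat{\boldsymbol\Theta}_t\perp\mathcal T_t$, one replaces $\btheta$ by $\hat{\boldsymbol\Theta}_t$ and then by $\Pi^\perp_{\mathcal S_{t-1}}(\hat{\boldsymbol\Theta}_t)$ in the numerator of $\alpha_{t+1}$, and Cauchy--Schwarz followed by Pythagoras gives
\[
\alpha_{t+1}^2 \;\le\; \big\|\Pi^\perp_{\mathcal S_{t-1}}(\hat{\boldsymbol\Theta}_t)\big\|_{\mathcal H}^2 \;=\; \|\hat{\boldsymbol\Theta}_t\|_{\mathcal H}^2 - \big\|\Pi_{\mathcal S_{t-1}}(\hat{\boldsymbol\Theta}_t)\big\|_{\mathcal H}^2 .
\]
The vectors $\{c_{ss}\Pi^\perp_{\mathcal S_{s-1}}(\bh_s):0\le s\le t-1\}$, with $c_{ss}=\|\Pi^\perp_{\mathcal S_{s-1}}(\bh_s)\|_{\mathcal H}^{-1}$, form an orthonormal basis of $\mathcal S_{t-1}$, and the inner product of $\hat{\boldsymbol\Theta}_t$ with the $s$-th of them equals (again using $\btheta-\hat{\boldsymbol\Theta}_t\perp\mathcal T_t$) exactly $\alpha_{s+1}$, so $\|\Pi_{\mathcal S_{t-1}}(\hat{\boldsymbol\Theta}_t)\|_{\mathcal H}^2 = \sum_{s=1}^t\alpha_s^2 = \|\balpha_{\le t}\|_2^2$ and therefore $\|\balpha_{\le t+1}\|_2^2 \le \|\hat{\boldsymbol\Theta}_t\|_{\mathcal H}^2$. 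Finally, by sufficiency of $T_0 = \|\balpha_{\le t}\|_2\Theta + G$ (distributed as in~\eqref{eq:T0Distr}) and the Jensen/degradation argument of Lemma~\ref{lemma:Final} together with the inductive hypothesis $\|\balpha_{\le t}\|_2\le\gamma_t$,
\[
\|\hat{\boldsymbol\Theta}_t\|_{\mathcal H}^2 = \E\big[\E[\Theta\mid U,\|\balpha_{\le t}\|_2\Theta+G]^2\big] \le \E\big[\E[\Theta\mid U,\gamma_t\Theta+G]^2\big] = \E[\Theta^2] - \mmse_{\Theta,U}(\gamma_t) = \gamma_{t+1}^2 ,
\]
which closes the induction; the base case $t=1$ is the same computation with $\mathcal S_{-1}=\{0\}$ and $\mathcal T_0=\overline{\{\psi_n(\bu)\}}$.

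I expect the main obstacle to be precisely the membership $\bh_k\in\mathcal T_t$. In the separable case this was the triviality that $Y_k$ is measurable with respect to $\sigma(U,V_{\le t})$; here it requires commuting the non-separable uniformly-Lipschitz maps $\phi_k$ of Lemma~\ref{lemma:AMP-OMAP1} with the asymptotic state-evolution limit, i.e.\ knowing both that the OAMP iterate $\bv^{\le k}$ is asymptotically described by the profile $\bnu^{\le k}$ and that $\phi_k$ applied to that profile reproduces, in $\mathcal H$, the original AMP profile $\bmu_{\le k}\btheta+\bg_{\le k}$. Once this is established, every remaining step --- the Cauchy--Schwarz/Pythagoras bookkeeping, the orthonormal-basis identity, the sufficiency reduction, and the monotonicity of $c\mapsto\E[\E[\Theta\mid U,c\Theta+G]^2]$ in the signal-to-noise ratio --- is a line-by-line transcription of the separable argument.
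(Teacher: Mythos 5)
Your proposal is correct and follows essentially the same strategy as the paper: reduce to bounding $\|\balpha_{\le t}\|_2$ by induction, replace $\btheta$ in the numerator of $\alpha_{t+1}$ by the conditional Bayes estimator, apply Cauchy--Schwarz and Pythagoras together with the Gram--Schmidt orthonormal basis of $\mathcal S_{t-1}$ to peel off $\sum_{s\le t}\alpha_s^2$, and close via sufficiency of the scalarized Gaussian channel plus Jensen and the inductive hypothesis. The one place you diverge from the paper's wording is the choice of conditioning $\sigma$-algebra: the appendix takes $\hat\btheta_t := \E[\btheta\mid\br_1,\dots,\br_t,\bu]$ with $\br_s$ built from the AMP state-evolution vectors $\bh_s$, and then invokes "sufficiency" tersely, whereas you condition directly on the Gaussian profile $\bnu^{\le t}$ (coordinatewise) and justify the resulting inner-product bookkeeping by explicitly matching the $\mathcal H$-inner products of $\bh_k$ against $\btheta$ and $\bh_{k'}$ with those of $\tilde f_k(\bnu^{\le k};\bu)$ via the two state-evolution descriptions of the same iterate $f_k(\ba^{\le k};\bu)=f_k(\phi_k(\bv^{\le k};\bu);\bu)$. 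This is not a different route so much as a more careful rendering of an identification the paper performs implicitly; the inequality chain and the final $\|\hat{\boldsymbol\Theta}_t\|^2_{\mathcal H}\le\gamma_{t+1}^2$ step are identical. What you call "the one new ingredient" is indeed the only step that requires real work beyond the separable Lemma~\ref{lemma:Final}, and your argument for it is sound.
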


\begin{proof}
	The proof is by induction over $t$. For the base case $t = 1$, notice that 
	\begin{align*}
		\sup_{f_0}\frac{\E[\langle f_0(\bu), \btheta \rangle]^2}{n\E[\langle f_0(\bu), f_0(\bu) \rangle]} =  \frac{\E[\langle f_0(\bu), \E[\btheta \mid \bu] \rangle]^2}{n\E[\langle f_0(\bu), f_0(\bu)  \rangle]} \leq \gamma_1^2.
	\end{align*}
	The last step above is via application of Cauchy-Schwarz inequality. Then the base case holds by taking the limit $n \rightarrow \infty$ in Eq.~\eqref{eq:25}. 
	
	We assume that the claim holds for the first $t$ iterations, and we prove by induction that it also holds for iteration $t + 1$. We let $\hat{\btheta}_t := \E[\btheta \mid \br_1, \cdots, \br_t, \bu]$, then
	\begin{align*}
		\frac{\E[\langle \btheta, \Pi_{\mathcal{S}_{t - 1}}^{\perp}(\bh_t)  \rangle]^2}{n\E[\langle \Pi_{\mathcal{S}_{t - 1}}^{\perp}(\bh_t), \Pi_{\mathcal{S}_{t - 1}}^{\perp}(\bh_t)  \rangle]} =&  \frac{\E[\langle \hat{\btheta}_t, \Pi_{\mathcal{S}_{t - 1}}^{\perp}(\bh_t)  \rangle]^2}{n\E[\langle \Pi_{\mathcal{S}_{t - 1}}^{\perp}(\bh_t), \Pi_{\mathcal{S}_{t - 1}}^{\perp}(\by_t)  \rangle]} \\
		\overset{(a)}{\leq} & \, \frac{1}{n}\E[\|\Pi_{\mathcal{S}_{t - 1}}^{\perp}(\hat{\btheta}_t)\|_2^2]\\
		 \overset{(b)}{=}&  \frac{1}{n}\E[\|\hat{\btheta}_t\|_2^2] - \frac{1}{n} \E[\|\Pi_{\mathcal{S}_{t - 1}}(\hat{\btheta}_t)\|_2^2],
	\end{align*}
	where $(a)$ follows from Cauchy-Schwartz inequality and $(b)$ from Pythagora's theorem. Notice that 
	\begin{align*}
		\{\Pi_{\mathcal{S}_{s - 1}}(\bh_s) / \E[\|\Pi_{\mathcal{S}_{s - 1}}(\bh_s)\|_2^2]^{1/2}: 0 \leq s \leq t - 1\} 
	\end{align*}
	is an orthonormal basis for $\mathcal{S}_{t - 1}$, thus,
	\begin{align*}
		\frac{\E[\langle \btheta, \Pi_{\mathcal{S}_{t - 1}}^{\perp}(\bh_t)  \rangle]^2}{n\E[\langle \Pi_{\mathcal{S}_{t - 1}}^{\perp}(\bh_t), \Pi_{\mathcal{S}_{t - 1}}^{\perp}(\bh_t)  \rangle]} \leq \frac{1}{n}\E[\|\hat{\btheta}_t\|_2^2] - \sum_{s = 0}^{t - 1} \frac{\E[\langle \btheta, \Pi^{\perp}_{\mathcal{S}_{s - 1}}(\bh_s)  \rangle]^2}{n\E[\| \Pi^{\perp}_{\mathcal{S}_{s - 1}}(\bh_s)\|_2^2]}.
	\end{align*}
	Taking the limits on both sides of the above inequality gives $\alpha_{t + 1}^2 \leq \E[\|\hat{\btheta}_t\|_2^2] / n - \sum_{s = 1}^t \alpha_s^2$. By induction, 
	\begin{align*}
		\frac{1}{n}\E[\|\hat{\btheta}_t\|_2^2] =&  \frac{1}{n}\E[\|\E[\btheta \mid \br_1, \cdots, \br_t, \bu]\|_2^2] \\
		\overset{(a)}{=} & \frac{1}{n}\E[\|\E[\btheta \mid \|\balpha_{\leq t}\|_2 \btheta + \bz, \bu]\|_2^2] \\
		\overset{(b)}{\leq} & \frac{1}{n}\E[\|\E[\btheta \mid \gamma_{t} \btheta + \bz, \bu]\|_2^2] \\
		\overset{(c)}{=} & \gamma_{t+1}^2,
	\end{align*}
	where $(a)$ follows because $T_0$ is a sufficient statistics for $\btheta$, $(b)$ is by induction hypothesis and Jensen's inequality, and $(c)$ is by the definition of $\gamma_{t + 1}$. This concludes the proof of the lemma. 
\end{proof}

\section{Proof of Theorem \ref{thm:GLM} under Setting \ref{setting:4}}
\label{sec:ProofGLM_1}

In this section we prove Theorem \ref{thm:GLM} under the assumptions of Setting \ref{setting:4}. As in 
Section \ref{sec:proof} in the main text, we will additionally assume $\bX$ has sub-Gaussian
 entries, and relax this assumption in Appendix \ref{app:SubGaussian}. Namely, in this section we assume $\E[\exp(\lambda X_{ij})] \leq \exp(C\lambda^2 / n)$ for all $i \in [n]$, $j \in [d]$ and some $n$-independent constant $C$. 

\subsection{AMP algorithm}
As before, the first step of our proof is to define the class of AMP algorithms
 for the current setting. An AMP algorithm for solving generalized linear models under Setting \ref{setting:4} is defined by a sequence of continuous functions (also known as the non-linearities) $\{f_t: \RR^{t + 2} \rightarrow \RR\}_{t \geq 0}$ and $\{g_t: \RR^{t + 1} \rightarrow \RR\}_{t \geq 1}$, and produces vectors $\{\bb^t\}_{t \geq 1} \subseteq \RR^d$ and $\{\ba^t\}_{t \geq 1} \subseteq \RR^n$ via the following iteration:
\begin{align}\label{eq:AMP1}
\begin{split}
	\left\{ \begin{array}{ll}
		\bb^{t + 1} = \bX^{\top} f_t(\ba^{\leq t}; \by, \bu) - \sum\limits_{s = 1}^t \xi_{t,s} g_s(\bb^{\leq s}; \bv), \\
		\ba^t = \bX g_t(\bb^{\leq t}; \bv) - \sum\limits_{s = 1}^t \eta_{t,s} f_{s - 1}(\ba^{ \leq s - 1}; \by, \bu).
	\end{array} \right.
\end{split}
\end{align}
As before, non-linearities are applied entrywise. 
We denote the Onsager terms by
\begin{align*}
	& \Ons_{\mathrm{AMP}}^t(\ba^{\leq t - 1}; \by, \bu) := \sum_{s = 1}^t \eta_{t,s}f_{s - 1}(\ba^{ \leq s - 1}; \by, \bu), \\
	& \Ons_{\mathrm{AMP}}^{t + 1}(\bb^{\leq t}; \bv) :=  \sum\limits_{s = 1}^t \xi_{t,s} g_s(\bb^{\leq s}; \bv).
\end{align*}
The coefficients $(\xi_{t,s})_{1 \leq s \leq t}$ and $(\eta_{t,s})_{1 \leq s \leq t}$ are deterministic, defined via:
\begin{align}\label{eq:DefOnsCoeffGLM}
\begin{split}
	& \xi_{t,s} = \E\big[{\partial_s} f_t(\bar\bG_{\leq t}; Y, U)\big], \qquad Y:= h( \bar G_0, W)\\
	& \eta_{t,s} = \frac{1}{\delta}\E \big[{\partial_s}g_t(\bmu_{\leq t} \Theta + \bG_{\leq t}; V)\big], 
	\end{split}
\end{align}
 where we use the notations
 $\bar{\bG}_{\leq t} := (\bar{G}_1, \cdots, \bar{G}_t)$,  
 $\bG_{\leq t} := (G_1, \cdots, G_t)$, 
 the joint distributions of $(\bar\bG_{\leq t}, Y, U)$ and of
 $(\bG_{\leq t}, \Theta, V)$ is defined via the  following state evolution recursion
 %
	\begin{align}\label{eq:SE1}
\begin{split}
	& (\bar G_0, \bar{\bG}_{ t}) \sim\normal(\mathbf{0}_{ t + 1}, \bar\bSigma_{\leq t}), \qquad \bG_{\leq t} \sim \normal(\textbf{0}_t,  \bSigma_{\leq t}),  \\
	& \bar\Sigma_{ij}  = \frac{1}{\delta}\E[g_i(\bmu_{\leq i} \Theta + \bG_{\leq i}; V)g_j(\bmu_{\leq j} \Theta + \bG_{\leq j}; V)], \qquad i,j \geq 1, \\
	& \bar\Sigma_{i0} =  \bar\Sigma_{0i} = \frac{1}{\delta}\E[g_i(\bmu_{\leq i} \Theta + \bG_{\leq i}; V)\Theta], \qquad \bar{\Sigma}_{00} = \frac{1}{\delta} \E[\Theta^2], \qquad i \geq 1, \\
	&  \Sigma_{ij} = \E[f_{i - 1}(\bar\bG_{\leq i - 1}; Y, U) f_{j - 1}(\bar\bG_{\leq j - 1}; Y, U)], \qquad i, j \geq 1, \\
	& \mu_{t + 1} = \E\big[ {\partial_{\bar G_0}} f_{t}(\bar\bG_{\leq t}; Y, U) \big].
\end{split}
\end{align}
Here it is understood that $(\Theta,V)\sim\mu_{\Theta,V}$ is independent of $(G_i)_{i\ge 1}$
and $(W,U)\sim\mu_{W,U}$ is independent of  $(\bar G_i)_{i\ge 0}$. 
Further, $\bar{\bSigma}_{\leq t} = (\bar\Sigma_{ij})_{0 \leq i,j \leq t}$, $\bSigma_{\leq t} = (\Sigma_{ij})_{1 \leq i,j \leq t}$ and $\bmu_{\leq t} = (\mu_i)_{1 \leq i \leq t}$. Here, $\partial_s$ refers to the partial derivative with respect to the $s$-th variable, and $\partial_{\bar G_0}$ refers to the partial derivative with respect to $\bar G_0$. To be precise, $\partial_{\bar{G}_0} f_t(\bx_{\leq t}; h(x_0, w), u) = \partial_{x_0} f_t(\bx_{\leq t}; h(x_0, w), u)$. Note that $f_0$ depends only on $(Y, U)$. Thus, the above recursion does not need any specific initialization. After $t$ iterations as in Eq.~\eqref{eq:AMP1}, the AMP algorithm estimates $\btheta$ by applying a Lipschitz function $g_t^{\ast}: \RR^{t + 1} \rightarrow \RR$ row-wise to $(\bb^{\leq t}, \bv)$:
\begin{align*}
	\hat{\btheta}(\bX, \by, \bu, \bv) = g_t^{\ast}(\bb^{\leq t}; \bv).
\end{align*}
The following theorem characterizes the asymptotic performance of the AMP iteration \eqref{eq:AMP1}:
\begin{theorem}\label{thm:SE4}
	Assume the matrix $\bX$ and non-linearities $(f_t, g_t)$ satisfy the same assumptions as $\bX$ and $(F_t^{(1)}, G_t^{(1)})$ under either Setting \ref{setting:4}.$(a)$ or Setting \ref{setting:4}.$(b)$.  Then for any $t \in \NN_{>0}$, and any $\psi: \RR^{t + 2} \rightarrow \RR$ pseudo-Lipschitz of order 2, the AMP iteration \eqref{eq:AMP1} satisfies 
	\begin{align*}
		\plim_{n,d \rightarrow \infty} \frac{1}{d} \sum_{i = 1}^d\psi(\bb_i^{\leq t}, \theta_i, v_i) = \E[\psi(\bmu_{\leq t}\Theta + \bG_{\leq t}, \Theta, V)], \qquad \bG_{\leq t} \sim \normal(\mathbf{0}, \bSigma_{\leq t}).
	\end{align*} 
\end{theorem}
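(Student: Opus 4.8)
The plan is to deduce Theorem~\ref{thm:SE4} from the universality theorems for rectangular AMP iterations with i.i.d.\ matrices, exactly in the way Theorem~\ref{thm:AMP} was deduced from \cite{bayati2015universality, chen2021universality} in the symmetric case. First I would reduce \eqref{eq:AMP1} to a symmetric AMP by embedding $\bX$ into the $(n+d)\times(n+d)$ symmetric matrix with off-diagonal blocks $\bX$ and $\bX^{\top}$ and zero diagonal blocks, rescaled to the Wigner normalization with $N=n+d$. This matrix has independent entries (up to symmetry) with mean zero, a bounded two-valued variance profile, and the same fourth-moment / sub-Gaussian properties as $\bX$; running a symmetric AMP on it with block-structured nonlinearities that act as $g_t$ on the $d$-block and $f_t$ on the $n$-block reproduces $\{\ba^t\}$ and $\{\bb^t\}$ after routine bookkeeping. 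The block structure automatically forces Onsager coefficients of the form $\xi_{t,s},\eta_{t,s}$, and the aspect ratio $n/d\to\delta$ is precisely what generates the $1/\delta$ factors in \eqref{eq:SE1}. (Alternatively, one may invoke the bipartite/rectangular versions of \cite{bayati2011dynamics, javanmard2013state, bayati2015universality} directly and skip symmetrization.) In either route, the dependence of $f_t,g_t$ on all previous iterates rather than on the last one only is removed as in the Remark following Theorem~\ref{thm:AMP}: pass to matrix-valued iterates in $\RR^{n\times q}$ and $\RR^{d\times q}$ with $q>t$ fixed, and select coordinates at the end.

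The one genuinely new point compared to the symmetric setting is that $f_0$ in \eqref{eq:AMP1} is fed the label vector $\by=h(\bX\btheta,\bw)$, which depends on $\bX$, so \eqref{eq:AMP1} is not literally an AMP with $\bX$-independent nonlinearities. I would handle this by prepending one step: treat $\btheta$ as an initialization on the $d$-side (legitimate, since $(\theta_i,v_i)\iidsim\mu_{\Theta,V}$ makes it correlated only with the side information $\bv$) and set $\ba^0:=\bX\btheta$, a bona fide matrix-vector product. Then $\by=h(\ba^0,\bw)$ becomes an honest entrywise function of $(\ba^0,\bw,\bu)$, and from step $1$ onward every nonlinearity depends on $\bX$ only through earlier iterates. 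State evolution for this augmented iteration gives that $\ba^0$ has per-coordinate limit $\bar G_0\sim\normal(0,\delta^{-1}\E[\Theta^2])$ (indeed $\Var((\bX\btheta)_i)=\|\btheta\|_2^2/n\to\delta^{-1}\E[\Theta^2]$), matching $\bar\Sigma_{00}$ in \eqref{eq:SE1}; this is the standard device of \cite{javanmard2013state}, and the equivalent conditioning argument (splitting $\bX$ into its action along $\btheta$ and on $\btheta^{\perp}$) would work as well. Note there is no ``spike'' reduction needed here, since the iteration \eqref{eq:AMP1} uses the plain matrix $\bX$ and $\btheta$ enters only through side information and this initialization.

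The main obstacle is then the invocation of universality in exactly the required generality. Under Setting~\ref{setting:4}.$(b)$ (sub-Gaussian entries, polynomial nonlinearities) this is \cite[Theorem~4]{bayati2015universality}, and under Setting~\ref{setting:4}.$(a)$ (Lipschitz nonlinearities) it is the corresponding statement of \cite{chen2021universality}; in both cases, as for the symmetric Theorem~\ref{thm:AMP}, one must verify that the adaptations above---block variance profile, matrix iterates, the preliminary $\ba^0$ step---are compatible with the moment-method / Lindeberg swapping arguments underlying those proofs, and that the Gaussian limit the argument produces is precisely the recursion \eqref{eq:SE1}, with the state-evolution means and Onsager identities \eqref{eq:DefOnsCoeffGLM} falling out of the block computation. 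The relaxation from sub-Gaussian entries to merely $\E\oX_{ij}^4<\infty$ is then obtained by a truncation argument, deferred to Appendix~\ref{app:SubGaussian} as announced, and the Gaussian case (Setting~\ref{setting:3}) follows from \cite{bayati2011dynamics, javanmard2013state} by the same reductions.
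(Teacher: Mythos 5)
Your plan is correct and matches what the paper intends: the paper states Theorem~\ref{thm:SE4} without giving a proof, relying on it as an adaptation of the state-evolution and universality results of \cite{bayati2011dynamics,javanmard2013state,bayati2015universality,chen2021universality}, exactly as spelled out (for the symmetric case) in the Remark following Theorem~\ref{thm:AMP}. You correctly identify the two points where those results need to be extended---the dependence of $f_t,g_t$ on all past iterates (handled by matrix-valued iterates with $q>t$, then selecting coordinates) and the dependence of $\by$ on $\bX$ (handled by prepending $\ba^0 := \bX\btheta$ with $\btheta$ as $d$-side information, whose per-coordinate variance $\|\btheta\|_2^2/n \to \delta^{-1}\E[\Theta^2]$ matches $\bar\Sigma_{00}$, and whose untracked Onsager coefficient with respect to $\ba^0$ is precisely the mean shift $\mu_{t+1}=\E[\partial_{\bar G_0}f_t]$ in \eqref{eq:SE1})---and you correctly note that there is no spike to remove here (unlike Theorem~\ref{thm:AMP}) and that the sub-Gaussian-to-fourth-moment relaxation goes through the truncation of Appendix~\ref{app:SubGaussian}.

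One small caution: the symmetrization route you sketch produces a Wigner-type matrix with a block variance profile (zero diagonal blocks, variance $1/n$ off-diagonal), which is not literally the i.i.d.\ case covered by the theorem statements of \cite{bayati2015universality,chen2021universality}; invoking the rectangular versions of those results directly, as you mention parenthetically, is the cleaner path and is what the paper implicitly relies on. Apart from that, the argument is sound and faithfully reproduces \eqref{eq:SE1} and \eqref{eq:DefOnsCoeffGLM}.
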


\subsection{Any GFOM can be reduced to an AMP algorithm}

As for the case of low-rank matrix estimation, we first show that any GFOM \eqref{eq:GFOM1} 
can be reduced to an AMP algorithm via a change of variables.
 The proof of the next lemma is very similar to the one 
 of Lemma \ref{lemma:change-of-variable} and we omit it.
\begin{lemma}\label{lemma:change-of-variable1}
Assume the matrix $\bX$ and non-linearities $(F_t^{(1)}, F_t^{(2)}, G_t^{(1)}, G_t^{(2)}, G_{\ast}^{(t)})$ 
satisfy the assumptions of either Setting \ref{setting:4}.$(a)$ or Setting \ref{setting:4}.$(b)$. 
Then there exist functions $\{\varphi_t: \RR^{t + 1} \rightarrow \RR\}_{t \geq 1}$, $\{\bar\varphi_t: \RR^{t + 2} \rightarrow \RR\}_{t \geq 1}$, $\{f_t: \RR^{t + 2} \rightarrow \RR\}_{t \geq 0}$ and $\{g_t: \RR^{t + 1} \rightarrow \RR \}_{t \geq 1}$ 
satisfying the same assumptions such that the following holds. 
Let $\{\ba^t\}_{t \geq 1}$ and $\{\bb^t\}_{t \geq 1}$ be sequences of vectors produced by the 
AMP iteration \eqref{eq:AMP1} with non-linearities $\{f_t\}_{t \geq 0}$ and $\{g_t\}_{t \geq 1}$. 
Then for any $t \in \NN_{>0}$, we have 
	\begin{align*}
		\bu^{\leq t} = \bar\varphi_t(\ba^{\leq t}; \by, \bu), \qquad \bv^{\leq t} =  \varphi_t(\bb^{\leq t}; \bv).
	\end{align*} 
\end{lemma}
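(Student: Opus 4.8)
The plan is to replay the induction from the proof of Lemma~\ref{lemma:change-of-variable}, now tracking the two GFOM families $\{\bu^t\}\subseteq\R^n$ and $\{\bv^t\}\subseteq\R^d$ against the two AMP families $\{\ba^t\}\subseteq\R^n$ and $\{\bb^t\}\subseteq\R^d$ of~\eqref{eq:AMP1}, pairing $\bv^t$ with $\bb^t$ and $\bu^t$ with $\ba^t$. For a fixed matrix $\bX$ the identities $\bv^{\leq t}=\varphi_t(\bb^{\leq t};\bv)$ and $\bu^{\leq t}=\bar\varphi_t(\ba^{\leq t};\by,\bu)$ hold as exact pathwise identities; the state evolution~\eqref{eq:SE1} enters only through the deterministic Onsager coefficients $\xi_{t,s},\eta_{t,s}$ of~\eqref{eq:DefOnsCoeffGLM}, so the maps $\varphi_t,\bar\varphi_t,f_t,g_t$ that I would construct are non-random, depending only on the GFOM nonlinearities and on those coefficients.

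In the base case $t=1$ I would set $f_0(\by,\bu):=F_0^{(1)}(\by,\bu)$, giving $\bb^1=\bX^\top f_0(\by,\bu)$ and hence $\bv^1=\bb^1+F_0^{(2)}(\bv)=:\varphi_1(\bb^1;\bv)$; then put $g_1(\bb^1;\bv):=G_1^{(1)}(\varphi_1(\bb^1;\bv);\bv)$, read off $\eta_{1,1}$ from~\eqref{eq:DefOnsCoeffGLM}, form $\ba^1=\bX g_1(\bb^1;\bv)-\eta_{1,1}f_0(\by,\bu)$, and note that $\bu^1=\bX g_1(\bb^1;\bv)+G_1^{(2)}(\by,\bu)=\ba^1+\eta_{1,1}f_0(\by,\bu)+G_1^{(2)}(\by,\bu)=:\bar\varphi_1(\ba^1;\by,\bu)$. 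For the inductive step, assuming $\bu^{\leq t}=\bar\varphi_t(\ba^{\leq t};\by,\bu)$ and $\bv^{\leq t}=\varphi_t(\bb^{\leq t};\bv)$, I would substitute these into the GFOM update for $\bv^{t+1}$, set $f_t(\ba^{\leq t};\by,\bu):=F_t^{(1)}(\bar\varphi_t(\ba^{\leq t};\by,\bu);\by,\bu)$, compute $\xi_{t,s}$ for $1\le s\le t$, define $\bb^{t+1}$ by~\eqref{eq:AMP1}, and obtain $\bv^{t+1}=\bb^{t+1}+\sum_{s=1}^t\xi_{t,s}g_s(\bb^{\leq s};\bv)+F_t^{(2)}(\varphi_t(\bb^{\leq t};\bv);\bv)$, whose right-hand side I append to $\varphi_t$ to get $\varphi_{t+1}$. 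Symmetrically, I would put $g_{t+1}(\bb^{\leq t+1};\bv):=G_{t+1}^{(1)}(\varphi_{t+1}(\bb^{\leq t+1};\bv);\bv)$, compute $\eta_{t+1,s}$ for $1\le s\le t+1$, define $\ba^{t+1}$ by~\eqref{eq:AMP1}, and read off $\bu^{t+1}=\ba^{t+1}+\sum_{s=1}^{t+1}\eta_{t+1,s}f_{s-1}(\ba^{\leq s-1};\by,\bu)+G_{t+1}^{(2)}(\bar\varphi_t(\ba^{\leq t};\by,\bu);\by,\bu)=:\bar\varphi_{t+1}(\ba^{\leq t+1};\by,\bu)$. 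Since each new map is a composition or linear combination of maps that are Lipschitz (under Setting~\ref{setting:4}.$(a)$) or polynomial (under Setting~\ref{setting:4}.$(b)$), it inherits the corresponding property, closing the induction.

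The one point that will require genuine care — and essentially the only place this argument differs from Lemma~\ref{lemma:change-of-variable} — is the bookkeeping of the bipartite index structure: each GFOM round emits both an $\R^d$-vector and an $\R^n$-vector; in~\eqref{eq:AMP1} the two update equations carry offset indices ($\bb^{t+1}$ against $\ba^{t}$) and the two Onsager sums run over different ranges; and the state evolution~\eqref{eq:SE1} must be advanced in lockstep with the construction so that every coefficient $\xi_{t,s},\eta_{t,s}$ is available exactly when it is invoked. Once the indices are aligned, the rest is a routine transcription of the matrix-case argument, which is why it is only sketched here.
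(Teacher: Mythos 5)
Your construction is exactly the change of variables the paper intends (and spells out in full for the non-separable analogue, Lemma~\ref{lemma:GFOM-AMP3}): set $f_t=F_t^{(1)}\circ\bar\varphi_t$, $g_{t+1}=G_{t+1}^{(1)}\circ\varphi_{t+1}$, read off the Onsager coefficients from Eq.~\eqref{eq:DefOnsCoeffGLM}/\eqref{eq:SE1} in lockstep, and append the resulting affine corrections to build $\varphi_{t+1},\bar\varphi_{t+1}$; the base case and the bipartite index bookkeeping are handled correctly. This is the same approach the paper takes, so no further comparison is needed.
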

Lemma \ref{lemma:change-of-variable1} implies that the class of AMP algorithms achieve the 
same minimum expected error as the class of GFOM for the same number of iterations under any loss.
This is formalized by the next corollary, which is analogous to Corollary \ref{coro:FirstCoro}.
\begin{corollary}\label{cor:GFOM-AMP}
Let $\mathcal{A}_{\mathrm{GFOM}}^t$ be the class of GFOM estimators 
	with $t$ iterations, and $\mathcal{A}_{\mathrm{AMP}}^t$ be the class of AMP algorithms 
	with $t$ iterations (under the assumptions of either Setting \ref{setting:4}.$(a)$, or
	 Setting \ref{setting:4}.$(b)$). 
	 (In particular $\hbtheta(\,\cdot\,)\in \mathcal{A}_{\mathrm{GFOM}}^t$ is defined by a set of
	 $n$-independent functions $\{F_t^{(1)}, F_t^{(2)}, G_{t + 1}^{(1)}, G_{t + 1}^{(2)}, G_{\ast}^{(t + 1)}\}_{t \in \NN}$, and similarly for
	 $\hbtheta(\,\cdot\,)\in \mathcal{A}_{\mathrm{GFOM}}^t$.)
	 
	 Then for any loss function $\calL: \RR^{d}\times\RR^{d}  \rightarrow \RR_{\ge 0}$:
	\begin{align}
		\inf_{\hbtheta(\,\cdot\, ) \in \mathcal{A}_{\mathrm{GFOM}}^t} 
		\pliminf_{n\to\infty}
		\calL(\hbtheta(\bX,\by,\bu,\bv), \btheta)= 
		\inf_{\hbtheta(\,\cdot\,) \in \mathcal{A}_{\mathrm{AMP}}^t} 
			\pliminf_{n\to\infty}
		\calL(\hbtheta(\bX,\by,\bu,\bv), \btheta)\, .\label{eq:FirstCoro-GLM}
	\end{align}
\end{corollary}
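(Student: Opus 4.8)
The plan is to follow verbatim the argument used for Corollary~\ref{coro:FirstCoro}, with Lemma~\ref{lemma:change-of-variable1} playing the role of Lemma~\ref{lemma:change-of-variable}.

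First I would dispatch the easy inequality $\inf_{\mathcal{A}_{\mathrm{GFOM}}^t}\le\inf_{\mathcal{A}_{\mathrm{AMP}}^t}$ by exhibiting the inclusion $\mathcal{A}_{\mathrm{AMP}}^t\subseteq\mathcal{A}_{\mathrm{GFOM}}^t$: the AMP recursion \eqref{eq:AMP1}, followed by the row-wise estimator $g_t^\ast$, is a particular instance of the GFOM updates \eqref{eq:GFOM1} in which the (data-independent, deterministic) Onsager subtractions are absorbed into $F_{t-1}^{(2)}$ and $G_t^{(2)}$, and one sets $G_\ast^{(t)} := g_t^\ast$. The modified nonlinearities remain Lipschitz (resp.\ polynomial) and $n$-independent, so the hypotheses of Setting~\ref{setting:4}.$(a)$ (resp.\ $(b)$) are still met, and the containment follows.

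For the reverse inequality I would fix $\eps>0$ and pick a GFOM estimator $\hbtheta(\,\cdot\,)\in\mathcal{A}_{\mathrm{GFOM}}^t$ --- specified by $\{F_s^{(1)},F_s^{(2)},G_{s+1}^{(1)},G_{s+1}^{(2)},G_\ast^{(t)}\}$ --- whose $\pliminf$ loss is within $\eps$ of the left-hand infimum. Apply Lemma~\ref{lemma:change-of-variable1} to obtain $n$-independent nonlinearities $\{f_s\}_{s\ge 0}$, $\{g_s\}_{s\ge 1}$ and change-of-variable maps $\bar\varphi_t,\varphi_t$ satisfying the same structural assumptions, so that the iterates $\{\ba^s\},\{\bb^s\}$ of \eqref{eq:AMP1} obey $\bu^{\le t}=\bar\varphi_t(\ba^{\le t};\by,\bu)$ and $\bv^{\le t}=\varphi_t(\bb^{\le t};\bv)$. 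Then define the AMP post-processing nonlinearity by composition, $g_t^\ast := G_\ast^{(t)} \circ \varphi_t$ (applied row-wise); since composing Lipschitz (resp.\ polynomial) functions preserves the property, $g_t^\ast$ is admissible and the resulting algorithm lies in $\mathcal{A}_{\mathrm{AMP}}^t$. Its output equals $G_\ast^{(t)}(\bv^{\le t};\bv) = \hbtheta(\bX,\by,\bu,\bv)$ exactly and pathwise, for every $n$, hence it attains the identical value of $\calL(\hbtheta(\bX,\by,\bu,\bv),\btheta)$, and in particular the same $\pliminf$. Letting $\eps\downarrow 0$ yields $\inf_{\mathcal{A}_{\mathrm{AMP}}^t}\le\inf_{\mathcal{A}_{\mathrm{GFOM}}^t}$, which together with the first inequality proves \eqref{eq:FirstCoro-GLM}.

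I do not foresee a serious obstacle: the whole substance is already contained in Lemma~\ref{lemma:change-of-variable1}, and what remains is the routine bookkeeping of checking that each composite nonlinearity ($f_t$, $g_t$, $g_t^\ast$) inherits the Lipschitz/polynomial and $n$-independence assumptions and that the associated Onsager coefficients \eqref{eq:DefOnsCoeffGLM} for the new nonlinearities are well defined and deterministic --- all exactly as in the proof of Lemma~\ref{lemma:change-of-variable}. The one mild point worth stating explicitly is that, because the left-hand infimum need not be attained, one argues with an $\eps$-optimal GFOM and concludes by letting $\eps\to 0$.
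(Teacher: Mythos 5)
Your proposal is correct and matches the paper's argument: the paper states that this corollary is ``an immediate consequence of Lemma~\ref{lemma:change-of-variable1}'' and ``analogous to Corollary~\ref{coro:FirstCoro}'', whose proof is exactly the one you give (containment $\mathcal{A}_{\mathrm{AMP}}^t\subseteq\mathcal{A}_{\mathrm{GFOM}}^t$ for one direction, and an $\eps$-optimal GFOM converted via the change-of-variable lemma with $g_t^\ast:=G_\ast^{(t)}\circ\varphi_t$ for the other). You have merely written out the routine bookkeeping the paper leaves implicit.
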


\subsection{Orthogonalization}

In this section we show that we can further restrict ourselves to lower bounding the error of
orthogonal AMP (OAMP) algorithms.
\begin{lemma}\label{lemma:ortho}
	Let $\{\ba^t\}_{t \geq 1}$, $\{\bb^t\}_{t \geq 1}$ be sequences produced by the AMP iteration \eqref{eq:AMP1} under either Setting \ref{setting:4}.$(a)$ or Setting \ref{setting:4}.$(b)$. Then there exist functions $\{\phi_t: \RR^{t + 1} \rightarrow \RR^t\}_{t \geq 1}$ satisfying the same assumptions as the non-linearities in the AMP iteration, such that the following holds:
	\begin{enumerate}
		\item[(i)] For all $t \in \NN_{>0}$ we have $\bb^{\leq t} = \phi_t(\bq^{\leq t}; \bv).$
		\item[(ii)] For any $\psi: \RR^{t + 2} \rightarrow \RR$ pseudo-Lipschitz of order 2, 
		\begin{align*}
		\plim_{n,d\to\infty}\frac{1}{d} \sum_{i = 1}^d \psi(q_i^1, \cdots, q_i^t, v_i, \theta_i) 
		= \E[\psi(Q_1, \cdots, Q_t, V, \Theta)],
	\end{align*}
	where $Q_i = x_{i - 1}(\alpha_i \Theta + Z_i)$ with 
	$(x_0, \cdots, x_{t - 1}) \in \{0, 1\}^t$  and $(\alpha_1, \cdots, \alpha_{t}) \in \RR^t$
	deterministic vectors, and 
	$(Z_i)_{i \geq 1} \iidsim \normal(0,1)$ independent of $(\Theta, V)$.
	\end{enumerate} 
\end{lemma}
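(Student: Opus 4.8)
The plan is to follow the proof of Lemma~\ref{lemma:OAMP} essentially verbatim, now carried out for the two interleaved recursions in \eqref{eq:AMP1}. We will orthogonalize the $d$-dimensional iterates $\bb^t$ (the ones tracking $\btheta$) by taking suitable linear combinations of the $n$-side nonlinearities $f_t$, keeping the $n$-dimensional iterates $\ba^t$ unchanged, and argue by induction on $t$. Concretely, we work in $L^2(\PP)$ carrying the state-evolution Gaussians of \eqref{eq:SE1}: by the state-evolution characterization (Theorem~\ref{thm:SE4} together with its analogue on the $n$-dimensional side) the iterates converge to $\bb^{\le t}\to\bmu_{\le t}\Theta+\bG_{\le t}$ and $\ba^{\le t}\to\bar\bG_{\le t}$, with $Y=h(\bar G_0,W)$. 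Set $R_s:=f_s(\bar\bG_{\le s};Y,U)$ and $\cS_s:=\mathrm{span}(R_0,\dots,R_s)$, and use Gram--Schmidt to produce deterministic constants $\{c_{ts}\}_{0\le s\le t}$ and bits $x_t\in\{0,1\}$, with $c_{tt}\neq0$ always (taking $x_t=1$, $c_{tt}=\|\Pi^\perp_{\cS_{t-1}}R_t\|_{L^2}^{-1}$ when $R_t\notin\cS_{t-1}$, and $x_t=0$, $c_{tt}=1$ otherwise), so that $\tilde R_t:=\sum_{s\le t}c_{ts}R_s=c_{tt}\Pi^\perp_{\cS_{t-1}}R_t$ obeys $\E[\tilde R_s\tilde R_t]=\mathbbm{1}_{s=t}x_t$.

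We would then define the OAMP algorithm: on the $n$-side use $\hat f_t:=\sum_{s=0}^t c_{ts}f_s$, on the $d$-side use $\hat g_t(\bq^{\le t};\bv):=g_t(\phi_t(\bq^{\le t};\bv);\bv)$ with $\phi_t$ the linear reconstruction map from stage $t$, and keep the $n$-side iterates equal to $\ba^t$. These new nonlinearities are Lipschitz (resp.\ polynomial), being linear combinations (resp.\ compositions with a linear map), so Theorem~\ref{thm:SE4} governs the new iteration. With the Onsager coefficients defined by \eqref{eq:DefOnsCoeffGLM}, two elementary identities hold: $\hat\xi_{t,j}=\sum_s c_{ts}\xi_{s,j}$ (because $\partial_j$ distributes over the linear combination defining $\hat f_t$, evaluated at the unchanged $n$-side arguments), and, with $M$ the lower-triangular matrix such that $\bq^{j+1}=\sum_{k\le j}c_{jk}\bb^{k+1}$ and $\phi_t=M^{-1}$, the chain rule gives $\hat\eta_{t,\cdot}=\eta_{t,\cdot}M^{-1}$ and hence $\sum_s\hat\eta_{t,s}c_{s-1,j}=\eta_{t,j+1}$. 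Substituting into \eqref{eq:AMP1}, using the inductive hypothesis $\bb^{\le t}=\phi_t(\bq^{\le t};\bv)$ (so $\hat g_t(\bq^{\le t};\bv)=g_t(\bb^{\le t};\bv)$), the first identity makes all Onsager discrepancies telescope and yields $\bq^{t+1}=\sum_{s=0}^t c_{ts}\bb^{s+1}$, while the second makes the new $n$-side iterates coincide \emph{exactly} with $\ba^t$. Since $c_{tt}\neq0$, $\bq^{t+1}=\sum_s c_{ts}\bb^{s+1}$ inverts to a linear, hence Lipschitz and polynomial, map $\phi_{t+1}$ with $\bb^{\le t+1}=\phi_{t+1}(\bq^{\le t+1};\bv)$; this is claim (i).

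For claim (ii) the point is that the state evolution of the new iteration collapses to the desired form. Since the $n$-side iterates are unchanged, the $n$-side state evolution is unchanged; since $\phi_t$ intertwines the $\bq$-side state-evolution Gaussian with $\bmu_{\le t}\Theta+\bG_{\le t}$ (proved jointly with (i) by induction, using that $M$ is deterministic and that $M\bSigma M^{\sT}$ has entries $\mathbbm{1}_{s=t}x_s$), evaluating $\hat g_t$ at the $\bq$-side SE returns $g_t(\bmu_{\le t}\Theta+\bG_{\le t};V)$ and evaluating $\hat f_t$ at the $n$-side SE returns $\tilde R_t$. Plugging this into \eqref{eq:SE1} gives the new $\bq$-side covariance $\hat\Sigma_{s+1,t+1}=\E[\tilde R_s\tilde R_t]=\mathbbm{1}_{s=t}x_t$ and the new $\Theta$-loading $\alpha_{t+1}:=\hat\mu_{t+1}=\E[\partial_{\bar G_0}\hat f_t(\bar\bG_{\le t};Y,U)]=\sum_s c_{ts}\mu_{s+1}$, so Theorem~\ref{thm:SE4} yields $\bq^{\le t}\to\bQ_{\le t}$ with $Q_i=x_{i-1}(\alpha_i\Theta+Z_i)$, $(Z_i)_{i\ge1}\iidsim\normal(0,1)$ independent of $(\Theta,V)$. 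In the degenerate case $x_t=0$ one has $\tilde R_t=0$ a.s.; the choice $c_{tt}=1$ keeps $\phi_{t+1}$ well defined, and one further needs that the a.e.\ identity $\sum_s c_{ts}f_s(\bar\bG_{\le s};h(\bar G_0,W),U)=0$ forces $\hat\mu_{t+1}=0$, which holds because an a.e.\ identity among these functions passes to their weak $\bar G_0$-derivatives (so $Q_{t+1}=0$).

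The conceptual content is exactly that of Lemma~\ref{lemma:OAMP}; the work is in the bookkeeping. We expect the main obstacle to be organizing the two coupled recursions so that the new Onsager corrections on \emph{both} sides exactly absorb the linear reparametrization of the $\bb$-iterates --- i.e.\ establishing the two matching identities via the chain rule and Stein's identity and verifying the cancellations in \eqref{eq:AMP1} --- together with a careful treatment of the degenerate steps $x_t=0$; once Lemma~\ref{lemma:ortho} is in hand one may, exactly as in Remark~\ref{rmk:xt}, drop those steps and assume $x_t=1$ throughout.
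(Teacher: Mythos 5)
Your proposal is correct, but it takes a more laborious route than the paper's actual proof of this lemma. You reproduce the strategy of Lemma~\ref{lemma:OAMP}: build a genuine orthogonal AMP recursion with nonlinearities $\hat f_t=\sum_s c_{ts}f_s$, $\hat g_t=g_t\circ\phi_t$, verify via chain-rule/Stein identities that the two families of Onsager coefficients cancel, and then read off the state evolution of the new iteration. The paper's proof of Lemma~\ref{lemma:ortho} skips all of this: it simply \emph{defines} $\bq^{t+1}:=\sum_{s=0}^t c_{ts}\bb^{s+1}$ (so claim $(i)$ is immediate since $c_{tt}\neq0$), and for claim $(ii)$ it observes that $(b_1,\dots,b_t,v,\theta)\mapsto\psi(c_{00}b_1,\dots,\sum_{s}c_{t-1,s}b_{s+1},v,\theta)$ is still pseudo-Lipschitz of order $2$ and applies Theorem~\ref{thm:SE4} \emph{to the original AMP iteration}. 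The limit in probability is then $\E\big[\psi\big(M(\bmu_{\le t}\Theta+\bG_{\le t}),V,\Theta\big)\big]$, and the Gram--Schmidt choice of $M=(c_{i-1,j-1})$ makes $M\bG_{\le t}$ have covariance $\diag(x_0,\dots,x_{t-1})$, giving exactly the stated form $Q_i=x_{i-1}(\alpha_i\Theta+Z_i)$. No OAMP recursion, no Onsager bookkeeping, no $n$-side state-evolution analogue are needed --- unlike in the symmetric case (Lemma~\ref{lemma:OAMP}), the target iterate $\bq^t$ here is a function of $\bb^{\le t}$ alone, so composing the test function with a fixed linear map already does the job.

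Your Onsager cancellation identities $\hat\xi_{t,j}=\sum_{s\ge j}c_{ts}\xi_{s,j}$ and $\sum_s\hat\eta_{t,s}c_{s-1,j}=\eta_{t,j+1}$ are correct and do yield $\bq^{t+1}=\sum_s c_{ts}\bb^{s+1}$ together with invariance of the $\ba^t$-iterates; so the extra work is sound, just unnecessary for this particular lemma. One minor point: to conclude $\alpha_{t+1}=0$ when $x_t=0$, you argue that the a.e.\ identity $\sum_s c_{ts}f_s(\cdot)=0$ passes to the weak $\bar G_0$-derivative; this is fine when the state-evolution Gaussian $(\bar G_0,\bar\bG_{\le t})$ is nondegenerate, but it is cleaner (and what the paper does) to invoke Stein's identity directly, $\E[\partial_{\bar G_0}\hat f_t]=\E[\bar G_0^{\perp,t}\hat f_t]/\Var[\bar G_0^{\perp,t}]$, so that $\hat f_t=0$ in $L^2$ immediately gives $\alpha_{t+1}=0$.
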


\begin{proof}
	Given the state evolution of the AMP iteration defined via Eq.~\eqref{eq:SE1}, we let
	\begin{align*}
		Y_t := f_t(\bar\bG_{\leq t}; Y, U), \qquad \mathcal{S}_t = \mbox{span}(Y_k: 0 \leq k \leq t)\, ,
		\;\;\;\; Y = h(\bar G_0;W).
	\end{align*}
	Note that by state evolution, $\E[Y_tY_s] = \Sigma_{t+1,s+1}$. By linear algebra, 
	for all $t \in \NN$, there exist deterministic constants $\{c_{ts}\}_{0 \leq s \leq t}$ and $x_t \in \{0,1\}$, such that $c_{tt} \neq 0$ and
	\begin{align*}
		R_t := c_{tt}\Pi^{\perp}_{\mathcal{S}_{t - 1}}(Y_t) = \sum_{s = 0}^t c_{ts} Y_s, \qquad \E[R_t R_s] = \mathbbm{1}_{s = t} x_t. 
	\end{align*}
	Indeed, proceeding by induction,
	 if $Y_t$ does not belong to $\mathcal{S}_{t - 1}$, then we can take $x_t = 1$ and
	  $c_{tt} = \|\Pi_{\mathcal{S}_{t - 1}}^{\perp}(Y_t)\|_{L^2}^{-1}$. 
	  Otherwise we take $R_t = 0$, $c_{tt} = 1$ and $x_t = 0$. 
	
	We prove the lemma by induction. For the base case $t = 1$, we let $\bq^1 = c_{00} \bb^1$, 
	thus, claim \emph{(i)} follows. As for claim \emph{(ii)}, we consider two cases. If $x_0 = 0$,
	 then $\E[f_0(Y, U)^2] = 0$. By Stein's lemma, $\E[\partial_{\bar{G}_0}f_0(h(\bar{G}_0, W), U)] = \E[\bar{G}_0 f_0(h(\bar{G}_0, W), U)] / \Var[\bar{G}_0] = 0$. 
	 Thus, claim \emph{(ii)} holds with $Q_1 \equiv 0$. If $x_0 = 1$, then $c_{00} = \E[f_0(Y, U)^2]^{1/2}$, and claim $(ii)$ follows from state evolution \eqref{eq:SE1} with
	\begin{align}\label{eq:alpha1-4}
		\alpha_1 = \frac{\E[\partial_{\bar G_0} f_0(h(\bar G_0, W), U)]}{\E[f_0(h(\bar G_0, W), U)^2]^{1/2}} \overset{(a)}{=} \frac{\E[\bar G_0 f_0(h(\bar G_0, W), U)]}{\Var[\bar{G}_0]\E[f_0(h(\bar G_0, W), U)^2]^{1/2}}.
	\end{align}
	where $(a)$ holds by Stein's lemma.
	
	 Suppose the lemma holds for the first $t$ iterations, then we prove it also holds for the $(t + 1)$-th iteration. We let $\bq^{t + 1} = \sum_{s = 0}^t c_{ts} \bb^{s + 1}$. Since $c_{tt} \neq 0$, we can solve for $\bb^{t + 1}$. Thus, we obtain the transformation $\phi_{t + 1}$ that satisfies the desired properties. As a consequence, claim \emph{(i)} follows. 
	
	As for claim \emph{(ii)}, first notice that the mapping
	\begin{align*}
		(b_1, \cdots, b_t, v, \theta) \mapsto \psi(c_{00}b_1, \cdots, \mbox{$\sum_{s = 0}^{t - 1}$}c_{t-1,s} b_{s + 1}, v, \theta)
	\end{align*}
	is pseudo-Lipschitz of order two. Then we consider two cases. In the first case $x_t = 0$, then $R_t \overset{a.s.}{=} 0$. By state evolution \eqref{eq:SE1} and an application of Stein's lemma, we obtain that \emph{(ii)} holds with $Q_{t + 1} \overset{a.s.}{=} 0$. In the second case, $x_t = 1$, then again by the state evolution \eqref{eq:SE1}, $Q_{t + 1} \overset{d}{=} \alpha_{t + 1} \Theta + Z_{t + 1}$, where
	\begin{align}\label{eq:alpha-t+1-4}
		\alpha_{t + 1} = \frac{\E[\partial_{\bar G_0}\Pi^{\perp}_{\mathcal{S}_{t - 1}}(Y_t)]}{\E[\Pi^{\perp}_{\mathcal{S}_{t - 1}}(Y_t)^2]^{1/2}} \overset{(b)}{=} \frac{\E[\bar{G}_0 ^{\perp, t} \Pi^{\perp}_{\mathcal{S}_{t - 1}}(Y_t)]}{\Var[\bar{G}_0^{\perp, t}]\E[\Pi^{\perp}_{\mathcal{S}_{t - 1}}(Y_t)^2]^{1/2}}. 
	\end{align}
	Here, $\bar{G}_0^{\perp, t} = \Pi^{\perp}_{\bar{\mathcal{G}}_t}(\bar G_0)$ with $\bar{\mathcal{G}}_t = \mbox{span}(\bar G_i: 1 \leq  i \leq t)$ and $(b)$ follows from Stein's lemma. Thus, we complete the proof by induction. 
\end{proof}
By similar arguments as discussed in Remark \ref{rmk:xt}, in the following parts of the paper, 
we will set $x_t = 1$ for all $t \in \NN$ without loss of generality. 
	
\subsection{Optimal orthogonal AMP}

 Recall that a sufficient statistics for $\bTheta$ 
	 given $\bS_{\le t} := \balpha_{\le t}\Theta+\bZ_{\le t}$ is 
	 $T_0:=\<\balpha_{\le t},\bS_{\le t}\>/\|\balpha_{\le t}\|_2$, and $T_0$ can be rewritten as: 
	 \begin{align}
	 T_0  = \|\balpha_{\le t}\|_2\Theta+ G\,, \;\;\;\;\;\; G\sim\normal(0,1)\,,\;\;\; G\perp \Theta\, .
	 \label{eq:T0Distr-Bid}
	 \end{align}
	 Further  $\bS_{\le t}$ and $V$ are conditionally independent, given $\Theta$.
	 Hence, the proof of Theorem~\ref{thm:GLM} follows exactly as for Theorem~\ref{thm:main},
	 once we upper bound  the value of $\|\balpha_{\le t}\|_2$ achieved by any OAMP algorithm.
	 Before proving such a bound, we establish some useful identities.
	 \begin{lemma}\label{lemma:conditional-distribution}
	Recall that $(\bar{G}_0,\bar{\bG}_{\leq t}) \sim\normal(\mathbf{0}_{ t + 1}, \bar\bSigma_{\leq t})$,
	where
	\begin{align}
	\bar\Sigma_{ij}  = \frac{1}{\delta}\E[g_i(\phi_i(\balpha_{\leq i} \Theta + \bZ_{\leq i}; V); V)g_j(\phi_j(\balpha_{\leq j} \Theta + \bZ_{\leq j}; V);V)]
	\end{align}
	with $(Z_i)_{i\ge 1}\sim_{i.i.d.}\normal(0,1)$. 
	Further recall that $\bar{G}_0^{\perp, t} = \Pi^{\perp}_{\bar{\mathcal{G}}_t}(\bar G_0)$ with 
	$\bar{\mathcal{G}}_t = \mbox{span}(\bar G_i: 1 \leq  i \leq t)$.
	Define
	\begin{align}
	\omega_t^2 :=  \Var[\bar{G}_0^{\perp, t}], \qquad 
	\zeta_t^2 := \frac{1}{\delta} (\E[\Theta^2] - \omega_t^2). \label{eq:OmegaZetaDef}
\end{align}
Then, the following holds for all $s,t \in \NN$ with $s \leq t$, 
	\begin{align*}
		& \E[\bar{G}_0^{\perp, t} \mid h(\bar{G}_0, W), U, \bar{\bG}_{\leq t}] \overset{d}{=} \E[\omega_t Z_0 \mid h(\omega_t Z_0 + \zeta_t Z_1, W), U, Z_1], \\
		& \E[\bar{G}_0^{\perp, t} \mid h(\bar{G}_0, W), U,\bar{\bG}_{\leq s}] = \frac{\omega_t^2}{\omega_s^2} \E[\bar{G}_0^{\perp, s} \mid  h(\bar{G}_0, W), U, \bar{\bG}_{\leq s} ],
	\end{align*}
	where $Z_0, Z_1 \iidsim \normal(0,1)$,
\end{lemma}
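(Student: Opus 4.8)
The plan is to reduce both identities to elementary facts about the Gaussian vector $(\bar G_0,\bar{\bG}_{\le t})$ by exploiting the orthogonal decomposition $\bar G_0 = \bar G_0^{\perp,t} + L_t$, where $L_t := \Pi_{\bar{\mathcal G}_t}(\bar G_0)$ is a linear, hence $\bar{\bG}_{\le t}$-measurable, function of $\bar{\bG}_{\le t}$. Joint Gaussianity gives $\bar G_0^{\perp,t}\indep\bar{\bG}_{\le t}$ and $\bar G_0^{\perp,t}\indep L_t$, with $\bar G_0^{\perp,t}\sim\normal(0,\omega_t^2)$ and $L_t$ a centered Gaussian whose variance is $\Var(\bar G_0)-\omega_t^2$, which one checks equals $\zeta_t^2$ using the state-evolution covariances in \eqref{eq:SE1}. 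Moreover $(W,U)$ is independent of the whole Gaussian family, so $\bar G_0^{\perp,t}\indep(\bar{\bG}_{\le t},W,U)$ and $L_t\indep(W,U)$.

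For the first identity I would condition on $\bar{\bG}_{\le t}$. Since $\bar G_0 = \bar G_0^{\perp,t}+L_t$ with $L_t$ frozen once $\bar{\bG}_{\le t}$ is fixed, and since conditionally on $\bar{\bG}_{\le t}$ the variable $\bar G_0^{\perp,t}$ is $\normal(0,\omega_t^2)$ and independent of $(W,U)$, the conditional expectation $\E[\bar G_0^{\perp,t}\mid h(\bar G_0,W),U,\bar{\bG}_{\le t}]$ equals $\Psi\bigl(L_t,\,h(\bar G_0,W),\,U\bigr)$, where $\Psi(\ell,y,u):=\E[\,\Xi\mid h(\Xi+\ell,W)=y,\ U=u\,]$ with $\Xi\sim\normal(0,\omega_t^2)$ independent of $(W,U)\sim\mu_{W,U}$; crucially $\Psi$ depends on $\bar{\bG}_{\le t}$ only through the shift $L_t$, because the joint law of $(\Xi,W,U)$ does not change. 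It then remains to take the law of this random variable: since $(L_t,\bar G_0^{\perp,t},W,U)\ed(\zeta_tZ_1,\omega_tZ_0,W,U)$ with $Z_0\indep Z_1$, $(Z_0,Z_1)\indep(W,U)$, applying the measurable map $(\ell,\xi,w,u)\mapsto(\ell,h(\ell+\xi,w),u)$ and then $\Psi$ gives $\Psi(L_t,h(\bar G_0,W),U)\ed\Psi(\zeta_tZ_1,h(\omega_tZ_0+\zeta_tZ_1,W),U)$, and by the "freezing" (disintegration) lemma, using $Z_1\indep(Z_0,W,U)$, this last expression equals $\E[\omega_tZ_0\mid h(\omega_tZ_0+\zeta_tZ_1,W),U,Z_1]$, as desired. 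The degenerate case $\zeta_t=0$, i.e. $L_t=0$ a.s., is trivial.

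For the second identity, fix $s\le t$ and set $\mathcal F_s:=\sigma(h(\bar G_0,W),U,\bar{\bG}_{\le s})$. The plan is to use the tower property through the intermediate variable $\bar G_0^{\perp,s}$: $\E[\bar G_0^{\perp,t}\mid\mathcal F_s]=\E\bigl[\E[\bar G_0^{\perp,t}\mid\bar G_0^{\perp,s},\mathcal F_s]\mid\mathcal F_s\bigr]$. For the inner expectation I would argue that, conditionally on $\bar G_0^{\perp,s}$, the variable $\bar G_0^{\perp,t}$ is independent of $(\bar{\bG}_{\le s},W,U)$: indeed $(\bar G_0^{\perp,t},\bar G_0^{\perp,s})\indep\bar{\bG}_{\le s}$ (both lie in $\bar{\mathcal G}_s^{\perp}$), $(W,U)$ is independent of all the Gaussians, and $h(\bar G_0,W)$ is a measurable function of $(\bar G_0^{\perp,s},\bar{\bG}_{\le s},W)$ via $\bar G_0=\bar G_0^{\perp,s}+\Pi_{\bar{\mathcal G}_s}(\bar G_0)$. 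Hence $\E[\bar G_0^{\perp,t}\mid\bar G_0^{\perp,s},\mathcal F_s]=\E[\bar G_0^{\perp,t}\mid\bar G_0^{\perp,s}]$, which by Gaussian linear regression is $(\omega_t^2/\omega_s^2)\,\bar G_0^{\perp,s}$; here one uses $\Cov(\bar G_0^{\perp,t},\bar G_0^{\perp,s})=\E[\bar G_0^{\perp,t}(\bar G_0^{\perp,t}+M)]=\omega_t^2$, since $M:=\Pi_{\bar{\mathcal G}_t\ominus\bar{\mathcal G}_s}(\bar G_0)\in\bar{\mathcal G}_t$ is orthogonal to $\bar G_0^{\perp,t}$. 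Substituting back and pulling the deterministic constant out of the outer conditional expectation yields $\E[\bar G_0^{\perp,t}\mid\mathcal F_s]=(\omega_t^2/\omega_s^2)\,\E[\bar G_0^{\perp,s}\mid\mathcal F_s]$.

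The main obstacle I anticipate is not any single computation but the careful bookkeeping of the conditional-independence relations: one must repeatedly separate the Gaussian part $\bar G_0^{\perp,\cdot}$ --- which becomes independent of the conditioning $\sigma$-fields once the appropriate ``signal-carrying'' Gaussian ($L_t$ for claim (i), $\bar G_0^{\perp,s}$ for claim (ii)) is retained --- from the exogenous noise $(W,U)$, which is independent of all the Gaussians but not of itself, and one must justify the passage from conditioning on the whole Gaussian vector $\bar{\bG}_{\le t}$ to conditioning on the single Gaussian $L_t\ed\zeta_tZ_1$ via the freezing lemma. Everything else is Gaussian linear algebra.
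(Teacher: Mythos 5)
Your proposal is correct and well-organized. For the first identity, your route is essentially the same as the paper's: the paper shows (via a chain of proportionalities on conditional densities, its Eq.~\eqref{eq:50}) that conditioning on $\bar{\bG}_{\le t}$ can be replaced by conditioning on the linear component $\bar G_0^{\parallel,t}=c_t(\bar{\bG}_{\le t})$, then identifies the law with $(\omega_t Z_0,\zeta_t Z_1,W,U)$; you reach the same point by invoking the disintegration/freezing lemma directly on the decomposition $\bar G_0=\bar G_0^{\perp,t}+L_t$. The content is the same; your phrasing trades density bookkeeping for $\sigma$-algebra bookkeeping.

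For the second identity your argument is genuinely different from the paper's, and simpler. The paper proceeds by writing $(\bar G_0^{\perp,t},\bar G_0^{\parallel,t}-\bar G_0^{\parallel,s},\bar G_0^{\parallel,s})$ as $(\omega_tZ_2,\sqrt{\omega_s^2-\omega_t^2}Z_3,\zeta_sZ_4)$ and then invokes its Lemma~\ref{lemma:cauchy}, which establishes linearity of $q\mapsto\E[G_q\mid h(Z_1+Z_2,W),Z_2]$ through a Cauchy functional-equation argument, to extract the factor $\omega_t^2/\omega_s^2$. You instead use the tower property through the intermediate $\sigma$-algebra $\sigma(\bar G_0^{\perp,s})\vee\mathcal F_s$: the conditional independence of $\bar G_0^{\perp,t}$ from $(\bar{\bG}_{\le s},W,U)$ given $\bar G_0^{\perp,s}$ (which follows from $(\bar G_0^{\perp,t},\bar G_0^{\perp,s})\indep(\bar{\bG}_{\le s},W,U)$ and the measurability of $h(\bar G_0,W)$ w.r.t.\ $\sigma(\bar G_0^{\perp,s},\bar{\bG}_{\le s},W)$) reduces the inner expectation to ordinary Gaussian linear regression, $\E[\bar G_0^{\perp,t}\mid\bar G_0^{\perp,s}]=(\omega_t^2/\omega_s^2)\bar G_0^{\perp,s}$ via $\Cov(\bar G_0^{\perp,t},\bar G_0^{\perp,s})=\omega_t^2$. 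This bypasses Lemma~\ref{lemma:cauchy} entirely, replacing the functional-equation step with elementary orthogonality, which is arguably the cleaner proof. Both approaches are correct; yours buys a more self-contained second claim at the cost of having to articulate the conditional-independence chain carefully, which you do.
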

\begin{proof}
We let $\bar{G}_0^{\parallel, t} := \bar{G}_0 - \bar{G}_0^{\perp, t}$, then we can write $\bar{G}_0^{\parallel, t}$ as a deterministic function of $\bar{\bG}_{\leq t}$, and we denote this function by $\bar{G}_0^{\parallel, t} = c_t(\bar{\bG}_{\leq t})$. For $s \leq t$, we observe that $(\bar{G}_0^{\perp, t}, \bar{G}_0^{\parallel, t} - \bar{G}_0^{\parallel, s}, \bar{G}_0^{\parallel, s}) \sim \normal(\mathbf{0}, \diag((\omega_t^2, \omega_s^2 - \omega_t^2  , \zeta_s^2)))$. In the following parts, with a slight abuse of notations, we use $p$ to represent probability density functions for various distributions. Then the following formula regarding the conditional probability density holds:
\begin{align}\label{eq:50}
	& p( \bar{G}_0^{\perp, t} = z \mid h(\bar{G}_0, W) = h, U = u,  \bar{\bG}_{\leq s} = z_{\leq s})\nonumber \\
	\propto & \int p( \bar{\bG}_{\leq s} = z_{\leq s}) p(\bar{G}_0^{\perp, t} = z) \mathbbm{1}\{h(z + c_s(z_{\leq s} ) + y, w) = h\} \mu_{W \mid U = u}(\dd w) \phi(y)\dd y \nonumber\\
	\propto & \int  p(\bar{G}_0^{\perp, t} = z) \mathbbm{1}\{h(z + c_s(z_{\leq s}) + y, w) = h\}\mu_{W \mid U = u}(\dd w)\phi(y)\dd y \nonumber \\
	\propto &  \int p(\bar{G}_0^{\parallel, s} = c_s(z_{\leq s}))  p(\bar{G}_0^{\perp, t} = z) \mathbbm{1}\{h(z + c(z_{\leq t}) + y, w) = h\}\mu_{W \mid U = u}(\dd w)\phi(y)\dd y \nonumber\\
	\propto & \, p( \bar{G}_0^{\perp, t} = z \mid h(\bar{G}_0, W) = h, U = u,  \bar{G}_0^{\parallel, s} = c_s(z_{\leq s})), 
\end{align}
where $\phi$ is the probability density function for $\normal(0, \omega_s^2- \omega_t^2)$. Notice that $(\bar{G}_0^{\perp, t}, \bar{G}_0^{\parallel, t}, U, W) \overset{d}{=} (\omega_tZ_0, \zeta_t Z_1, U, W)$, therefore, we take $s = t$ in \cref{eq:50} and conclude that
\begin{align*}
	\E[\bar{G}_0^{\perp, t} \mid h(\bar{G}_0, W), U, \bar{\bG}_{\leq t}] = \E[\bar{G}_0^{\perp, t} \mid h(\bar{G}_0, W), U, \bar{G}_0^{\parallel , t}]\overset{d}{=} \E[\omega_t Z_0 \mid h(\omega_t Z_0 + \zeta_t Z_1, W), U, Z_1],
\end{align*}
which completes the proof of the first claim. 

As for the second claim, notice that there exists $Z_2, Z_3, Z_4 \iidsim \normal(0,1)$, such that $(\bar{G}_0^{\perp, t}, \bar{G}_0^{\parallel, t} - \bar{G}_0^{\parallel, s}, \bar{G}_0^{\parallel, s}) = (\omega_t Z_2, \sqrt{\omega_s^2 - \omega_t^2}Z_3, \zeta_s Z_4)$. Therefore, using \cref{eq:50}, we have
\begin{align*}
	\E[\bar{G}_0^{\perp, t} \mid h(\bar{G}_0, W), U,\bar{\bG}_{\leq s}] =&  \E[\bar{G}_0^{\perp, t} \mid h(\bar{G}_0, W), U,\bar{G}_0^{\parallel, s}] \\
	= & \E[\omega_t Z_2 \mid h(\omega_t Z_2 + \sqrt{\omega_s^2 - \omega_t^2} Z_3 + \zeta_s Z_4, W), U,Z_4] \\
	\overset{(a)}{=} & \frac{\omega_t^2}{\omega_s^2} \E\big[\omega_t Z_2 + \sqrt{\omega_s^2 - \omega_t^2} Z_3 \mid h(\omega_t Z_2 + \sqrt{\omega_s^2 - \omega_t^2} Z_3 + \zeta_s Z_4, W), U,Z_4\big] \\
	= & \frac{\omega_t^2}{\omega_s^2} \E[\bar{G}_0^{\perp, s} \mid h(\bar{G}_0, W), U, \bar{G}_0^{\parallel, s}] \\
	\overset{(b)}{=} &  \frac{\omega_t^2}{\omega_s^2} \E[\bar{G}_0^{\perp, s} \mid h(\bar{G}_0, W), U, \bar{\bG}_{\leq s}],
\end{align*}
where $(a)$ is by Lemma \ref{lemma:cauchy}, and $(b)$ is by \cref{eq:50}. Thus, we complete the proof of the lemma.

\end{proof}

	The next lemma proves the desired upper bound on $\|\balpha_{\leq t}\|_2$.
\begin{lemma}\label{lemma:FinalLemmaRegression}
	Recall the definition of $\{\beta_t\}$ in Eq.~\eqref{eq:beta1}. Then for all $t \in \NN_{>0}$ 
	and all AMP algorithms we have $\|\balpha_{\leq t}\|_2 \leq \beta_{t}$. 
\end{lemma}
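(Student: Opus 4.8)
The plan is to prove $\|\balpha_{\le t}\|_2\le\beta_t$ by induction on $t$, running in parallel with the proof of Lemma~\ref{lemma:Final}, but with the scalar channel $\gamma\Theta+G$ replaced by the generalized-linear channel $h$, and with the role of $\Theta$ on the response side played by the Gaussian $\bar G_0^{\perp,t}$. For the base case $t=1$ I would start from the second expression in Eq.~\eqref{eq:alpha1-4}, $\alpha_1=\E[\bar G_0 f_0(h(\bar G_0,W),U)]/(\Var[\bar G_0]\,\E[f_0(h(\bar G_0,W),U)^2]^{1/2})$; since $\bar G_0\sim\normal(0,\sigma_1^2)$, writing $\bar G_0=\sigma_1 Z_0$, inserting $\E[Z_0\mid h(\sigma_1 Z_0,W),U]$ and then applying Cauchy--Schwarz gives $|\alpha_1|\le\E[\E[Z_0\mid h(\sigma_1 Z_0,W),U]^2]^{1/2}/\sigma_1=\beta_1$, the last equality being the definition of $\beta_1$ (recall $\tilde\sigma_1=0$).

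For the inductive step, assume $\|\balpha_{\le t}\|_2\le\beta_t$. Set $\mathcal{F}_t:=\sigma(h(\bar G_0,W),U,\bar\bG_{\le t})$ and $\hat G_t:=\E[\bar G_0^{\perp,t}\mid\mathcal{F}_t]$. Since $\Pi^\perp_{\mathcal{S}_{t-1}}(Y_t)$ is $\mathcal{F}_t$-measurable, Eq.~\eqref{eq:alpha-t+1-4} and the tower property give $\alpha_{t+1}=\E[\hat G_t\,\Pi^\perp_{\mathcal{S}_{t-1}}(Y_t)]/(\omega_t^2\|\Pi^\perp_{\mathcal{S}_{t-1}}(Y_t)\|_{L^2})$, hence by Cauchy--Schwarz (moving one projection) $\alpha_{t+1}^2\le\omega_t^{-4}\|\Pi^\perp_{\mathcal{S}_{t-1}}(\hat G_t)\|_{L^2}^2$. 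The $\{R_s=c_{ss}\Pi^\perp_{\mathcal{S}_{s-1}}(Y_s)\}_{0\le s\le t-1}$ form an orthonormal basis of $\mathcal{S}_{t-1}$; using that each $R_s$ is $\mathcal{F}_s$-measurable together with the second identity of Lemma~\ref{lemma:conditional-distribution}, $\E[\bar G_0^{\perp,t}\mid\mathcal{F}_s]=(\omega_t^2/\omega_s^2)\E[\bar G_0^{\perp,s}\mid\mathcal{F}_s]$, one gets $\E[\hat G_t R_s]=\omega_t^2\alpha_{s+1}$, so $\|\Pi_{\mathcal{S}_{t-1}}(\hat G_t)\|_{L^2}^2=\omega_t^4\|\balpha_{\le t}\|_2^2$. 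Pythagoras then yields $\|\balpha_{\le t+1}\|_2^2=\|\balpha_{\le t}\|_2^2+\alpha_{t+1}^2\le\omega_t^{-4}\E[\hat G_t^2]$.

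It remains to evaluate $\E[\hat G_t^2]$ and to lower-bound $\omega_t$. By the first identity of Lemma~\ref{lemma:conditional-distribution}, $\E[\hat G_t^2]=\omega_t^2\,\E[\E[Z_0\mid h(\omega_t Z_0+\zeta_t Z_1,W),U,Z_1]^2]$, with $\zeta_t^2=\tfrac1\delta\E[\Theta^2]-\omega_t^2$, so $\omega_t^2+\zeta_t^2=\sigma_1^2$. For $\omega_t^2$: the covariance relations in Eq.~\eqref{eq:SE1} show that $(\bar G_0,\bar G_1,\dots,\bar G_t)$ is, up to the scalar $1/\sqrt\delta$, isometric in $L^2$ to $(\Theta,\,g_1(\phi_1(\bQ_{\le1};V);V),\dots,g_t(\phi_t(\bQ_{\le t};V);V))$ where $\bQ_{\le t}=\balpha_{\le t}\Theta+\bZ_{\le t}$; hence $\Var[\bar G_0^{\parallel,t}]=\tfrac1\delta\|\Pi_{\mathrm{span}(g_i(\phi_i(\bQ_{\le i};V);V))}(\Theta)\|_{L^2}^2\le\tfrac1\delta\E[\E[\Theta\mid\bQ_{\le t},V]^2]$, the last step because each $g_i(\phi_i(\bQ_{\le i};V);V)$ is a function of $(\bQ_{\le t},V)$. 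Since $T_0=\<\balpha_{\le t},\bQ_{\le t}\>/\|\balpha_{\le t}\|_2\overset{d}{=}\|\balpha_{\le t}\|_2\Theta+G$ is a sufficient statistic for $\Theta$ and $V\indep\bQ_{\le t}\mid\Theta$, and $\|\balpha_{\le t}\|_2\le\beta_t$ by the induction hypothesis, Jensen's inequality gives $\E[\E[\Theta\mid\bQ_{\le t},V]^2]\le\E[\E[\Theta\mid\beta_t\Theta+G,V]^2]=\E[\Theta^2]-\mmse_{\Theta,V}(\beta_t)$. Therefore $\Var[\bar G_0^{\parallel,t}]\le\tilde\sigma_{t+1}^2$ and $\omega_t^2=\sigma_1^2-\Var[\bar G_0^{\parallel,t}]\ge\sigma_1^2-\tilde\sigma_{t+1}^2=\sigma_{t+1}^2$. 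Plugging back, $\|\balpha_{\le t+1}\|_2^2\le\omega_t^{-2}\E[\E[Z_0\mid h(\omega_t Z_0+\zeta_t Z_1,W),U,Z_1]^2]$ with $\omega_t^2+\zeta_t^2=\sigma_{t+1}^2+\tilde\sigma_{t+1}^2$ and $\omega_t^2\ge\sigma_{t+1}^2$; comparing with the definition \eqref{eq:beta1} of $\beta_{t+1}$, it suffices to know that for fixed $\rho^2$ the map $\omega^2\mapsto\tfrac1{\omega^2}\E[\E[Z_0\mid h(\omega Z_0+\sqrt{\rho^2-\omega^2}\,Z_1,W),U,Z_1]^2]$ is non-increasing on $(0,\rho^2]$, and evaluating at $\omega^2=\sigma_{t+1}^2$ gives exactly $\beta_{t+1}^2$.

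I expect the last monotonicity statement to be the main obstacle: the side term $\sqrt{\rho^2-\omega^2}\,Z_1$ is observed but interacts with a general $h$, so this is not a one-line data-processing argument; it should be handled by a scalar computation decomposing $Z_0$ along the observed linear combination and its orthogonal complement, in the spirit of the rescaling identity (Lemma~\ref{lemma:cauchy}) already used in the proof of Lemma~\ref{lemma:conditional-distribution}. The second delicate point is the Hilbert-space identification lower-bounding $\omega_t^2$: one must make precise that $\mathrm{span}(\bar G_1,\dots,\bar G_t)$ is carried, via that isometry, into a subspace of $L^2(\sigma(\bQ_{\le t},V))$, so that the MMSE formula and the induction hypothesis can be brought to bear; the rest (Cauchy--Schwarz, Pythagoras, the two identities of Lemma~\ref{lemma:conditional-distribution}) is bookkeeping identical in spirit to Lemma~\ref{lemma:Final}.
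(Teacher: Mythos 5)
Your proof is correct and follows essentially the same strategy as the paper's: a coupled induction on $\|\balpha_{\le t}\|_2 \le \beta_t$ and $\omega_{t-1} \ge \sigma_t$, using the orthonormal basis $\{R_s\}$ of $\mathcal{S}_{t-1}$, Cauchy--Schwarz and Pythagoras, the two identities of Lemma~\ref{lemma:conditional-distribution}, the $L^2$-isometry with the $\Theta$-side iterates to control $\omega_t$, sufficient statistics plus Jensen, and Lemma~\ref{lemma:monotone} to close. In fact your ordering is slightly cleaner than the paper's, which writes the step "$\omega_t\ge\sigma_{t+1}$" as "$\omega_{t+1}\ge\sigma_{t+2}$" and invokes $\omega_t\ge\sigma_{t+1}$ in step $(c)$ before it has been formally established within the inductive step; you correctly derive $\omega_t\ge\sigma_{t+1}$ from the induction hypothesis $\|\balpha_{\le t}\|_2\le\beta_t$ first and only then use it to bound $\alpha_{t+1}$.
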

\begin{proof}
Recall the definition of $\omega_t$, $\zeta_t$ in Eq.~\eqref{eq:OmegaZetaDef},
and of $(\sigma_t)_{t \in \NN_{>0}}$  in Eq.~\eqref{eq:beta1}.
We will prove the following claims by induction over $t$:  $\|\balpha_{\leq t}\|_2 \leq \beta_{t}$
and $\omega_{t - 1} \geq \sigma_{t}$. 

For the base case $t = 1$, $\omega_0 \geq \sigma_{1}$ holds by definition. Using Eq.~\eqref{eq:alpha1-4} we have
	\begin{align*}
	\alpha_1^2 = \frac{\E[\bar G_0 f_0(h(\bar G_0, W), U)]^2 }{\Var[\bar G_0]^2 \E[f_0(h(\bar G_0, W), U)^2]} \leq \sup_{X \in \sigma\{h(\bar G_0, W), U\}}  \frac{\E [\bar G_0X]^2}{\Var[\bar G_0]^2 \E[X^2]} \leq  \frac{1}{\sigma_1^2}\E[\E[Z_0 \mid h(\sigma_1 Z_0, W), U]^2],
\end{align*}
where $Z_0 \sim \normal(0,1)$ and the last step follows from Cauchy-Schwarz inequality.

Next we assume the induction claim holds for the first $t$ iterations, and we prove 
 it holds for the $(t + 1)$-th iteration. 
 Notice that the random 
variables $\{Y_0 / \E[Y_0^2]^{1/2}, \cdots, \Pi^{\perp}_{\mathcal{S}_{t - 1}}(Y_t) / \E[\Pi^{\perp}_{\mathcal{S}_{t - 1}}(Y_t)^2]^{1/2}\}$ are orthonormal. Then we have:
\begin{align*}
	\alpha_{t + 1}^2 =&  \frac{\E[\E[\bar{G}_0^{\perp, t}\mid h(\bar{G}_0, W), \bar{\bG}_{\leq t}, U ]\, \Pi^{\perp}_{\mathcal{S}_{t - 1}}(Y_t)]^2}{\omega_t^4\E[\Pi^{\perp}_{\mathcal{S}_{t - 1}}(Y_t)^2]} \\
	\overset{(a)}{\leq} & \frac{1}{\omega_t^4}\E[\E[\bar{G}_0^{\perp, t}\mid h(\bar{G}_0, W), \bar{\bG}_{\leq t}, U]^2] - \sum_{s = 0}^{t - 1}\frac{\E[\bar{G}_0^{\perp, t}\Pi^{\perp}_{\mathcal{S}_{s - 1}}(Y_s)]^2}{\omega_t^4\E[\Pi^{\perp}_{\mathcal{S}_{s - 1}}(Y_s)^2]} \\
	\overset{(b)}{= } & \frac{1}{\omega_t^2} \E[\E[Z_0 \mid h(\omega_t Z_0 + \zeta_t Z_1, W), U, Z_1]^2] - \sum_{s = 0}^{t - 1}\frac{\E[\bar{G}_0^{\perp, s}\Pi^{\perp}_{\mathcal{S}_{s - 1}}(Y_s)]^2}{\omega_s^4\E[\Pi^{\perp}_{\mathcal{S}_{s - 1}}(Y_s)^2]} \\
	\overset{(c)}{\leq} & \frac{1}{\sigma_{t + 1}^2}\E[\E[Z_0 \mid h(\sigma_{t + 1}Z_0 + \tilde{\sigma}_{t + 1}Z_1, W), U, Z_1]^2] - \sum_{s = 1}^t \alpha_s^2,
\end{align*}
where $(a)$ holds by Eq.~\eqref{eq:alpha-t+1-4} and Pythagora's theorem, $(b)$ by Lemma \ref{lemma:conditional-distribution}, and 
$(c)$ is by induction hypothesis and Lemma \ref{lemma:monotone}. The last inequality above gives $\sum_{s = 1}^{t + 1}\alpha_s^2 \leq \beta_{t + 1}^2$. 

For $t \in \NN_{>0}$ we define 
\begin{align*}
	Y_t' := g_t(\phi_t(\balpha_{\leq t} \Theta + \bZ_{\leq t}; V); V), \qquad \mathcal{S}_t' := \mbox{span}(Y'_i: 1 \leq i \leq t).
\end{align*}
By state evolution \eqref{eq:SE1}, $\omega_{t + 1}^2 = \E[\Pi^{\perp}_{\mathcal{S}_{t + 1}'}(\Theta)^2] / \delta$. Further we have
\begin{align*}
	\omega_{t + 1}^2 \overset{(d)}{=} & \frac{1}{\delta}  \E[\Theta^2] - \frac{1}{\delta}  \E[\Pi_{\mathcal{S}_{t + 1}'}(\Theta)^2] \\
	\overset{(e)}{\geq} & \frac{1}{\delta}  \E[\Theta^2] - \frac{1}{\delta}  \E[\E[\Theta \mid \balpha_{\leq t + 1} \Theta + \bZ_{\leq t + 1}, V]^2] \\
	\overset{(f)}{=} & \frac{1}{\delta}  \E[\Theta^2] - \frac{1}{\delta}  \E[\E[\Theta \mid \|\balpha_{\leq t + 1}\|_2 \Theta + G, V]^2] \\
	\overset{(g)}{\geq} & \frac{1}{\delta} \E[\Theta^2] - \frac{1}{\delta}\E[\E[\Theta \mid \beta_{t + 1}\Theta + G, V]^2] = \sigma_{t + 2}^2,
\end{align*}
where $(d)$ holds by Pythagora's theorem, $(e)$ by Jensen's inequality, $(f)$
  by property of sufficient statistics and $(g)$ is by induction hypothesis and Jensen's inequality.

This completes the proof of the lemma by induction. 
\end{proof}

\begin{lemma}\label{lemma:monotone}
	Let $Z_0, Z_1 \iidsim \normal(0,1)$. For any fixed  $\omega_0^2 \ge 0$, 
	 the following function is non-increasing in $a\in (0,\omega_0^2])$:
	\begin{align*}
		a \mapsto \frac{1}{a^2} \E[\E[Z_0 \mid h(aZ_0 + (\omega_0^2 - a^2)^{1/2} Z_1, W), U, Z_1]^2].
	\end{align*}
\end{lemma}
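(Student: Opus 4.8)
The plan is to recognize that conditioning on $Z_1$ amounts to an additive Gaussian observation of the ``spike'' $\bar G_0 := aZ_0 + (\omega_0^2-a^2)^{1/2}Z_1 \sim \normal(0,\omega_0^2)$, to peel that observation off, and to reduce the claim to a monotonicity statement about a scalar Gaussian-channel MMSE for which the second-order I--MMSE identity is available. Write $d := (\omega_0^2-a^2)^{1/2}$, $Y := h(\bar G_0, W)$, and $Q(a) := a^{-2}\,\E[\E[Z_0 \mid Y, U, Z_1]^2]$. I would prove monotonicity for $a \in (0,\omega_0)$ first and recover the endpoint $a = \omega_0$ by continuity of $Q$ (there $d=0$, so $Z_1$ is independent of $(\bar G_0,Y,U)$ and simply drops out of the conditioning).

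First I would change variables inside the conditioning. For $a<\omega_0$ the variable $R := (\omega_0^2/d)\,Z_1$ generates the same $\sigma$-algebra as $Z_1$, and a one-line covariance computation shows $R = \bar G_0 + \nu$ with $\nu \sim \normal(0,\tau^2)$, $\tau^2 = a^2\omega_0^2/d^2$, and $\nu$ independent of $(\bar G_0, W, U)$ --- hence of $(\bar G_0, Y, U)$ --- because $\nu$ and $\bar G_0$ are jointly Gaussian with zero covariance while $(\bar G_0,\nu)$ is a function of $(Z_0,Z_1)$, which is independent of $(W,U)$. Using $aZ_0 = \bar G_0 - dZ_1 = \bar G_0 - (d^2/\omega_0^2)R$ together with the elementary identity $\E[\bar G_0 \mid R] = \big(\omega_0^2/(\omega_0^2+\tau^2)\big)R = (d^2/\omega_0^2)R$, one gets $\E[Z_0\mid Y,U,Z_1] = a^{-1}\big(\E[\bar G_0\mid Y,U,R] - \E[\bar G_0\mid R]\big)$. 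Writing $\mmse(\,\cdot\,) := \E[(\bar G_0 - \E[\bar G_0\mid\cdot])^2]$ and applying the tower property over $\sigma(R)\subseteq\sigma(Y,U,R)$,
\begin{align*}
Q(a) = \frac{1}{a^4}\,\E\Big[\big(\E[\bar G_0\mid Y,U,R]-\E[\bar G_0\mid R]\big)^2\Big] = \frac{1}{a^4}\big(\mmse(R) - \mmse(Y,U,R)\big).
\end{align*}
Since $\mmse(R) = \omega_0^2\tau^2/(\omega_0^2+\tau^2) = a^2$, setting $\gamma := 1/\tau^2 = a^{-2}-\omega_0^{-2}$, $\beta := a^{-2} = \gamma+\omega_0^{-2}$, and $M(\gamma) := \mmse(Y,U,R)$ turns this into $Q(a) = \beta - \beta^2 M(\gamma)$.

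Next, since $a\mapsto\gamma$ is a decreasing bijection of $(0,\omega_0)$ onto $(0,\infty)$, it suffices to show that $\widetilde Q(\gamma) := (\gamma+\omega_0^{-2}) - (\gamma+\omega_0^{-2})^2 M(\gamma)$ is non-decreasing on $(0,\infty)$. Here I would invoke the Gaussian-channel MMSE-derivative identity of Guo--Shamai--Verd\'u, conditioned on the $\sigma$-algebra $\sigma(Y,U)$ (which is legitimate precisely because $\nu$ is independent of $(\bar G_0, Y, U)$): $M$ is smooth on $(0,\infty)$ with $M'(\gamma) = -\E\big[\Var(\bar G_0\mid Y,U,R)^2\big]$. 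Setting $D := \Var(\bar G_0\mid Y,U,R)\ge 0$, so that $M(\gamma) = \E[D]$ and $M'(\gamma) = -\E[D^2]$, differentiation gives
\begin{align*}
\widetilde Q'(\gamma) = 1 - 2\beta\,\E[D] + \beta^2\,\E[D^2] = \E\big[(1-\beta D)^2\big] \ge 0,
\end{align*}
which is the desired monotonicity on $(0,\omega_0)$, extended to $a=\omega_0$ by continuity.

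The routine parts are the covariance bookkeeping and the tower-property reduction. The substantive steps are (i) the observation that conditioning on $Z_1$ is an AWGN observation of $\bar G_0$, so that $Q(a)$ collapses to $\beta - \beta^2 M(\gamma)$, and (ii) having at hand the conditional form of the second-order I--MMSE identity $M'(\gamma) = -\E[\Var(\bar G_0\mid Y,U,R)^2]$ --- this is where care is needed, both for the independence hypothesis that validates the conditional version and for the smoothness of $M$. Once these are in place the conclusion is immediate from the perfect-square identity above; note that only the weaker bound $|M'(\gamma)| = \E[D^2] \ge \E[D]^2 = M(\gamma)^2$ is really used, via $\widetilde Q'(\gamma) \ge 1 - 2\beta M(\gamma) + \beta^2 M(\gamma)^2 = (1-\beta M(\gamma))^2 \ge 0$.
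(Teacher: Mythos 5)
Your proof is correct but takes a genuinely different route from the paper. The paper's argument is a short Jensen computation: write $Z_1 = \delta Z_2 + \sqrt{1-\delta^2}\,Z_3$ with $Z_2, Z_3$ fresh standard Gaussians, drop $Z_2$ from the conditioning via Jensen's inequality, and then reabsorb $aZ_0 + (\omega_0^2-a^2)^{1/2}\delta Z_2$ into a single standard Gaussian $Z_0'$, directly yielding that the value of the function at $a$ is at least its value at $a' = \big(a^2 + \delta^2(\omega_0^2-a^2)\big)^{1/2}$; sweeping $\delta\in[0,1]$ sweeps $a'$ over $[a,\omega_0]$ and the monotonicity follows. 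Your argument instead recognizes conditioning on $Z_1$ as an additive-Gaussian observation $R=\bar G_0+\nu$ of $\bar G_0 := aZ_0+(\omega_0^2-a^2)^{1/2}Z_1$, reduces the quantity to $\beta-\beta^2 M(\gamma)$ with $\gamma = a^{-2}-\omega_0^{-2}$, $\beta=\gamma+\omega_0^{-2}$, $M(\gamma)=\mmse(\bar G_0\mid Y,U,R)$, and then invokes the conditional second-order Guo--Shamai--Verd\'u identity $M'(\gamma) = -\E[\Var(\bar G_0\mid Y,U,R)^2]$ to write the derivative as the perfect square $\E[(1-\beta D)^2]\ge 0$. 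Both approaches hinge on the same Gaussian-channel structure, but the paper's is more elementary and entirely self-contained, while yours is more informative (an explicit non-negative expression for the derivative, and a cleaner connection to I--MMSE calculus) at the cost of importing the differentiability of $M$ and the conditional form of the I--MMSE identity --- dependencies you rightly flag as the place where care is needed, and which the paper's Jensen argument simply sidesteps. The endpoint $a=\omega_0$ is handled by continuity in your version; the paper's inequality applies there directly since the $\delta$-decomposition degenerates gracefully.
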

\begin{proof}
	For $\delta > 0$, we introduce the decomposition $Z_1 = \delta Z_2 + \sqrt{1 - \delta^2} Z_3$, with  $Z_2, Z_3 \iidsim \normal(0,1)$ that are independent of $Z_0$. Then by Jensen's inequality, 
	\begin{align*}
		& \frac{1}{a^2} \E[\E[Z_0 \mid h(aZ_0 + (\omega_0^2 - a^2)^{1/2} Z_1, W), U, Z_1]^2] \\
		 = &  \frac{1}{a^2} \E[\E[Z_0 \mid h(aZ_0 + (\omega_0^2 - a^2)^{1/2} \delta Z_2 + ((\omega_0^2 - a^2)(1 - \delta^2))^{1/2}Z_3, W), U, Z_2, Z_3]^2] \\
		 \geq & \frac{1}{a^2} \E[\E[Z_0 \mid h(aZ_0 + (\omega_0^2 - a^2)^{1/2} \delta Z_2 + ((\omega_0^2 - a^2)(1 - \delta^2))^{1/2}Z_3, W), U,  Z_3]^2] \\
		 = & \frac{1}{a^2 + \delta^2(\omega_0^2 - a^2)} \E[\E[Z_0 \mid h((a^2 + \delta^2(\omega_0^2 - a^2))^{1/2}Z_0 + ((\omega_0^2 - a^2)(1 - \delta^2))^{1/2}Z_3, W), U,  Z_3]^2]. 
	\end{align*}
	The above inequality holds for all $\delta \in [0,1]$, thus completes the proof of the lemma. 
\end{proof}

\begin{lemma}\label{lemma:cauchy}
	We let $Z_1, Z_2$ be independent mean-zero Gaussian random variables with variance  $\sigma_1^2$ and $\sigma_2^2$, respectively. For $\sigma_1^2 \geq q \geq 0$, we let $G_q$ be a mean-zero Gaussian random variable such that $\Cov(G_q, Z_2) = 0$ and $\Var(G_q) = \Cov(G_q, Z_1) = q$. Then for all $h: \RR^2 \rightarrow \RR$, we have
	\begin{align*}
		f_h(q) := \E[G_q \mid h(Z_1 + Z_2, W), Z_2] = \frac{q}{\sigma_1^2} \E[Z_1 \mid h(Z_1 + Z_2, W), Z_2].
	\end{align*} 
\end{lemma}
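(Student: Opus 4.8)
The plan is to use the joint Gaussianity of $(G_q, Z_1, Z_2)$ to split $G_q$ into its component along $Z_1$ and an independent remainder, and then to note that the conditioning $\sigma$-algebra is generated by quantities independent of that remainder.

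First I would set $R := G_q - \frac{q}{\sigma_1^2} Z_1$ and compute its covariances with $Z_1$ and $Z_2$: using the hypotheses $\Cov(G_q, Z_1) = q$, $\Cov(G_q, Z_2) = 0$ and $\Cov(Z_1, Z_2) = 0$, one gets $\Cov(R, Z_1) = q - \frac{q}{\sigma_1^2}\sigma_1^2 = 0$ and $\Cov(R, Z_2) = 0$. Since $(G_q, Z_1, Z_2)$ is a jointly Gaussian vector, so is $(R, Z_1, Z_2)$; an uncorrelated jointly Gaussian family is independent, hence $R \perp (Z_1, Z_2)$. Because $W$ (equivalently $(W,U) \sim \mu_{W,U}$) is independent of the Gaussian variables appearing in the state evolution, we also have $R \perp (Z_1, Z_2, W)$, so $R$ is independent of the $\sigma$-algebra generated by $(h(Z_1 + Z_2, W), Z_2)$.

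Then, by linearity of conditional expectation,
\[
\E\big[G_q \mid h(Z_1 + Z_2, W), Z_2\big] = \frac{q}{\sigma_1^2}\,\E\big[Z_1 \mid h(Z_1 + Z_2, W), Z_2\big] + \E\big[R \mid h(Z_1 + Z_2, W), Z_2\big],
\]
and the last conditional expectation equals $\E[R] = 0$ by the independence just established. This is exactly the claimed identity $f_h(q) = \frac{q}{\sigma_1^2}\E[Z_1 \mid h(Z_1 + Z_2, W), Z_2]$.

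There is no genuine obstacle here; it is essentially a one-line Gaussian decomposition. The only points to make explicit are that $(G_q, Z_1, Z_2)$ is to be read as a jointly Gaussian vector (the natural reading of the hypotheses, and the way such objects arise via state evolution), that $W$ is independent of this vector (a standing assumption throughout this section), and that the decomposition is legitimate for every $q \in [0, \sigma_1^2]$ — which holds since $\Var(R) = q - q^2/\sigma_1^2 = q(1 - q/\sigma_1^2) \ge 0$ on this interval, the degenerate endpoints $q = 0$ and $q = \sigma_1^2$ causing no difficulty.
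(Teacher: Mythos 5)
Your proof is correct but takes a genuinely different route from the paper. The paper shows that $f_h$ is additive, $f_h(q_1+q_2)=f_h(q_1)+f_h(q_2)$, by constructing two jointly Gaussian components $G_{q_1}\perp G_{q_2}$ whose sum has the covariance structure of $G_{q_1+q_2}$; it then invokes continuity of $q\mapsto f_h(q)$ and Cauchy's functional equation to get linearity, and fixes the constant via the endpoint $G_{\sigma_1^2}\stackrel{\mathrm{a.s.}}{=}Z_1$ (so that $f_h(\sigma_1^2)=\E[Z_1\mid\cdot]$). You instead decompose $G_q=\tfrac{q}{\sigma_1^2}Z_1+R$ directly, observe that $R$ is a Gaussian uncorrelated with $(Z_1,Z_2)$ (hence independent, once one reads $(G_q,Z_1,Z_2)$ as jointly Gaussian and $W$ as independent of it, which both proofs need), and use linearity of conditional expectation plus $\E[R\mid\cdot]=\E[R]=0$. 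Your argument is shorter and sidesteps two mild gaps the paper leaves implicit: it never needs the (unproved) continuity of $f_h$ in $q$, and it never needs to separately check the normalization at $q=\sigma_1^2$. The paper's functional-equation route buys nothing extra here; your orthogonal-projection argument is the more direct and self-contained one, and it is the same idea one would use to prove Stein-type identities in this state-evolution setting. The one caveat you already flag correctly: the lemma statement only gives pairwise covariances, so joint Gaussianity of $(G_q,Z_1,Z_2)$ and its independence from $W$ must be taken as part of the intended reading; without it, uncorrelated would not imply independent and the step $\E[R\mid\cdot]=0$ would fail.
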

\begin{proof}
	For $q_1, q_2 \geq 0$ with $q_1 + q_2 \leq \sigma_1^2$, there exist $G_{q_1}, G_{q_2}$ independent of each other, and satisfy the above constraints. Then, we have $\Cov(G_{q_1} + G_{q_2}, Z_2) = 0$, $\Cov(G_{q_1} + G_{q_2}, Z_1) = \Var(G_{q_1} + G_{q_2}) = q_1 + q_2$. Therefore, 
	\begin{align*}
		f_h(q_1 + q_2) = \E[G_{q_1} + G_{q_2} \mid h(Z_1 + Z_2, W), Z_2] = f_h(q_1) + f_h(q_2).
	\end{align*}
	For all fixed $(h(Z_1 + Z_2, W), Z_2)$, $f_h$ is continuous, thus the lemma follows from Cauchy's equation. 
\end{proof}

\section{Proof of Theorem \ref{thm:GLM} under Setting \ref{setting:3}}
\label{sec:ProofGLM_2}

In this section we prove Theorem \ref{thm:GLM} under the assumptions of Setting \ref{setting:3}. 
%
\subsection{AMP algorithm}

As in previous proofs, we start with the definition of AMP algorithms with non-separable non-linearities. 
Under Setting \ref{setting:3}, an AMP algorithm for solving generalized linear models is defined by a sequence of uniformly Lipschitz functions $\{f_t: \RR^{n(t + 2)} \rightarrow \RR^n\}_{t \geq 0}$ and $\{g_t: \RR^{d(t + 1)} \rightarrow \RR^d\}_{t \geq 1}$, and produces $\{\bb^t\}_{t \geq 1} \subseteq \RR^d$ and $\{\ba^t\}_{t \geq 1} \subseteq \RR^n$ via the following iteration:
\begin{align}\label{eq:AMP3}
\begin{split}
	\left\{ \begin{array}{ll}
		\bb^{t + 1} = \bX^{\top} f_t(\ba^{\leq t}; \by, \bu) - \sum\limits_{s = 1}^t \xi_{t,s} g_s(\bb^{\leq s}; \bv), \\
		\ba^t = \bX g_t(\bb^{\leq t}; \bv) - \sum\limits_{s = 1}^t \eta_{t,s} f_{s - 1}(\ba^{ \leq s - 1}; \by, \bu).
	\end{array} \right.
\end{split}
\end{align}
Here, $(\xi_{t,s})_{1 \leq s \leq t}$ and $(\eta_{t,s})_{1 \leq s \leq t}$ are deterministic coefficients defined via 
\begin{align}\label{eq:SE3-Ons-Def}
\begin{split}
	& \xi_{t,s} = \frac{1}{n}\sum_{i = 1}^n\E\big[{\partial_{i,s}} f_{t,i}(\bar\bg_{\leq t}; \by_{\ast}, \bu)\big], \qquad \by_{\ast}:= h( \bar \bg_0, \bw)\\
	& \eta_{t,s} = \frac{1}{n}\sum_{i = 1}^d\E \big[{\partial_{i,s}}g_{t,i}(\bmu_{\leq t} \btheta + \bg_{\leq t}; \bv)\big]. 
	\end{split}
\end{align}
Here we introduced the notations 
$\bar{\bg}_{\leq t} := (\bar{\bg}_1, \cdots, \bar{\bg}_t) \in \RR^{n \times t}$, 
$\bg_{\leq t} := (\bg_1, \cdots, \bg_t) \in \RR^{d \times t}$, and the joint distributions of 
$(\btheta,\bv,(\bg_{i})_{i\ge 1})$ and of $(\by_{\ast},\bu,\bw,(\bar\bg_{i})_{i\ge 0})$ 
are determined by the following state evolution recursions
%
	\begin{align}\label{eq:SE3}
\begin{split}
	& (\bar \bg_0, \bar{\bg}_{\leq t})\sim \normal(\mathbf{0}, \bar\bSigma_{\leq t + 1} \otimes \id_n ), \qquad \bg_{\leq t} \sim \normal(\textbf{0},  \bSigma_{\leq t} \otimes\id_d), \\
	& \bar\Sigma_{ij}  = \lim_{n,d \rightarrow \infty} \frac{1}{ n}\E[g_i(\bmu_{\leq i} \btheta + \bg_{\leq i}; \bv)^{\top}g_j(\bmu_{\leq j} \btheta + \bg_{\leq j}; \bv)],  \qquad i, j \geq 1,\\
	& \bar\Sigma_{i0} = \bar\Sigma_{0i} = \lim_{n,d \rightarrow \infty} \frac{1}{ n}\E[g_i(\bmu_{\leq i} \btheta + \bg_{\leq i}; \bv)^{\top}\btheta], \qquad \bar{\Sigma}_{00} = \frac{1}{\delta} \E[\Theta^2], \qquad i \geq 1. \\
	&  \Sigma_{ij} = \lim_{n,d \rightarrow \infty} \frac{1}{n} \E[f_{i - 1}(\bar\bg_{\leq i - 1}; \by_{\ast}, \bu)^{\top} f_{j - 1}(\bar\bg_{\leq j - 1}; \by_{\ast}, \bu)], \\
	& \mu_{t + 1} = \lim_{n,d \rightarrow \infty}\frac{1}{n}\sum_{i = 1}^n\E\big[  {\partial_{\bar g_{0,i}}} f_{t,i}(\bar\bg_{\leq t}; \by_{\ast}, \bu) \big].
\end{split}
\end{align}
In the above equations $\bSigma_{\leq t} = (\Sigma_{ij})_{1 \leq i,j \leq t}$, $\bar\bSigma_{\leq t} = (\bar\Sigma_{ij})_{0 \leq i, j \leq t}$ and $\bmu_{\leq t} = (\mu_i)_{1 \leq i \leq t}$, and the limits are assumed to exist. Here, $\partial_{i,s}$ refers to the partial derivative with respect to the $s$-th variable of the $i$-th row of the input matrix, and $\partial_{\bar{g}_{0,i}}$ refers to the partial derivative with respect to $\bar{g}_{0,i}$. Note that $f_0$ depends only on $(\by_{\ast}, \bu)$, thus, the state evolution does not need any specific initialization. After $t$ iterations as in Eq.~\eqref{eq:AMP3}, the AMP algorithm estimates $\btheta$ by applying a uniformly Lipschitz function $g_t^{\ast}: \RR^{d(t + 1)} \rightarrow \RR^d$ to $(\bb^{\leq t}, \bv)$:
\begin{align*}
	\hat{\btheta}(\bX, \by, \bu, \bv) = g_t^{\ast}(\bb^{\leq t}; \bv).
\end{align*}
The following theorem describes the state evolution of the AMP iteration \eqref{eq:AMP3}.
\begin{theorem}\label{thm:SE3}
	Assume $X_{ij} \iidsim \normal(0, 1/ n)$ for all $i \in [n]$ and $j \in [d]$, $(\theta_i, v_i)_{i \leq d} \iidsim \mu_{\Theta, V}$, $(w_i, u_i)_{i \leq n} \iidsim \mu_{W, U}$, and for all $t \in \NN$, the non-linearities $(f_t, g_{t + 1})$ are uniformly Lipschitz. Furthermore, we assume the following limits exist for all $(\bmu, \bSigma, \bar\bSigma)$:
	\begin{align*}
		& \lim_{n,d \rightarrow \infty} \frac{1}{n}\E[f_t(\bar{\bg}_{\leq t}; \by_{\ast}, \bu)^{\top}f_s(\bar{\bg}_{\leq s}; \by_{\ast}, \bu)],\\
		& \lim_{n,d \rightarrow \infty} \frac{1}{n}\E[f_t(\bar{\bg}_{\leq t}; \by_{\ast}, \bu)^{\top}\bar{\bg}_0], \\
		& \lim_{n,d \rightarrow \infty} \frac{1}{d}\E[g_t(\bmu_{\leq t} \btheta + \bg_{\leq t}; \bv)^{\top} g_s(\bmu_{\leq s} \btheta + \bg_{\leq s}; \bv)], \\
		& \lim_{n,d \rightarrow \infty} \frac{1}{d}\E[g_t(\bmu_{\leq t} \btheta + \bg_{\leq t}; \bv)^{\top} \btheta], \\
		& \lim_{n,d \rightarrow \infty} \frac{1}{d}\E[g_t^{\ast}(\bmu_{\leq t} \btheta + \bg_{\leq t}; \bv)^{\top} g_s^{\ast}(\bmu_{\leq s} \btheta + \bg_{\leq s}; \bv)], \\
		& \lim_{n,d \rightarrow \infty} \frac{1}{d}\E[g_t^{\ast}(\bmu_{\leq t} \btheta + \bg_{\leq t}; \bv)^{\top} \btheta].
	\end{align*}
	Then for $\{\psi_n: \RR^{d(t + 2)} \rightarrow \RR\}_{n \geq 1}$ uniformly pseudo-Lipschitz of order 2, 
	\begin{align*}
		\psi_n(\bb^{\leq t}, \btheta, \bv) = \E[\psi_n(\bmu_{\leq t} \btheta + \bg_{\leq t}, \btheta, \bv)] + o_P(1). 
	\end{align*}
\end{theorem}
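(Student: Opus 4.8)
The plan is to deduce Theorem~\ref{thm:SE3} from the symmetric non-separable state-evolution result, Theorem~\ref{thm:nonseparatble-SE}, via the classical ``rectangular-to-symmetric'' dilation, exactly as the Wigner AMP theorem was bootstrapped from the i.i.d.\ case in the remark following Theorem~\ref{thm:AMP}. Concretely, set $N = n + d$ and form the symmetric matrix
\begin{align*}
\bM = \begin{pmatrix} \bzero & \bX \\ \bX^{\sT} & \bzero \end{pmatrix}\in\RR^{N\times N}, \qquad \widetilde{\bM} := \sqrt{n/N}\,\bM,
\end{align*}
so that the off-diagonal blocks of $\widetilde{\bM}$ have i.i.d.\ $\normal(0,1/N)$ entries and the diagonal vanishes; thus $\widetilde{\bM}$ is a Gaussian matrix of the type covered by Setting~\ref{setting:Gaussian} (with ``$n$'' there equal to $N$). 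Identifying $\RR^N\cong\RR^n\oplus\RR^d$, I would realize the two-sided iteration \eqref{eq:AMP3} as a \emph{single} AMP iteration on $\widetilde{\bM}$: to produce the next $\bb$-vector the $\RR^N$-nonlinearity returns $\sqrt{N/n}\,(f_t(\ba^{\le t};\by,\bu),\,\bzero)$, and to produce the next $\ba$-vector it returns $\sqrt{N/n}\,(\bzero,\,g_t(\bb^{\le t};\bv))$, since $\widetilde{\bM}\cdot\sqrt{N/n}\,(\bzero,g_t) = (\bX g_t,\bzero)$ and $\widetilde{\bM}\cdot\sqrt{N/n}\,(f_t,\bzero) = (\bzero,\bX^{\sT}f_t)$. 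All blocks of all past iterates are carried forward so that these maps may depend on the entire history; the $\sqrt{N/n}$ prefactor is an $n$-independent constant, so uniform Lipschitzness of the $F^{(1)}, G^{(1)}$'s passes to the embedded nonlinearities, and dependence on all previous iterates is absorbed by the matrix-iterate device described in the remark after Theorem~\ref{thm:nonseparatble-SE} (fix $q\ge t$, stack the messages).

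\textbf{Handling the labels.} Since $\by = h(\bX\btheta,\bw)$ is not independent of $\bX$, I would prepend a fictitious initial step: take the $0$-th $d$-block nonlinearity to be the constant map to $\btheta$, so that one multiplication by $\widetilde{\bM}$ produces a fixed rescaling of $\bX\btheta$ in the $n$-block, with vanishing Onsager correction (derivative of a constant). The label vector $\by=h(\bX\btheta,\bw)$ is then a deterministic function of this message and of the frozen side information $\bw$, held fixed thereafter; all subsequent nonlinearities treat $\by$ as a constant argument, so their Lipschitz constants are inherited directly and the mere continuity of $h$ causes no difficulty, $h$ being applied only once. This is the device used in the state-evolution analysis of GLM AMP \cite{javanmard2013state,deshpande2014information}, and it reproduces the distinguished coordinate $\bar\bg_0$ of \eqref{eq:SE3} with $\bar\Sigma_{00} = \lim_{n,d}\tfrac1n\|\btheta\|_2^2 = \tfrac1\delta\E[\Theta^2]$.

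\textbf{Conclusion and main obstacle.} With the embedding in place, the hypotheses of Theorem~\ref{thm:SE3} (existence of the listed $\tfrac1n$- and $\tfrac1d$-limits of inner products and of the divergence averages) translate directly into the limit-existence hypotheses required by Theorem~\ref{thm:nonseparatble-SE} for the embedded iteration. Applying that theorem to a uniformly pseudo-Lipschitz test function supported on the $d$-block coordinates $(\bb^{\le t},\btheta,\bv)$ yields the asserted Gaussian limit, and a bookkeeping computation identifies the mean and covariance parameters of the symmetric state evolution, restricted to the steps that update the $d$-block and after undoing the $\sqrt{n/N}$ rescaling, with the recursion \eqref{eq:SE3} and the Onsager coefficients \eqref{eq:SE3-Ons-Def}. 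I expect the main obstacle to be precisely this last bookkeeping: tracking how the block structure and the variance rescaling together turn the single symmetric Onsager coefficient $\tfrac1N\sum_i\E[\partial_{i,s}f^{\mathrm{sym}}_{t,i}]$ into the two asymmetric families $\xi_{t,s}$ and $\eta_{t,s}$ --- in particular producing the factor $\tfrac1\delta$ appearing in $\eta_{t,s}$ and in the $\bar\Sigma_{ij}$ --- and verifying that the fictitious zeroth step, as well as the passage to subsequences implicit in the ``assume the limits exist'' clause, is innocuous. The remaining steps parallel the reductions already carried out in the main text and in \cite{javanmard2013state,berthier2020state}.
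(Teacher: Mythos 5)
The paper does not give a self-contained proof of Theorem~\ref{thm:SE3}; as with the companion result Theorem~\ref{thm:nonseparatble-SE}, it is intended to follow from the rectangular non-separable state evolution in \cite{berthier2020state} via adaptations of the type described in the remark after Theorem~\ref{thm:nonseparatble-SE}. Your alternative route through a symmetric dilation is conceptually standard, but there is a genuine gap. Once you reconstruct the label vector $\by = h(\bX\btheta, \bw)$ inside the symmetric iteration as a function of the first iterate, every subsequent nonlinearity depends on that initial iterate through $h$, which the model only assumes to be continuous (indeed in the paper's motivating example of phase retrieval $h(z,w)=z^2$ is not globally Lipschitz). The hypotheses of Theorem~\ref{thm:nonseparatble-SE} require the nonlinearity at each step to be uniformly Lipschitz in \emph{all} of its iterate arguments; the composed nonlinearity $(\ba^1_{\mathrm{sym}},\ldots)\mapsto F_t^{(1)}(\ldots;h(\ba^1_{\mathrm{sym,top}},\bw),\bu)$ is not, and your dismissal of this with ``the mere continuity of $h$ causes no difficulty, $h$ being applied only once'' does not repair it. You would need either a Lipschitz approximation of $h$ together with a uniform-in-iteration control of the resulting error, or (as the paper implicitly does) a direct invocation of the rectangular state-evolution theorem, which in \cite{berthier2020state} is proved by a conditioning argument that accommodates the $h$-dependence rather than by a reduction to the symmetric Lipschitz statement. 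Beyond this, your own flagged ``main obstacle'' is real and nontrivial: the identification of the single symmetric Onsager coefficient $\frac{1}{N}\sum_i \E[\partial_{i,s}f^{\mathrm{sym}}_{t,i}]$ with the two families $\xi_{t,s}$ and $\eta_{t,s}$ in \eqref{eq:SE3-Ons-Def}, and in particular the emergence of the $\delta^{-1}$ factors in $\eta_{t,s}$, $\bar\Sigma_{ij}$ and $\bar\Sigma_{00}$, requires a careful computation that the proposal does not carry out; this bookkeeping, together with the subsequence extraction needed to make the ``assume the limits exist'' clause match the hypothesis of Theorem~\ref{thm:nonseparatble-SE}, constitutes most of the actual work.
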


\subsection{Any GFOM can be reduced to an AMP algorithm}

Again we show that GFOM \eqref{eq:GFOM1} can be reduced to an AMP algorithm \eqref{eq:AMP3} under Setting \ref{setting:3}. To be specific, we have the following lemma:
\begin{lemma}\label{lemma:GFOM-AMP3}
	Under the assumptions of Setting \ref{setting:3}, for all $t \in \NN_{>0}$, there exist uniformly Lipschitz functions $\varphi_t: \RR^{d(t + 1)} \rightarrow \RR^{dt}$, $\bar\varphi_{t}: \RR^{n(t + 2)} \rightarrow \RR^{nt}$, $f_{t - 1}: \RR^{n(t + 1)} \rightarrow \RR^n$ and $g_t: \RR^{d(t + 1)} \rightarrow \RR^d$ that satisfy the following conditions. We let $\{\ba^t\}_{t \geq 1}$ and $\{\bb^t\}_{t \geq 1}$ be sequences of vectors produced by the AMP iteration \eqref{eq:AMP3} with non-linearities $\{f_t\}_{t \geq 0}$ and $\{g_t\}_{t \geq 1}$. Then for any $t \in \NN_{>0}$, we have
	\begin{align*}
		& \bu^{\leq t} = \bar{\varphi}_t(\ba^{\leq t}; \by, \bu), \qquad \bv^{\leq t} = \varphi_t(\bb^{\leq t}; \bv), \\
		& f_{t - 1}(\ba^{\leq t - 1}; \by, \bu) = F_{t - 1}^{(1)}(\bar{\varphi}_{t - 1}(\ba^{\leq t - 1}; \by, \bu); \by, \bu), \qquad g_t(\bb^{\leq t}; \bv) = G_t^{(1)}(\varphi_t(\bb^{\leq t}; \bv); \bv).  
	\end{align*}
	Furthermore, $\{\varphi_t\}_{t \geq 1}$ and $\{\bar\varphi_t\}_{t \geq 1}$ satisfy the following conditions. For any $(\bmu, \bSigma, \bar\bSigma)$ and $t \in \NN_{>0}$, there exist uniformly bounded $(b_{ij})_{1 \leq j \leq i \leq t}$, $(\bar{b}_{ij})_{1 \leq j \leq i \leq t}$, which are sequences with respect to $n$, such that for $\by_{\leq t}$, $\bar\by_{\leq t}$ as defined in Setting \ref{setting:3}, we have $\bar\by_{\leq t} = \bar\varphi_t(\bar\bg_{\leq t}; \by_{\ast}, \bu)$ and $\by_{\leq t} = \varphi_t(\bmu_{\leq t} \btheta + \bg_{\leq t}; \bv) $.
	
\end{lemma}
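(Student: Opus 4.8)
The plan is to prove the statement by induction over $t$, along exactly the lines of Lemma~\ref{lemma:GFOM-AMP1}; the only new feature is that the GFOM \eqref{eq:GFOM1} alternates its updates between $\RR^n$ and $\RR^d$, so the bookkeeping must track two interleaved families of iterates ($\ba^t\in\RR^n$ matched to $\bu^t$, and $\bb^t\in\RR^d$ matched to $\bv^t$) together with the two families of Onsager coefficients $\eta_{t,s}$ and $\xi_{t,s}$. In every case the change of variables is ``the identity up to'' subtracting the explicit non-$\bX$ terms and the accumulated Onsager corrections; the maps $\varphi_t,\bar\varphi_t$ are built so as to depend only on the GFOM nonlinearities, while only the Onsager coefficients are allowed to depend on $(\bmu,\bSigma,\bar\bSigma)$ (and on $n$).

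For the base case $t=1$, I would set $f_0:=F_0^{(1)}$ (both depend only on $(\by,\bu)$), so $\bb^1=\bX^{\top}F_0^{(1)}(\by,\bu)$ and hence $\bv^1=\bb^1+F_0^{(2)}(\bv)=:\varphi_1(\bb^1;\bv)$. Then $g_1(\bb^1;\bv):=G_1^{(1)}(\varphi_1(\bb^1;\bv);\bv)$, compute $\eta_{1,1}$ from \eqref{eq:SE3-Ons-Def}, and read off from \eqref{eq:AMP3} that $\bu^1=\ba^1+\eta_{1,1}f_0(\by,\bu)+G_1^{(2)}(\by,\bu)=:\bar\varphi_1(\ba^1;\by,\bu)$. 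Running the same computation with the state-evolution Gaussians of Setting~\ref{setting:3} gives $\varphi_1(\mu_1\btheta+\bg_1;\bv)=\by^1$ and $\bar\varphi_1(\bar\bg_1;\by_{\ast},\bu)=\bar\by^1$ once one sets $\bar b_{11}:=\eta_{1,1}$ and $\by_{\ast}:=h(\bar\bg_0,\bw)$.

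For the inductive step, assuming the claim for the first $t$ iterations, I would first define $f_t(\ba^{\le t};\by,\bu):=F_t^{(1)}(\bar\varphi_t(\ba^{\le t};\by,\bu);\by,\bu)$, which is uniformly Lipschitz as a composition of uniformly Lipschitz maps; compute $(\xi_{t,s})_{s\le t}$ from \eqref{eq:SE3-Ons-Def} (these are $n$-dependent but uniformly bounded, being coordinate averages of the weak gradient of $f_t$, which is bounded by its uniform Lipschitz constant); define $\bb^{t+1}$ via \eqref{eq:AMP3}; and, using the GFOM update $\bv^{t+1}=\bX^{\top}f_t(\ba^{\le t};\by,\bu)+F_t^{(2)}(\varphi_t(\bb^{\le t};\bv);\bv)$, set
\[
\varphi_{t+1}(\bb^{\le t+1};\bv):=\Big(\varphi_t(\bb^{\le t};\bv);\ \bb^{t+1}+\sum_{s=1}^{t}\xi_{t,s}\,g_s(\bb^{\le s};\bv)+F_t^{(2)}\big(\varphi_t(\bb^{\le t};\bv);\bv\big)\Big),
\]
so that $\varphi_{t+1}(\bb^{\le t+1};\bv)=\bv^{\le t+1}$ and $\varphi_{t+1}$ is uniformly Lipschitz by the induction hypothesis and the boundedness of the $\xi_{t,s}$. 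I then repeat the same maneuver for the $\ba$/$\bu$ half-step: put $g_{t+1}(\bb^{\le t+1};\bv):=G_{t+1}^{(1)}(\varphi_{t+1}(\bb^{\le t+1};\bv);\bv)$, compute $(\eta_{t+1,s})_{s\le t+1}$, define $\ba^{t+1}$ via \eqref{eq:AMP3}, and set
\[
\bar\varphi_{t+1}(\ba^{\le t+1};\by,\bu):=\Big(\bar\varphi_t(\ba^{\le t};\by,\bu);\ \ba^{t+1}+\sum_{s=1}^{t+1}\eta_{t+1,s}\,f_{s-1}(\ba^{\le s-1};\by,\bu)+G_{t+1}^{(2)}\big(\bar\varphi_t(\ba^{\le t};\by,\bu);\by,\bu\big)\Big),
\]
so that $\bar\varphi_{t+1}(\ba^{\le t+1};\by,\bu)=\bu^{\le t+1}$; the compositional identities $f_t=F_t^{(1)}\circ\bar\varphi_t$ and $g_{t+1}=G_{t+1}^{(1)}\circ\varphi_{t+1}$ hold by construction.

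Finally I would check the ``furthermore'' part: substituting the Setting~\ref{setting:3} Gaussians $\bmu_{\le t+1}\btheta+\bg_{\le t+1}$ into $\varphi_{t+1}$ and $(\by_{\ast},\bu,\bar\bg_{\le t+1})$ into $\bar\varphi_{t+1}$, the induction hypothesis $\varphi_t(\bmu_{\le t}\btheta+\bg_{\le t};\bv)=\by^{\le t}$, $\bar\varphi_t(\bar\bg_{\le t};\by_{\ast},\bu)=\bar\by^{\le t}$ turns the last display blocks into precisely the recursions defining $\by^{t+1}$ and $\bar\by^{t+1}$ in Setting~\ref{setting:3}, once one identifies $b_{t,s}=\xi_{t,s}$, $\bar b_{t+1,s}=\eta_{t+1,s}$ and uses that the mean parameters and covariance blocks of the AMP state evolution \eqref{eq:SE3} coincide with the given $(\bmu,\bSigma,\bar\bSigma)$. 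The only places needing real care — and hence where I expect the (still minor) obstacle to lie — are keeping the Lipschitz constants of the composites $n$-independent and verifying that the Onsager coefficients produced by \eqref{eq:SE3-Ons-Def} stay uniformly bounded in $n$ and are exactly the $b_{t,s},\bar b_{t,s}$ of Setting~\ref{setting:3}; both are handled exactly as in the proof of Lemma~\ref{lemma:GFOM-AMP1}.
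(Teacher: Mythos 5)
Your proof is correct and follows essentially the same inductive construction as the paper's own proof: identify $f_t$, $g_{t+1}$ by composition with $\bar\varphi_t$, $\varphi_{t+1}$, compute the Onsager coefficients, solve the AMP update for the GFOM iterates, and verify the furthermore clause by substituting the state-evolution Gaussians and identifying $b_{t,s}=\xi_{t,s}$, $\bar b_{t+1,s}=\eta_{t+1,s}$.
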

\begin{remark}
	For all $t \in \NN_{>0}$, since $(b_{ij})_{1 \leq j \leq i \leq t}$ and $(\bar{b}_{ij})_{1 \leq j \leq i \leq t}$ are uniformly bounded, there exists a subsequence of $\NN_{>0}$, which we denote by $\{n_k\}_{k \in \NN_{>0}}$, such that for all $s,r \leq t$, $b_{s,t}$ and $\bar{b}_{s,r}$ converge to n-independent limits along $\{n_k\}_{k \in \NN_{>0}}$. As a consequence, the following limits exist in probability along the subsequence $\{n_k\}_{k \in \NN_{>0}}$ by the third assumption of Setting \ref{setting:3}:
	\begin{align*}
		& \lim_{n,d \rightarrow \infty} \frac{1}{n}f_t(\bar{\bg}_{\leq t}; \by_{\ast}, \bu)^{\top}f_s(\bar{\bg}_{\leq s}; \by_{\ast}, \bu), \,\,\,\,\,\,\,\,\,\,\,\,\,\,\,\,\,\,\,\,\,\,\,\,\,\, \lim_{n,d \rightarrow \infty} \frac{1}{n}f_t(\bar{\bg}_{\leq t}; \by_{\ast}, \bu)^{\top}\bar{\bg}_0, &\\
		& \lim_{n,d \rightarrow \infty} \frac{1}{d}g_t(\bmu_{\leq t} \btheta + \bg_{\leq t}; \bv)^{\top} g_s(\bmu_{\leq s} \btheta + \bg_{\leq s}; \bv), \,\,\,\, \lim_{n,d \rightarrow \infty} \frac{1}{d}g_t(\bmu_{\leq t} \btheta + \bg_{\leq t}; \bv)^{\top} \btheta, & \\
		& \lim_{n,d \rightarrow \infty} \frac{1}{d}g_t^{\ast}(\bmu_{\leq t} \btheta + \bg_{\leq t}; \bv)^{\top} g_s^{\ast}(\bmu_{\leq s} \btheta + \bg_{\leq s}; \bv), \,\,\,\, \lim_{n,d \rightarrow \infty} \frac{1}{d}g_t^{\ast}(\bmu_{\leq t} \btheta + \bg_{\leq t}; \bv)^{\top} \btheta. & 
	\end{align*}
	As a consequence, the new AMP iteration satisfies all assumptions of Theorem \ref{thm:SE3}, thus, its asymptotics can be characterized by the state evolution \eqref{eq:SE3} along the subsequence.
\end{remark}

\begin{proof}
	We prove the lemma by induction over $t$. For the base case $t = 1$, we set $f_0(\by, \bu) := F_0^{(1)}(\by, \bu)$, $\varphi_1(\bb^1; \bv) := \bb^1 + F_0^{(2)}(\bv)$, $g_1(\bb^1; \bv) := G_1^{(1)}(\varphi_1(\bb^1; \bv); \bv)$ and $\bar\varphi_1(\ba^1; \by, \bu) := \ba^1 + G_1^{(2)}(\by, \bu) + \eta_{1,1} f_0(\by, \bu)$, where $\eta_{1,1}$ is defined via state evolution \eqref{eq:SE3}. Notice that $\eta_{1,1}$ is a function of $n$. By the uniform Lipschitzness assumption, $\eta_{1,1}$ is uniformly bounded as a sequence in $n$. Thus, $\varphi_1, \bar{\varphi}_1$ are uniformly Lipschitz. By definition, $\by^1 = \varphi_1(\mu_1 \btheta + \bg_1; \bv)$ and $\bar{\by}^1 = \bar{\varphi}_1(\bar{\bg}_1; \by_{\ast}, \bu)$ with $\bar{b}_{11} = \eta_{1,1}$, which completes the proof for the base case. 
	
	Next, suppose the lemma holds for the first $t$ iterations, we then prove it holds for the $(t + 1)$-th iteration. By induction hypothesis, 
	\begin{align*}
		& \bv^{t + 1} = \bX^{\top} F_t^{(1)}(\bar{\varphi}_t(\ba^{\leq t}; \by, \bu); \by, \bu) + F_t^{(2)}(\varphi_t(\bb^{\leq t}; \bv); \bv), \\
		& \bu^{t + 1} = \bX G_{t + 1}^{(1)}(\varphi_t(\bb^{\leq t + 1}; \bv); \bv) + G_{t + 1}^{(2)} (\bar{\varphi}_t(\ba^{\leq t}; \by, \bu); \by, \bu). 
	\end{align*} 
	We let $f_t(\bx^{\leq t}; \by, \bu) := F_t^{(1)}(\bar{\varphi}_t(\bx^{\leq t}; \by, \bu); \by, \bu)$ and $g_{t + 1}(\bx^{\leq t+1}; \bv) :=  G_{t + 1}^{(1)}({\varphi}_{t + 1}(\bx^{\leq t + 1}; \bv); \bv)$. The composition of uniformly Lipschitz functions is still uniformly Lipschitz. As a consequence, we can conclude that $f_t, g_{t + 1}$ are uniformly Lipschitz functions. Based on the choice of $\{f_s\}_{0 \leq s \leq t}$ and $\{g_s\}_{1 \leq s \leq t + 1}$, we can compute the coefficients for the Onsager correction terms $\{\xi_{t,s}\}_{1 \leq s \leq t}$ and $\{\eta_{t+1,s}\}_{1 \leq s \leq t + 1}$, which are uniformly bounded as sequences in $n$. 
	
	Then we define $\ba^{t + 1}$, $\bb^{t + 1}$ via the AMP iteration \eqref{eq:AMP3}, which gives
	\begin{align*}
		& \bb^{t + 1} = \bv^{t + 1} -  F_t^{(2)}(\varphi_t(\bb^{\leq t}; \bv); \bv) - \sum_{s = 1}^t \xi_{t,s} G_{s}^{(1)}(\varphi_t(\bb^{\leq s}; \bv); \bv), \\
		& \ba^{t + 1} = \bu^{t + 1} - G_{t + 1}^{(2)} (\bar{\varphi}_t(\ba^{\leq t}; \by, \bu); \by, \bu) - \sum_{s = 1}^{t + 1} \eta_{t + 1, s} F_{s - 1}^{(1)}(\bar{\varphi}_{s - 1}(\ba^{\leq s - 1}; \by, \bu); \by, \bu). 
	\end{align*}
	Solving for $\bu^{t + 1}$ and $\bv^{t + 1}$ leads to the definition of $\varphi_{t + 1}$ and $\bar{\varphi}_{t + 1}$. Furthermore, by setting $b_{ts} = \xi_{t,s}$ and $\bar{b}_{t + 1,s} = \eta_{t + 1,s}$, we have
	\begin{align*}
		& \varphi_{t +1}(\bmu_{\leq t +1} \btheta + \bg_{\leq t + 1}; \bv) \\
		= & (\varphi_{t}(\bmu_{t} \btheta + \bg_{\leq t}; \bv), \mu_{t + 1} \btheta + \bg_{t + 1} + F_t^{(2)}(\varphi_t(\bmu_{\leq t} \btheta + \bg_{\leq t}; \bv); \bv) + \sum_{s = 1}^t \xi_{t,s} G_{s}^{(1)}(\varphi_t(\bmu_{\leq s} \btheta + \bg_{\leq s}; \bv); \bv)) \\
		= & (\by^{\leq t}, \by^{t + 1}), \\
		& \bar{\varphi}_{t + 1}(\bar{\bg}_{\leq t + 1}; \by_{\ast}, \bu) \\
		= & (\bar{\varphi}_{t}(\bar{\bg}_{\leq t}; \by_{\ast}, \bu), \bar{\bg}_{t + 1} + G_{t + 1}^{(2)} (\bar{\varphi}_t(\bar{\bg}_{\leq t}; \by_{\ast}, \bu); \by_{\ast}, \bu) + \sum_{s = 1}^{t + 1} \eta_{t + 1, s} F_{s - 1}^{(1)}(\bar{\varphi}_{s - 1}(\bar\bg_w{\leq s - 1}; \by_{\ast}, \bu); \by_{\ast}, \bu)) \\
		= & (\bar{\by}^{\leq t}, \bar{\by}^{ t + 1}),
	\end{align*}
	thus completes the proof of the lemma by induction. 
\end{proof}
As an immediate consequence of Lemma \ref{lemma:GFOM-AMP3}, Corollary \ref{cor:GFOM-AMP} holds true under Setting \ref{setting:3} as well. 

\subsection{Orthogonalization}

By linear algebra, $\{\bb^t\}_{t \geq 1}$ derived via AMP iteration \eqref{eq:AMP3} can be further reduced to a set of vectors that are approximately orthogonal after subtracting the component along $\btheta$, which leads to the following lemma:
\begin{lemma}\label{lemma:ortho3}
	Let $\{\ba^t\}_{t \geq 1}$, $\{\bb^t\}_{t \geq 1}$ be sequences produced by the AMP iteration \eqref{eq:AMP3} under Setting \ref{setting:3}. Then there exist functions $\{\phi_t: \RR^{d(t + 1)} \rightarrow \RR^{dt}\}_{t \geq 1}$ which are uniformly Lipschitz, such that the following holds:
	\begin{enumerate}
		\item[(i)] For all $t \in \NN_{>0}$, there exist $n$-independent constants $\{c_{ts}\}_{0 \leq s \leq t}$ such that $c_{tt} \neq 0$ and $\bq^{t + 1} = \sum_{s = 0}^t c_{ts} \bb^{s + 1}$. We write $\bq^{\leq t} = \phi_t(\bb^{\leq t})$, and $\phi_t$ as a sequence in $n$ is uniformly Lipschitz. 
		\item[(ii)] For all $t \in \NN_{>0}$, there exist $(x_0, \cdots, x_{t - 1}) \in \{0,1\}^t$ and $(\alpha_1, \cdots, \alpha_t) \in \RR^t$, such that for any $\{\psi_n: \RR^{n(t + 2)} \rightarrow \RR^n\}$ uniformly pseudo-Lipschitz of order 2, 
		\begin{align*}
			\psi_n(\bq^{\leq t}; \btheta, \bv) = \E[\psi_n(\bq^{\leq t}; \btheta, \bv)] + o_P(1), 
		\end{align*}
		where $\bq^i = x_{i - 1}(\alpha_i \btheta + \bz_i)$, with $\{\bz_i\}_{i \geq 1} \iidsim \normal(\mathbf{0}, \id_d)$ independent of $(\btheta, \bv)$.
	\end{enumerate}
\end{lemma}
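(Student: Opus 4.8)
The plan is to transcribe the proof of the separable case, Lemma~\ref{lemma:ortho}, to the non-separable setting, working with $n$-- and $d$--dimensional vectors in place of scalar random variables and invoking the non-separable state evolution Theorem~\ref{thm:SE3} in place of Theorem~\ref{thm:SE4}. The key simplification --- already present in Lemma~\ref{lemma:ortho} --- is that the orthogonalizing coefficients will be genuine ($n$--independent) constants, so $\bq^{t+1}$ is defined outright as a fixed linear combination of $\bb^1,\dots,\bb^{t+1}$ and no new AMP iteration has to be constructed.

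First I would fix the state evolution \eqref{eq:SE3} of the given iteration \eqref{eq:AMP3}: let $(\bar\bg_0,\bar\bg_{\le t})$ be the associated centered Gaussian vectors, $\by_{\ast}=h(\bar\bg_0,\bw)$, and $\bh_s:=f_s(\bar\bg_{\le s};\by_{\ast},\bu)\in\RR^n$, so that by the existence assumptions in Setting~\ref{setting:3} one has $\tfrac1n\E\langle\bh_s,\bh_t\rangle\to\Sigma_{s+1,t+1}$ (and, for iterations arising from a GFOM, one passes to a subsequence as in the remark following Lemma~\ref{lemma:GFOM-AMP3}). Finite--dimensional linear algebra applied to the positive semidefinite limiting Gram matrix $(\Sigma_{ij})$ then yields $n$--independent constants $\{c_{ts}\}_{0\le s\le t}$ with $c_{tt}\ne0$ and $x_t\in\{0,1\}$ such that $\sum_{i,j}c_{ti}c_{sj}\Sigma_{i+1,j+1}=\ind_{s=t}x_t$, exactly as in the constructions in Lemmas~\ref{lemma:OAMP}, \ref{lemma:AMP-OMAP1} and \ref{lemma:ortho} (with $x_t=0$, $c_{tt}=1$ in the degenerate case where the limiting Gram matrix of $\bh_0,\dots,\bh_t$ drops rank). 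Setting $\br_t:=\sum_{s=0}^t c_{ts}\bh_s$ this gives $\tfrac1n\E\langle\br_s,\br_t\rangle\to\ind_{s=t}x_t$. I would then define $\bq^1:=c_{00}\bb^1$ and $\bq^{t+1}:=\sum_{s=0}^t c_{ts}\bb^{s+1}$; the induced map $\phi_t\colon\bb^{\le t}\mapsto\bq^{\le t}$ is linear, block lower triangular with nonzero diagonal $c_{00},\dots,c_{t-1,t-1}$ and $n$--independent entries, hence uniformly Lipschitz and invertible, so that $\bb^{\le t}$ is also a uniformly Lipschitz (linear) function of $\bq^{\le t}$. This is claim~(i), and it automatically supplies the $n$--independent constants required there.

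For claim~(ii), given $\{\psi_n\}$ uniformly pseudo--Lipschitz of order~$2$, I would apply Theorem~\ref{thm:SE3} to the composed test function $\tilde\psi_n(\bb^{\le t},\btheta,\bv):=\psi_n(\phi_t(\bb^{\le t}),\btheta,\bv)$, which stays uniformly pseudo--Lipschitz of order~$2$ since $\phi_t$ is linear with $n$--independent operator norm. This gives $\psi_n(\bq^{\le t};\btheta,\bv)=\E[\psi_n(\phi_t(\bmu_{\le t}\btheta+\bg_{\le t});\btheta,\bv)]+o_P(1)$, where $\Cov(\bg_j,\bg_k)=\Sigma_{j,k}\id_d$. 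The $s$--th column of $\phi_t(\bmu_{\le t}\btheta+\bg_{\le t})$ equals $\alpha_s\btheta+\sum_r c_{s-1,r}\bg_{r+1}$ with $\alpha_s:=\sum_r c_{s-1,r}\mu_{r+1}$, and by the choice of the $c_{ts}$ the Gaussian parts are centered, independent of $(\btheta,\bv)$, and have covariance $\ind_{s=s'}x_{s-1}\id_d$; hence the $s$--th column is distributed as $x_{s-1}(\alpha_s\btheta+\bz_s)$ with $\{\bz_s\}\iidsim\normal(\mathbf{0},\id_d)$, \emph{provided} $\alpha_s=0$ whenever $x_{s-1}=0$. To verify the latter I would use $\mu_{r+1}=\lim\tfrac1n\sum_i\E[\partial_{\bar g_{0,i}}\bh_{r,i}]$ together with the identity $\alpha_{t+1}=\lim_{n,d\to\infty}(n\omega_t^2)^{-1}\E\langle\bar\bg_0^{\perp,t},\br_t\rangle$ obtained by Gaussian integration by parts (interpreting, as in the separable case, the formal symbol $\partial_{\bar g_0}$ via Stein's lemma), where $\bar\bg_0^{\perp,t}$ is the component of $\bar\bg_0$ orthogonal to $\bar\bg_1,\dots,\bar\bg_t$ and $\omega_t^2$ its per--coordinate variance; Cauchy--Schwarz and $\tfrac1n\E\|\br_t\|_2^2\to x_t$ then give $|\alpha_{t+1}|\le\omega_t^{-1}\lim(\tfrac1n\E\|\br_t\|_2^2)^{1/2}=0$ when $x_t=0$. (The fully degenerate sub-case $\omega_t=0$ does not arise once we restrict to $x_t\equiv1$ as in Remark~\ref{rmk:xt}, and if desired is handled as the vanishing of $\alpha_1$ in the proof of Lemma~\ref{lemma:ortho}.) Thus $\phi_t(\bmu_{\le t}\btheta+\bg_{\le t})\ed(\bnu^1,\dots,\bnu^t)$ jointly with $(\btheta,\bv)$, where $\bnu^s=x_{s-1}(\alpha_s\btheta+\bz_s)$, which is claim~(ii); the induction then closes.

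The bookkeeping --- the Gram--Schmidt step, the invertibility and uniform Lipschitzness of $\phi_t$, and the stability of the uniform pseudo--Lipschitz property of order $2$ under composition with the linear map $\phi_t$ --- is routine and relies only on the $n$--independence of the $c_{ts}$, which are read off from the limiting Gram matrix of the $\bh_s$ (whose existence, for the reduced iterations, is supplied by Lemma~\ref{lemma:GFOM-AMP3}). The one genuinely delicate point is the Stein's--lemma step showing that $\alpha_{t+1}=0$ in the degenerate case $x_t=0$: unlike in the separable setting one must perform Gaussian integration by parts for vector--valued functions whose Gaussian arguments $\bar\bg_0,\dots,\bar\bg_t$ are correlated, keeping careful track of the orthogonal component $\bar\bg_0^{\perp,t}$ of $\bar\bg_0$ --- essentially the non-separable analogue of the computations in Lemma~\ref{lemma:conditional-distribution}.
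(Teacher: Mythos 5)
Your proposal is correct and follows essentially the same route as the paper's own proof: define $\bh_s = f_s(\bar\bg_{\le s};\by_{\ast},\bu)$, orthogonalize the limiting Gram matrix $\Sigma$ to get $n$-independent constants $\{c_{ts}\}$ and indicators $\{x_t\}$, set $\bq^{t+1}=\sum_s c_{ts}\bb^{s+1}$, invoke Theorem~\ref{thm:SE3} after composing the test function with the linear, lower-triangular, uniformly Lipschitz map $\phi_t$, and handle the degenerate case $x_t=0$ via Stein's lemma / Gaussian integration by parts together with Cauchy--Schwarz. The only cosmetic difference is that you write out the explicit formula $\alpha_s=\sum_r c_{s-1,r}\mu_{r+1}$ before matching it against the inner-product expression $\lim (n\omega_t^2)^{-1}\E\langle\bar\bg_0^{\perp,t},\br_t\rangle$, whereas the paper records only the latter (its Eq.~\eqref{eq:39}); this is equivalent.
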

\begin{proof}
	Recall that $\by_{\ast} = h(\bar\bg_0, \bw)$. Given the state evolution \eqref{eq:SE3} of the AMP iteration, we define
	\begin{align*}
		\bh_t := f_t(\bar{\bg}_{\leq t}; \by_{\ast}, \bu), \qquad \mathcal{S}_t := \mbox{span}(\bh_k: 0 \leq k \leq t).
	\end{align*}
	Note that by state evolution, $\lim_{n,d \rightarrow \infty}\E\langle \bh_t, \bh_s\rangle / n = \Sigma_{s +1, t + 1} $. By linear algebra, for all $t \in \NN$, there exist deterministic constants $\{c_{ts}\}_{0 \leq s \leq t}$ and $x_t \in \{0,1\}$, such that $c_{tt} \neq 0$ and 
	\begin{align*}
		\sum_{i = 0}^t\sum_{j = 0}^s c_{ti} c_{sj} \Sigma_{i+1,j+1} = \mathbbm{1}_{s = t} x_t. 
	\end{align*}
	We define $\br_t := \sum_{s = 0}^t c_{ts} \bh_s$, then $\lim_{n \rightarrow \infty}\E\langle \br_t, \br_s \rangle / n = \mathbbm{1}_{s = t} x_t$ for all $s,t \in \NN$. Next, we prove the lemma by induction. For the base case $t = 1$, we let $\bq^1 = c_{00}\bb^1$, thus, claim $(i)$ follows. As for claim $(ii)$, we consider two cases. In the first case, $x_0 = 0$, then $\E\langle \bh_0, \bh_0 \rangle / n \rightarrow 0$. By state evolution \eqref{eq:SE3},
	\begin{align*}
		\mu_1 \overset{(a)}{=} & \lim_{n,d \rightarrow \infty}\frac{1}{n} \sum_{i = 1}^n \frac{\delta \E[\bar{g}_{0,i} f_{0,i}(h(\bar{\bg}_0, \bw), \bu)]}{\E[\Theta^2]}, \\
		\overset{(b)}{\leq } &  \limsup_{n,d \rightarrow \infty}\frac{1}{\sqrt{n}} \frac{\delta^{1/2}}{\E[\Theta^2]^{1/2}} \E[\|f_0(\by_{\ast}, \bu)\|_2^2]^{1/2} \rightarrow 0,
	\end{align*}
	where $(a)$ holds by Stein's lemma, and $(b)$ holds by Cauchy-Schwartz inequality. Thus, claim $(ii)$ holds with $\bq^1 \equiv \textbf{0}$. In the second case, $x_0 = 1$, whence $c_{00} = \Sigma_{11}^{-1/2}$, and claim $(ii)$ holds by the state evolution \eqref{eq:SE3}. Moreover,  
	\begin{align}\label{eq:38}
		\alpha_{1} =  \lim_{n,d \rightarrow \infty}\frac{1}{\sqrt n}\sum_{i = 1}^n \frac{\E[\partial_{\bar{g}_{0,i}}f_{0,i}(h(\bar{\bg}_0, \bw), \bu)]}{\E[\|f_0(h(\bar{\bg}_0, \bw), \bu)\|_2^2]^{1/2}}.
	\end{align}
	Suppose the lemma holds for the first $t$ iterations, then we prove it holds for the $(t + 1)$-th iteration as well. We let $\bq^{t + 1} = \sum_{s = 0}^t c_{ts} \bb^{s + 1}$, and the definition of $\phi_{t + 1}$ together with claim $(i)$ follows immediately. As for claim $(ii)$, first notice that the following mapping is uniformly Lipschitz of order 2:
	\begin{align*}
		(\bx_1, \cdots, \bx_{t + 1}, \btheta, \bv) \mapsto \psi_n(\phi_{t + 1}(\bx_1, \cdots, \bx_{t + 1}); \btheta, \bv).
	\end{align*}
	Again we consider two cases. In the first case, $x_t = 0$, thus by state evolution \eqref{eq:SE3}, $(ii)$ holds with $\bq^{t + 1} = \mathbf{0}$. In the second case, $x_t = 1$, then again by state evolution recursion, we can set $\bq^{t + 1} = \alpha_{t + 1} \btheta + \bz_{t + 1}$, with 
	\begin{align}\label{eq:39}
		\alpha_{t + 1} = \lim_{n,d \rightarrow \infty} \frac{\sqrt{n}\E[\langle\bar{\bg}_0^{\perp, t}, \Pi^{\perp}_{\mathcal{S}_{t - 1}}(\bh_t) \rangle]}{\E[\|\Pi^{\perp}_{\mathcal{S}_{t - 1}}(\bh_t)\|_2^2]^{1/2}\E[\|\bar{\bg}_0^{\perp, t}\|_2^2]},
	\end{align}
	where $\bar{\bg}_0^{\perp, t} := \Pi^{\perp}_{\bar{\mathcal{G}}_t}(\bar\bg_0)$ with $\bar{\mathcal{G}}_t := \mbox{span}(\bar{\bg}_i: 1 \leq i \leq t)$. Therefore, we complete the proof of the lemma by induction. 
	\end{proof}
	
	\subsection{Optimality analysis}
	As before, we restrict to the case with $x_t = 1$ for all $t \in \NN$. Given $(\bv, \balpha_{\leq t} \btheta + \bg_{\leq t})$, a sufficient statistics of $\btheta$ is $(\bv, \|\balpha_{\leq t}\|_2 \btheta + \bg)$ with $\bg \sim \normal(\mathbf{0}, \id_d)$ independent of $\btheta$. Therefore, by Lemma \ref{lemma:GFOM-AMP3} and \ref{lemma:ortho3}, in order to derive the minimum estimation error achieved by any GFOM with $t$ iterations, it suffices to study the maximum value of $\|\balpha_{\leq t}\|_2$, which leads to the following lemma:
	\begin{lemma}
		For all $t \in \NN_{>0}$ and all AMP iterations \eqref{eq:AMP3}, we have $\|\balpha_{\leq t}\|_2^2 \leq \beta_t^2$. 
	\end{lemma}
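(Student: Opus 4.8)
The plan is to repeat the proof of Lemma~\ref{lemma:FinalLemmaRegression} essentially line for line, with the scalar $L^2(\PP)$ objects appearing there replaced by their $n$-dimensional empirical analogues (normalised inner products $n^{-1}\langle\cdot,\cdot\rangle$ and norms $n^{-1}\|\cdot\|_2^2$) and with the limit $n,d\to\infty$ taken throughout. Introduce $\omega_t^2$ and $\zeta_t^2$ exactly as in Eq.~\eqref{eq:OmegaZetaDef}, now with $\omega_t^2:=\lim_{n,d\to\infty}n^{-1}\E\|\bar{\bg}_0^{\perp,t}\|_2^2$ in place of $\Var[\bar G_0^{\perp,t}]$. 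I would then prove by induction over $t$ the two statements $\|\balpha_{\le t}\|_2\le\beta_t$ and $\omega_{t-1}\ge\sigma_t$.

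For the base case $t=1$ one starts from Eq.~\eqref{eq:38}. Since $f_0$ sees $\bar{\bg}_0$ only through $\by_{\ast}=h(\bar{\bg}_0,\bw)$ and the coordinates of $\bar{\bg}_0$ are i.i.d.\ $\normal(0,\bar\Sigma_{00})$ with $\bar\Sigma_{00}=\delta^{-1}\E[\Theta^2]=\sigma_1^2$, coordinatewise Gaussian integration by parts turns $n^{-1}\sum_i\E[\partial_{\bar g_{0,i}}f_{0,i}]$ into $\bar\Sigma_{00}^{-1}\,n^{-1}\E\langle\bar{\bg}_0,f_0(\by_{\ast},\bu)\rangle$; conditioning on $(\by_{\ast},\bu)$ and Cauchy--Schwarz then give $\alpha_1^2\le\sigma_1^{-2}\E[\E[Z_0\mid h(\sigma_1 Z_0,W),U]^2]=\beta_1^2$, while $\omega_0=\sigma_1$ holds by definition. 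For the inductive step, assuming both claims at level $t$, I would start from Eq.~\eqref{eq:39}; using that the Gram--Schmidt vectors $\Pi^{\perp}_{\mathcal{S}_{s-1}}(\bh_s)/\|\Pi^{\perp}_{\mathcal{S}_{s-1}}(\bh_s)\|_2$, $0\le s\le t$, are asymptotically orthonormal, conditioning $\bar{\bg}_0^{\perp,t}$ on $(\by_{\ast},\bu,\bar{\bg}_{\le t})$ (which measures all of $\bh_0,\dots,\bh_t$), and applying Cauchy--Schwarz and Pythagoras as in Lemma~\ref{lemma:FinalLemmaRegression}, one obtains
\[
\alpha_{t+1}^2\ \le\ \frac{1}{\omega_t^{4}}\lim_{n,d\to\infty}\frac1n\,\E\bigl\|\E[\bar{\bg}_0^{\perp,t}\mid\by_{\ast},\bu,\bar{\bg}_{\le t}]\bigr\|_2^2\ -\ \sum_{s=1}^{t}\alpha_s^2,
\]
where the $s$-th subtracted term is identified with $\alpha_s^2$ via the ratio identity ($\omega_t^2/\omega_s^2$) of the (non-separable) Lemma~\ref{lemma:conditional-distribution}. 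Its other identity rewrites the remaining term as $\omega_t^{-2}\E[\E[Z_0\mid h(\omega_t Z_0+\zeta_t Z_1,W),U,Z_1]^2]$, and Lemma~\ref{lemma:monotone} together with $\omega_t\ge\sigma_{t+1}$ bounds this by $\beta_{t+1}^2$; hence $\|\balpha_{\le t+1}\|_2\le\beta_{t+1}$.

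To close the induction it remains to propagate $\omega_t\ge\sigma_{t+1}$. By the state evolution~\eqref{eq:SE3}, $\omega_{t+1}^2=\delta^{-1}\lim_{n,d\to\infty}n^{-1}\E\|\Pi^{\perp}_{\mathcal{S}'_{t+1}}(\btheta)\|_2^2$ with $\mathcal{S}'_t=\mathrm{span}\bigl(g_s(\phi_s(\balpha_{\le s}\btheta+\bz_{\le s};\bv);\bv):1\le s\le t\bigr)$; bounding the linear projection onto $\mathcal{S}'_{t+1}$ by the conditional mean $\E[\btheta\mid\bq^{\le t+1},\bv]$, using that $\|\balpha_{\le t+1}\|_2\btheta+\bg$ is a sufficient statistic, and applying Jensen with the just-proved bound $\|\balpha_{\le t+1}\|_2\le\beta_{t+1}$, one gets $\omega_{t+1}^2\ge\delta^{-1}\mmse_{\Theta,V}(\beta_{t+1})=\sigma_{t+2}^2$. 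Lemmas~\ref{lemma:monotone} and~\ref{lemma:cauchy}, being statements about scalar Gaussians, are used without change.

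The main obstacle is establishing the non-separable analogue of Lemma~\ref{lemma:conditional-distribution}. Here the AMP Gaussian process has covariance $\bar{\bSigma}\otimes\id_n$, so $\mathrm{span}(\bar{\bg}_1,\dots,\bar{\bg}_t)$ is a fixed finite-dimensional (random) subspace of $\RR^n$ and $\bar{\bg}_0^{\perp,t}$ agrees, up to an $\ell_2$-error of order $o(\sqrt n)$, with the coordinatewise residual $\bar{\bg}_0-\sum_{j=1}^{t}a_j^{\ast}\bar{\bg}_j$, the $a_j^{\ast}$ being the deterministic population regression coefficients of $\bar G_0$ on $(\bar G_1,\dots,\bar G_t)$; this residual has i.i.d.\ coordinates of variance $\omega_t^2$ and is independent of $\bar{\bg}_{\le t}$. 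Substituting it for $\bar{\bg}_0^{\perp,t}$ and re-running the density manipulation of Lemma~\ref{lemma:conditional-distribution} coordinatewise---the coordinates of $\by_{\ast}=h(\bar{\bg}_0,\bw)$ being conditionally independent given the residual and $\bw$---reduces both identities to their scalar versions, which are already proved. The delicate points are controlling the $o(\sqrt n)$ approximation errors uniformly over the (uniformly Lipschitz) nonlinearities and verifying the required conditional-independence structure; I expect this to be the most technical part of the argument.
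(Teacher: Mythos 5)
Your high-level strategy coincides with the paper's: propagate $\|\balpha_{\le t}\|_2\le\beta_t$ and $\omega_{t-1}\ge\sigma_t$ by simultaneous induction, using Stein's lemma and Cauchy--Schwarz in the base case, and Pythagoras, the Gram--Schmidt structure, Lemma~\ref{lemma:conditional-distribution}, Lemma~\ref{lemma:monotone}, sufficient statistics and Jensen in the inductive step. Two points need attention, however.

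First, the inductive step as written is circular. To bound $\alpha_{t+1}^2$ you invoke $\omega_t\ge\sigma_{t+1}$, but that is the level-$(t+1)$ variance claim, not part of the level-$t$ hypothesis you are assuming; and the computation you then give to ``propagate $\omega_t\ge\sigma_{t+1}$'' in fact establishes $\omega_{t+1}\ge\sigma_{t+2}$ (an off-by-one), so $\omega_t\ge\sigma_{t+1}$ is never proved. The paper does the two halves in the other order: first derive $\omega_t\ge\sigma_{t+1}$ from the level-$t$ hypothesis $\|\balpha_{\le t}\|_2\le\beta_t$ (exactly the Pythagoras--Jensen--sufficient-statistic argument you wrote, with every index lowered by one), and only then use $\omega_t\ge\sigma_{t+1}$ together with Lemma~\ref{lemma:monotone} to close the $\alpha_{t+1}$ bound. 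Your calculations are the right ones; they only need reordering and re-indexing.

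Second, the ``main obstacle'' you flag --- establishing a non-separable analogue of Lemma~\ref{lemma:conditional-distribution} with $o(\sqrt n)$ control of the gap between $\bar\bg_0^{\perp,t}$ and a coordinatewise residual --- does not actually arise. Here $\Pi_{\bar{\mathcal{G}}_t}$ denotes the orthogonal projection in $L^2(\PP;\RR^n)$ with inner product $n^{-1}\E\langle\cdot,\cdot\rangle$, not the sample projection onto the random $t$-dimensional subspace $\mathrm{span}(\bar\bg_1,\dots,\bar\bg_t)\subset\RR^n$. Since the state evolution covariance factors as $\bar\bSigma\otimes\id_n$, the projection uses the \emph{deterministic} population regression coefficients $a_j^\ast$, so $\bar\bg_0^{\perp,t}$ is \emph{exactly} the coordinatewise residual $\bar\bg_0-\sum_j a_j^\ast\bar\bg_j$, with i.i.d.\ $\normal(0,\omega_t^2)$ entries independent of $\bar\bg_{\le t}$. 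Because the tuples $(\bar g_{0,i},\bar g_{1,i},\dots,\bar g_{t,i},w_i,u_i)$ are i.i.d.\ across $i$ and $y_{\ast,i}=h(\bar g_{0,i},w_i)$ involves only the $i$-th coordinate, the conditional expectation $\E[\bar\bg_0^{\perp,t}\mid\bar\bg_{\le t},\bu,\by_\ast]$ factorizes coordinatewise, and the scalar Lemma~\ref{lemma:conditional-distribution} is applied coordinate by coordinate with no approximation error to control. What you anticipated as the most technical part is, on a correct reading of the definition, immediate.
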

	\begin{proof}
	Recall that $\bar{\bg}_0^{\perp, t} := \Pi^{\perp}_{\bar{\mathcal{G}}_t}(\bar\bg_0)$ with $\bar{\mathcal{G}}_t := \mbox{span}(\bar{\bg}_i: 1 \leq i \leq t)$. We define:
	\begin{align*}
		\omega_t^2 := \lim_{n,d \rightarrow \infty}\frac{1}{n}\E[\|\bar{\bg}_0^{\perp, t}\|_2^2], \qquad \zeta_t^2 := \frac{1}{\delta} \E[\Theta^2] - \omega_t^2. 
	\end{align*} 
	The above limit exists by the assumption of the AMP algorithm. Here, we will prove a stronger result. To be precise, we will establish that the following two claims hold for all $t \in \NN^+$: (1) $\omega_{t - 1} \geq \sigma_{t}$; (2) $\|\balpha_{\leq t}\|_2^2 \leq \beta_t^2$. We prove the claims via induction. By definition, $\omega_0 = \sigma_1$.
		Furthermore, by Eq.~\eqref{eq:38},
		\begin{align*}
		\alpha_{1}^2 = & \lim_{n,d \rightarrow \infty}\left\{\frac{1}{\sqrt n}\sum_{i = 1}^n \frac{\E[\partial_{\bar{g}_{0,i}}f_{0,i}(h(\bar{\bg}_0, \bw), \bu)]}{\E[\|f_0(h(\bar{\bg}_0, \bw), \bu)\|_2^2]^{1/2}}\right\}^2 \\
		\overset{(a)}{=} & \lim_{n,d \rightarrow \infty} \frac{\delta^2\E[\langle f_0(h(\bar{\bg}_0, \bw), \bu),  \bar\bg_0 \rangle]^2}{{n}\E[\|f_0(h(\bar{\bg}_0, \bw), \bu)\|_2^2]\E[\Theta^2]^2} \\
		= & \lim_{n,d \rightarrow \infty} \frac{\delta^2\E[\langle f_0(h(\bar{\bg}_0, \bw), \bu),  \E[\bar\bg_0 \mid h(\bar{\bg}_0, \bw), \bu] \rangle]^2}{n\E[\|f_0(h(\bar{\bg}_0, \bw), \bu)\|_2^2]\E[\Theta^2]^2} \\
		\overset{(b)}{\leq} & \lim_{n,d \rightarrow \infty} \frac{\delta^2\E[ \|  \E[\bar\bg_0 \mid h(\bar{\bg}_0, \bw), \bu]\|_2^2 ]}{n\E[\Theta^2]^2} = \beta_1^2,
	\end{align*} 
	where $(a)$ is by Stein's lemma, and $(b)$ is by Cauchy-Schwartz inequality. Then we assume the lemma holds for the first $t$ iterations, and we prove by induction that it also holds for iteration $(t + 1)$. For $t \in \NN_{>0}$, we let
	\begin{align*}
		\bk_t := g_t(\bmu_{\leq t} \btheta + \bg_{\leq t}; \bv), \qquad \mathcal{S}_t' := \mbox{span}(\bk_i: 1 \leq i \leq t).
	\end{align*}
	By the state evolution of the AMP algorithm, $\omega_{t}^2 =  \lim_{n,d \rightarrow \infty}\E[\|\Pi^{\perp}_{\mathcal{S}_{t}'}(\btheta)\|_2^2] / n$. Thus, we have
	\begin{align*}
		\omega_{t}^2 \overset{(d)}{=} & \frac{1}{\delta} \E[\Theta^2] - \lim_{n,d \rightarrow \infty}\frac{1}{n} \E[\|\Pi_{\mathcal{S}_{t}'}(\btheta)\|_2^2] \\
		\overset{(e)}{\geq} & \frac{1}{\delta} \E[\Theta^2] -\lim_{n,d \rightarrow \infty} \frac{1}{n} \E[\|\E[\btheta \mid \balpha_{\leq t} \btheta + \bz_{\leq t}, \bv]\|_2^2] \\
		\overset{(f)}{=} & \frac{1}{\delta} \E[\Theta^2] - \lim_{n,d \rightarrow \infty} \frac{1}{n} \E[\|\E[\btheta \mid \|\balpha_{\leq t} \|_2\btheta + \bz, \bv]\|_2^2] \\
		\overset{(g)}{\geq} & \frac{1}{\delta} \E[\Theta^2] - \frac{1}{\delta} \E[\E[\Theta \mid \beta_{t} \Theta + G, V]^2] = \sigma_{t + 1}^2,
	\end{align*}
	where $(d)$ is by Pythagora's theorem, $(e)$ is by Jensen's inequality, $(f)$
	 is by property of sufficient statistics, and $(g)$ is by induction hypothesis. Thus, we have completed the proof of claim (1). 

Then we prove claim (2). By Eq.~\eqref{eq:39}, 
	\begin{align*}
		\alpha_{t + 1}^2 = & \lim_{n,d \rightarrow \infty} \frac{n\E[\langle\E[\bar{\bg}_0^{\perp, t}\mid \bar{\bg}_{\leq t}, \bu, h(\bar{\bg}_0, \bw) ], \Pi_{\mathcal{S}_{t - 1}}^{\perp}(\bh_t) \rangle]^2}{\E[\|\Pi_{\mathcal{S}_{t - 1}}^{\perp}(\bh_t)\|_2^2]\E[\|\bar{\bg}_0^{\perp, t}\|_2^2]^2} \\
		\overset{(a)}{\leq} & \lim_{n,d \rightarrow \infty} \frac{\E[\|\E[\bar{\bg}_0^{\perp, t}\mid \bar{\bg}_{\leq t}, \bu, h(\bar{\bg}_0, \bw) ]\|_2^2]}{n\omega_t^4} - \lim_{n,d \rightarrow \infty} \sum_{s = 0}^{t - 1} \frac{\E[\langle \Pi^{\perp}_{\mathcal{S}_{s - 1}}(\bh_s), \E[\bar{\bg}_0^{\perp, t} \mid \bar\bg_{\leq s}, \bu, h(\bar{\bg}_0, \bw)  ]  \rangle]^2}{n\omega_t^4\E[\|\Pi^{\perp}_{\mathcal{S}_{s - 1}}(\bh_s)\|_2^2]} \\
		\overset{(b)}{=} & \lim_{n,d \rightarrow \infty} \frac{1}{\omega_t^2} \E[\E[Z_0 \mid h(\omega_tZ_0 + \zeta_t Z_1, W), U, Z_1]^2] - \lim_{n,d \rightarrow \infty} \sum_{s = 0}^{t - 1} \frac{\E[\langle \Pi^{\perp}_{\mathcal{S}_{s - 1}}(\bh_s), \E[\bar{\bg}_0^{\perp, s} \mid \bar\bg_{\leq s}, \bu, h(\bar{\bg}_0, \bw)  ] \rangle]^2}{n \omega_s^4 \E[\| \Pi^{\perp}_{\mathcal{S}_{s - 1}}(\bh_s) \|_2^2]} \\ 
		= & \lim_{n,d \rightarrow \infty} \frac{1}{\omega_t^2} \E[\E[Z_0 \mid h(\omega_tZ_0 + \zeta_t Z_1, W), U, Z_1]^2] - \lim_{n,d \rightarrow \infty}\sum_{s = 0}^{t - 1} \frac{n\E[\langle \bar{\bg}_0^{\perp, s} ,\Pi^{\perp}_{\mathcal{S}_{s - 1}}(\bh_s)\rangle ]^2}{\E[\| \Pi^{\perp}_{\mathcal{S}_{s - 1}}(\bh_s) \|_2^2]\E[\|\bar{\bg}_0^{\perp, s}\|_2^2]^2} \\
		\overset{(c)}{\leq} &  \frac{1}{\sigma_{t + 1}^2}\E[\E[Z_0 \mid h(\sigma_{t + 1}Z_0 + \tilde{\sigma}_{t + 1}Z_1, W), U, Z_1]^2] - \sum_{s = 1}^t \alpha_s^2,
	\end{align*}
	where $(a)$ is by Pythagora's theorem, $(b)$ is by Lemma \ref{lemma:conditional-distribution}, and $(c)$ is by induction hypothesis and Lemma \ref{lemma:monotone}. The last inequality above gives $\sum_{s = 1}^{t + 1} \alpha_s^2 \leq \beta_{s + 1}^2$. 
	 Thus, we have completed the proof of the lemma by induction. 
	\end{proof}
	
\section{Reduction to matrices with sub-Gaussian entries}
\label{app:SubGaussian}

In this section, we show that in order to prove Theorem \ref{thm:main} under 
Setting \ref{setting:Wigner}.$(a)$ (or to prove Theorem \ref{thm:GLM} under Setting 
\ref{setting:4}.$(a)$), it suffices to consider cases where the matrix $\bW$(or $\bX$) has 
sub-Gaussian entries. Here, we prove this claim for Theorem \ref{thm:main} under 
Setting \ref{setting:Wigner}.$(a)$. Proof of the claim for Theorem \ref{thm:GLM} under 
Setting \ref{setting:4}.$(a)$ follows by the same argument, with notational adaptations. 

By assumption, $\E[W_{ij}^4 ] \leq C / n^2$ and $\E[W_{ij}] = 0$. Thus, we claim that for all 
$\epsilon > 0$ and $i,j \in [n]$, there exists decomposition $W_{ij} = W_{ij}^{(1)} + W_{ij}^{(2)}$, 
such that $\E[W_{ij}^{(1)}] = \E[W_{ij}^{(2)}] = 0$, ${\rm ess}\, \sup_n\sqrt{n}|W_{ij}^{(1)}| < \infty$, 
$\sup_nn^2\E[(W_{ij}^{(2)})^4] < \infty$ and $n\Var[W_{ij}^{(2)}] \leq \epsilon$. 
Furthermore, $(W_{ij}^{(1)})_{i< j \leq n}$ are independent and identically distributed 
random variables, and the same property holds for $(W_{ij}^{(2)})_{i<j \leq n}$. 
To prove this claim, we let $\xi_{\epsilon} > 0$ such that $C / \xi_{\epsilon}^2 < \epsilon$. We define
\begin{align*}
	& W_{ij}^{(1)} := W_{ij}\mathbbm{1}_{\sqrt{n}|W_{ij}| \leq \xi_{\epsilon}} - \E[W_{ij}\mathbbm{1}_{\sqrt{n}|W_{ij}| \leq \xi_{\epsilon}}], \\
	& W_{ij}^{(2)} := W_{ij}\mathbbm{1}_{\sqrt{n}|W_{ij}| > \xi_{\epsilon}} - \E[W_{ij}\mathbbm{1}_{\sqrt{n}|W_{ij}| > \xi_{\epsilon}}]. 
\end{align*}
Then $\sqrt{n}|W_{ij}^{(1)}| \leq 2 \xi_{\epsilon}$, $\E[W_{ij}^{(1)}] = \E[W_{ij}^{(2)}] = 0$, $\sup_n n^2\E[(W_{ij}^{(1)})^4] < \infty$ and $\sup_n n^2\E[(W_{ij}^{(2)})^4] < \infty$. Furthermore, $n\Var[W_{ij}^{(2)}] \leq n \E[W_{ij}^2 \mathbbm{1}_{\sqrt{n}|W_{ij}| > \xi_{\epsilon}}] \leq C / \xi_{\epsilon}^2 < \epsilon$, thus completes the proof of the claim.

With the above decomposition, we let $\bW^{(1)} = (W_{ij}^{(1)})_{i,j \leq n}$ and 
$\bW^{(2)} = (W_{ij}^{(2)})_{i,j \leq n}$ be $n \times n$ matrices. 
 By the Bai-Yin law \cite{vershynin2018high}, we have 
$\|\bW^{(2)}\|_{\mbox{\tiny\rm op}} \leq 2\sqrt{\epsilon} + o_P(1)$. If we replace 
$\bW$ with $\bW^{(1)}$ in model definition \eqref{model:spike}, and denote the iterates 
obtained by GFOM \eqref{eq:GFOM} by $\{\tilde{\bu}^t\}_{t \geq 1}$, then we can prove by 
induction that for all $t \in \NN_{>0}$, with probability $1 - o_n(1)$, 
\begin{align*}
	\frac{1}{\sqrt{n}}\|\bu^t - \tilde{\bu}^t\|_2 \leq F(\epsilon, t).
\end{align*}
Here, $F(\epsilon, t) \rightarrow 0$ as $\epsilon \rightarrow 0^+$. The proof is 
via simple application of the Lipschitz assumption and the upper bound of the 
spectral norm of $\bW^{(2)}$ we have just derived. 
Since $\epsilon$ is arbitrary, we conclude that if Theorem \ref{thm:main} holds for 
sub-Gaussian distributions, then it also holds for distributions with bounded fourth moments.

\end{appendices}

\end{document}